\documentclass[11pt, dvipsnames]{article}
\usepackage{jmlrlocal}
\usepackage{header}
\usepackage[square, numbers]{natbib}
\let\cite\citep
\usepackage{hyperref}       
\hypersetup{
    colorlinks = true,
    citecolor=Green,
    urlcolor=Red,
    linkcolor=Blue,
}

\usepackage[margin=1.0in]{geometry}

\DeclareMathOperator{\Tr}{Tr}

\newcommand{\reac}{\mc{R}}
\newcommand{\bmat}[1]{\begin{bmatrix}#1\end{bmatrix}}

\newcommand{\la}{\langle}
\newcommand{\ra}{\rangle}

\newcommand{\conj}{r}

\newcommand{\NE}{\text{\tt NE}}

\newcommand{\learnraten}[1]{\gamma_{#1}}
\newcommand{\omegas}{\omega_{\mc{S}}}
\newcommand{\omegasn}[1]{\omega_{\mc{S},#1}}
\newcommand{\radius}{q}
\newcommand{\rank}{\mathrm{rank}}
\newcommand{\Js}{J_{\mc{S}}}
\newcommand{\spec}{\mathrm{spec}}
\newcommand{\trans}{\top}

\newcommand{\diag}{\mathrm{diag}}
\newcommand{\poneevecs}{W_1}
\newcommand{\ptwoevecs}{W_2}
\title{Convergence of Learning Dynamics in Stackelberg Games}
\author{\name Tanner Fiez \email fiezt@uw.edu\\
\addr Department of Electrical and Computer Engineering\\
    University of Washington\\
\AND
\name Benjamin Chasnov \email bchasnov@uw.edu\\
\addr Department of Electrical and Computer Engineering\\
    University of Washington\\
    \AND
    \name Lillian J. Ratliff \email ratliffl@uw.edu \\
    \addr Department of Electrical and Computer Engineering\\
       University of Washington}
\ShortHeadings{}{Fiez, Chasnov, Ratliff}
\begin{document}

\maketitle

\begin{abstract}
This paper investigates the convergence of learning dynamics in Stackelberg games. In the class of games we consider, there is a hierarchical game being played between a leader and a follower with continuous action spaces. We establish a number of connections between the Nash and Stackelberg equilibrium concepts and characterize conditions under which attracting critical points of simultaneous gradient descent are Stackelberg equilibria in zero-sum games.
Moreover, we show that the only stable critical points of the Stackelberg gradient dynamics are Stackelberg equilibria in zero-sum games. Using this insight, we develop a gradient-based update for the leader while the follower employs a best response strategy for which each stable critical point is guaranteed to be a Stackelberg equilibrium in zero-sum games. As a result, the learning rule provably converges to a Stackelberg equilibria given an initialization in the region of attraction of a stable critical point. We then consider a follower employing a gradient-play update rule instead of a best response strategy and propose a two-timescale algorithm with similar asymptotic convergence guarantees. For this algorithm, we also provide finite-time high probability bounds for local convergence to a neighborhood of a stable Stackelberg equilibrium in general-sum games. Finally, we present extensive numerical results that validate our theory, provide insights into the optimization landscape of generative adversarial networks, and demonstrate that the learning dynamics we propose can effectively train generative adversarial networks.
\end{abstract}

\section{Introduction}
\label{sec:introduction}
Tools from game theory now play a prominent role in machine learning. The emerging coupling between the fields can be credited to the formulation of learning problems as interactions between competing algorithms and the desire to characterize the limiting behaviors of such strategic interactions. Indeed, game theory provides a systematic framework to model the strategic interactions found in modern machine learning problems. 

A significant portion of the game theory literature concerns games of simultaneous play and equilibrium analysis. In simultaneous play games, each player reveals the strategy they have selected concurrently. The solution concept often adopted in non-cooperative simultaneous play games is the Nash equilibrium. 
In a Nash equilibrium, the strategy of each player is a best response to the joint strategy of the competitors so that no player can benefit from unilaterally deviating from this strategy.  

The study of equilibrium gives rise to the question of when and why the observed play in a game can be expected to correspond to an equilibrium. A common explanation is that an equilibrium emerges as the long run outcome of a process in which players repeatedly play a game and compete for optimality over time~\cite{fudenberg1998theory}. Consequently, an important topic in the study of learning in games is the convergence behavior of learning algorithms reflecting the underlying game dynamics. Adopting this viewpoint and analyzing so-called `natural' dynamics~\cite{benaim1999mixed} often provides deep insights into the structure of a game. Moreover, a firm understanding of the structure of a game can inform how to design learning algorithms strictly for computing equilibria. Seeking equilibria via computationally efficient learning algorithms is an equally important perspective on equilibrium analysis~\cite{fudenberg1998theory}.

The classic objectives of learning in games are now being widely embraced in the machine learning community. While not encompassing, the prevailing research areas epitomizing this phenomenon are adversarial training and multi-agent learning. A considerable amount of this work has focused on generative adversarial networks (GANs)~\cite{goodfellow2014generative}. 
Finding Nash equilibria in GANs is challenging owing to the complex optimization landscape that arises when each player in the game is parameterized by a neural network. Consequently, significant effort has been spent lately on developing principled learning dynamics for this application~\cite{metz:2017aa, mescheder2017numerics, heusel2017gans,
mertikopoulos2018optimistic, balduzzi18a, mazumdar:2019aa}. In general, this line of work has analyzed learning dynamics designed to mitigate rotational dynamics and converge faster to stable fixed points or to avoid spurious stable points of the dynamics and reach equilibria almost surely. In our work, we draw connections to this literature and believe that the problem we study gives an under-explored perspective that may provide valuable insights moving forward.

Characterizing the outcomes of competitive interactions and seeking equilibria
in multi-agent learning gained prominence much earlier than adversarial
training. However, following initial works on this
topic~\cite{littman1994markov, greenwald2003correlated, hu2003nash}, scrutiny
was given to the solution concepts being considered and the field
cooled~\cite{shoham2007if}. Owing to the arising applications with interacting
agents, problems of this form are being studied extensively again. There has
been a shift toward analyzing gradient-based learning rules, in part due to
their scalability and success in single-agent reinforcement learning, and rigorous convergence analysis~\cite{zhang:2010aa, balduzzi18a, letcher:2018aa, foerster:2018aa, mazumdar2018convergence}.

The progress analyzing learning dynamics and seeking equilibria in games is promising, but the work has been narrowly focused on simultaneous play games and the Nash equilibrium solution concept. There are many problems exhibiting a hierarchical order of play between agents in a diverse set of fields such as human-robot collaboration and interacting autonomous systems in artificial intelligence~\cite{nikolaidis2017game, liu2016goal, fisac2018hierarchical, sadigh2016planning}, mechanism design and control~\cite{dimitrakakis2017multi, ratliff2018adaptive, ratliff2018perspective}, and organizational structures in economics~\cite{bresnahan1981duopoly, anderson1992stackelberg}. In game theory, this type of game is known as a Stackelberg game and the solution concept studied is called a Stackelberg equilibrium. 

In the simplest formulation of a Stackelberg game, there is a leader and a
follower that interact in a hierarchical structure. The sequential order of play
is such that the leader is endowed with the power to select an action with the
knowledge that the follower will then play a best-response. 
Specifically, the leader uses this knowledge to its advantage when selecting a strategy. 

In this paper, we study the convergence of learning dynamics in Stackelberg
games. Our motivation stems from the emergence of problems in which there is a
distinct order of play between interacting learning agents and the lack of
existing theoretical convergence guarantees in this domain. The dynamics analyzed in this work reflect the underlying game structure and characterize the expected outcomes of hierarchical game play. The rigorous study
of the learning dynamics in Stackelberg games we provide also has implications
for simultaneous play games relevant to adversarial training.

\paragraph{Contributions.}
We formulate and study a novel set of gradient-based learning rules in
continuous, general-sum games that emulate the natural structure of a
Stackelberg game. Building on work characterizing a local Nash equilibrium in
continuous games~\cite{ratliff:2016aa}, we define the \emph{differential Stackelberg equilibrium} solution concept (Definition~\ref{def:stackelberg}), which is a local notion of a Stackelberg equilibrium amenable to computation. An analogous local minimax equilibrium concept was developed concurrently with this work, but strictly for zero-sum games~\cite{jin2019minmax}. Importantly, the equilibrium notion we present generalizes the local minimax equilibrium concept to general-sum games. In our work, we draw several connections between Nash and
Stackelberg equilibria for the class of zero-sum games, which can be summarized
as follows:
\begin{itemize}[leftmargin=25pt, itemsep=-2pt, topsep=2pt]
\item We show in Proposition~\ref{prop:DNEareDSE} that stable Nash equilibria are differential Stackelberg equilibria in zero-sum games. Concurrent with our work,~\citet{jin2019minmax} equivalently show that local Nash equilibria are local minimax equilibria. This result indicates learning dynamics seeking Nash equilibria are simultaneously seeking Stackelberg equilibria.  
\item We reveal that there exist stable attractors of simultaneous gradient play that are Stackelberg equilibria and not Nash equilibria. Moreover, in Propositions~\ref{prop:nec} and~\ref{prop:suf} we give necessary and sufficient conditions under which the simultaneous gradient play dynamics can avoid Nash equilibria and converge to Stackelberg equilibria. To demonstrate the relevancy to deep learning applications, Propositions~\ref{prop:gans} and~\ref{prop:gans2} specialize the general necessary and sufficient conditions from~Propositions~\ref{prop:nec} and~\ref{prop:suf}  to GANs satisfying the realizable assumption~\cite{nagarajan2017gradient}, which presumes the generator is able to create the underlying data distribution. This set of results has implications for the optimization landscape in GANs as we explore in our numerical experiments.
\end{itemize}
Our primary contributions concern the convergence behavior of the gradient-based
learning rules we formulate that mirror the Stackelberg game structure. These
contributions can be summarized as follows:
\begin{itemize}[leftmargin=25pt, itemsep=-2pt, topsep=2pt]
\item We demonstrate in Proposition~\ref{prop:allstack} that the only stable critical points of the Stackelberg gradient dynamics are Stackelberg equilibria in zero-sum games. This is in contrast to the simultaneous gradient play dynamics, which can be attracted to non-Nash critical points in zero-sum games. This insight allows us to define a gradient-based learning rule for the leader while the follower plays a best response for which each attracting critical point is a Stackelberg equilibria in zero-sum games. As a result, the learning rule provably converges to an equilibria given an initialization in the region of attraction of a stable critical point. A formal exposition of this set of dynamics and results is provided in Section~\ref{sec:leader_solutions}.
\item Leveraging the Stackelberg game structure, for general-sum games, we formulate a gradient-based learning rule in which the leader and follower have an unbiased estimator of their gradient so that updates are stochastic. 
\item In Section~\ref{sec:tt_results}, we consider a formulation in which the follower uses a gradient-play update rule instead of an exact best response strategy and propose a two-timescale algorithm to learn Stackelberg equilibria. We show almost sure asymptotic convergence to Stackelberg equilibria in zero-sum games and to stable attractors in general-sum games; a finite-time high probability bound for local convergence to a neighborhood of a stable Stackelberg equilibrium in general-sum games is also given. 
\item We present this paper with a single leader and a single follower, but this is only for ease of presentation. The extension to $N$ followers that play in a staggered hierarchical structure or simultaneously is in Appendix~\ref{app:extensions}; equivalent results hold with some additional assumptions. 
\end{itemize}
Finally, we present several numerical experiments in Section~\ref{sec:examples}, which we now detail: 
\begin{itemize}[leftmargin=25pt, itemsep=-2pt, topsep=2pt]
\item We present a location game on a torus and a Stackelberg duopoly game. 
The examples are general-sum games with equilibrium that can be solved for directly, allowing us to numerically validate our theory. The games demonstrate the advantage the leader gains from the hierarchical order of play compared to the simultaneous play versions of the games. 
\item We evaluate the Stackelberg learning dynamics as a GAN training algorithm. In doing so, we find that the leader update removes rotational dynamics and prevents the type of cycling behavior that plagues simultaneous gradient play. Moreover, we discover that the simultaneous gradient dynamics can empirically converge to non-Nash attractors that are Stackelberg equilibria in GANs. The generator and the discriminator exhibit desirable performance at such points, indicating that Stackelberg equilibria can be as desirable as Nash equilibria. Lastly, the Stackelberg learning dynamics often converge to non-Nash attractors and reach a satisfying solution quickly using learning rates that can cause the simultaneous gradient descent dynamics to cycle. 
\end{itemize}
\paragraph{Related Work.} 
The perspective we explore on analyzing games in which there is an order of play or hierarchical decision making structure has been generally ignored in the modern learning literature. However, this viewpoint has long been researched in the control literature on games~\cite{basar1979closed, papavassilopoulos1979nonclassical, papavassilopoulos1980sufficient, jungers2011min, basar:1998aa}. Similarly, work on bilevel optimization~\cite{daskin:1967aa, daskin:1967ab, zaslavski2012necessary} adopts this perspective. 

The select few recent works in the machine learning literature on learning in games considering a hierarchical decision-making structure exclusively focus on
zero-sum games~\cite{metz:2017aa, jin2019minmax, lin2019gradient, nouiehed2019solving, daskalakis:2018aa}, unlike our work, which extends to general-sum games. A noteworthy paper in the line of work in the zero-sum setting adopting a min-max perspective was the introduction of unrolled GANs~\cite{metz:2017aa}. 
The authors consider a timescale separation between
the generator and discriminator, giving the generator the advantage as the
slower player. This work used
the Schur complement structure presented in~\citet{daskin:1967aa, daskin:1967ab} to define a minimax solution of a zero-sum game
abstraction of an adversarial training objective.
Essentially the discriminator is allowed to perform a finite
roll-out in an inner loop of the algorithm with multiple updates; this process
is referred to as `unrolling'. It is (informally) suggested that, using the
results of~\citet{daskin:1967aa, daskin:1967ab}, as the roll-out
horizon approaches infinity, the discriminator approaches a critical point of the cost function along the discriminators axis given a fixed generator parameter configuration.

The unrolling procedure has the same effect as a deterministic timescale separation between players. Formal convergence guarantees to minimax equilibria in zero-sum games characterizing the limiting behavior of simultaneous individual gradient descent with timescale separation were recently obtained~\cite{jin2019minmax, lin2019gradient}.
While related, simultaneous individual gradient play with
time-scale separation is a distinct set of dynamics that departs from the
dynamics we propose that reflect the Stackelberg
game structure.  

It is also worth pointing out that the multi-agent learning
papers of~\citet{foerster:2018aa} and~\citet{letcher:2018aa} do in some sense
seek to give a player an advantage, but nevertheless focus on the Nash
equilibrium concept in any analysis that is provided. 

\paragraph{Organization.} In Section~\ref{sec:prelims}, we formalize the problem we study and provide background material on Stackelberg games. We then draw connections between learning in Stackelberg games and existing work in zero-sum and general sum-games relevant to GANs and multi-agent learning, respectively. In Section~\ref{sec:results}, we give a rigorous convergence analysis of learning in Stackelberg games. Numerical examples are provided in Section~\ref{sec:examples} and we conclude in Section~\ref{sec:discussion}.

\section{Preliminaries}
\label{sec:prelims}
We leverage the rich theory of continuous games and dynamical systems in order
to analyze algorithms implemented by agents interacting in a hierarchical game.
In particular, each agent has an objective they want to selfishly optimize that
depends on not only their own actions but also on the actions of their competitor. 
However, there is an order of play in the sense that one player is the leader
and the other player is the follower\footnote{While we present the work for a
    single leader and a single follower, the theory extends to the
    multi-follower case (we discuss this in Appendix~\ref{app:extensions}) and
to the case where the single leader abstracts multiple cooperating agents.}. The leader optimizes its objective
with the knowledge that the follower will respond by selecting a best response. 
 We refer to algorithms for learning in this setting as \emph{hierarchical learning} algorithms.
We specifically consider a class of learning algorithms in which the agents act
myopically with respect to their given objective and role in the underlying hierarchical
game by following the gradient of their objective with respect to their choice
variable.

To substantiate this abstraction, consider a game between two agents where one agent is deemed the \emph{leader}
and the other the \emph{follower}. 
The leader has cost $f_1:X\rar \mb{R}$ and the follower has cost
$f_2:X\rar \mb{R}$, where $X=X_1\times X_2$ with the action
space of the leader being $X_1$ and the action space of the follower being
$X_2$. The designation of `leader' and
`follower' indicates the order of play between the two agents, meaning the leader
plays first and the follower second. The leader and the follower need not be cooperative. Such a game is known as a \emph{Stackelberg game}.

\subsection{Stackelberg Games }
Let us adopt the typical game theoretic notation in which the player index set
is $\mc{I}$ and $x_{-i}=(x_j)_{j\in \mc{I}/\{i\}}$ denotes the joint action profile of all
agents excluding agent $i$. In the Stackelberg case, $\mc{I}=\{1,2\}$ where
player $i=1$ is the leader and player $i=2$ is the follower. We assume throughout that each $f_i$ is
sufficiently smooth, meaning $f_i\in C^q(X, \mb{R})$ for some $q\geq 2$ and for each $i\in \mc{I}$.

The leader aims to solve the optimization problem given by
\[\min_{x_1\in X_1}\Big\{f_1(x_1,x_2)\big|\ x_2\in \arg\min_{y\in
X_2}f_2(x_1,y)\Big\}\]
and the follower aims to solve the optimization problem
$\min_{x_2\in X_2}f_2(x_1,x_2)$.
As noted above, the learning algorithms we study are such that the agents follow
myopic update rules which take steps in the direction of steepest descent with
respect to the above two optimizations problems, the former for the leader and
the latter for the follower. 

Before formalizing these updates, let us first discuss the equilibrium concept
studied for simultaneous play games and contrast it with that which is studied in the hierarchical play counterpart. 
The typical equilibrium notion in continuous games is the pure strategy Nash equilibrium in simultaneous play games and the Stackelberg equilibrium in hierarchical play games. Each notion of equilibria can be characterized as the intersection points of the reaction curves of the players~\cite{basar:1998aa}. 
\begin{definition}[Nash Equilibrium]
    The joint strategy $x^\ast\in X$ is a Nash equilibrium if for each $i\in
    \mc{I}$, \[f_i(x^\ast)\leq
    f_i(x_i,x_{-i}^\ast),  \ \ \forall\ x_i\in X_i.\] The strategy is a local Nash
    equilibrium on $W\subset X$ if for each $i\in
    \mc{I}$, 
\setlength{\belowdisplayskip}{-6pt}\setlength{\belowdisplayshortskip}{-6pt}
    \[f_i(x^\ast)\leq
    f_i(x_i,x_{-i}^\ast),  \ \ \forall\ x_i\in W_i\subset X_i.\]
\label{def:nash_intersection}
\end{definition}
\begin{definition}[Stackelberg Equilibrium]
    In a two-player game with player 1 as the leader, a strategy $x_1^\ast\in
    X_1$ is called a Stackelberg equilibrium strategy for the leader if 
      \[\sup_{x_2\in \reac(x_1^\ast)}  f_1(x_1^\ast, x_2)\leq \sup_{x_2\in
      \reac(x_1)}f_1(x_1,x_2), \ \ \forall x_1\in X_1,\]
    where $\reac(x_1)=\{y\in X_2|\ f_2(x_1,y)\leq f_2(x_1,x_{2}),  \forall
    x_2\in X_2\}$ is the rational reaction set of $x_2$.
\end{definition}
This definition naturally extends to the $n$-follower setting when
$\reac(x_1)$ is replaced with the set of Nash equilibria $\NE(x_1)$, 
given that player 1 is playing $x_1$ so that the follower's reaction set is a Nash equilibrium.

We denote $D_if_i$ as the
derivative of $f_i$ with respect to $x_i$, $D_{ij}f_i$ as the partial derivative of $D_if_i$ with respect to $x_j$, and $D(\cdot)$ as the total
derivative\footnote{For example, given a function $f(x,r(x))$, $Df=D_1f+D_2f\partial r/\partial x$.}.
Denote by $\omega(x)=(D_1f_1(x),D_2f_2(x))$ the vector of individual gradients for simultaneous play 
  and $\omega_{\mc{S}}(x)=(Df_1(x),D_2f_2(x))$ as the equivalent for hierarchical play where $Df_1$ is the total derivative of $f_1$ with respect
  to $x_1$ and $x_2$ is implicitly a function of $x_2$, which captures the fact
  that the leader operates under the assumption that the follower will play a
  best response to its choice of $x_1$.

  It is possible to characterize a local Nash equilibrium using sufficient
  conditions for Definition~\ref{def:nash_intersection}.
  \begin{definition}[Differential Nash Equilibrium~\citep{ratliff:2016aa}]
    The joint strategy $x^\ast\in X$ is a differential Nash equilibrium if
    $\omega(x^\ast)=0$ and $D_i^2f_i(x^\ast)>0$ for each $i\in \mc{I}$.
        \label{def:nash}
\end{definition}

Analogous sufficient conditions can be stated to characterize a \emph{local} Stackelberg equilibrium strategy for
the leader using first and second order conditions on the leader's optimization
problem. Indeed,
if $Df_1(x_1^\ast,\conj(x_1^\ast))=0$
and $D^2f_1(x_1^\ast,\conj(x_1^\ast))$ is positive definite, then $x_1^\ast$
is a local Stackelberg
equilibrium strategy for the leader. We use these sufficient conditions to
define the following refinement of the Stackelberg equilibrium concept.
\begin{definition}[Differential Stackelberg Equilibrium]
The pair $(x_1^\ast,x_2^\ast)\in X$
with $x_2^\ast=\conj(x_1^\ast)$, where $\conj$ is implicitly defined by
$D_2f_2(x_1^\ast,x_2^\ast)=0$, is a differential Stackelberg equilibrium for
the game $(f_1,f_2)$ with player 1 as the leader if
$Df_1(x_1^\ast,\conj(x_1^\ast))=0$, and
$D^2f_1(x_1^\ast,\conj(x_1^\ast))$ is positive definite..
 \label{def:stackelberg}
\end{definition}
\begin{remark}
    Before moving on, let us make a few remarks about similar, and in some cases analogous, equilibrium definitions. For zero-sum games, the differential Stackelberg equilibrium
    notion is the same as a local min-max equilibrium for a sufficiently smooth
    cost function $f$. This is a well-known concept in
    optimization (see, e.g.,~\cite{basar:1998aa, daskin:1967aa, daskin:1967ab},
    among others), and it has
    recently been introduced in the learning literature~\cite{jin2019minmax}. 
    The benefit of the Stackelberg perspective is that it generalizes from zero-sum games to general-sum games, while the min-max equilibrium notion does not. A number of adversarial learning formulations are in fact general-sum, often as a result of regularization and well-performing heuristics that augment the cost functions of the generator or the discriminator.
\end{remark}

We utilize these local characterizations in terms of first and second order
conditions to formulate the myopic hierarchical learning algorithms we study.
Following the preceding discussion, consider 
the learning rule for each player to be given by
\begin{equation}
x_{i,k+1}=x_{i,k}-\gamma_{i,k}(\omega_{\mc{S},i}(x_k)+w_{i,k+1}),
\label{eq:noisyupdate}
\end{equation}
recalling that $\omega_{\mc{S}}=(Df_1(x), D_2f_2(x))$ and the notation
$\omega_{\mc{S},i}$ indicates the entry of $\omega_{\mc{S}}$ corresponding to
the $i$--th player. Moreover, $\{\gamma_{i,k}\}$ the sequence of learning rates
and $\{w_{i,k}\}$ is the noise process for player $i$, both of which satisfy the usual
assumptions from theory of stochastic approximation provided in detail in
Section~\ref{sec:results}. 
We note that the component of the update $\omega_{\mc{S},i}(x_k)+w_{i,k+1}$
captures the case in which each agent does not have oracle access to
$\omega_{\mc{S},i}$, but instead has an unbiased estimator for it. The given
update formalizes the class of learning algorithms we study in this paper.

\paragraph{Leader-Follower Timescale Separation.} We require a timescale separation between the leader
and the follower: the leader is assumed to be learning at a slower rate than the follower so that
$\gamma_{1,k}=o(\gamma_{2,k})$. The reason for this
timescale separation is that the leader's update is formulated using the reaction curve of
the follower. In the gradient-based learning setting considered, 
the reaction curve can be characterized by the set of critical points of
$f_2(x_{1,k}, \cdot)$ that have a local positive definite structure in the
direction of $x_2$, which is
\[\{x_2|\ D_2f_2(x_{1,k},x_2)=0, \ D_{2}^2f_2(x_{1,k},x_2)> 0\}.\]
This set can be characterized in terms of an
implicit map $\conj$, defined by the leader's belief that the follower
is playing a best response to its choice at each iteration,
which would imply $D_2f_2(x_{1,k},x_{2,k})=0$. Moreover, under sufficient regularity conditions, the
implicit mapping theorem~\cite{lee:2012aa} gives rise to the implicit map
$\conj:U\rar X_2:x_1\mapsto x_2$ on a neighborhood $U\subset X_1$ of
$x_{1,k}$. Formalized in Section~\ref{sec:results}, we note that
when $\conj$ is defined uniformly in $x_1$ on the domain for which convergence is
being assessed, the update in \eqref{eq:noisyupdate} is
well-defined in the sense that the component of the derivative $Df_1$
corresponding to the implicit dependence of the follower's action on $x_1$ via
$\conj$ is well-defined and locally consistent. In particular, for a given point
$x=(x_1,x_2)$ such that $D_2f_2(x_1,x_2)=0$ with $D_2^{2}f_2(x)$ an isomorphism, the
implicit function theorem implies there exists an open set $U\subset X_1$ such
that there exists a unique continuously differentiable function $r:U\rar X_2$
such that $r(x_1)=x_2$ and $D_2f_2(x_1,r(x_1))=0$ for all $x_1\in U$. Moreover, 
\[Dr(x_1)=-(D_2^2f_2(x_1,r(x_1)))^{-1}D_{21}f_2(x_1,r(x_1))\]
on $U$. Thus, in the limit of the two-timescale setting, the leader sees the
follower as having equilibriated (meaning $D_2f_2\equiv0$) so that
\begin{align}
Df_1(x_1,x_2)&=D_1f_1(x_1,x_2)+D_2f_1(x_1,x_2)Dr(x_1) \label{eq:unrolled}\\
&=D_1f_1(x_1,x_2)-D_2f_1(x_1,x_2)(D_2^2f_2(x_1,x_2))^{-1}D_{21}f_2(x_1,x_2). \notag
\end{align}
The map $r$ is an implicit representation of the follower's reaction curve.

\paragraph{Overview of Analysis Techniques.} The following describes the general
approach to studying the hierarchical learning dynamics in
\eqref{eq:noisyupdate}. The purpose of this overview is to provide the reader
with the high-level architecture of the analysis approach. 

The analysis techniques we employ
combine tools from dynamical systems theory with the theory of stochastic
approximation. In particular, we leverage the limiting continuous time dynamical
systems derived from~\eqref{eq:noisyupdate} to characterize concentration bounds
for iterates or samples generated by~\eqref{eq:noisyupdate}.  
We note that the hierarchical learning update in \eqref{eq:noisyupdate} with timescale
separation $\learnraten{1,k}=o(\learnraten{2,k})$ has a  limiting dynamical system
that takes the form of a
\emph{singularly perturbed} dynamical system given by
\begin{equation}
\begin{array}{lcl} \dot{x}_{1}(t)&=&-\tau Df_1(x_1(t),x_2(t))\\
        \dot{x}_2(t) & =&-D_2f_2(x_1(t),x_2(t))\end{array}
    \label{eq:singperturb}
\end{equation}
which, in the limit as $\tau\rar 0$, approximates
\eqref{eq:noisyupdate}.

The limiting dynamical system has known convergence properties (asymptotic
convergence in a region of attraction for a locally asymptotically stable
attractor). Such convergence properties can be translated in some sense
to the discrete time system by comparing \emph{pseudo-trajectories}---in this
case, linear interpolations between sample points of the update process---generated by sample
points of \eqref{eq:noisyupdate} and the limiting system flow for
initializations containing the set of sample points of \eqref{eq:noisyupdate}.
Indeed, the limiting dynamical system is then used to generate flows initialized from
the sample points generated by \eqref{eq:noisyupdate}. Creating
pseudo-trajectories, we then bound
the probability that the pseudo-trajectories deviate by some small amount from
the limiting dynamical system flow over each continuous time interval between
the sample points. A concentration bound can be constructed by taking a union bound over each time interval after a
finite time; following this we can guarantee the sample path has entered the region
of attraction, on which we can produce a Lyapunov function for the continuous
time dynamical system. 
The analysis in this paper is based on the high-level ideas outlined in this section. 

\subsection{Connections and Implications}
Before presenting convergence analysis of the update in \eqref{eq:noisyupdate},
we draw some connections to application domains---including
adversarial learning, where zero-sum game abstractions have been touted
for finding robust parameter configurations for neural networks, and opponent
shaping in multi-agent learning---and
equilibrium concepts commonly used in these domains. Let us first remind the reader of some
common definitions from dynamical systems theory. 

Given a sufficiently smooth function $f\in C^q(X,
\mb{R})$, a critical point $x^\ast$ of $f$ is said to be \emph{stable} if 
 for all $t_0\geq 0$ and $\vep>0$, there exists $\delta(t_0,\vep)$ such that
\[x_0\in B_{\delta}(x^\ast) \ \implies\ x(t)\in B_\vep(x^\ast), \ \forall t\geq
t_0\]
Further, $x^\ast$ is said to be \emph{asymptotically stable} if $x^\ast$ is
additionally attractive---that is, for all $t_0\geq 0$, there exists 
$\delta(t_0)$ such that 
\[x_0\in B_\delta(x^\ast)\ \implies \ \lim_{t\rar \infty}\|x(t)-x^\ast\|=0.\]
A critical point is said to be \emph{non-degenerate} if the determinant of the
Jacobian of the dynamics at the critical point is non-zero. 
For a non-degenerate critical point, the Hartman-Grobman
theorem~\cite{sastry:1999aa} enables us to
check the eigenvalues of the Jacobian to determine asymptotic
stability. In particular, at a non-degenerate critical point, if the eigenvalues
of the Jacobian are in the \emph{open left-half} complex plane, then the critical point
is asymptotically stable. The dynamical systems we study in this paper are of
the form $\dot{x}=-F(x)$ for some vector field $F$ determined by the gradient
based update rules employed by the agents. Hence, to determine if a critical
point is stable, we simply need to check that the spectrum of the Jacobian of $F$ is in the
\emph{open right-half} complex plane.

For the dynamics $\dot{x}=-\omega(x)$, let $J(x)$ denote the Jacobian of the
vector field $\omega(x)$. Similarly, for the dynamics $\dot{x}=-\omegas(x)$, let
$\Js(x)$ denote the Jacobian of the vector field $\omegas(x)$. Then, we say a
differential Nash equilibrium of a continuous game with corresponding individual
gradient vector field $\omega$ is stable if $\spec(J(x))\subset
\mb{C}_{+}^\circ$ where $\spec(\cdot)$
denotes the spectrum of its argument and $\mb{C}_{+}^\circ$ denotes the open
right-half complex plane. Similarly, we say differential Stackelberg equilibrium
is stable if $\spec(\Js(x))\subset \mb{C}_{+}^\circ$.
\subsubsection{Implications for Zero-Sum Settings}
\label{sec:zsimplications}
Zero-sum games are a very special class since there is a strong connection
between Nash equilibria and Stackelberg equilibria. Let us first show that for zero-sum games, attracting critical points of
$\dot{x}=-\omegas(x)$ are differential Stackelberg equilibria.
\begin{proposition}
Attracting critical points of $\dot{x}=-\omegas(x)$ in continuous zero-sum games are differential Stackelberg equilibria. That is, given a zero-sum game $(f,-f)$ defined by a sufficiently smooth function $f\in C^q(X, \mb{R})$ with $q\geq 2$, any stable critical point $x^\ast$ of the dynamics $\dot{x}=-\omegas(x)$ is a differential Stackelberg equilibrium. 
\label{prop:allstack}
\end{proposition}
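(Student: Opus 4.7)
The plan is to show that the stability hypothesis $\spec(\Js(x^\ast))\subset \mb{C}_+^\circ$, combined with the zero-sum structure, forces both conditions of Definition~\ref{def:stackelberg}. A critical point of $\omegas$ satisfies $D_2 f_2(x^\ast)=0$, which in the zero-sum case reads $D_2 f(x^\ast)=0$. Substituting into the formula~\eqref{eq:unrolled} for $Df_1$, every term proportional to $D_2 f_1 = D_2 f$ drops out at $x^\ast$, so $Df_1(x^\ast)=0$ reduces to $D_1 f(x^\ast)=0$. The implicit map $\conj$ with $\conj(x_1^\ast)=x_2^\ast$ is then well defined near $x_1^\ast$ provided $D_2^2 f_2(x^\ast)$ is invertible, a property we will recover from stability rather than impose.

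The heart of the argument is a block decomposition of $\Js(x^\ast)$. Differentiating $\omegas=(Df_1,-D_2 f)$ and applying the product rule to the expression in~\eqref{eq:unrolled}, every term containing a factor of $D_2 f_1$ vanishes at $x^\ast$ because $D_2 f(x^\ast)=0$. Using the zero-sum identities $D_2^2 f_2 = -D_2^2 f$ and $D_{21} f_2 = -D_{21} f$, the $(1,2)$ block collapses via
\[D_2(Df_1)(x^\ast) = D_{21} f_1 - D_2^2 f_1 (D_2^2 f_2)^{-1} D_{21} f_2 = D_{21} f - D_{21} f = 0,\]
and a parallel calculation gives
\[\Js(x^\ast)=\begin{bmatrix} D_1^2 f - D_{12} f (D_2^2 f)^{-1} D_{21} f & 0 \\ -D_{12} f & -D_2^2 f\end{bmatrix},\]
with all derivatives evaluated at $x^\ast$. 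This matrix is block lower triangular.

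Since block triangularity gives $\spec(\Js(x^\ast))=\spec(\text{top-left})\cup\spec(-D_2^2 f)$, stability forces both diagonal blocks to have spectra in $\mb{C}_+^\circ$. Both blocks are symmetric (using $D_{12} f=(D_{21} f)^\trans$), so their eigenvalues are real and strictly positive. The bottom block yields $D_2^2 f_2(x^\ast)=-D_2^2 f(x^\ast)>0$, which legitimizes $\conj$ via the implicit function theorem. For the top block, a direct implicit-differentiation computation for $\phi(x_1):=f_1(x_1,\conj(x_1))$ — again using $D_2 f_1(x^\ast)=0$ to kill the cross terms — shows that $D^2 f_1(x_1^\ast,\conj(x_1^\ast))=D^2\phi(x_1^\ast)$ equals exactly the Schur complement in the top-left block, hence is positive definite. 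Combined with $Df_1(x^\ast)=0$ and $x_2^\ast=\conj(x_1^\ast)$, every requirement of Definition~\ref{def:stackelberg} holds.

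The main obstacle is the bookkeeping in computing $\Js(x^\ast)$: differentiating $D_2 f_1(D_2^2 f_2)^{-1}D_{21}f_2$ produces terms involving the derivative of the matrix inverse $(D_2^2 f_2)^{-1}$ and third-order derivatives of $f_2$, and one must track carefully which terms survive once the critical-point condition $D_2 f(x^\ast)=0$ and the zero-sum identifications are imposed. Once these simplifications are in place, the remainder of the proof is a clean application of the block triangular spectral formula and the implicit function theorem.
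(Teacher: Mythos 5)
Your proof is correct and follows essentially the same route as the paper's: at a critical point the zero-sum structure (via $D_2f(x^\ast)=0$) makes the $(1,2)$ block of $\Js(x^\ast)$ vanish, so the Jacobian is block lower triangular, and the spectral-union argument forces the Schur-complement block $D^2f_1(x^\ast)$ and $-D_2^2f(x^\ast)$ to be positive definite, which is exactly the definition of a differential Stackelberg equilibrium. Your version simply spells out more of the bookkeeping (first-order conditions, symmetry of the diagonal blocks, and why the third-order and inverse-derivative terms are killed by $D_2f_1(x^\ast)=0$) that the paper's terser proof leaves implicit.
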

\begin{proof}
Consider an arbitrary sufficiently smooth zero-sum game $(f,-f)$ on continuous
strategy spaces.
The Jacobian of the Stackelberg limiting
dynamics $\dot{x}=-\omegas(x)$ at a stable critical point
is $x^{\ast}$
\begin{equation}
    \Js(x^{\ast})=\bmat{D_1(Df)(x^{\ast}) & 0\\
-D_{21}f(x^{\ast}) & -D_{2}^2f(x^{\ast})} > 0.
\label{eq:stackjac}
\end{equation}
 The structure of the Jacobian $\Js(x^{\ast})$ follows from the fact that 
 \[D_2(Df)(x^{\ast}) = D_{12}f(x^{\ast})-D_{12}f(x^{\ast})(D_2^2f(x^{\ast}))^{-1}D_2^2f(x^{\ast})=0.\] 
The eigenvalues of a lower triangular block matrix are the union of the eigenvalues in each of the block diagonal components. This implies that if $\Js(x^{\ast}) > 0$, then necessarily $D_1(Df)(x^{\ast}) > 0$ and $-D_{2}^2f(x^{\ast}) > 0$. Consequently, any stable critical point of the Stackelberg limiting dynamics must be a differential Stackelberg equilibrium by definition.
\end{proof}
The result of Proposition~\ref{prop:allstack} implies that with appropriately
chosen stepsizes the only attracting critical points of the update rule in~\eqref{eq:noisyupdate} will be Stackelberg equilibria and thus, unlike simultaneous play individual gradient descent (known as gradient-play in the game theory literature), will not converge to spurious locally asymptotically stable attractors of the dynamics that are not relevant to the underlying game. 

In a recent work on GANs~\cite{metz:2017aa}, hierarchical learning of
a similar nature proposed in this paper is studied in the context of
zero-sum games. In the author's formulation, the generator is deemed the leader and the discriminator as the follower. The idea is to allow the discriminator to take $k$ individual gradient steps to update its parameters, while the parameters of the generator are held fixed. The effect of `unrolling' the discriminator update for $k$ steps is that a surrogate objective of $f(x_1, r_2(x_1))$ arises for the generator, meaning that the timescale-separation between the discriminator and the follower induces an update reminiscent of that given for the leader in~\eqref{eq:unrolled}. In particular, when $k\rightarrow\infty$ the follower converges to a local optimum as a function of the generator's parameters so that $D_2f(x_1, x_2)\rightarrow 0$. As a result, the critical points coincide with the Stackelberg dynamics we study, indicating that unrolled GANs are converging only to Stackelberg equilibria. Empirically, GANs learned with such timescale separation procedures seem to outperform gradient descent
with uniform stepsizes~\cite{metz:2017aa}, providing evidence Stackelberg equilibria can be sufficient in GANs.

This begs a further question of if attractors of the dynamics
$\dot{x}=-\omega(x)$ are Stackelberg equilibria. We begin to answer this inquiry by showing that
stable differential Nash are differential Stackelberg equilibria.

\begin{proposition}
    Stable differential Nash
    equilibria in continuous zero-sum games are differential
    Stackelberg equilibria. That is, given a zero-sum game $(f,-f)$ defined by a
    sufficiently smooth function $f\in C^q(X, \mb{R})$ with $q\geq 2$, a
    stable differential Nash
    equilibrium $x^\ast$ is a differential Stackelberg equilibrium. 
    \label{prop:DNEareDSE}
\end{proposition}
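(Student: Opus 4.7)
The plan is to verify the three requirements of Definition~\ref{def:stackelberg} directly from the differential Nash conditions, never actually invoking the spectral hypothesis of stability. Writing $f_1 = f$ and $f_2 = -f$, the differential Nash hypothesis (Definition~\ref{def:nash}) supplies $D_1 f(x^\ast) = 0$, $D_2 f(x^\ast) = 0$, $D_1^2 f(x^\ast) > 0$, and $-D_2^2 f(x^\ast) > 0$. These four facts, together with a Schur complement identity, will suffice.

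First I would use the invertibility of $D_2^2 f_2(x^\ast) = -D_2^2 f(x^\ast) > 0$ to apply the implicit function theorem exactly as in the preliminaries, producing a local smooth reaction curve $\conj$ with $\conj(x_1^\ast) = x_2^\ast$, $D_2 f_2(x_1, \conj(x_1)) \equiv 0$ on a neighborhood $U$ of $x_1^\ast$, and $D\conj(x_1^\ast) = -(D_2^2 f_2(x^\ast))^{-1} D_{21} f_2(x^\ast)$. Next I would check the first-order Stackelberg condition $Df_1(x^\ast) = 0$ via the chain rule: $Df_1 = D_1 f + D_2 f \cdot D\conj$ vanishes at $x^\ast$ because both partial gradients of $f$ are zero at a differential Nash equilibrium.

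The heart of the argument is the second-order condition. Differentiating $x_1 \mapsto f_1(x_1, \conj(x_1))$ twice and evaluating at $x^\ast$, the term proportional to $D^2 \conj$ drops out because its coefficient $D_2 f(x^\ast)$ vanishes, and a short calculation yields the Schur complement
\begin{equation*}
D^2 f_1(x_1^\ast, \conj(x_1^\ast)) = D_1^2 f(x^\ast) - D_{12} f(x^\ast)\, (D_2^2 f(x^\ast))^{-1}\, D_{21} f(x^\ast).
\end{equation*}
The crucial observation is that $D_2^2 f(x^\ast)$ is \emph{negative} definite in the zero-sum case, so setting $M := (-D_2^2 f(x^\ast))^{-1} \succ 0$ and using $D_{21} f(x^\ast) = D_{12} f(x^\ast)^{\trans}$, the expression rewrites as $D_1^2 f(x^\ast) + D_{12} f(x^\ast)\, M\, D_{12} f(x^\ast)^{\trans}$, which is the sum of a positive definite and a positive semidefinite symmetric matrix and is therefore positive definite.

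Taken together, these three facts verify Definition~\ref{def:stackelberg}, so $x^\ast$ is a differential Stackelberg equilibrium. The only mildly delicate step is keeping the signs straight through the Schur complement; this is also the step where the zero-sum assumption does the real work, since in a general-sum game the analogous term would be indefinite and the implication would fail. Notably, the argument never uses the full stability hypothesis $\spec(J(x^\ast)) \subset \mb{C}_+^\circ$---only the diagonal positive definiteness furnished by the differential Nash definition is needed.
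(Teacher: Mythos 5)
Your proof is correct, and it rests on the same key identity as the paper's: at a critical point the total second derivative $D^2f_1(x_1^\ast,\conj(x_1^\ast))$ equals the Schur complement $D_1^2f(x^\ast)-D_{12}f(x^\ast)(D_2^2f(x^\ast))^{-1}D_{21}f(x^\ast)$. The difference is in how positivity of that Schur complement is obtained. The paper deduces it from the hypothesis that the full game Jacobian $J(x^\ast)$ is positive (the "stable" part of the hypothesis), while you obtain it directly by rewriting the correction term as $D_{12}f(x^\ast)\,(-D_2^2f(x^\ast))^{-1}D_{12}f(x^\ast)^{\trans}\geq 0$ and adding it to $D_1^2f(x^\ast)>0$; you also spell out the implicit-function-theorem construction of $\conj$, the first-order condition $Df_1(x^\ast)=0$, and the chain-rule computation behind the Schur identity, all of which the paper merely asserts. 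This buys you a slightly stronger statement: every differential Nash equilibrium of a zero-sum game, stable or not, is a differential Stackelberg equilibrium. In fact your observation that stability is superfluous is consistent with the structure of the zero-sum Jacobian: since the off-diagonal blocks are $D_{12}f$ and $-D_{21}f=-D_{12}f^{\trans}$, the quadratic form of $J(x^\ast)$ collapses to $x_1^{\trans}D_1^2f(x^\ast)x_1+x_2^{\trans}(-D_2^2f(x^\ast))x_2$, so the diagonal Nash conditions already force the symmetric part of $J(x^\ast)$ to be positive definite (hence $\spec(J(x^\ast))\subset\mb{C}_+^\circ$); the paper's extra hypothesis adds nothing in the zero-sum case, which is precisely why your leaner argument closes.
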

\begin{proof}
Consider an arbitrary sufficiently smooth zero-sum game $(f,-f)$ on continuous
strategy spaces.  Suppose $x^{\ast}$ is a stable differential Nash equilibrium so that
  by definition  $D_{1}^2f(x^{\ast})>0$,
    $-D_{2}^2f(x^{\ast})>0$, and 
    \[J(x^{\ast})=\bmat{D_{1}^2f(x^{\ast}) & D_{12}f(x^{\ast})\\ -D_{21}f(x^{\ast}) & -D_{2}^2f(x^{\ast})}>0.\]
Then, the Schur
 complement of $J(x^{\ast})$ is also positive definite: 
 \[D_{1}^2f(x^{\ast})-D_{21}f(x^{\ast})^\trans(D_2^2f(x^{\ast}))^{-1}D_{21}f(x^{\ast})>0\]
 Hence, $x^{\ast}$ is a
 differential Stackelberg equilibrium since the Schur complement of $J$ is exactly the
 derivative $D^2f$ at critical points and $-D_2^2f(x^{\ast})>0$ since $x$ is a
 differential Nash equilibrium.
\end{proof}
\begin{remark}
 In the zero-sum setting, the fact that Nash equilibria are a subset of Stackelberg equilibria (or
    minimax equilibria) for finite games is well-known~\cite{basar:1998aa}.
       We show the result for the notion of differential Stackelberg equilibria for
    continuous action space games that we introduce.  Similar to our work and concurrently, Jin et al.~\cite{jin2019minmax} also show that local Nash
    equilibria are local minmax solutions for continuous zero-sum games. 
 It is interesting to point
    out that for a subclass of zero-sum continuous games with a convex-concave 
    structure for the leader's cost the set
    of
    (differential) Nash and
   (differential) Stackelberg equilibria coincide. Indeed, $D_1^2f(x)>0$ at
   critical points for convex-concave games, so that if $x$ is a differential Stackelberg
   equilibrium, it is also a Nash equilibrium. 
\end{remark}

This result indicates that recent works seeking Nash equilibria in GANs are
seeking Stackelberg equilibria concurrently. Given that it is well-known simultaneous gradient play can converge to attracting critical points that do not satisfy the conditions of a Nash equilibria, it remains to determine when such spurious non-Nash attractors of the dynamics $\dot{x}=-\omega(x)$ will be an attractor of the Stackelberg dynamics
$\dot{x}=-\omegas(x)$. 

Let us start with a motivating question: \emph{when are non-Nash
    attractors of $\dot{x}=-\omega(x)$ differential Stackelberg equilibria?}
    It was shown by~\citet{jin2019minmax} that not all
attractors of $\dot{x}=-\omega(x)$ are local min-max or local max-min equilibria
since one can construct a function such that 
$D_1^2f(x)$ and $-D_2^2f(x)$ are \emph{both} not positive definite but $J(x)$ has positive
eigenvalues. 
It appears to be much harder to characterize when a non-Nash attractor of
$\dot{x}=-\omega(x)$ is a differential Stackelberg equilibrium since being a
differential Stackelberg equilibrium requires the follower's individual Hessian
to be positive
definite. Indeed, it reduces to a fundamental problem in linear algebra in which
the relationship between the eigenvalues of the sum of two matrices is largely unknown
without assumptions on the structure of the matrices~\cite{tao:2001aa}. For the class of zero-sum games, in what follows we provide some necessary and sufficient conditions
for non-Nash attractors at which the follower's Hessian is positive definite to be a differential Stackelberg equilibria. Before doing so, we present an illustrative example in which several attracting critical points of the simultaneous gradient play dynamics are not differential Nash equilibria but are differential Stackelberg equilibria---meaning points $x\in X$ at which $-D_2^2f(x)>0$, $\spec(J(x))\subset \mb{C}_+^\circ$, and $D_1^2f(x)-D_{21}f(x)^\top (D_2^2f(x))^{-1}D_{21}f(x)>0$.

\begin{example}[Non-Nash Attractors are Stackelberg.] 
Consider the zero-sum game defined by
\begin{equation}f(x)=-e^{-0.01(x_1^2+x_2^2)}((ax_1^2+x_2)^2+(bx_2^2+x_1)^2).\label{eq:polygame}\end{equation}
Let player $1$ be the leader who aims to minimize $f$ with respect to $x_1$
taking into consideration that player 2 (follower) aims to minimize $-f$ with
respect to $x_2$. In Fig.~\ref{fig:polygame}, we show the trajectories for
various initializations for this game with $(a,b)=(0.15, 0.25)$; it can be
seen that for several initializations, simultaneous gradient play leads to
non-Nash attractors which are differential Stackelberg equilibria. 
\begin{figure}[t]
    \centering
    \includegraphics[width=0.5\columnwidth]{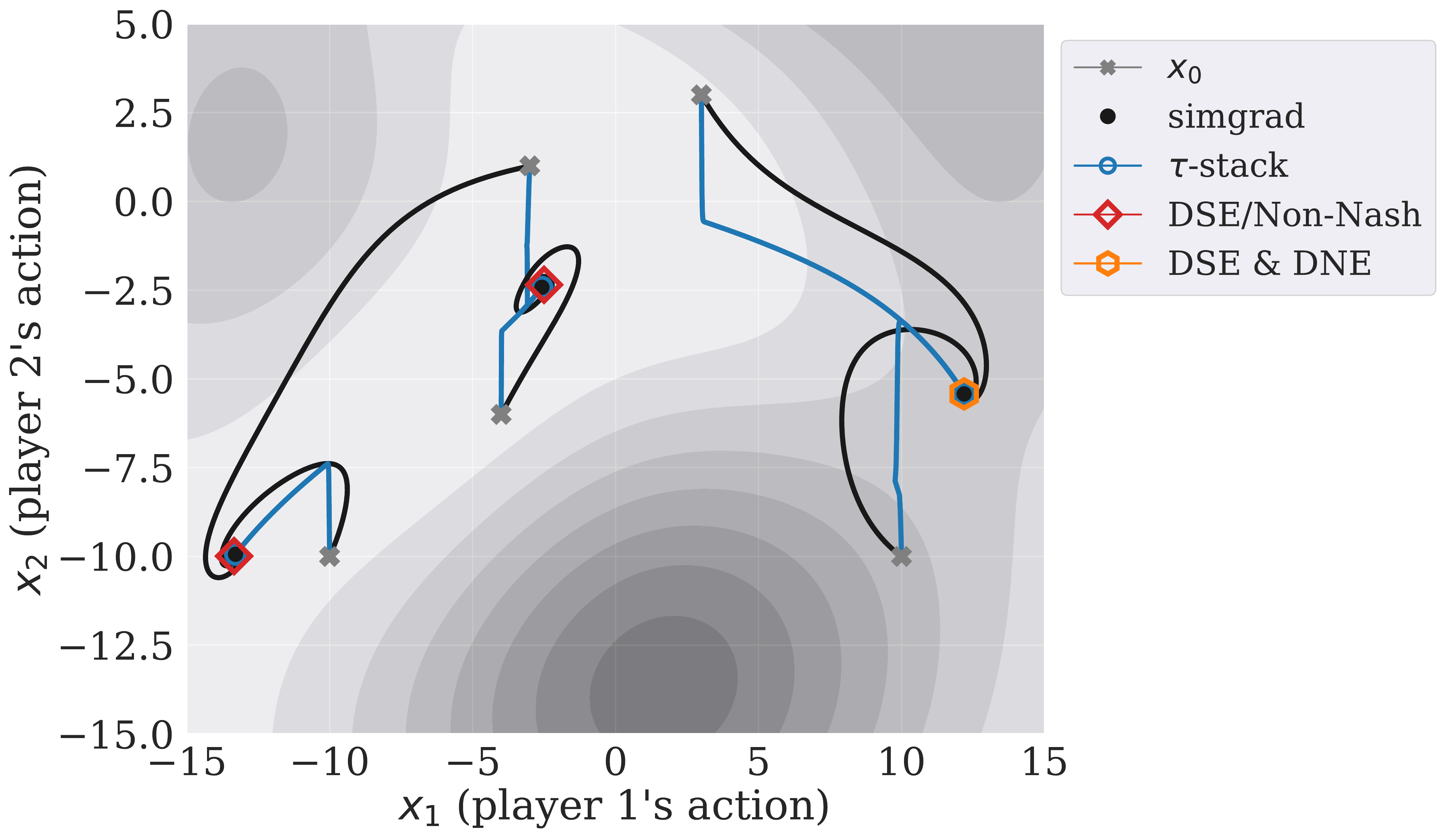}
    \caption{Simultaneous gradient play is attracted to non-Nash differential Stackelberg equilibria: The game is given by the pair of cost functions $(f,-f)$
    where $f$ is defined in \eqref{eq:polygame} with $a=0.15$ and $b=0.25$. There are
two non-Nash attractors of simultaneous gradient play which are also 
differential Stackelberg equilibria. }
    \label{fig:polygame}
\end{figure}
\label{example:nonnash_stack}
\end{example}

We now proceed to provide necessary and sufficient conditions for the phenenom demonstrated in Example~\ref{example:nonnash_stack}. Attracting critical points $x^\ast$ of the dynamics $\dot{x}=-\omega(x)$ that are not
Nash equilibria are such that either $D_{1}^2f(x^\ast)$ or $-D_2^2f(x^\ast)$
are not positive definite. Without loss of generality, considering player 1 to
be the leader, an attractor of the Stackelberg dynamics $\dot{x}=-\omegas(x)$
requires both
$-D_2^2f(x^\ast)$ and $D_1^2f(x^\ast)-D_{21}f(x^\ast)^\trans(D_2^2f(x^\ast))^{-1}D_{21}f(x^\ast)$ to
be positive
definite. 
Hence, if $-D_2^2f(x^\ast)$ is not positive definite at a non-Nash attractor of
$\dot{x}=-\omega(x)$,
then $x^\ast$ will also not be an attractor of $\dot{x}=-\omega_{\mc{S}}(x)$. 
We focus on non-Nash attractors with $-D_2^2f(x^\ast)>0$ and seek to determine when the Schur complement is positive definite, so that $x^\ast$ is an attractor of $\dot{x}=\omegas(x)$.

In the following two propositions, we need some addition notion that is common
across the two results. 
Let $x_1\in \mb{R}^{m}$ and
$x_2\in\mb{R}^{n}$. For a non-Nash attracting critical point $x^\ast$, let
$\spec(D_1^2f(x^\ast))=\{\mu_j, \ j\in \{1, \ldots, m\}\}$
where \[\mu_1\leq \cdots \leq \mu_r<0\leq \mu_{r+1}\leq \cdots \leq \mu_m,\] and
let $\spec(-D_2^2f(x^\ast))=\{\lambda_i, \ i\in \{1, \ldots, n\}\}$ where
\[\lambda_1\geq \cdots \geq \lambda_n>0,\] and define
$p=\dim(\ker(D_1^2f(x^\ast)))$.
\begin{proposition}[Necessary conditions]
Consider a non-Nash attracting critical point $x^\ast$ of the gradient dynamics $\dot{x}=-\omega(x)$ such that $-D_2^2f(x^\ast)>0$. Given $\kappa>0$ such that $\|D_{21}f(x^\ast)\|\leq \kappa$, if $D_1^2f(x^\ast)-D_{21}f(x^\ast)^\trans(D_2^2f(x^\ast))^{-1}D_{21}f(x^\ast)>0$,
then $r\leq n$ and $\kappa^2\lambda_i+\mu_i>0$ for all $i\in\{1, \ldots, r-p\}$.
\label{prop:nec}
\end{proposition}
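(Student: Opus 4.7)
The plan is to unpack the Schur-complement hypothesis and apply two ingredients: a dimension-counting argument for the rank bound $r \leq n$, and Weyl's inequality together with a spectral bound on a factored matrix for the pointwise inequalities $\kappa^2 \lambda_i + \mu_i > 0$. Write $A = D_1^2 f(x^\ast)$, $B = D_{21} f(x^\ast)$, and $C = -D_2^2 f(x^\ast)$; the hypothesis $-D_2^2 f(x^\ast) > 0$ makes $C$ positive definite, and the Schur-complement hypothesis reads $S := A + M > 0$, where $M := B^{\trans} C^{-1} B$ is positive semidefinite because $C^{-1}$ is positive definite.

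For the rank bound, observe that on $\ker M$ the restriction of $S$ coincides with that of $A$, so $A$ is positive definite on $\ker M$. This forces $\ker M$ to have trivial intersection with the $(r+p)$-dimensional non-positive eigenspace of $A$, giving $\dim(\ker M) \leq m - (r+p)$ and hence $\rank(M) \geq r + p$. Since $M = B^{\trans} C^{-1} B$ and $B$ maps $\mb{R}^m$ into $\mb{R}^n$, we have $\rank(M) \leq \rank(B) \leq n$. Combining these inequalities yields $r + p \leq n$, so in particular $r \leq n$ (the sharper $r \leq n - p$ also falls out).

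For the pointwise inequality, let $\nu_1 \geq \nu_2 \geq \cdots \geq \nu_m \geq 0$ denote the eigenvalues of $M$ in decreasing order. Weyl's inequality applied to $S = A + M$ gives, for each index, the bound $\lambda^\downarrow_m(S) \leq \lambda^\downarrow_{m-i+1}(A) + \nu_i$. Since $\lambda^\downarrow_{m-i+1}(A) = \mu_i$ (the $i$th smallest eigenvalue of $A$) and $\lambda^\downarrow_m(S) > 0$ by hypothesis, we obtain $\nu_i > |\mu_i|$ for every $i \in \{1, \ldots, r\}$. I would then upper-bound $\nu_i$ by applying Horn's singular-value inequality to the factorization $M = (C^{-1/2} B)^{\trans} (C^{-1/2} B)$: combining $\|B\| \leq \kappa$ with the spectrum of $C$ provides a bound of the form $\nu_i \leq \kappa^2 \lambda_i$ on the range $i \in \{1, \ldots, r-p\}$, the truncation at $r - p$ reflecting that the $p$-dimensional kernel of $A$ has already consumed $p$ units of rank in $M$. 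Chaining $|\mu_i| < \nu_i \leq \kappa^2 \lambda_i$ delivers $\kappa^2 \lambda_i + \mu_i > 0$.

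The main obstacle will be this last step: extracting the precise inequality $\nu_i \leq \kappa^2 \lambda_i$ from the factored form of $M$ requires carefully selecting the correct Horn-Weyl singular-value inequality and matching the orderings of $\{\mu_i\}$ and $\{\lambda_i\}$ so that the pairing in the conclusion comes out correctly, and verifying that the valid index range is exactly $\{1, \ldots, r-p\}$ (rather than the larger $\{1, \ldots, r\}$ produced by the raw Weyl step) is where the kernel-dimension correction $p$ enters and constrains how many units of rank in $M$ are still available to dominate negative eigenvalues of $A$.
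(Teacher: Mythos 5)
Your overall strategy is the same as the paper's: the claim $r\leq n$ comes from a dimension count showing that the non-positive eigenspace of $A=D_1^2f(x^\ast)$ cannot meet the kernel of the correction term, and the scalar inequalities come from Weyl's inequality combined with an operator-norm eigenvalue bound on the congruence $B^\trans C^{-1}B$ (the paper's Lemma~\ref{lem:hermorder}). Your first half is correct and in fact cleaner than the paper's argument: working directly with $M=B^\trans C^{-1}B$ avoids the $|C^{-1}|$ detour, and your count even yields the slightly sharper conclusion $r+p\leq n$. Your Weyl step is also correct: $0<\lambda_{\min}(S)\leq \mu_i+\nu_i$ does give $\nu_i>|\mu_i|$ for $i\in\{1,\ldots,r\}$.

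The genuine gap is exactly the step you defer to the end. From the factorization $M=(C^{-1/2}B)^\trans(C^{-1/2}B)$, the Horn-type singular-value inequalities give $\nu_i=\sigma_i^2(C^{-1/2}B)\leq\|B\|^2\,\lambda_i^{\downarrow}(C^{-1})=\kappa^2/\lambda_{n-i+1}$, i.e., a bound in terms of the spectrum of $C^{-1}=\bigl(-D_2^2f(x^\ast)\bigr)^{-1}$, not of $C$ itself. The inequality you need, $\nu_i\leq\kappa^2\lambda_i$, is false as a matrix fact: already for $m=n=1$ with $C=\lambda$ small and $\|B\|=\kappa$ one has $\nu_1=\kappa^2/\lambda\gg\kappa^2\lambda$. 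No reindexing or alternative Horn--Weyl inequality produces $\kappa^2\lambda_i$ on the right-hand side, so the chain $|\mu_i|<\nu_i\leq\kappa^2\lambda_i$ cannot be completed; what your argument actually establishes is $\kappa^2/\lambda_{n-i+1}+\mu_i>0$, i.e., the inequality with $\lambda_i$ replaced by the corresponding eigenvalue of the inverse. For what it is worth, the paper's own proof makes the same silent substitution--it bounds the relevant term by $\|B\|^2\lambda_{m-i+1}(-C^{-1})$ and then writes $\kappa^2\lambda_{m-i+1}$, conflating the spectrum of $-C^{-1}$ with that of $-D_2^2f(x^\ast)$--so you have faithfully reproduced the paper's route down to its weakest link; but the step you flag as ``the main obstacle'' is not merely a matter of matching orderings, it is the point at which only the inverse-spectrum version of the bound survives. (Your justification of the index range $\{1,\ldots,r-p\}$ via ``rank consumed by the kernel'' is also only heuristic--the truncation should come from the rank/positive-eigenvalue restriction in the eigenvalue-of-congruence lemma--but that is a secondary issue next to the failed inequality.)
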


\begin{proposition}[Sufficient conditions]
    Let $x^\ast$ be a non-Nash attracting critical point of the individual gradient dynamics $\dot{x}=-\omega(x)$ such
    that $D_1^2f(x^\ast)$ and $-D_2^2f(x^\ast)$ are Hermitian, and
    $-D_2^2f(x^\ast)>0$. Suppose that
    there exists a diagonal matrix (not necessarily positive) $\Sigma \in
    \mb{C}^{m \times n}$ with non-zero entries such that
    $D_{12}f(x^\ast)=\poneevecs \Sigma \ptwoevecs^\ast$ where $\poneevecs$ are the orthonormal eigenvectors
    of $D_1^2f(x^\ast)$ and $\ptwoevecs$ are orthonormal eigenvectors of
    $-D_2^2f(x^\ast)$. 
    Given $\kappa>0$ such that $\|D_{21}f(x^\ast)\|\leq \kappa$, if $r\leq
    n$ and $\kappa^2\lambda_i+\mu_i>0$ for each $i\in \{1, \ldots, r-p\}$, then $x^\ast$ is a
    differential Stackelberg equilibrium and an attractor of
    $\dot{x}=-\omegas(x)$. 
     \label{prop:suf}
\end{proposition}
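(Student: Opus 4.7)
The plan is to diagonalize the Stackelberg Schur complement in the simultaneous eigenbasis supplied by the alignment hypothesis, reduce positive definiteness to a coordinatewise check, and then read off both claims from the block-triangular structure of $\Js(x^*)$ already exploited in Proposition~\ref{prop:allstack}.

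First I would substitute the spectral decompositions $D_1^2 f(x^*) = \poneevecs M \poneevecs^*$ with $M = \diag(\mu_1,\ldots,\mu_m)$, $-D_2^2 f(x^*) = \ptwoevecs \Lambda \ptwoevecs^*$ with $\Lambda = \diag(\lambda_1,\ldots,\lambda_n)$, and the assumed factorization $D_{12}f(x^*) = \poneevecs \Sigma \ptwoevecs^*$ into the Stackelberg Schur complement
\[
S := D_1^2 f(x^*) - D_{12}f(x^*)(D_2^2 f(x^*))^{-1}D_{21}f(x^*).
\]
Using $(D_2^2 f(x^*))^{-1} = -\ptwoevecs \Lambda^{-1} \ptwoevecs^*$ and collapsing the factors $\ptwoevecs^*\ptwoevecs = I$ yields $S = \poneevecs(M + \Sigma \Lambda^{-1} \Sigma^*)\poneevecs^*$. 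Because $\Sigma$ is diagonal with nonzero entries $\sigma_i$, the inner matrix is itself diagonal, with $i$-th entry $\mu_i + |\sigma_i|^2/\lambda_i$ for $i \leq \min(m,n)$ and $\mu_i$ otherwise. Since $\poneevecs$ is unitary, establishing $S > 0$ reduces to checking positivity of each such diagonal entry.

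Next I would verify positivity case by case along the eigenvalue partition of $D_1^2 f(x^*)$. For indices with $\mu_i > 0$ the entry is positive automatically. For the $p$ kernel indices ($\mu_i = 0$), positivity reduces to $\sigma_i \neq 0$, which is built into the hypothesis once $r \leq n$ guarantees these indices lie within the diagonal range of $\Sigma$. For the remaining $r-p$ strictly-negative indices, the required inequality is equivalent to $|\sigma_i|^2 > -\mu_i \lambda_i$, which I would certify by combining the operator-norm bound $\|D_{21}f(x^*)\| \leq \kappa$ with the scalar hypothesis $\kappa^2 \lambda_i + \mu_i > 0$ in the aligned coordinates. With $S = D^2 f_1(x_1^*, r(x_1^*)) > 0$ and $Df_1(x_1^*, r(x_1^*)) = 0$ holding automatically at the critical point $x^*$, Definition~\ref{def:stackelberg} is satisfied and $x^*$ is a differential Stackelberg equilibrium.

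The attractor claim then follows by the same block triangular argument used in Proposition~\ref{prop:allstack}: the Jacobian $\Js(x^*)$ in~\eqref{eq:stackjac} is block lower triangular with diagonal blocks $D_1(Df)(x^*) = S$ and $-D_2^2 f(x^*)$, both positive definite, hence $\spec(\Js(x^*)) \subset \mb{C}_+^\circ$ and $x^*$ is a locally asymptotically stable equilibrium of $\dot{x} = -\omegas(x)$. The main obstacle will be the strictly-negative regime of the entrywise check: the bookkeeping that pairs the diagonal singular magnitudes $|\sigma_i|$ with the ordered eigenvalues $\mu_i$ and $\lambda_i$ through the simultaneous diagonalization, and the conversion of the operator-norm control on $D_{21}f(x^*)$ into the stated scalar inequality in a way that is actually sufficient (not merely necessary) for $|\sigma_i|^2 + \mu_i \lambda_i > 0$ in the aligned basis.
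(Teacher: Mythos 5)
Your route is the same as the paper's: diagonalize the Schur complement in the aligned bases to get $D_1^2f(x^\ast)-D_{21}f(x^\ast)^\trans(D_2^2f(x^\ast))^{-1}D_{21}f(x^\ast)=\poneevecs(M+\Sigma\Lambda^{-1}\Sigma^\ast)\poneevecs^\ast$, reduce positive definiteness to the diagonal entries $\mu_i+|\sigma_i|^2/\lambda_i$, and obtain the attractor claim from the block-triangular Jacobian $\Js(x^\ast)$ as in Proposition~\ref{prop:allstack} (you actually spell out the converse direction of that proposition more carefully than the paper does). The paper's own proof stops exactly where you stop: after the diagonalization it asserts positive definiteness directly from $r\leq n$ and $\kappa^2\lambda_i+\mu_i>0$ without exhibiting the entrywise inequality.

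The genuine gap is the step you flag yourself: certifying $|\sigma_i|^2>-\mu_i\lambda_i$ for the strictly negative $\mu_i$ from the stated hypotheses. The bound $\|D_{21}f(x^\ast)\|\leq\kappa$ gives $|\sigma_i|\leq\kappa$, an \emph{upper} bound on $|\sigma_i|$, whereas the entrywise inequality needs a \emph{lower} bound; and since $\kappa$ is only required to dominate the norm, the hypothesis $\kappa^2\lambda_i+\mu_i>0$ can be satisfied by enlarging $\kappa$ while $\mu_i+|\sigma_i|^2/\lambda_i$ remains negative (for instance $\mu_i=-1$, $\lambda_i=10$, $\sigma_i=0.1$, $\kappa=1$ gives $\kappa^2\lambda_i+\mu_i=9>0$ but $\mu_i+|\sigma_i|^2/\lambda_i=-0.999$). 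So the proposed combination of the operator-norm bound with the scalar hypothesis cannot by itself produce the required inequality; anything that rules out such data must come from the unused assumption that $x^\ast$ is an attractor of $\dot{x}=-\omega(x)$ (which is what rescues the scalar case in Corollary~\ref{cor:2by2case}), a property that neither you nor the paper invokes at this point---the paper even remarks explicitly that it is not used. A secondary slip: $r\leq n$ places the strictly negative indices inside the diagonal range of $\Sigma$, but when $m>n$ it does not place the kernel indices there, so your claim that the $\mu_i=0$ entries are handled by $\sigma_i\neq0$ needs $r+p\leq n$ (or a separate argument), not just $r\leq n$.
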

The proofs of the above results follow from some results linear algebra and are
both in Appendix~\ref{app:linalg}.
Essentially, this says that if $D_1^2f(x^\ast)=\poneevecs M \poneevecs^\ast$ with
$\poneevecs \poneevecs^\ast=I_{n\times n}$ and $M$ diagonal, and
$-D_2^2f(x^\ast)=\ptwoevecs \Lambda \ptwoevecs^\ast$ with
$\ptwoevecs \ptwoevecs^\ast=I_{m\times m}$ and $\Lambda $ diagonal, then
 $D_{12}f(x^\ast)$ can be written as $\poneevecs\Sigma \ptwoevecs^\ast$ for some diagonal matrix
 $\Sigma \in
 \mb{R}^{n\times m}$ (not necessarily positive). Note that since $\Sigma$ does
 not necessarily have positive values, $\poneevecs\Sigma \ptwoevecs^\ast$ is not the singular
 value decomposition of $D_{12}f(x^\ast)$. In turn, this means that
 the
each eigenvector of $D_1^2f(x^\ast)$ get mapped onto a single eigenvector of
$-D_2^2f(x^\ast)$ through the transformation $D_{12}f(x^\ast)$ which describes
how player 1's variation $D_1f(x)$ changes as a function of player 2's choice.
With this structure for $D_{12}f(x^\ast)$, we can show that 
  $D_1^2f(x^\ast)-D_{21}f(x^\ast)^\trans(D_2^2f(x^\ast))^{-1}D_{21}f(x^\ast)>0$. If we remove the assumption that $\Sigma$ has non-zero entries,
  then the remaining assumptions are still sufficient to guarantee that 
  \[D_1^2f(x^\ast)-D_{21}f(x^\ast)^\trans(D_2^2f(x^\ast))^{-1}D_{21}f(x^\ast)\geq
  0.\]
  This means that $x^\ast$ does not satisfy the conditions for a differential
  Stackelberg, however, the point does satisfy necessary conditions for a local
  Stackelberg equilibrium and the point is a marginally stable attractor of the
  dynamics. 

While the results depend on conditions that are difficult to
check a priori without knowledge of $x^\ast$, certain classes of games for which these conditions hold
everywhere and not just at the equilibrium can be constructed. 
For instance, alternative conditions can be given: if the function $f$ which
defines the zero-sum game is such that it is concave in $x_2$ and there exists a 
$K$ such that 
\[D_{12}f(x)=KD_2^2f(x)\]
where $\sup_{x}\|D_{12}f(x)\|\leq \kappa<\infty$\footnote{Functions such that
    derivative of $f$
is Lipschitz will satisfy this condition.} and $K=\poneevecs\Sigma \ptwoevecs^\ast$ with $\Sigma$ again a (not necessarily positive)
diagonal matrix, then the results of Proposition~\ref{prop:suf} hold. From a
control point of view, one can think about the leader's update as having a
feedback term with the follower's input. 
On the other hand, the results are useful for the synthesis of games, such as in reward shaping or
incentive design, where the goal is to drive agents to particular desirable
behavior.

We remark that the fact that the eigenvalues of $J(x^\ast)$ are in
the open-right-half complex plane is not used in proving this result. We believe
that further investigation could lead to a less restrictive sufficient
condition. Empirically, by randomly generating the different block matrices, it
is quite difficult to find examples such that $J(x^\ast)$ has positive
eigenvalues, $-D_2^2f(x^\ast)>0$, and the Schur complement
$D_1^2f(x^\ast)-D_{21}f(x^\ast)^\trans (D_2^2f(x^\ast))^{-1}D_{21}f(x^\ast)$ is
not positive definite. In fact, for games on scalar action spaces, it turns out that non-Nash attracting critical points of the simultaneous gradient play dynamics at which $-D_2^2f(x^\ast)>0$ must be differential Stackelberg equilibria and attractors of the Stackelberg limiting dynamics.
\begin{corollary}
Consider a zero-sum game $(f, -f)$ defined by a sufficiently smooth cost function $f:\mb{R}^2\rar \mb{R}$ such that the action space is $X = \mb{R} \times \mb{R}$ and player 1 is deemed the leader and player 2 the follower. Then, any non-Nash attracting critical point of $\dot{x}=-\omega(x)$ at which
$-D_2^2f(x)>0$ is a differential Stackelberg equilibrium and an attractor of $\dot{x}=-\omega_{\mc{S}}(x)$.
\label{cor:2by2case}
\end{corollary}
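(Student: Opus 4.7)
The plan is to exploit the fact that on scalar action spaces, all the relevant Hessian blocks collapse to real numbers, so that the Schur complement condition and the spectral conditions on $J(x^\ast)$ become two easy algebraic inequalities in the same three scalars. Concretely, set $\alpha = D_1^2 f(x^\ast)$, $\beta = D_{12}f(x^\ast) = D_{21}f(x^\ast)$, and $\lambda = -D_2^2 f(x^\ast) > 0$ (where the last inequality is by hypothesis). Since $x^\ast$ is non-Nash and $\lambda > 0$, the only way Nash can fail is $\alpha = D_1^2 f(x^\ast) \leq 0$.

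The core computation is to translate the hypothesis that $x^\ast$ is attracting into the Stackelberg condition. The individual-gradient Jacobian is
\[
J(x^\ast) = \begin{bmatrix} \alpha & \beta \\ -\beta & \lambda \end{bmatrix},
\]
and $\spec(J(x^\ast)) \subset \mb{C}_+^\circ$ is equivalent (for a $2\times 2$ real matrix) to $\mathrm{tr}(J(x^\ast)) = \alpha + \lambda > 0$ and $\det(J(x^\ast)) = \alpha\lambda + \beta^2 > 0$. Dividing the determinant inequality by $\lambda > 0$ immediately gives the scalar Schur complement
\[
D_1^2 f(x^\ast) - D_{21}f(x^\ast)(D_2^2 f(x^\ast))^{-1} D_{21}f(x^\ast) = \alpha + \tfrac{\beta^2}{\lambda} > 0,
\]
which together with $-D_2^2 f(x^\ast) = \lambda > 0$ is exactly Definition~\ref{def:stackelberg}, so $x^\ast$ is a differential Stackelberg equilibrium.

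For the attractor claim, I would invoke the block structure of $\Js$ used in the proof of Proposition~\ref{prop:allstack}: evaluated at a point with $D_2 f(x^\ast) = 0$, the Jacobian of the Stackelberg dynamics is the lower-triangular matrix
\[
\Js(x^\ast) = \begin{bmatrix} \alpha + \beta^2/\lambda & 0 \\ -\beta & \lambda \end{bmatrix},
\]
whose eigenvalues are the two diagonal entries. Both are strictly positive by the previous paragraph, so $\spec(\Js(x^\ast)) \subset \mb{C}_+^\circ$ and $x^\ast$ is a stable attractor of $\dot{x} = -\omegas(x)$.

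Nothing in this argument is a serious obstacle; the one point to be careful about is making sure the hypothesis ``attracting'' is being interpreted in the open-right-half-plane sense used throughout Section~\ref{sec:zsimplications}, so that the determinant inequality is strict and the Schur complement is strictly positive rather than merely nonnegative. This is what distinguishes the scalar case from the general setting treated in Propositions~\ref{prop:nec}--\ref{prop:suf}, where one only obtains inequalities relating spectra of sums of matrices; here the $2\times 2$ determinant identity makes the implication automatic and gives the sharper conclusion without any extra structural assumption on $D_{12}f(x^\ast)$.
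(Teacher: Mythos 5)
Your proposal is correct and follows essentially the same route as the paper's proof: it uses the trace/determinant characterization of the spectrum of the $2\times 2$ Jacobian $J(x^\ast)$, divides the determinant inequality by the positive scalar $-D_2^2f(x^\ast)$ to obtain strict positivity of the Schur complement, and then concludes via the lower-triangular structure of $\Js(x^\ast)$ exactly as in Proposition~\ref{prop:allstack}. The only cosmetic difference is that the paper also records the trace inequality $D_1^2f(x^\ast) > D_2^2f(x^\ast)$, which, as you observe, is not actually needed for the Schur complement step.
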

\begin{proof}
Consider a sufficiently smooth zero-sum game $(f,-f)$ on continuous
strategy spaces defined by the cost function $f:\mb{R}^2\rar
\mb{R}$. Suppose $x^{\ast}$ is an attracting critical point of the dynamics $\dot{x}=-\omega(x)$ at which $-D_2^2f(x^{\ast})>0$ and $D_{1}^2f(x^{\ast})<0$ so that it is not a Nash equilibria. The Jacobian of the dynamics at a stable critical point is
\[J(x^{\ast})=\bmat{D_{1}^2f(x^{\ast}) & D_{12}f(x^{\ast})\\ -D_{21}f(x^{\ast}) & -D_{2}^2f(x^{\ast})}>0.\]
The fact that the real components of the eigenvalues of the Jacobian are positive implies that $D_{12}f(x^{\ast})D_{21}f(x^{\ast})> D_{1}^2f(x^{\ast})D_{2}^2f(x^{\ast})$ and $D_{1}^2f(x^{\ast}) > D_{2}^2f(x^{\ast})$ since the determinant and the trace of the Jacobian must be positive. Using this information, it directly follows that the Schur complement of $J(x^{\ast})$ is positive definite:
 \[D_{1}^2f(x^{\ast})-D_{12}f(x^{\ast})(D_2^2f(x^{\ast}))^{-1}D_{21}f(x^{\ast})>0.\]
As a result, $x^{\ast}$ is a
 differential Stackelberg equilibrium and an attractor of $\dot{x}=-\omega_{\mc{S}}(x)$ since the Schur complement of $J(x^{\ast})$ is the
 derivative $D^2f(x^{\ast})$ and $-D_2^2f(x)>0$ was given.
\end{proof}
We suspect that using the notion of quadratic numerical
range~\cite{tretter:2008aa}, which is a super set of the spectrum of a block operator matrix, along with the fact that the Jacobian of the simultaneous gradient play dynamics has its spectrum in the open right-half complex plane, may lead to an extension of the result to arbitrary dimensions.

The results of Propositions~\ref{prop:nec} and~\ref{prop:suf}, Corollary~\ref{cor:2by2case}, and Example~\ref{example:nonnash_stack} imply that some of the non-Nash attractors of $\dot{x}=-\omega(x)$ are in fact Stackelberg equilibria. This is a meaningful insight since recent works have proposed schemes to avoid non-Nash attractors of the dynamics as they have been classified or viewed as lacking game-theoretic meaning~\cite{mazumdar:2019aa}. 
Moreover, some recent empirical results show that a number of successful approaches to training GANs are not converging to Nash equilibria, but rather to non-Nash attractors of the dynamics~\cite{berard:2019aa}. It would be interesting to characterize whether
or not the attractors satisfy the conditions we propose, and if such conditions
could provide insights into how to improve GAN training. 
It also further suggests that the Stackelberg equilibria may be a suitable solution concept for GANs.

One of the common assumptions in some of the recent GANs literature is that the
discriminator network is zero in a neighborhood of an equilibrium parameter
configuration (see,
e.g.,~\cite{nagarajan2017gradient,nie:2019aa,mescheder2018training}). This assumption limits
the theory to the  `realizable' case; the work by \cite{nagarajan2017gradient}
provides relaxed assumptions for the non-realizable case. In both cases, the
Jacobian for the dynamics $\dot{x}=-\omega(x)$ is such that $D_1^2f(x^\ast)=0$. 

\begin{proposition}
    Consider a GAN satisfying the realizable assumption---that is, the discriminator
network is zero in a neighborhood of any equilibrium. Then, an attracting critical point for the
simultaneous gradient dynamics $\dot{x}=-\omega(x)$ at which $-D_2^2f$ is
positive semi-definite satisfies necessary
conditions for a local Stackelberg equilibrium, and it will be a marginally
stable point of the Stackelberg dynamics $\dot{x}=-\omegas(x)$. 
\label{prop:gans}
\end{proposition}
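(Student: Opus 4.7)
The plan is to exploit the realizable assumption to show that the Schur complement that appears as the leader's block in $\Js(x^\ast)$ collapses to a simple, manifestly positive semi-definite expression, and then read off the two conclusions from the block-triangular structure of $\Js(x^\ast)$ already used in the proof of Proposition~\ref{prop:allstack}.

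First I would record the consequence of the realizable assumption. Because the discriminator output vanishes identically on a neighborhood of $x^\ast$, every derivative of $f$ taken purely with respect to the generator parameters $x_1$ vanishes at $x^\ast$; in particular $D_1^2 f(x^\ast)=0$. Next, since $x^\ast$ is a critical point of $\omega$, we have $D_2 f(x^\ast)=0$, so (assuming $D_2^2 f(x^\ast)$ is invertible, which is needed merely for the Stackelberg dynamics to be defined) the implicit function theorem yields the follower reaction map $r$ and, as in the derivation preceding \eqref{eq:unrolled},
\begin{equation*}
D_1(Df)(x^\ast)
= D_1^2 f(x^\ast) - D_{12}f(x^\ast)\bigl(D_2^2 f(x^\ast)\bigr)^{-1} D_{21}f(x^\ast)
= -D_{12}f(x^\ast)\bigl(D_2^2 f(x^\ast)\bigr)^{-1} D_{21}f(x^\ast).
\end{equation*}
Since $-D_2^2 f(x^\ast)\succeq 0$ by hypothesis, its inverse (where it exists) is negative semi-definite, and hence $-D_{12}f(D_2^2 f)^{-1}D_{21}f=D_{21}f^\top (-D_2^2 f)^{-1} D_{21}f \succeq 0$. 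That is exactly the second-order necessary condition $D^2 f_1(x_1^\ast, r(x_1^\ast))\succeq 0$ for $x_1^\ast$ to solve the leader's optimization, so $x^\ast$ satisfies the necessary conditions for a local Stackelberg equilibrium. (It need not be a differential Stackelberg equilibrium in the strict sense of Definition~\ref{def:stackelberg} because the Schur complement above can fail to be strictly positive definite: it is rank-deficient whenever $D_{21}f(x^\ast)$ is.)

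For the stability claim I would reuse the block-triangular Jacobian from \eqref{eq:stackjac}:
\begin{equation*}
\Js(x^\ast)=\bmat{D_1(Df)(x^\ast) & 0\\ -D_{21}f(x^\ast) & -D_2^2 f(x^\ast)}.
\end{equation*}
Because the off-diagonal block above is zero, $\spec(\Js(x^\ast))=\spec(D_1(Df)(x^\ast))\cup \spec(-D_2^2 f(x^\ast))$. Both blocks on the diagonal are symmetric positive semi-definite by the computation above and by hypothesis respectively, so every eigenvalue of $\Js(x^\ast)$ lies in the closed right-half plane, with a zero eigenvalue contributed whenever either block is rank-deficient. This is precisely the definition of a marginally stable equilibrium of $\dot{x}=-\omegas(x)$, completing the proof.

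The only delicate point, and the place I would spend the most care, is the non-degeneracy needed to make the Schur complement meaningful. If $-D_2^2 f(x^\ast)$ has a nontrivial kernel then the reaction map $r$ is not defined by the standard implicit function theorem and the Stackelberg vector field is itself singular at $x^\ast$; the clean way to handle this is either to restrict the statement to the case $-D_2^2 f(x^\ast)\succ 0$ (still allowing the leader's Schur complement to be merely semi-definite, which is all we need for ``marginal'') or to replace $(D_2^2 f)^{-1}$ by a Moore--Penrose pseudoinverse on the range of $D_{21}f$. Either convention leaves the sign chain above intact, so the necessary conditions and the marginal stability conclusion persist.
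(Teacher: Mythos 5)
Your proposal is correct and follows essentially the same route as the paper's proof: realizability gives $D_1^2f(x^\ast)=0$, so the leader block of $\Js(x^\ast)$ reduces to $-D_{21}f(x^\ast)^\trans(D_2^2f(x^\ast))^{-1}D_{21}f(x^\ast)\geq 0$, and together with $-D_2^2f(x^\ast)\geq 0$ the Stackelberg necessary conditions and marginal stability follow from the block structure of $\Js(x^\ast)$. The only cosmetic differences are that the paper also spells out the first-order condition $Df(x^\ast)=0$ explicitly (which you have implicitly, since $x^\ast$ is a critical point of $\omega$), and that your closing remark on the invertibility of $D_2^2f$ flags a subtlety the paper handles silently by writing $(D_2^2f)^{-1}$ under a merely semi-definite hypothesis.
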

\begin{proof}
    Consider an attracting critical point $x$ of $\dot{x}=-\omega(x)$ such that
    $-D_2^2f(x^\ast)\geq 0$. Note that the realizable assumption implies that
the Jacobian of $\omega$ is
    \[J(x^{\ast})=\bmat{0 & D_{12}f(x^{\ast})\\ -D_{21} f(x^{\ast}) & -D_2^2f(x^{\ast})}\]
    (see, e.g.,~\cite{nagarajan2017gradient}). Hence, since $-D_2^2f(x^{\ast})\geq 0$, 
    \[-D_{21}^\trans f(x^\ast)(D_2^2 f)^{-1}(x^\ast)D_{21}f(x^\ast)\geq 0.\]
    Since $x^\ast$ is an attractor, $D_1f(x^\ast)=0$ and $D_2f(x^\ast)=0$ so
    that
    \[Df(x^\ast)=D_1f(x^\ast)+D_2f(x^\ast)(D_2^2f)^{-1}(x^\ast)D_{21}f(x^\ast)=0\]
  Consequently, the necessary conditions for a local Stackelberg equilibrium are
  satisfied. Moreover, since both $-D_2^2f(x^\ast)\geq 0$ and the Schur complement
  $-D_{21}^\trans f(x^\ast)(D_2^2)^{-1}f(x^\ast)D_{21}f(x^\ast)\geq 0$, the
  Jacobian of $\omegas$ is positive semi-definite so
  that the
  point $x^\ast$ is marginally stable. 
\end{proof}

Now, simply satisfying the necessary conditions is not enough to guarantee that
attractors of the simultaneous play gradient dynamics will be a local
Stackelberg equilibrium.
We can state sufficient conditions by examining Proposition~\ref{prop:suf}. 
\begin{proposition}
    Consider a GAN satisfying the realizable assumption---that is, the discriminator
    network is zero in a neighborhood of any equilibrium---and an attractor for the
simultaneous gradient dynamics $\dot{x}=-\omega(x)$ at which $-D_2^2f$ is
positive definite. Suppose that  there
    exists a  diagonal matrix $\Sigma$ with non-zero entries such that
    $D_{12}f(x^\ast)=\Sigma W$ where $W$ are the orthonormal eigenvectors of
    $-D_2^2f(x^\ast)$. Then, $x^\ast$ is a differential Stackelberg equilibrium
    and an attractor of $\dot{x}=-\omegas(x)$.
\label{prop:gans2}
\end{proposition}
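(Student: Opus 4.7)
The plan is to verify the two defining conditions of a differential Stackelberg equilibrium directly, and then conclude attractor status from the block lower-triangular structure of $\Js(x^\ast)$. The first-order condition $Df(x^\ast)=0$ is immediate, just as in Proposition~\ref{prop:gans}: at an attractor of $\dot{x}=-\omega(x)$ we have $D_1f(x^\ast)=0$ and $D_2f(x^\ast)=0$, and the closed form in~\eqref{eq:unrolled} then gives $Df(x^\ast)=D_1f(x^\ast)-D_2f(x^\ast)(D_2^2f(x^\ast))^{-1}D_{21}f(x^\ast)=0$.

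The main work lies in upgrading Proposition~\ref{prop:gans}'s conclusion from marginal stability to strict positive definiteness of the Schur complement $D^2f(x^\ast)=D_1^2f(x^\ast)-D_{12}f(x^\ast)(D_2^2f(x^\ast))^{-1}D_{21}f(x^\ast)$. Under the realizable assumption, $D_1^2f(x^\ast)=0$, so the task reduces to showing $-D_{12}f(x^\ast)(D_2^2f(x^\ast))^{-1}D_{21}f(x^\ast)>0$. I would plug in the spectral decomposition $-D_2^2f(x^\ast)=W\Lambda W^\ast$ with $\Lambda>0$ diagonal, together with the assumed factorization of $D_{12}f(x^\ast)$ in terms of $W$ and the diagonal $\Sigma$. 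Orthonormality of $W$ then collapses the middle factors, yielding an expression of the form $\Sigma\Lambda^{-1}\Sigma^\ast$ (up to conjugation conventions). Since $\Lambda^{-1}>0$ diagonal and $\Sigma$ has nonzero diagonal entries, this product is positive definite, and the conditions of Definition~\ref{def:stackelberg} are satisfied.

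Having established that $x^\ast$ is a differential Stackelberg equilibrium, the attractor property for $\dot{x}=-\omegas(x)$ follows immediately. From the derivation in Proposition~\ref{prop:allstack}, the Stackelberg Jacobian takes the block lower-triangular form
\[\Js(x^\ast)=\bmat{D^2f(x^\ast) & 0 \\ -D_{21}f(x^\ast) & -D_2^2f(x^\ast)},\]
whose spectrum is the union of the spectra of its diagonal blocks. The top-left block is positive definite by the preceding paragraph and the bottom-right block is positive definite by hypothesis, so $\spec(\Js(x^\ast))\subset\mb{C}_+^\circ$ and $x^\ast$ is an asymptotically stable equilibrium of the Stackelberg dynamics.

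The step I expect to demand the most care is the Schur-complement computation itself: the statement writes $D_{12}f(x^\ast)=\Sigma W$, which must be reconciled with the convention used in Proposition~\ref{prop:suf}, where the analogous factorization is $\poneevecs\Sigma\ptwoevecs^\ast$. In the realizable setting $\poneevecs$ is unconstrained (since every vector is an eigenvector of the zero matrix $D_1^2f(x^\ast)$) and may be absorbed into $\Sigma$, leaving a factorization that must be aligned with $\Sigma W$ so that the orthonormality relation $W^\ast W=I$ cleanly telescopes the product. This also silently requires that $\Sigma$ have enough nonzero diagonal entries to cover the leader's parameter dimension, which is what distinguishes strict positive definiteness from the merely semidefinite conclusion of Proposition~\ref{prop:gans}. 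Once this bookkeeping is fixed, the rest of the argument is a two-line linear algebra computation, and the proposition can equivalently be viewed as the specialization of Proposition~\ref{prop:suf} to the realizable case where the index $r-p$ vanishes and the spectral inequalities become vacuous.
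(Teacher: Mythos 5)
Your proposal is correct and is essentially the paper's own proof made explicit: the paper simply asserts that the result follows from Propositions~\ref{prop:suf} and~\ref{prop:gans}, and your steps---first-order conditions at the attractor, the Schur complement collapsing to $\Sigma\Lambda^{-1}\Sigma^{\ast}>0$ when $D_1^2f(x^\ast)=0$, and the block lower-triangular Stackelberg Jacobian yielding the attractor conclusion as in Proposition~\ref{prop:allstack}---are exactly that specialization. The bookkeeping issues you flag (reconciling $\Sigma W$ with the $\poneevecs\Sigma\ptwoevecs^{\ast}$ convention of Proposition~\ref{prop:suf}, and needing $\Sigma$ to have full row rank over the leader's dimension for strict rather than semidefinite positivity) are genuine but are implicit in the paper's statement as well, so they do not constitute a gap in your argument relative to the paper's.
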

The proof follows directly from Proposition~\ref{prop:suf} and
Proposition~\ref{prop:gans}. It is not directly clear how restrictive these
sufficient conditions are for GANs. We leave this for future inquiry.

\subsubsection{Connections to Opponent Shaping}
    Beyond the work in zero-sum games and applications to GANs, there has also
    been recent work, which we will refer to as `opponent shaping', where one or more players takes into
    account its opponents' response to their
    action~\cite{letcher:2018aa,foerster:2018aa, zhang:2010aa}. The
    initial work of \citet{foerster:2018aa} 
bears the most resemblance to the learning algorithms studied in this paper. The update rule (LOLA) considered there (in the deterministic setting with
constant stepsizes) takes the following form:
\begin{align*}
    x_1^+&=x_1-\gamma_1(D_1f_1(x)-\gamma_2D_2f_1(x) D_{21}f_2(x))\\
    x_2^+&=x_2-\gamma_2D_2f_2(x)
\end{align*}
The attractors of these dynamics are not necessarily Nash equilibria nor are
they Stackelberg equilibria as can be seen by looking at the critical points of
the dynamics.  Indeed, the LOLA dynamics lead only to Nash
or non-Nash stable attractors of the limiting dynamics. The effect of the
additional `look-ahead' term is simply that it changes the vector field and
region of attraction for stable critical points.  In the zero-sum case, however, the critical points of the above
are the same as those of simultaneous play individual gradient updates, yet the
Jacobian is not the same and it is still possible to converge to a non-Nash
attractor. 

With a few modifications, the above update rule can be massaged into a form
which more closely resembles the hierarchical learning rules we study in this paper. In
particular, if instead of $\learnraten{2}$, player 2 employed a Newton
stepsize of $(D_2^{2}f_2)^{-1}$, then the update would look like
\begin{align*}
    x_1^+&=x_1-\gamma_1(D_1f_1(x)-D_2f_1(x) (D_2^{2}f_2(x))^{-1}D_{21}f_2(x))\\
    x_2^+&=x_2-\gamma_2 D_2f_2(x)
\end{align*}
which resembles a deterministic version of \eqref{eq:noisyupdate}. The critical
points of this update coincide with the critical points of a Stackelberg game
$(f_1,f_2)$. With appropriately chosen stepsizes and with an initialization in a region on which the implicit map, which defines the
$-(D_2^2f_2(x))^{-1}D_{21}f_2(x)$ component of the update, is well-defined
uniformly in $x_1$, the above dynamics will converge to Stackelberg equilibria. 
In
this paper, we provide an in-depth convergence analysis and for
the stochastic setting\footnote{In~\cite{foerster:2018aa}, the authors do not
provide convergence analysis; they do in their extension, yet only for constant
and uniform stepsizes and for a learning rule that is different than the one
studied in this paper as all players are \emph{conjecturing} about the behavior
of their opponents. This distinguishes the present work from their setting.} of
the above update.

\subsubsection{Comparing Nash and Stackelberg Equilibrium Cost}
We have alluded to the idea that the ability to act first gives the leader a distinct advantage over the follower in a hierarchical game. We now formalize this statement with a known result that compares the cost of the leader at Nash and Stackelberg equilibrium.

\begin{proposition}(\citep[Proposition 4.4]{basar:1998aa}).
Consider an arbitrary sufficiently smooth two-player general-sum game $(f_1, f_2)$ on continuous strategy spaces. Let $f_{1}^{\mc{N}}$ denote the infimum of all Nash equilibrium costs for player 1 and $f_1^{\mc{S}}$ denote an arbitrary Stackelberg equilibrium cost for player 1. Then, if $\reac(x_1)$ is a singleton for every $x_1 \in X_1$, $f_1^{\mc{S}}\leq f_1^{\mc{N}}$.
\end{proposition}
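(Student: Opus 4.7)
The plan is to unfold both equilibrium concepts back to their definitions and observe that any Nash equilibrium is a feasible candidate in the leader's Stackelberg optimization, so the Stackelberg value can only be better (i.e., weakly lower). The singleton assumption on $\reac$ is what makes the leader's outer problem well-behaved enough for this comparison.

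First, under the singleton assumption, let $\rea:X_1 \to X_2$ denote the (single-valued) reaction map so that $\reac(x_1)=\{\rea(x_1)\}$ for every $x_1\in X_1$. By the definition of the Stackelberg equilibrium strategy for the leader, together with this singleton reduction, the leader's Stackelberg cost can be written as
\[
f_1^{\mc{S}} \;=\; \inf_{x_1\in X_1}\,\sup_{x_2\in \reac(x_1)} f_1(x_1,x_2) \;=\; \inf_{x_1\in X_1} f_1\bigl(x_1,\rea(x_1)\bigr).
\]

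Next, consider any Nash equilibrium $(x_1^{\mc{N}},x_2^{\mc{N}})$. By Definition~\ref{def:nash_intersection}, the follower's action $x_2^{\mc{N}}$ is a best response to $x_1^{\mc{N}}$, i.e., $x_2^{\mc{N}}\in \reac(x_1^{\mc{N}})$. Because $\reac(x_1^{\mc{N}})$ is a singleton, this forces $x_2^{\mc{N}}=\rea(x_1^{\mc{N}})$, and hence the associated leader cost satisfies
\[
f_1(x_1^{\mc{N}},x_2^{\mc{N}}) \;=\; f_1\bigl(x_1^{\mc{N}},\rea(x_1^{\mc{N}})\bigr) \;\geq\; \inf_{x_1\in X_1} f_1\bigl(x_1,\rea(x_1)\bigr) \;=\; f_1^{\mc{S}}.
\]
Taking the infimum of the left-hand side over all Nash equilibria yields $f_1^{\mc{N}} \geq f_1^{\mc{S}}$, which is the desired inequality.

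There is essentially no technical obstacle: the whole argument is a one-line inclusion of feasible points once the singleton hypothesis collapses the $\sup$ in the Stackelberg definition into evaluation at $\rea(x_1)$. The only point requiring care is to make sure we use the \emph{infimum} over Nash equilibrium costs (as stated in the proposition) rather than a specific Nash cost, so that the inequality correctly compares $f_1^{\mc{S}}$ to the best-case Nash outcome for the leader and remains valid even if multiple Nash equilibria exist. The singleton assumption on $\reac$ is essential: without it, the leader's pessimistic $\sup$ over the reaction set could strictly exceed $f_1(x_1^{\mc{N}},x_2^{\mc{N}})$, and the comparison would fail.
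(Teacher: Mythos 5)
Your proof is correct: the paper does not prove this proposition itself (it is quoted from \citet{basar:1998aa}), and your argument is exactly the standard one from that reference --- the singleton hypothesis collapses the leader's $\sup$ so that $f_1^{\mc{S}}=\inf_{x_1}f_1(x_1,\rea(x_1))$, every Nash pair satisfies $x_2^{\mc{N}}=\rea(x_1^{\mc{N}})$ and is therefore feasible for the leader's problem, and taking the infimum over Nash costs gives $f_1^{\mc{S}}\leq f_1^{\mc{N}}$. No gaps.
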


This result says that the leader never favors the simultaneous play game over the hierarchical play game in two-player general-sum games with unique follower responses. On the other hand, the follower may or may not prefer the simultaneous play game over the hierarchical play game. 

The fact that under certain conditions the leader can obtain lower cost under a Stackelberg equilibrium compared to any of the Nash equilibrium may provide further explanation for the success of the methods in~\cite{metz:2017aa}. Commonly, the discriminator can overpower the generator when training a GAN~\cite{metz:2017aa} and giving the generator an advantage may mitigate this problem. In the context of multi-agent learning, the advantage of the leader in hierarchical games leads to the question of how the roles of each player in a game are decided. While we do not focus on this question, it is worth noting that when each player mutually benefits from the leadership of a player the solution is called concurrent and when each player prefers to be the leader the solution is called non-concurrent. We believe that exploring classes of games in which each solution concept arises is an interesting direction of future work.

\section{Convergence Analysis}
\label{sec:results}
Following the preceding discussion, consider 
the learning rule for each player to be given by
\begin{equation}
    x_{i,k+1}=x_{i,k}-\gamma_{i,k}(\omegasn{i}(x_k)+w_{i,k+1}),
    \label{eq:noisyupdate2}
\end{equation}
where recall that $\omegas=(Df_1(x), D_2f_2(x))$. Moreover, for each $i\in
\mc{I}$, $\{\gamma_{i,k}\}$ is the sequence of learning rates  and $\{w_{i,k}\}$
is the noise process for player $i$. As before, suppose player 1 is the leader
and conjectures that player 2 updates its action $x_{2}$ in each round via
$\conj(x_1)$. This setting captures the scenario in which players do not have
oracle access to their gradients, but do have an unbiased estimator. As an
example, players could be performing policy gradient reinforcement learning or
alternative gradient-based learning schemes. Let $\dim(X_i)=d_i$ for each
$i\in\mc{I}$ and $d=d_1+d_2$.
\begin{assumption}The following hold:
    \begin{enumerate}[itemsep=-2pt, topsep=0pt,
            label=\textbf{A\arabic{assumption}\alph*.}, leftmargin=25pt]
        \item     The maps $Df_1:\mb{R}^{d}\rar \mb{R}^{d_1}$, 
            $D_2f_2:\mb{R}^d\rar\mb{R}^{d_2}$ are $L_1$, $L_2$
            Lipschitz, and $\|Df_1\|\leq M_1<\infty$.
            \label{ass:lip}
        \item For each $i\in \mc{I}$, the learning rates satisfy
            $\sum_{k}\learnraten{i,k}=\infty$,
    $\sum_{k}\learnraten{i,k}^2<\infty$.
    \label{ass:learnrate}
\item  The noise processes $\{w_{i,k}\}$ are zero mean, martingale difference
    sequences. That is, 
    given the filtration $\mc{F}_k=\sigma(x_s, w_{1,s}, w_{2,s},\ s\leq k)$,
    $\{w_{i,k}\}_{i\in \mc{I}}$ are conditionally independent,
    $\mb{E}[w_{i,k+1}|\ \mc{F}_k]=0$ a.s., and $\mb{E}[\|w_{i,k+1}\||\
    \mc{F}_{k}]\leq c_i(1+\|x_{k}\|)$ a.s.~for some constants
    $c_i\geq 0$, $i\in \mc{I}$.
\label{ass:noise}
    \end{enumerate}
    \label{ass:all}
    \end{assumption}

Before diving into the convergence analysis, we need some machinery from dynamical systems theory. Consider the dynamics from~\eqref{eq:noisyupdate2} written as a continuous time combined system $\dot{\xi_t}=F(\xi_t)$ where $\xi_t(z)=\xi(t,z)$ is a continuous
map and $\xi=\{\xi_t\}_{t\in \mb{R}}$ is the flow of $F$. A set $A$ is said to be
\emph{invariant} under the flow $\xi$ if for all $t\in \mb{R}$, $\xi_t(A)\subset
A$, in which case $\xi|A$ denotes the semi-flow.
A point $x$ is an equilibrium if $\xi_t(x)=x$ for all $t$ and, of course, when
$\xi$ is induced by $F$, equilibria coincide with critical points of $F$. 
Let $X$ be a topological metric space with metric $\rho$, an example being $X=\mb{R}^d$
endowed with
the Euclidean distance.
\begin{definition}
A nonempty invariant set $A\subset X$ for $\xi$ is said to be internally
chain transitive   if for any $a,b\in A$ and $\delta>0$, $T>0$, there exists a
finite sequence $\{x_1=a, x_2, \ldots, x_{k-1},x_k=b; t_1, \ldots, t_{k-1}\}$
with $x_i\in A$ and $t_i\geq T$, $1\leq i\leq k-1$, such that
$\rho(\xi_{t_i}(x_i), x_{i+1})<\delta$, $\forall 1\leq i\leq k-1$. \label{def:chaintrans}
\end{definition}

\subsection{Learning Stackelberg Solutions for the Leader}\label{sec:leader_solutions}
Suppose that the leader (player 1) operates under the assumption that the
follower (player 2) is playing a local optimum in
each round. That is, given $x_{1,k}$, $x_{2,k+1}\in
\arg\min_{x_2}f_2(x_{1,k},x_2)$ for which $D_2f_2(x_{1,k},x_2)=0$ is a
first-order local optimality condition. 
If, for a given $(x_1,x_2)\in X_1\times X_2$, $D_{2}^2f_2(x_1,x_2)$ is invertible and
    $D_2f_2(x_1,x_2)=0$, 
then  the implicit function theorem implies that there
exists  neighborhoods $U\subset X_1$ and $V\subset X_2$ and a smooth map
$\conj:U\rar V$ such that $\conj(x_1)=x_2$.
\begin{assumption}
    For every $x_1$, $\dot{x}_2=-D_2f_2(x_1,x_2)$ has a globally asymptotically
    stable equilibrium $\conj(x_1)$ uniformly in $x_1$ and 
            $\conj:\mb{R}^{d_1}\rar\mb{R}^{d_2}$ is $L_r$--Lipschitz.
    \label{ass:gas}
\end{assumption}
Consider the leader's learning rule
\begin{equation}x_{1,k+1}=x_{1,k}-\gamma_{1,k}(Df_1(x_{1,k},x_{2,k})+w_{1,k+1})\label{eq:leader1}
\end{equation}
where $x_{2,k}$ is defined via the  map $r_2$ defined implicitly in a
neighborhood of $(x_{1,k},x_{2,k})$.

\begin{proposition}
    Suppose that for each $x\in X$, $D_{2}^2f_2$ is non-degenerate and
    Assumption~\ref{ass:all} holds for $i=1$. 
    Then, $x_{1,k}$ converges almost surely to an (possibly sample path
    dependent) equilibrium point $x_1^\ast$ which is a local Stackelberg solution
    for the leader. 
    Moreover, if Assumption~\ref{ass:all} holds for $i=2$ and
    Assumption~\ref{ass:gas} holds, 
    then $x_{2,k}\rar
    x_{2}^\ast=\conj(x_1^\ast)$ so that
    $(x_1^\ast,x_2^\ast)$ is a differential Stackelberg equilibrium.
    \label{prop:stackone}
\end{proposition}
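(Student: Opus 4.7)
The plan is to cast \eqref{eq:leader1} as a stochastic approximation and apply the ODE method, followed by a two-timescale extension for the second part of the claim. Throughout, non-degeneracy of $D_{2}^2f_2$ combined with the implicit function theorem (and global definability via Assumption~\ref{ass:gas} when invoked) ensures the reaction map $r$ is well-defined and Lipschitz, so that $h(x_1) := Df_1(x_1, r(x_1))$ is Lipschitz and bounded by Assumption~\ref{ass:lip}. The content of the proposition is then a combination of the Benaim--Hirsch asymptotic-pseudo-trajectory theorem, Pemantle-style avoidance of unstable equilibria, and Borkar's two-timescale theorem.

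For the leader-only statement, I would rewrite \eqref{eq:leader1} as $x_{1,k+1} = x_{1,k} - \gamma_{1,k}(h(x_{1,k}) + w_{1,k+1})$ and invoke the standard ODE method. Under the Lipschitz hypothesis on $h$, the stepsize condition in Assumption~\ref{ass:learnrate}, and the martingale-difference bound in Assumption~\ref{ass:noise}, the piecewise-linear interpolation of $\{x_{1,k}\}$ is almost surely an asymptotic pseudo-trajectory of $\dot{x}_1 = -h(x_1)$. Its limit set is therefore a.s. a nonempty compact internally chain-transitive invariant set in the sense of Definition~\ref{def:chaintrans}, and in particular any sample-path limit $x_1^\ast$ satisfies $h(x_1^\ast) = Df_1(x_1^\ast, r(x_1^\ast)) = 0$, the first-order condition of Definition~\ref{def:stackelberg}. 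To upgrade this to the local Stackelberg condition I would appeal to the Brandi\`ere--Duflo / Pemantle nonconvergence theorems: linearly unstable equilibria of the ODE are almost surely avoided provided the martingale noise excites the unstable subspace, forcing the a.s. limit to lie in the set of stable critical points of $h$, at which the symmetric matrix $D^2 f_1(x_1^\ast, r(x_1^\ast))$ is positive semidefinite, giving the required differential Stackelberg condition.

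For the second part, when the follower also runs a gradient-type update with $\gamma_{1,k} = o(\gamma_{2,k})$, I would apply Borkar's two-timescale framework. Viewing $x_1$ as quasi-static on the fast timescale, the follower iteration tracks $\dot{x}_2 = -D_2 f_2(x_1, x_2)$, which by Assumption~\ref{ass:gas} is globally attracted to $r(x_1)$ uniformly in $x_1$; standard two-timescale arguments then give $\|x_{2,k} - r(x_{1,k})\| \to 0$ a.s. The slow iteration can thus be written as
\[
x_{1,k+1} = x_{1,k} - \gamma_{1,k}\bigl(h(x_{1,k}) + \varepsilon_k + w_{1,k+1}\bigr),
\]
with $\varepsilon_k := Df_1(x_{1,k}, x_{2,k}) - h(x_{1,k}) \to 0$ a.s. (by Lipschitzness of $Df_1$ and of $r$). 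This reduces to the single-timescale analysis above and yields $x_{1,k} \to x_1^\ast$ almost surely; continuity of $r$ then gives $x_{2,k} \to r(x_1^\ast) = x_2^\ast$, so $(x_1^\ast, x_2^\ast)$ is a differential Stackelberg equilibrium.

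The main technical obstacle is controlling the fast-timescale tracking error $\|x_{2,k} - r(x_{1,k})\|$ tightly enough that the induced perturbation $\varepsilon_k$ does not disrupt the asymptotic-pseudo-trajectory property of the slow iterate; this requires combining uniform global asymptotic stability from Assumption~\ref{ass:gas} with Gronwall-type comparisons between the fast discrete iteration and its limiting ODE, and exploiting $\gamma_{1,k}/\gamma_{2,k} \to 0$ so that the leader's drift does not destabilize the follower's relaxation. A more conceptual subtlety is that promoting a.s. convergence from a generic critical point (first order) to a stable critical point (the second-order condition needed for a differential Stackelberg equilibrium) relies on non-degeneracy of the noise in directions where the ODE is unstable, which is natural under Assumption~\ref{ass:noise} but must be invoked carefully.
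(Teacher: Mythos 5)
Your first part follows the paper's route: the paper also treats \eqref{eq:leader1} as a stochastic approximation of $\dot{x}_1=-Df_1(x_1,\conj(x_1))$, builds the interpolated trajectory, and cites Borkar's results to conclude almost sure convergence to a compact internally chain transitive set, then to an equilibrium. Two caveats on your version of this step. First, your inference ``limit set is internally chain transitive, hence any limit satisfies $h(x_1^\ast)=0$'' is not automatic for a general vector field (ICT sets can contain periodic orbits); what makes it work here is that $h(x_1)=Df_1(x_1,\conj(x_1))$ is exactly the gradient of the reduced cost $x_1\mapsto f_1(x_1,\conj(x_1))$, so the flow is a gradient flow and its chain transitive sets consist of equilibria --- this is the structure the paper leans on and you should state it. Second, and more seriously, your upgrade to the second-order (differential Stackelberg) condition via Brandi\`ere--Duflo/Pemantle saddle avoidance needs a \emph{lower} bound on the noise (excitation of unstable directions); Assumption~\ref{ass:noise} only gives an upper bound on the conditional moments and permits $w_{1,k}\equiv 0$, so this step does not follow from the stated hypotheses. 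The paper does not take this route at all: it simply identifies the limiting ICT sets with stable attractors of the gradient flow, at which $D^2f_1(x_1,\conj(x_1))>0$ and, by the non-degeneracy assumption on $D_2^2f_2$, the Stackelberg second-order conditions hold. (The paper's own argument is admittedly terse here, but your fix introduces an assumption the proposition does not grant.)

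For the ``moreover'' claim you have misread the setting. In this proposition the follower is not running a gradient-play iteration with stepsize $\gamma_{2,k}$; Section~\ref{sec:leader_solutions} has the follower playing an exact best response each round, so that $x_{2,k}$ entering \eqref{eq:leader1} is given by the implicit map $\conj$ in a neighborhood of the iterate. Consequently no two-timescale argument, and in particular no hypothesis $\gamma_{1,k}=o(\gamma_{2,k})$ (which is not assumed in Proposition~\ref{prop:stackone}), is needed or available: the paper's proof of this part is one line, namely that Assumption~\ref{ass:gas} makes $\conj$ Lipschitz, so $x_{1,k}\to x_1^\ast$ immediately gives $x_{2,k}\to\conj(x_1^\ast)=x_2^\ast$. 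Your two-timescale tracking argument is essentially the content of the separate analysis in Section~\ref{sec:tt_results} (Proposition~\ref{prop:converge} and its corollaries), not of this proposition, and as written it proves a different statement under stronger hypotheses.
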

\begin{proof}
This proof follows primarily from using known stochastic approximation results. The update rule in~\eqref{eq:leader1} is a stochastic approximation of $\dot{x}_1 = -Df_1(x_1, x_2)$ and consequently is expected to track this ODE asymptotically.  The main idea behind the analysis is to construct a continuous interpolated trajectory $\bar{x}(t)$ for $t \geq 0$ and show it asymptotically almost surely approaches the solution set to the ODE. Under Assumptions~\ref{ass:all}--\ref{ass:gastwo}, results from~\cite[\S2.1]{borkar:2008aa} imply that the sequence generated from~\eqref{eq:leader1} converges almost surely to a compact internally chain transitive set of~$\dot{x}_1 = -Df_1(x_1, x_2)$.
Furthermore, it can be observed that the only internally chain transitive
invariant sets of the dynamics are differential Stackelberg equilibria since at
any stable attractor of the dynamics $D^2f_1(x_1, \conj(x_1))>0$ and from
assumption $D_2^2f_2(x_1, \conj(x_1))>0$. Finally, from~\cite[\S2.2]{borkar:2008aa}, we can conclude that the update from~\eqref{eq:leader1} almost surely converges to a possibly sample path dependent equilibrium point since the only internally chain transitive invariant sets for~$\dot{x}_1 = -Df_1(x_1, x_2)$ are equilibria. 
The final claim that $x_{2,k}\to \conj(x_{1}^\ast)$ is guaranteed since $\conj$ is Lipschitz
and $x_{1,k}\to x_1^\ast$.
\end{proof}

   The above result can be stated with a relaxed version of
   Assumption~\ref{ass:gas}. 
   \begin{corollary}
       Given a differential Stackelberg
   equilibrium $x^\ast=(x_1^\ast,x_2^\ast)$, let
    $B_\radius(x^\ast)=B_{\radius_1}(x_1^\ast)\times B_{\radius_2}(x_2^\ast)$
    for some $\radius_1, \radius_2>0$ on which $D_2^2f_2$ is non-degenerate. Suppose that
    Assumption~\ref{ass:all} holds for $i=1$ and that $x_{1,0}\in
    B_{\radius_1}(x_1^\ast)$. Then, $x_{1,k}$ converges almost surely to $x_1^\ast$.
    Moreover, if  Assumption~\ref{ass:all} holds for $i=2$, $\conj(x_1)$ is a locally
    asymptotically stable equilibrium uniformly in $x_1$ on the ball
    $B_{\radius_2}(x_2^\ast)$, and $x_{2,0}\in B_{\radius_2}(x_2^\ast)$, then $x_{2,k}\rar
    x_2^\ast=\conj(x_1^\ast)$.
    \label{cor:local}
   \end{corollary}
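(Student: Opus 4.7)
The plan is to mirror the argument of Proposition~\ref{prop:stackone} while localizing every step to the ball $B_{\radius}(x^\ast)$. First I would use the hypothesis that $D_2^2 f_2$ is non-degenerate on $B_{\radius}(x^\ast)$, together with the fact that $x^\ast$ is a differential Stackelberg equilibrium, to invoke the implicit function theorem and obtain a $C^1$ (hence locally Lipschitz) map $r:B_{\radius_1}(x_1^\ast)\to X_2$ satisfying $D_2f_2(x_1,r(x_1))=0$ and $r(x_1^\ast)=x_2^\ast$, shrinking $\radius_1$ if necessary so that $r(B_{\radius_1}(x_1^\ast))\subset B_{\radius_2}(x_2^\ast)$. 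Since $D^2 f_1(x^\ast)>0$ by the differential Stackelberg property, $x_1^\ast$ is a locally exponentially stable equilibrium of the limiting ODE $\dot{x}_1=-Df_1(x_1,r(x_1))$; a converse Lyapunov theorem then produces a strict Lyapunov function $V$ on a neighborhood of $x_1^\ast$, which, after another shrinking of $\radius_1$, we may take to be a sublevel set contained in $B_{\radius_1}(x_1^\ast)$.

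Next I would apply the ``lock-in'' result for stochastic approximation (see, e.g., Chapter~4 of~\cite{borkar:2008aa}): given the Lipschitz drift and the martingale-difference noise conditions in Assumption~\ref{ass:all}\ref{ass:lip}-\ref{ass:noise} for $i=1$, together with $\sum_k \gamma_{1,k}^2<\infty$, a sample trajectory of~\eqref{eq:leader1} that lies in the Lyapunov sublevel set stays there for all future $k$ with probability one and the only internally chain transitive invariant set contained in this set under $\dot{x}_1=-Df_1(x_1,r(x_1))$ is $\{x_1^\ast\}$. Invoking the internal chain transitivity theorem of~\cite[\S2.1--\S2.2]{borkar:2008aa} restricted to this compact forward invariant set then gives $x_{1,k}\to x_1^\ast$ almost surely whenever $x_{1,0}\in B_{\radius_1}(x_1^\ast)$.

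For the follower, since $r$ is Lipschitz on $B_{\radius_1}(x_1^\ast)$ and $x_{1,k}\to x_1^\ast$, the targets $r(x_{1,k})$ tend to $r(x_1^\ast)=x_2^\ast$. Under Assumption~\ref{ass:all} for $i=2$ and the local asymptotic stability of $r(x_1)$ uniformly in $x_1\in B_{\radius_2}(x_2^\ast)$, the same local stochastic approximation argument applied to the follower iterate, now with the leader acting as a slowly varying parameter, shows that $x_{2,k}$ tracks $r(x_{1,k})$ and hence $x_{2,k}\to x_2^\ast$ almost surely.

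I expect the main obstacle to be the first step of the second paragraph, namely verifying the lock-in condition: without a projection onto $B_{\radius_1}(x_1^\ast)$, the almost sure non-escape of the stochastic iterate from the local Lyapunov sublevel set requires care, since standard local stochastic approximation results typically yield convergence only with probability close to one from a fixed neighborhood. Overcoming this requires choosing $\radius_1$ small enough relative to the Lyapunov geometry and the noise growth constant $c_1$ in Assumption~\ref{ass:all}\ref{ass:noise}, exploiting $\sum_k \gamma_{1,k}^2<\infty$ to control the martingale increments via a Doob-type bound inside the sublevel set.
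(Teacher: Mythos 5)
Your proposal matches the paper's route: the paper proves this corollary in one line by asserting it follows the same arguments as Proposition~\ref{prop:stackone}, i.e., the stochastic-approximation/internal-chain-transitivity argument localized to $B_\radius(x^\ast)$, with the follower's convergence obtained from the Lipschitz property of $\conj$ and $x_{1,k}\to x_1^\ast$, exactly as you do. The lock-in obstacle you flag in your last paragraph is genuine but is not addressed by the paper either; note, though, that your suggested remedy of shrinking $\radius_1$ relative to the noise constant $c_1$ cannot by itself deliver almost-sure non-escape, since Assumption~\ref{ass:noise} only bounds the conditional first moment of the martingale-difference noise and thus permits arbitrarily large increments with positive probability, so a fully rigorous local statement requires either bounded noise or projected iterates, or else conditioning on the event that the trajectory remains in the region of attraction --- which is precisely how the paper treats locality in its finite-time, high-probability analysis in Section~\ref{sec:finitetime}.
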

   The proof follows the same arguments as the proof of
   Proposition~\ref{prop:stackone}.

\subsection{Learning Stackelberg Equilibria: Two-Timescale Analysis}\label{sec:tt_results}
Now, let us consider the case where the leader again operates under the
assumption 
that the follower is
playing (locally) optimally at each round so that the belief is $D_2f_2(x_{1,k},x_{2,k})=0$, but the
follower is actually performing the update
$x_{2,k+1}=x_{2,k}+g_2(x_{1,k}, x_{2,k})$ where $g_2\equiv -\gamma_{2,k}\mb{E}[D_2f_2]$. 
The learning dynamics in this setting are then
\begin{align}
 \label{eq:learnstacka}
       x_{1,k+1}&=x_{1,k}-\gamma_{1,k}(Df_1(x_k)+w_{1,k+1})\\
        x_{2,k+1}&=x_{2,k}-\gamma_{2,k}(D_2f_2(x_k)+w_{2,k+1})
    \label{eq:learnstackb}
\end{align}
where
$Df_1(x)=D_1f_1(x)+D_2f_1(x)D\conj(x_1)$.
Suppose that $\gamma_{1,k}\rar 0$ faster than $\gamma_{2,k}$ so that in the limit $\tau \rar
0$, the above approximates 
the singularly perturbed system defined by
\begin{equation}
   \begin{array}{lcl} \dot{x}_{1}(t)&=&-\tau Df_1(x_1(t),x_2(t))\\
        \dot{x}_2(t) & =&-D_2f_2(x_1(t),x_2(t))\end{array}
    \label{eq:singperturb}
\end{equation}
The learning rates can be seen as
stepsizes in a discretization scheme for solving the above dynamics. The
condition that $\gamma_{1,k}=o(\gamma_{2,k})$ induces a \emph{timescale
separation} in which $x_2$ evolves on a faster timescale than $x_1$. 
That is, the fast transient player is the \emph{follower} and the slow component is the \emph{leader} since
$\lim_{k\rar\infty}\gamma_{1,k}/\gamma_{2,k}=0$ implies that from the
perspective of the follower, $x_1$ appears quasi-static and from the perspective
of the leader, $x_2$ appears to have equilibriated, meaning
$D_2f_2(x_1,x_2)=0$ given $x_1$. 
From this point of view, the
learning dynamics \eqref{eq:learnstacka}--\eqref{eq:learnstackb} approximate the dynamics in the
preceding section. Moreover, stable attractors of the dynamics are such that
the leader is at a local optima for $f_1$, not just along its coordinate axis
but in both coordinates $(x_1,x_2)$ constrained to the manifold $\conj(x_1)$; this is to make a distinction between
differential Nash
equilibria in agents are at local optima aligned with their individual
coordinate axes.

\subsubsection{Asymptotic Almost Sure Convergence}
\label{subsubsec:asymp}
The following two results are fairly classical results in stochastic
approximation. They are leveraged here to making conclusions about convergence to
Stackelberg equilibria in hierarchical learning settings.

While we do not need the following assumption for all the results in this
section, it is required for asymptotic convergence of the two-timescale process
in \eqref{eq:learnstacka}--\eqref{eq:learnstackb}.
\begin{assumption}
    The dynamics $\dot{x}_1=-Df_1(x_1,\conj(x_1))$ have a globally asymptotically
    stable equilibrium.
    \label{ass:gastwo}
\end{assumption}

Under Assumption~\ref{ass:all}--\ref{ass:gastwo},
   and the assumption that $\gamma_{1,k}=o(\gamma_{2,k})$, classical results
   imply that
   the dynamics~\eqref{eq:learnstacka}--\eqref{eq:learnstackb} converge almost surely to a compact 
    internally chain transitive set $\mc{T}$ of \eqref{eq:singperturb}; see,
    e.g.,~\cite[\S6.1-2]{borkar:2008aa}, \cite[\S3.3]{bhatnagar:2013aa}.
Furthermore, it is straightforward to see that
    stable differential Nash equilibria are  internally chain transitive sets
since they are stable attractors of the dynamics $\dot{\xi}_t=F(\xi_t)$ from~\eqref{eq:singperturb}.

\begin{remark}
  There are two important points to remark on at this juncture. First,  the flow of the dynamics \eqref{eq:singperturb} is not    necessarily a
    gradient flow, meaning that the dynamics may admit non-equilibrium
    attractors such as periodic orbits.
The dynamics correspond to a gradient vector field if and only if  $D_{2}(Df_1)\equiv D_{12}f_2$, meaning
    when the dynamics admit a potential function. Equilibria may also not be
    isolated unless the Jacobian of $\omegas$, say $\Js$, is
    non-degenerate at the points. 
    Second, except in the case of zero-sum settings in which $(f_1,f_2)=(f,-f)$, non-Stackelberg locally asymptotically
    stable equilibria are attractors.  That is, convergence {does not} imply that
    the players have settled on a Stackelberg equilibrium, and this can occur
    even if the dynamics admit a potential. 
    \label{rem:gradflow}
\end{remark}

Let $t_k=\sum_{l=0}^{k-1}\gamma_{1,l}$ be the (continuous) time accumulated
after $k$ samples of the slow component $x_1$. Define $\xi_{1,s}(t)$ to be the
flow of $\dot{x}_1=-Df_1(x_1(t),\conj(x_1(t)))$ starting at time $s$ from
intialization $x_s$. 
\begin{proposition}
    Suppose that Assumptions~\ref{ass:all} and \ref{ass:gas} hold. Then,
    conditioning on the event $\{\sup_k\sum_i\|x_{i,k}\|^2<\infty\}$, for any integer
    $K>0$, $\lim_{k\rar \infty}\sup_{0\leq h\leq
    K}\|x_{1,k+h}-\xi_{1,t_k}(t_{k+h})\|_2=0$ almost surely.
    \label{prop:converge}
\end{proposition}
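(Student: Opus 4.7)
The plan is to invoke the two-timescale stochastic approximation machinery (see, e.g., Borkar Chapter 6; Bhatnagar et al.~Chapter 3) specialized to our setting. Conditioning on the boundedness event $\{\sup_k \sum_i \|x_{i,k}\|^2 < \infty\}$, the iterates lie in a (random) compact set, so all Lipschitz bounds in Assumption~\ref{ass:all}\ref{ass:lip} and Assumption~\ref{ass:gas} can be exploited without further truncation. The proof proceeds in two phases: first show the fast component tracks the reaction manifold $\{x_2 = r(x_1)\}$, then show the (interpolated) slow component asymptotically tracks the flow of $\dot{x}_1 = -Df_1(x_1, r(x_1))$.

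For the first phase, divide through the slow update by $\gamma_{2,k}$ so that the $x_1$-update becomes $o(1)$ on the fast timescale $\tau_k = \sum_{l=0}^{k-1}\gamma_{2,l}$; from the perspective of the fast iteration, $x_{1,k}$ is therefore quasi-static. Under Assumption~\ref{ass:all}\ref{ass:learnrate}--\ref{ass:noise}, the martingale $M_{2,k} = \sum_{l=0}^{k-1}\gamma_{2,l}w_{2,l+1}$ converges almost surely by the martingale convergence theorem (the conditional second-moment bound in \ref{ass:noise} together with boundedness of the iterates gives $\sum_k \gamma_{2,k}^2 \mathbb{E}[\|w_{2,k+1}\|^2 \mid \mathcal{F}_k] < \infty$). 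Combining this with the global asymptotic stability of $r(x_1)$ for $\dot{x}_2 = -D_2 f_2(x_1,x_2)$ uniformly in $x_1$ (Assumption~\ref{ass:gas}) and the Lipschitz property of $r$, a standard Gronwall / Hirsch-lemma argument (cf.~Borkar Ch.~6, Lemma~1) yields
\begin{equation*}
\lim_{k \to \infty} \|x_{2,k} - r(x_{1,k})\|_2 = 0 \qquad \text{a.s.}
\end{equation*}

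For the second phase, rewrite the slow iteration as
\begin{equation*}
x_{1,k+1} = x_{1,k} - \gamma_{1,k}\bigl(Df_1(x_{1,k}, r(x_{1,k})) + \delta_k + w_{1,k+1}\bigr),
\end{equation*}
where $\delta_k := Df_1(x_{1,k}, x_{2,k}) - Df_1(x_{1,k}, r(x_{1,k}))$. By Assumption~\ref{ass:all}\ref{ass:lip} (Lipschitz continuity of $Df_1$) and the first phase, $\|\delta_k\| \to 0$ a.s. Form the piecewise-linear interpolation $\bar{x}_1(t)$ with $\bar{x}_1(t_k) = x_{1,k}$, and on each time interval $[t_k, t_{k+K}]$ compare it with the ODE flow $\xi_{1,t_k}(\cdot)$ initialized at $x_{1,k}$. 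Subtracting the two and using Lipschitzness of $Df_1(\cdot, r(\cdot))$ (here the Lipschitz constant of $r$ from Assumption~\ref{ass:gas} enters), a discrete Bellman-Gronwall inequality gives, for each $h \le K$,
\begin{equation*}
\|x_{1,k+h} - \xi_{1,t_k}(t_{k+h})\|_2 \le C_K \Bigl( \sup_{0 \le j \le K} \|\delta_{k+j}\| + \sup_{0 \le j \le K} \|M_{1,k+j} - M_{1,k}\| + o(1) \Bigr),
\end{equation*}
where $M_{1,n} = \sum_{l=0}^{n-1}\gamma_{1,l}w_{1,l+1}$ and the $o(1)$ absorbs the discretization error from replacing Riemann sums by integrals (controlled by $\sum_l \gamma_{1,l}^2 < \infty$). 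The noise term vanishes as $k \to \infty$ because $M_{1,n}$ converges almost surely (again by $L^2$-boundedness and the martingale convergence theorem applied on the bounded event), and $\delta_{k+j} \to 0$ uniformly in $j \in \{0,\dots,K\}$ from phase one.

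The main obstacle is the first phase: rigorously justifying that a discrete stochastic iteration with vanishing perturbations from both noise and the slowly drifting $x_1$ tracks a continuous-time ODE whose equilibrium set moves. This requires the uniform (in $x_1$) global asymptotic stability hypothesis in Assumption~\ref{ass:gas} to prevent the fast iterate from drifting off the reaction manifold as $x_1$ evolves, and a careful reparametrization to the fast clock so that Borkar's single-timescale tracking lemma applies. The remainder is bookkeeping with the Bellman-Gronwall inequality on compact time windows of length $K$.
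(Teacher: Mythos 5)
Your proposal is correct and follows essentially the same two-phase route as the paper: first show the fast iterate tracks the reaction curve $\conj(x_{1,k})$ by treating $x_1$ as quasi-static on the fast clock (the paper does this with the Hirsch lemma), then show the interpolated slow iterate is an asymptotic pseudo-trajectory of $\dot{x}_1=-Df_1(x_1,\conj(x_1))$. The only difference is that where the paper delegates the second phase to Benaïm's Propositions 4.1--4.2, you unpack that step with an explicit discrete Bellman--Gronwall and martingale-convergence argument (note only that the martingale step really uses a conditional second-moment bound on the noise, which is the standard strengthening of the first-moment condition as literally stated in Assumption~\ref{ass:noise}).
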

\begin{proof}
    The proof follows standard arguments in stochastic approximation. We simply
    provide a sketch here to give some intuition. First, we
    show that  conditioned on the event 
    $\{\sup_k\sum_i\|x_{1,k}\|^2<\infty\}$,
    $(x_{1,k},x_{2,k})\rar\{(x_1,\conj(x_1))| \ x_1\in \mb{R}^{d_1}\}$ almost
    surely.
    Let $\zeta_k=\frac{\gamma_{1,k}}{\gamma_{2,k}}(Df_1(x_{k})+w_{1,k+1})$.
    Hence the leader's
    sample path is generated by $x_{1,k+1}=x_{1,k}-\gamma_{2,k}\zeta_k$ which
    tracks $\dot{x}_1=0$ since $\zeta_k=o(1)$ so that it is asymptotically
    negligible. In particular, $(x_{1,k},x_{2,k})$ tracks
    $(\dot{x}_{1}=0,\dot{x}_2=-D_2f_2(x_1,x_2))$. 
    That is, on intervals $[\hat{t}_j,\hat{t}_{j+1}]$ where
    $\hat{t}_j=\sum_{l=0}^{j-1}\gamma_{2,l}$, the norm difference between interpolated trajectories of
    the sample paths and the trajectories of
    $(\dot{x}_{1}=0,\dot{x}_2=-D_2f_2(x_1,x_2))$ vanishes a.s.~as
    $k\rar \infty$. 
    Since the leader is tracking $\dot{x}_1=0$, the follower can
    be viewed as tracking $\dot{x}_2(t)=-D_2f_2(x_1,x_2(t))$. 
    Then applying
    Lemma~\ref{lem:hirsch} provided in Appendix~\ref{app:prelims}, $\lim_{k\rar
    0}\|x_{2,k}-\conj(x_{1,k})\|\rar 0$
    almost surely.    
    
    Now, by Assumption~\ref{ass:all}, $Df_1$ is Lipschitz and bounded (in fact,
    independent of \ref{ass:lip},
since $Df_1\in C^{q}$, $q\geq 2$, it is locally Lipschtiz
and, on the event $\{\sup_k\sum_i
        \|x_{i,k}\|_2<\infty\}$, it is bounded).  In turn, it induces a continuous globally integrable
        vector field, and therefore satisfies the assumptions
        of~\citet[Prop.~4.1]{benaim:1999aa}. Moreover, under Assumptions~\ref{ass:learnrate} and
\ref{ass:noise}, the
assumptions of~\citet[Prop.~4.2]{benaim:1999aa} are satisfied, which gives the
desired result.
\end{proof}
\begin{corollary}
    Under Assumption~\ref{ass:gastwo} and the assumptions of Proposition~\ref{prop:converge}, 
    $(x_{1,k},x_{2,k})\rar(x_1^\ast, \conj(x_1^\ast))$ almost surely conditioned on the
    event  
    $\textstyle\{\sup_k\sum_i\|x_{i,k}\|^2<\infty\}$. That is, the
    learning dynamics \eqref{eq:learnstacka}--\eqref{eq:learnstackb} converge to
    stable attractors of \eqref{eq:singperturb}, the set of which
    includes the stable differential Stackelberg equilibria.
\end{corollary}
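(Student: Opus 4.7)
The plan is to bootstrap directly from Proposition~\ref{prop:converge} and the classical internally-chain-transitive (ICT) characterization of limit sets for stochastic approximation. First I would invoke Proposition~\ref{prop:converge} to conclude that, conditioned on the stability event $\{\sup_k\sum_i\|x_{i,k}\|^2<\infty\}$, the piecewise-linear interpolation $\bar{x}_1(\cdot)$ of the slow iterates is an asymptotic pseudo-trajectory of the reduced flow $\dot{x}_1=-Df_1(x_1,\conj(x_1))$, and simultaneously $\|x_{2,k}-\conj(x_{1,k})\|\to 0$ almost surely. By the Benaim--Hirsch theorem, the limit set $L(\bar{x}_1)$ is then a nonempty, compact, internally chain transitive invariant set of the reduced flow.

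Next I would exploit Assumption~\ref{ass:gastwo}: since $\dot{x}_1=-Df_1(x_1,\conj(x_1))$ admits a globally asymptotically stable equilibrium $x_1^\ast$, its only internally chain transitive set is the singleton $\{x_1^\ast\}$. Indeed, if $A$ is ICT then $A$ is invariant and, by global asymptotic stability, every trajectory starting in $A$ must converge to $x_1^\ast$; combined with the chain-recurrence condition in Definition~\ref{def:chaintrans} and compactness, this forces $A=\{x_1^\ast\}$ (any other point would have to be returned to via arbitrarily small $\delta$-pseudo-chains after arbitrarily long time, which contradicts attraction to $x_1^\ast$). Therefore $x_{1,k}\to x_1^\ast$ almost surely on the stability event.

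For the follower, I would then combine the slow-iterate convergence $x_{1,k}\to x_1^\ast$ with the tracking estimate $\|x_{2,k}-\conj(x_{1,k})\|\to 0$ established in Proposition~\ref{prop:converge}, and use Lipschitz continuity of $\conj$ from Assumption~\ref{ass:gas}:
\begin{equation*}
\|x_{2,k}-\conj(x_1^\ast)\| \leq \|x_{2,k}-\conj(x_{1,k})\| + L_r\|x_{1,k}-x_1^\ast\| \longrightarrow 0 \text{ a.s.}
\end{equation*}
This delivers the claimed joint convergence $(x_{1,k},x_{2,k})\to(x_1^\ast,\conj(x_1^\ast))$.

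Finally, to justify the concluding sentence about stable attractors and differential Stackelberg equilibria, I would argue that at a differential Stackelberg equilibrium the Jacobian $\Js$ has the block-triangular form used in Proposition~\ref{prop:allstack}, so $D_2^2f_2(x_1^\ast,\conj(x_1^\ast))>0$ makes $\conj(x_1^\ast)$ locally asymptotically stable for the fast subsystem uniformly in $x_1$ near $x_1^\ast$, and $D^2f_1(x_1^\ast,\conj(x_1^\ast))>0$ makes $x_1^\ast$ locally asymptotically stable for the reduced slow flow; a standard singular perturbation argument (e.g., Tikhonov--Fenichel) then certifies that such points are stable attractors of the combined system~\eqref{eq:singperturb}. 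The main obstacle I anticipate is the ICT-uniqueness step: turning the GAS hypothesis into the statement that the only ICT set is $\{x_1^\ast\}$ requires a careful pseudo-chain argument rather than a direct Lyapunov one, since ICT sets need not be Lyapunov-stable a priori.
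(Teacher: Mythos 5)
Your proposal is correct, and it reaches the conclusion by a route that is a close cousin of, but not identical to, the paper's one-line argument. The paper continues from the proof of Proposition~\ref{prop:converge} by noting that the interpolated iterates are $(T,\delta)$--perturbations of the reduced flow $\dot{x}_1=-Df_1(x_1,\conj(x_1))$ and then applies Lemma~\ref{lem:hirsch} (the Hirsch lemma), which directly yields convergence to a neighborhood of the global attractor, i.e.\ to $x_1^\ast$ under Assumption~\ref{ass:gastwo}; the follower's convergence then follows from the tracking estimate and Lipschitzness of $\conj$ exactly as you argue. You instead invoke the Bena\"im asymptotic-pseudo-trajectory/ICT characterization (Definition~\ref{def:chaintrans}) and then must separately show that global asymptotic stability forces the unique internally chain transitive set to be $\{x_1^\ast\}$ --- the step you rightly flag as the delicate one; this is standard (it is essentially Conley's attractor--repeller argument, or Bena\"im's result that ICT sets meeting the basin of an attractor lie inside it), but it is extra work that the Hirsch-lemma route avoids, since that lemma already packages the "perturbed trajectories converge to the global attractor" conclusion. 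In exchange, your write-up is more self-contained about why the limit must be the equilibrium rather than merely near it, and you additionally supply a Tikhonov-type justification for the closing sentence about stable differential Stackelberg equilibria being attractors of \eqref{eq:singperturb}, which the paper asserts without proof; note only that the block-triangular structure of $\Js$ you borrow from Proposition~\ref{prop:allstack} is specific to the zero-sum case, though your actual singular-perturbation argument (fast stability from $D_2^2f_2>0$, slow stability from $D^2f_1>0$) does not need it.
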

\begin{proof}
Continuing with the conclusion of the proof of Proposition~\ref{prop:converge}, on intervals   $[t_k,t_{k+1}]$ the norm difference between
    interpolates of the sample path and the trajectories of
    $\dot{x}_1=-Df_1(x_1,\conj(x_1))$ vanish asymptotically; applying
    Lemma~\ref{lem:hirsch} (Appendix~\ref{app:prelims}) gives the result.
\end{proof}
Leveraging the results in Section~\ref{sec:zsimplications}, the convergence
guarantees are stronger since in zero-sum settings all attractors are
Stackelberg; this contrasts with the Nash equilibrium concept. 
\begin{corollary}
    Consider a zero-sum setting $(f,-f)$. Under the assumptions of Proposition~\ref{prop:converge} and
    Assumption~\ref{ass:gastwo}, conditioning on the event 
    $\{\sup_k\sum_i\|x_{i,k}\|^2<\infty\}$, 
     the
    learning dynamics \eqref{eq:learnstacka}--\eqref{eq:learnstackb} converge to
    a differential Stackelberg equilibria almost surely.
\end{corollary}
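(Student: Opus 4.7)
The plan is to combine the general two-timescale convergence result from the preceding corollary with the zero-sum structural result of Proposition~\ref{prop:allstack}. Concretely, I would first invoke the preceding corollary to obtain that, on the event $\{\sup_k\sum_i\|x_{i,k}\|^2<\infty\}$, the iterates $(x_{1,k},x_{2,k})$ converge almost surely to a point of the form $(x_1^\ast,\conj(x_1^\ast))$, where $x_1^\ast$ is an attractor of the reduced slow flow $\dot{x}_1=-Df_1(x_1,\conj(x_1))$ (which exists by Assumption~\ref{ass:gastwo}). Because the follower's fast dynamics already drive $x_2$ onto the manifold $\{x_2=\conj(x_1)\}$ with $D_2f_2\equiv 0$ and $D_2^2f_2\succ 0$ by Assumption~\ref{ass:gas}, the limit point is a critical point of the full Stackelberg vector field $\omegas$, i.e.\ $\omegas(x_1^\ast,\conj(x_1^\ast))=0$.

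Next I would argue that any such limit is in fact a \emph{stable} critical point of $\dot{x}=-\omegas(x)$. The follower direction is automatic from Assumption~\ref{ass:gas} (so the lower-right block $-D_2^2f$ of $\Js$ has spectrum in the open right-half plane). For the leader direction, stability of $x_1^\ast$ under the reduced slow flow means the Jacobian $D_1(Df)(x_1^\ast,\conj(x_1^\ast))$ has spectrum in the open right-half plane as well. By the block-triangular structure of $\Js$ exhibited in equation~\eqref{eq:stackjac}---which in zero-sum games holds because $D_2(Df)\equiv 0$ on the graph of $\conj$---the spectrum of $\Js$ is the union of these two spectra, so $\spec(\Js)\subset\mb{C}_+^\circ$ and the limit is a stable critical point of the Stackelberg dynamics.

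Finally, Proposition~\ref{prop:allstack} applies directly: every stable critical point of $\dot{x}=-\omegas(x)$ in a zero-sum game is a differential Stackelberg equilibrium. Composing the three steps yields almost sure convergence of \eqref{eq:learnstacka}--\eqref{eq:learnstackb} to a (possibly sample-path dependent) differential Stackelberg equilibrium on the stated event.

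The main obstacle I would expect is bookkeeping rather than a new technical idea: the preceding corollary is stated in terms of ``stable attractors of the singularly perturbed system,'' which a priori could include non-equilibrium invariant sets such as periodic orbits, as noted in Remark~\ref{rem:gradflow}. Ruling this out requires that Assumption~\ref{ass:gastwo} indeed forces the reduced slow flow to have point attractors and that, in the zero-sum setting, the block-triangular form of $\Js$ prevents rotational behavior from reappearing at the two-timescale level. Once the limit is shown to be an equilibrium of $\omegas$, the zero-sum identity $D_2(Df)\equiv 0$ does the rest through Proposition~\ref{prop:allstack}.
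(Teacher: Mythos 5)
Your proposal is correct and follows essentially the same route as the paper, whose proof is simply the preceding two-timescale convergence analysis combined with an invocation of Proposition~\ref{prop:allstack}; your additional bookkeeping (the block-triangular form of $\Js$ on the graph of $\conj$ and the role of Assumption~\ref{ass:gastwo} in excluding non-point attractors) just makes explicit what the paper leaves implicit in ``follows the above analysis.''
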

The proof of this corollary follows the above analysis and invokes
Proposition~\ref{prop:allstack}. As with Corollary~\ref{cor:local}, we can relax Assumption~\ref{ass:gas} and
\ref{ass:gastwo} to local asymptomatic stability assumptions and obtain similarity convergence guarantees.
\begin{corollary}
Given a differential Stackelberg equilibrium $x^\ast=(x_1^\ast,x_2^\ast)$ where $x_2^{\ast}=\conj(x_1^\ast)$, let
$B_\radius(x^\ast)=B_{\radius_1}(x_1^\ast)\times B_{\radius_2}(x_2^\ast)$
for some $\radius_1, \radius_2>0$ on which $D_2^2f_2$ is non-degenerate. 
Suppose that Assumption~\ref{ass:all} holds for each player, $r(x_1)$ is a locally asymptotically stable attractor uniformly in $x_1$ on the ball $B_{\radius_2}(x_2^\ast)$ for the dynamics $\dot{x}_2=-D_2f_2(x)$, and there exists a locally asymptotically stable attractor on $B_{\radius_1}(x_1)$ for the dynamics $\dot{x}_1=-Df_1(x_1,\conj(x_1))$. Then, given an initialization $x_{1, 0} \in B_{\radius_1}(x_1^\ast)$ and $x_{2, 0}\in B_{\radius_2}(x_2^\ast)$, it follows that $(x_{1, k}, x_{2, k})\rightarrow (x_1^{\ast}, x_2^{\ast})$ almost surely. 
\end{corollary}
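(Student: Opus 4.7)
The plan is to mirror the proof of Proposition~\ref{prop:converge} together with its immediate corollary, replacing the global asymptotic stability assumptions (Assumptions~\ref{ass:gas} and~\ref{ass:gastwo}) with their local counterparts restricted to the product ball $B_{\radius}(x^\ast) = B_{\radius_1}(x_1^\ast) \times B_{\radius_2}(x_2^\ast)$. Since $f_1, f_2 \in C^q$ with $q \geq 2$, the maps $Df_1$, $D_2f_2$ are Lipschitz and bounded on the compact set $B_\radius(x^\ast)$, and by the implicit function theorem the reaction map $\conj$ is well-defined, smooth, and Lipschitz on $B_{\radius_1}(x_1^\ast)$ because $D_2^2 f_2$ is non-degenerate there. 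Thus, on the event that the sample path remains in $B_\radius(x^\ast)$, all the structural conditions required by the two-timescale stochastic approximation machinery in~\cite[\S6.1-2]{borkar:2008aa} are locally satisfied.

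First I would analyze the fast timescale. Because $\learnraten{1,k} = o(\learnraten{2,k})$, the leader's update can be rewritten as $x_{1,k+1} = x_{1,k} - \learnraten{2,k}\zeta_k$ with $\zeta_k = o(1)$, so from the follower's viewpoint $x_1$ is quasi-static and the follower's interpolated trajectory tracks $\dot{x}_2 = -D_2f_2(x_1, x_2)$ on intervals $[\hat t_j, \hat t_{j+1}]$ with $\hat t_j = \sum_{l=0}^{j-1}\learnraten{2,l}$. Since $\conj(x_1)$ is a locally asymptotically stable attractor uniformly in $x_1$ on $B_{\radius_2}(x_2^\ast)$, Lemma~\ref{lem:hirsch} yields $\|x_{2,k} - \conj(x_{1,k})\| \to 0$ almost surely on the conditioned event.

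Next I would analyze the slow timescale. With the follower asymptotically on the manifold $\{(x_1, \conj(x_1))\}$, the leader's interpolated trajectory tracks $\dot{x}_1 = -Df_1(x_1, \conj(x_1))$. Because there exists a locally asymptotically stable attractor on $B_{\radius_1}(x_1^\ast)$ for this ODE (which must be $x_1^\ast$ itself, since $x^\ast$ is a differential Stackelberg equilibrium), applying \cite[Prop.~4.1, 4.2]{benaim:1999aa} exactly as in Proposition~\ref{prop:converge}, together with Lemma~\ref{lem:hirsch}, gives $x_{1,k} \to x_1^\ast$ almost surely. Combined with Lipschitz continuity of $\conj$ on $B_{\radius_1}(x_1^\ast)$, this yields $x_{2,k} \to \conj(x_1^\ast) = x_2^\ast$, so $(x_{1,k}, x_{2,k}) \to (x_1^\ast, x_2^\ast)$.

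The main obstacle is justifying that the sample path actually stays in $B_\radius(x^\ast)$, since in the local setting we do not have a global stability / boundedness guarantee to lean on. I would handle this by a stopping-time argument: because $x^\ast$ is a stable attractor of the limiting singularly perturbed system~\eqref{eq:singperturb}, there exists a local Lyapunov function $V$ on a neighborhood of $x^\ast$; shrinking $\radius_1, \radius_2$ if necessary, the martingale difference noise bounds in~\ref{ass:noise} combined with $\sum_k \learnraten{i,k}^2 < \infty$ ensure via a standard Doob/concentration argument that the probability of exiting $B_\radius(x^\ast)$ can be made arbitrarily small for initializations sufficiently close to $x^\ast$, after which the conditioned-event argument above applies. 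This is the locally stable counterpart of the $\{\sup_k \sum_i \|x_{i,k}\|^2 < \infty\}$ conditioning used in Proposition~\ref{prop:converge}.
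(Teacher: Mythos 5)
Your proposal follows essentially the same route as the paper, which offers no separate argument for this corollary beyond the remark that the proofs of Proposition~\ref{prop:converge} and its corollaries carry over once Assumptions~\ref{ass:gas} and~\ref{ass:gastwo} are weakened to local asymptotic stability: fast-timescale tracking of $\dot{x}_2=-D_2f_2(x_1,x_2)$ with $x_1$ quasi-static and Lemma~\ref{lem:hirsch}, then slow-timescale tracking of $\dot{x}_1=-Df_1(x_1,\conj(x_1))$ via \cite[Prop.~4.1--4.2]{benaim:1999aa}, and Lipschitz continuity of $\conj$ on $B_{\radius_1}(x_1^\ast)$ to conclude $x_{2,k}\to\conj(x_1^\ast)$. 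The one point to flag is your final paragraph: the stopping-time/Lyapunov lock-in argument yields only a high-probability (not almost-sure) guarantee of remaining in $B_\radius(x^\ast)$, so what it actually delivers is almost-sure convergence conditioned on the trajectory staying in the product ball (or a probability that can be made close to one by shrinking the initialization neighborhood); this is the same implicit conditioning the paper uses—via the event in Proposition~\ref{prop:converge} and the analogous reading of Corollary~\ref{cor:local}—so your treatment is, if anything, more explicit about the subtlety than the paper, but it does not (and cannot) upgrade the statement to unconditional almost-sure convergence from every initialization in the stated balls.
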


\subsubsection{Finite-Time High-Probability Guarantees}
\label{sec:finitetime}
While asymptotic guarantees of the proceeding section are useful, high-probability finite-time
guarantees can be leveraged more directly in analysis and synthesis, e.g., of
mechanisms to coordinate otherwise autonomous agents. 
In this section, we aim to provide concentration bounds for the purpose of
deriving convergence rate and error bounds in support of this objective. 
The results in this section follow the very recent work by
\citet{borkar:2018aa}. We highlight key differences and, in particular, where
the analysis may lead to insights relevant for learning in hierarchical decision
problems between non-cooperative agents. 

Consider a locally asymptotically stable differential Stackelberg equilibrium
$x^\ast=(x_1^\ast,
\conj(x_1^\ast))\in X$ and let $B_{q_0}(x^\ast)$ be an ${q}_0>0$ radius ball around $x^\ast$ contained in the
region of attraction.
Stability implies that the Jacobian $\Js(x_1^\ast,
\conj(x_1^\ast))$ is positive definite and by the converse Lyapunov
theorem~\cite[Chap.~5]{sastry:1999aa} there
exists local Lyapunov functions for the dynamics $\dot{x}_1(t)=-\tau
Df_1(x_1(t),\conj(x_1(t)))$ and for the dynamics $\dot{x}_2(t)=-D_2f_2(x_1,
x_2(t))$, for each fixed $x_1$. In particular, there exists a local Lyapunov
function $V\in C^1(\mb{R}^{d_1})$ with $\lim_{\|x_1\|\uparrow
\infty}V(x_1)=\infty$, and $\la \nabla V(x_1), Df_1(x_1,\conj(x_1))\ra<0$ for $x_1\neq
x_1^\ast$. 
For $q>0$, let $V^q=\{x\in \text{dom}(V):\ V(x)\leq q\}$. Then, there is also 
$q>q_0>0$ and $\epsilon_0>0$ such that for $\epsilon<\epsilon_0$,
$\{x_1\in \mb{R}^{d_1}|\ \|x_1-x_1^\ast\|\leq \epsilon\}\subseteq V^{q_0}\subset
\mc{N}_{\epsilon_0}(V^{q_0})\subseteq V^q\subset \text{dom}(V)$ where
$\mc{N}_{\epsilon_0}(V^{q_0})=\{x\in \mb{R}^{d_1}|\ \exists x'\in V^{q_0}\
\text{s.t.} \|x'-x\|\leq \epsilon_0\}$. An analogously defined $\tilde{V}$
exists for the dynamics $\dot{x}_2$ for each fixed $x_1$. 

For now, fix $n_0$ sufficiently large; we specify the values of $n_0$ for which
the theory holds before the statement of Theorem~\ref{thm:conjecturetrack}. Define the
event $\mc{E}_n=\{\bar{x}_2(t)\in V^q\ \forall t\in [\tilde{t}_{n_0},
\tilde{t}_n]\}$ where
$\bar{x}_2(t)=x_{2,k}+\frac{t-\tilde{t}_k}{\gamma_{2,k}}(x_{2,k+1}-x_{2,k})$ are
linear interpolates---i.e., \emph{asymptotic pseudo-trajectories}---defined for $t\in (\tilde{t}_k, \tilde{t}_{k+1})$ with
$\tilde{t}_{k+1}=\tilde{t}_k+\gamma_{2,k}$ and $\tilde{t}_0=0$. 

The basic idea
of the proof is to leverage Alekseev's formula (Thm.~\ref{thm:alekseev},
    Appendix~\ref{app:prelims}) to bound the difference between the asymptotic
    pseudo-trajectories and
        the flow of the corresponding limiting differential equation on each
        continuous time interval between each of the successive iterates $k$ and
        $k+1$ by sequences of constants that decay asymptotically. Then, a union
        bound is used over all time intervals after defined for $n\geq n_0$ in
        order
        to construct a concentration bound. This is done first for the follower,
     showing that $x_{2,k}$ tracks the leader's 'conjecture' or belief
     $\conj(x_{1,k})$ about the follower's reaction, and
    then for the leader. 

Following~\citet{borkar:2018aa}, we can express the linear interpolates for any
$n\geq n_0$ as
    $\bar{x}_2(\tilde{t}_{n+1})\textstyle=\bar{x}_2(\tilde{t}_{n_0})-\sum_{k=n_0}^n\gamma_{2,k}(D_2f_2(x_{k})+w_{2,k+1})$
where
$\textstyle\gamma_{2,k}D_2f_2(x_{k})=\int_{\tilde{t}_k}^{\tilde{t}_{k+1}}D_2f_2(x_{1,k},\bar{x}_2(\tilde{t}_k))\
ds$
and similarly for the $w_{2,k+1}$ term.
Adding and subtracting $\int_{\tilde{t}_{n_0}}^{\tilde{t}_{n+1}}D_2f_2(x_{1}(s),
\bar{x}_2(s))\ ds$, Alekseev's formula 
can be applied to get 
\begin{align*}
    \bar{x}_2(t)&=x_2(t)+\Phi_2(t,s,x_1(\tilde{t}_{n_0}),\bar{x}_2(\tilde{t}_{n_0}))(\bar{x}_2(\tilde{t}_{n_0})\textstyle-x_2(\tilde{t}_{n_0}))+\int_{\tilde{t}_{n_0}}^t\Phi_2(t,s,x_1(s),\bar{x}_2(s))\zeta_2(s)\ ds
\end{align*}
where $x_1(t)\equiv x_1$ is constant (since $\dot{x}_1=0$), $x_2(t)=\conj(x_1)$,
and
\[\zeta_2(s)=-D_2f_2(x_1(\tilde{t}_k),\bar{x}_2(\tilde{t}_k))+D_2f_2(x_1(s),\bar{x}_2(s))+w_{2,k+1}.\]
In addition, for $t\geq s$, $\Phi_2(\cdot)$ satisfies linear
 system 
 \[\dot{\Phi}_2(t,s,x_{0})=J_2(x_1(t),x_2(t))\Phi_2(t,s,x_{0}),\]
 with $\Phi_2(t,s,x_{0})=I$ and $x_0=(x_{1,0},x_{2,0})$ and where $J_2$ the Jacobian of
$-D_2f_2(x_1,\cdot)$. We provide more detail on this derivation in Appendix~\ref{app:proofs}.

Given that $x^\ast=(x_1^\ast, \conj(x_1^\ast))$ is a stable differential
Stackelberg equilibrium, $J_2(x^\ast)$ is positive definite. Hence,  as in
\cite[Lem.~5.3]{thoppe:2018aa}, we can find
$M$,
$\kappa_2>0$ such that for $t\geq s$, $x_{2,0}\in V^q$,
$\|\Phi_2(t,s,x_{1,0},x_{2,0})\|\leq Me^{-\kappa_2(t-s)}$; this result follows
from 
standard results on stability of linear systems (see, e.g., \citet[\S7.2,
 Thm.~33]{callier:1991aa})  along with 
 a bound on
 \[\textstyle\int_{s}^t\big\|D^2_2f_2(x_{1},x_{2}(\tau,s,\tilde{x}_0))-D_2^2f_2(x^\ast)\big\|d\tau\]
 for $\tilde{x}_0\in V^q$ (see, e.g.,~\citet[Lem~5.2]{thoppe:2018aa}).

Now, an interesting point worth making is that this analysis
 leads to a very nice result for the leader-follower setting. In
particular, through the use of the auxiliary variable $z$, we can show that the
follower's sample path `tracks' the leader's conjectured sample path. Indeed,
consider $z_k=\conj(x_{1,k})$, that is, where $D_2f_2(x_{1,k},x_{2,k})=0$. Then,
using a Taylor expansion of the implicitly defined conjecture $\conj$, we get
    $z_{k+1}=z_k+D\conj(x_{1,k})(x_{1,k+1}-x_{1,k})+\delta_{k+1}$
where $\|\delta_{k+1}\|\leq L_{r}\|x_{1,k+1}-x_{1,k}\|^2$ is the error from
the remainder terms. Plugging in $x_{1,k+1}$, 
\begin{align*}
    z_{k+1}&=z_k+\gamma_{2,k}(-D_2f_2(x_{1,k},z_k)+\tau_kDr_2(x_{1,k})(w_{1,k+1}-Df_1(x_{1,k},x_{2,k}))+\gamma_{2,k}^{-1}\delta_{k+1}).
\end{align*}
The terms after $-D_2f_2$ are $o(1)$, and hence asymptotically negligible, so
that this $z$ sequence tracks dynamics as $x_{2,k}$. We show that with high
probability, they
asymptotically contract, leading to the conclusion that the
follower's dynamics track the leader's conjecture.

Towards this end, we first bound the normed difference between $x_{2,k}$ and
$z_{k}$. Define constants
\[H_{n_0}=\textstyle(\|\bar{x}_2(\tilde{t}_{n_0}-x_2(\tilde{t}_{n_0})\|+\|\bar{z}(\tilde{t}_{n_0})-x_2(\tilde{t}_{n_0})\|),\]
and \[S_{2,n}=\textstyle\sum_{k=n_0}^{n-1}\big(\int_{\tilde{t}_k}^{\tilde{t}_{k+1}}\Phi_2(\tilde{t}_n,s,x_{1}(\tilde{t}_k),\bar{x}_2(\tilde{t}_k))
ds)w_{2,k+1},
\]
and let $\tau_k=\gamma_{1,k}/\gamma_{2,k}$. 
\begin{lemma}
    For any $n\geq n_0$, there exists $K>0$ such that conditioned on
    ${\mc{E}}_n$,
    \begin{align*}
       \|x_{2,n}-z_n\|&\leq
       K\big(\|S_{2,n}\|+e^{-\kappa_2(\tilde{t}_n-\tilde{t}_{n_0})}H_{n_0}\textstyle+\sup_{n_0\leq k\leq n-1}\gamma_{2,k}+\sup_{n_0\leq k\leq
        n-1}\gamma_{2,k}\|w_{2,k+1}\|^2\notag\\
        &\qquad\textstyle+\sup_{n_0\leq k\leq n-1}\tau_k+\sup_{n_0\leq k\leq
        n-1}\tau_k\|w_{1,k+1}\|^2\big).
    \end{align*}
    
        \label{lem:defK}
\end{lemma}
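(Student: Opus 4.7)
The plan is to apply Alekseev's nonlinear variation-of-parameters formula separately to the follower's interpolate $\bar{x}_2$ and to the interpolate $\bar{z}$ of the auxiliary sequence $z_k = r(x_{1,k})$, comparing both against the (trivial) flow $x_2(t) \equiv r(x_1)$ of the frozen-leader ODE. The two expressions share the same transition operator $\Phi_2$; subtracting them causes the common drift $-D_2f_2(\cdot,\cdot)$ to cancel and leaves only: (i) an initial-condition transient propagated by $\Phi_2(\tilde t_n,\tilde t_{n_0})$, (ii) a martingale integral over the noise $w_{2,k+1}$ (which collects into $S_{2,n}$), (iii) a Lipschitz error from the piecewise-constant freeze of $\bar{x}_2$, and (iv) the extra perturbation terms carried by the $z$-recursion as derived in the excerpt.

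First I would write, for $t = \tilde t_n$,
\begin{align*}
\bar{x}_2(t) - x_2(t) &= \Phi_2(t,\tilde t_{n_0})(\bar x_2(\tilde t_{n_0}) - x_2(\tilde t_{n_0})) + \textstyle\int_{\tilde t_{n_0}}^t \Phi_2(t,s)\zeta_2(s)\, ds,\\
\bar{z}(t) - x_2(t) &= \Phi_2(t,\tilde t_{n_0})(\bar z(\tilde t_{n_0}) - x_2(\tilde t_{n_0})) + \textstyle\int_{\tilde t_{n_0}}^t \Phi_2(t,s)\zeta_z(s)\, ds,
\end{align*}
where $\zeta_z(s)$ is the analogue of $\zeta_2$ for the $z$-sequence, carrying the additional term $\tau_k Dr(x_{1,k})(w_{1,k+1} - Df_1(x_k)) + \gamma_{2,k}^{-1}\delta_{k+1}$ coming from the Taylor expansion of $r$. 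Subtracting and using $\|\Phi_2(t,s)\| \leq M e^{-\kappa_2(t-s)}$ (valid on $\mathcal{E}_n$ by the stability argument recalled before the lemma) yields the exponential-decay term $M e^{-\kappa_2(\tilde t_n - \tilde t_{n_0})} H_{n_0}$.

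Next I would bound the integral piece by piece on each subinterval $[\tilde t_k,\tilde t_{k+1}]$. For the Lipschitz drift error $D_2f_2(x_1,\bar x_2(s)) - D_2f_2(x_1,\bar x_2(\tilde t_k))$, I would use Assumption~\ref{ass:lip} and the fact that $\|\bar x_2(s) - \bar x_2(\tilde t_k)\| \leq \gamma_{2,k}(M_2 + \|w_{2,k+1}\|)$ with $M_2 = \sup_{V^q}\|D_2f_2\|$ (finite because $V^q$ is compact), giving per-interval contributions of order $\gamma_{2,k}^2(1 + \|w_{2,k+1}\|)$. For the extra $\zeta_z$ terms I would invoke $\|Df_1\| \leq M_1$ from Assumption~\ref{ass:lip}, the $L_r$-Lipschitz bound on $r$ (hence on $Dr$ in $V^q$), and the quadratic remainder bound $\|\delta_{k+1}\| \leq L_r\|x_{1,k+1}-x_{1,k}\|^2 \leq 2L_r\gamma_{1,k}^2(M_1^2 + \|w_{1,k+1}\|^2)$, so that $\gamma_{2,k}^{-1}\|\delta_{k+1}\| \lesssim \tau_k\gamma_{1,k}(1 + \|w_{1,k+1}\|^2)$. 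The martingale part $\int \Phi_2(t,s) w_{2,k+1}\, ds$ assembles into $S_{2,n}$ after pulling $w_{2,k+1}$ outside the inner integral over $[\tilde t_k,\tilde t_{k+1}]$.

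The final step is the arithmetic simplification. The geometric-weight sum $\sum_k \gamma_{2,k} e^{-\kappa_2(\tilde t_n - \tilde t_k)}$ is bounded by a constant (standard estimate using $e^{-\kappa_2\gamma_{2,k}} \geq 1 - \kappa_2\gamma_{2,k}$ and summability), which lets me pull the remaining factors outside as a supremum, producing terms of the form $\sup_k \gamma_{2,k}(1+\|w_{2,k+1}\|)$ and $\sup_k \tau_k(1+\|w_{1,k+1}\|)$. Using the elementary inequality $\|w\| \leq 1 + \|w\|^2$ splits each into a deterministic piece and a squared-noise piece, delivering exactly the four supremum terms in the stated bound, with $K$ absorbing $M$, $L_1$, $L_2$, $L_r$, $M_1$, $M_2$ and the geometric-sum constant. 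The main obstacle, as I see it, is the bookkeeping for the $\gamma_{2,k}^{-1}\delta_{k+1}$ contribution: one must carefully separate the factors of $\tau_k$ from $\gamma_{1,k}$ so that the surviving terms contain only $\tau_k$ (not $\tau_k\gamma_{1,k}$) while still controlling the interaction with the noise $w_{1,k+1}$ through $\|w_{1,k+1}\|^2$, which is precisely what dictates the asymmetric form of the bound.
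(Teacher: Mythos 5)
Your proposal is correct and follows essentially the same route as the paper: Alekseev's formula applied to both the follower interpolate $\bar{x}_2$ and the interpolate of the conjecture sequence $z_k=\conj(x_{1,k})$ against the frozen-leader flow $x_2(t)\equiv \conj(x_1)$, the exponential bound $\|\Phi_2(t,s,\cdot)\|\leq Me^{-\kappa_2(t-s)}$, the Taylor-remainder estimate for $\delta_{k+1}$, per-interval Lipschitz freeze errors, and geometric-weight summation to extract the four suprema (the paper outsources these estimates to Lemma 5.8 of Thoppe--Borkar and Lemmas III.1--5 of Borkar--Pattathil rather than writing them out). The only cosmetic difference is that the paper bounds $\|\bar{x}_2-x_2\|$ and $\|\bar{z}-x_2\|$ separately and combines them by the triangle inequality (which is why $H_{n_0}$ sums both initial deviations), whereas you subtract the two Alekseev representations directly; the resulting terms are the same.
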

Using this bound, we can provide an asymptotic guarantee that $x_{2,k}$ tracks
 $\conj(x_{1,k})$ and a high-probability guarantee that $x_{2,k}$
gets locked in to a ball around $\conj(x_1^\ast)$. Fix $\vep\in[0,1)$ and let $N$ be such that $\gamma_{2,n}\leq
\vep/(8K)$, $\tau_n\leq \vep/(8K)$ for all $n\geq N$. Let $n_0\geq N$ and with
$K$ as in Lemma~\ref{lem:defK}, let $T$ be such that
$e^{-\kappa_2(\tilde{t}_n-\tilde{t}_{n_0})}H_{n_0}\leq \vep/(8K)$ for all $n\geq
n_0+T$. 
\begin{theorem}
    Suppose that Assumptions~\ref{ass:all}, \ref{ass:gas}, and \ref{ass:gastwo}
    hold and let $\gamma_{1,k}=o(\gamma_{2,k})$. 
      Given a stable
    differential Stackelberg equilibrium $x^\ast=(x_1^\ast, \conj(x_1^\ast))$, the
    follower's sample path generated by \eqref{eq:learnstackb} with
    asymptotically track the leader's conjecture $z_k=\conj(x_{1,k})$ and, given
    $\vep\in[0,1)$,  will get `locked in' to a $\vep$--neighborhood with
    high probability conditioned on reaching  $B_{q_0}(x^\ast)$
    by iteration $n_0$.
    That is, letting $\bar{n}= n_0+T+1$, for some $C_1, C_2, C_3, C_4>0$,
    \begin{align}
        \mathrm{P}(\|x_{2,n}-z_n\|\leq \vep,& \forall n\geq \bar{n}| x_{2,n_0},z_{n_0}\in
        B_{q_0})\notag\\
        &\geq\textstyle
        1-\sum_{n=n_0}^{\infty}C_1e^{-C_2\sqrt{\vep/\gamma_{2,n}}}-\sum_{n=n_0}^\infty
        C_2e^{-C_2\sqrt{\vep/\tau_n}} -\sum_{n=n_0}^{\infty} C_3e^{-C_4\vep^2/\beta_n}.
        \label{eq:pxz}
    \end{align}
    with $\beta_n=\textstyle\max_{n_0\leq k\leq
    n-1}e^{-\kappa_2(\sum_{i=k+1}^{n-1}\gamma_{2,i})}\gamma_{2,k}$.
    \label{thm:conjecturetrack}
\end{theorem}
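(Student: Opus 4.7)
The plan is to chain Lemma~\ref{lem:defK} together with tail bounds on each of its stochastic pieces and to close the argument with a lock-in induction guaranteeing that the conditioning event $\mathcal{E}_n$ persists. First I would do the deterministic bookkeeping: by the choice of $N$ and $T$, for every $n\geq \bar{n}=n_0+T+1$ the three non-random terms $\sup_{n_0\leq k\leq n-1}\gamma_{2,k}$, $\sup_{n_0\leq k\leq n-1}\tau_k$, and $e^{-\kappa_2(\tilde{t}_n-\tilde{t}_{n_0})}H_{n_0}$ in Lemma~\ref{lem:defK} are each bounded by $\vep/(8K)$. These already supply $3\vep/(8K)$ toward the target $\|x_{2,n}-z_n\|\leq \vep$ without probabilistic cost.

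Next I would concentrate the three stochastic pieces. Assumption~\ref{ass:noise} bounds the conditional moments of $w_{i,k+1}$ in a way that yields sub-exponential tails, so
\[
\mathrm{P}\bigl(\gamma_{2,k}\|w_{2,k+1}\|^2>\vep/(8K)\bigr)=\mathrm{P}\bigl(\|w_{2,k+1}\|>\sqrt{\vep/(8K\gamma_{2,k})}\bigr)\leq C_1 e^{-C_2\sqrt{\vep/\gamma_{2,k}}},
\]
and the identical estimate with $\tau_k$ replacing $\gamma_{2,k}$ holds for $w_{1,k+1}$; summing over $k\geq n_0$ produces the first two sums in \eqref{eq:pxz}. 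The martingale $S_{2,n}$ has increments weighted by $\int_{\tilde{t}_k}^{\tilde{t}_{k+1}}\Phi_2(\tilde{t}_n,s,\cdot)\,ds$, which, on $\mathcal{E}_n$ and using $\|\Phi_2(t,s,\cdot)\|\leq Me^{-\kappa_2(t-s)}$, are bounded by a multiple of $e^{-\kappa_2(\tilde{t}_n-\tilde{t}_{k+1})}\gamma_{2,k}$. A Freedman--Bernstein-type martingale inequality then gives $\mathrm{P}(\|S_{2,n}\|>\vep/(8K))\leq C_3 e^{-C_4\vep^2/\beta_n}$ with the stated $\beta_n=\max_{n_0\leq k\leq n-1}e^{-\kappa_2(\sum_{i=k+1}^{n-1}\gamma_{2,i})}\gamma_{2,k}$, and a union bound over $n\geq n_0$ delivers the third sum.

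The main obstacle is the lock-in step, since Lemma~\ref{lem:defK} only applies on $\mathcal{E}_n$. I would close this by induction on $n\geq n_0$: on the complement of the bad events identified above, the bound of Lemma~\ref{lem:defK} gives $\|x_{2,n}-z_n\|\leq \vep<1$; since $z_n=\conj(x_{1,n})$ stays in the interior of the region of attraction (invoking the slow-timescale tracking from Proposition~\ref{prop:converge} and the choice $q_0<q$), the pseudo-trajectory $\bar{x}_2$ remains inside $V^q$, so $\mathcal{E}_{n+1}$ holds and the induction advances. Taking the complement of the combined bad events, each controlled above and valid uniformly for $n\geq \bar{n}$, yields \eqref{eq:pxz}. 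The delicate piece is that the two-timescale coupling introduces the $\tau_k\|w_{1,k+1}\|^2$ term that is absent from the classical single-timescale lock-in of \cite{borkar:2018aa,thoppe:2018aa}; I would adapt their argument by treating this term symmetrically with the follower-noise term, which is precisely the source of the second sum in \eqref{eq:pxz}.
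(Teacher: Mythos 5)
Your proposal follows essentially the same route as the paper's proof: condition on $\mc{E}_n$, apply Lemma~\ref{lem:defK}, absorb the deterministic terms via the choice of $N$ and $T$, concentrate the two noise terms and the weighted martingale $S_{2,n}$, union bound over all $n\geq n_0$, and close with a lock-in argument keeping the trajectory in $V^q$ --- the paper simply imports the concentration estimates and the lock-in step from \citet{thoppe:2018aa} (Theorems~6.2--6.3 and Lemma~3.1) and \citet{borkar:2018aa} rather than re-deriving them as you sketch. Two small cautions: Assumption~\ref{ass:noise} as written only bounds a conditional first moment, so the sub-exponential tails you invoke (and which the paper inherits through the cited results) really rest on the stronger noise conditions of those references, and the lock-in induction should be closed using the nested Lyapunov level sets $V^{q_0}\subset\mc{N}_{\epsilon_0}(V^{q_0})\subseteq V^q$ on the good event rather than by appealing to the almost-sure asymptotic statement of Proposition~\ref{prop:converge}, which is not available inside a finite-time high-probability argument.
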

The key technique in proving the above theorem (which is done in detail
in~\citet{borkar:2018aa} using results from \citet{thoppe:2018aa}), is taking a union bound of the errors over all the continuous time
intervals defined for  $n\geq n_0$. 

The above theorem can be restated to give a
guarantee on getting locked-in to an $\vep$-neighborhood of a stable
differenital Stackelberg equilibria $x^\ast$ if the learning processes are initialized in
$B_{q_0}(x^\ast)$.
\begin{corollary}
    Fix $\vep\in[0,1)$ and suppose that $\gamma_{2,n}\leq \vep/(8K)$ for all
        $n\geq 0$. With $K$ as in Lemma~\ref{lem:defK}, let $T$ be such that
$e^{-\kappa_2(\tilde{t}_n-\tilde{t}_{0})}H_{0}\leq \vep/(8K)$ for all $n\geq T$. 
Under the assumptions of Theorem~\ref{thm:conjecturetrack}, $x_{2,k}$ will will get `locked in' to a $\vep$--neighborhood with
high probability conditioned on $x_0\in B_{q_0}(x^\ast)$ where the
high-probability bound is given in \eqref{eq:pxz} with $n_0=0$.
\label{cor:zerolock}
\end{corollary}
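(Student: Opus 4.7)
The plan is to obtain the corollary as the specialization of Theorem~\ref{thm:conjecturetrack} to the case $n_0 = 0$, once I verify that the corollary's hypotheses align exactly with that case of the theorem's hypotheses. The statement of the theorem required two quantitative ingredients: an index $N$ beyond which $\gamma_{2,n}, \tau_n \leq \vep/(8K)$, together with the choice $n_0 \geq N$; and an index $T$ beyond which $e^{-\kappa_2(\tilde{t}_n - \tilde{t}_{n_0})} H_{n_0} \leq \vep/(8K)$. The corollary strengthens the step size bound to hold from iteration zero, i.e.\ $\gamma_{2,n} \leq \vep/(8K)$ for every $n\geq 0$, which amounts to taking $N = 0$; this in turn permits the choice $n_0 = 0$.

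First, I would reconcile the $\tau_n$ condition. The two-timescale hypothesis $\gamma_{1,k} = o(\gamma_{2,k})$ yields $\tau_n \to 0$; I would either strengthen the corollary to require $\tau_n \leq \vep/(8K)$ for all $n \geq 0$ (which is analogous to the $\gamma_{2,n}$ hypothesis already assumed and is the natural reading given the reference to the theorem's assumptions), or absorb any finite initial segment where $\tau_n > \vep/(8K)$ into the leading constants $C_1, \ldots, C_4$ in \eqref{eq:pxz} by adjusting them. Next, I would observe that the corollary's definition of $T$, namely $e^{-\kappa_2(\tilde{t}_n - \tilde{t}_0)} H_0 \leq \vep/(8K)$ for $n \geq T$, is literally the $n_0 = 0$ specialization of the theorem's $T$. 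Finally, I would check that the initial condition $x_0 \in B_{q_0}(x^\ast)$ produces the theorem's initial condition $x_{2,0}, z_0 \in B_{q_0}$; this holds because $z_0 = \conj(x_{1,0})$ and $x^\ast = (x_1^\ast, \conj(x_1^\ast))$, so by the Lipschitz continuity of $\conj$ (Assumption~\ref{ass:gas}), possibly after shrinking $q_0$, both $x_{2,0}$ and $z_0$ lie in $B_{q_0}$.

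With these identifications in place, Theorem~\ref{thm:conjecturetrack} applied with $n_0 = 0$ yields exactly the claim: writing $\bar{n} = T + 1$, the lock-in probability bound~\eqref{eq:pxz} holds with all sums starting from $n = 0$. No new estimate beyond those already contained in Lemma~\ref{lem:defK} and the union-bound argument underlying Theorem~\ref{thm:conjecturetrack} is required; the work is purely bookkeeping of the threshold indices.

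The only mild obstacle is the aforementioned $\tau_n$ question: since the corollary's statement only explicitly imposes $\gamma_{2,n} \leq \vep/(8K)$, I would flag in the write-up that the analogous bound on $\tau_n$ is either inherited from the theorem's blanket assumptions (through the condition $\gamma_{1,k} = o(\gamma_{2,k})$ plus passage to a tail absorbed into $C_1, \ldots, C_4$) or should be stated explicitly for cleanliness. Modulo this small clarification, the proof is simply the sentence ``apply Theorem~\ref{thm:conjecturetrack} with $n_0 = 0$.''
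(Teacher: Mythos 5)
Your proposal is correct and matches the paper's (implicit) argument exactly: the paper offers no separate proof, presenting Corollary~\ref{cor:zerolock} simply as a restatement of Theorem~\ref{thm:conjecturetrack} specialized to $n_0=0$, which is precisely what you do. Your side remarks---that the $\tau_n\leq \vep/(8K)$ requirement should be made explicit (or a finite initial segment absorbed into the constants) and that the initial condition $x_0\in B_{q_0}(x^\ast)$ transfers to $x_{2,0},z_0\in B_{q_0}$ via the Lipschitz continuity of $\conj$---are reasonable bookkeeping clarifications consistent with the intended reading.
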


Given that the follower's action $x_{2,k}$ tracks $\conj(x_{1,k})$,
we can also show that $x_{1,k}$ gets locked into an $\vep$--neighborhood of
$x_1^\ast$ after a finite time with high probability.
First, a similar bound as in Lemma~\ref{lem:defK} can be
constructed for $x_{1,k}$. 

Define the event
$\hat{\mc{E}}_n=\{\bar{x}_1(t)\in V^{q}\ \forall t\in [\hat{t}_{n_0},
\hat{t}_n]\}$ where for each $t$,
$\bar{x}_1(t)=x_{1,k}+\frac{t-\hat{t}_k}{\gamma_{1,k}}(x_{1,k+1}-x_{1,k})$ is
a linear interpolates  between the samples $\{x_{1,k}\}$,
$\hat{t}_{k+1}=\hat{t}_k+\gamma_{1,k}$, and $\hat{t}_0=0$. Then as above,
Alekseev's formula can again be applied to get 
\begin{align*}
    \bar{x}_1
    &(t)=x_1(t,\hat{t}_{n_0},y(\hat{t}_{n_0}))+\Phi_1(t,\hat{t}_{n_0},
    \bar{x}_1(\hat{t}_{n_0}))\textstyle(\bar{x}_1(\hat{t}_{n_0})-x_1(\hat{t}_{n_0}))+\int_{\hat{t}_{n_0}}^t\Phi_1(t,s,\bar{x}_1(s))\zeta_1(s)\
    ds
\end{align*}
where $x_1(t)\equiv x_1^\ast$,
\begin{align*}
    \zeta_1(s)&=Df_1(x_{1,k},\conj(x_{1,k}))-Df_1(\bar{x}_1(s),\conj(\bar{x}_1(s)))+Df_1(x_k)-Df_1(x_{1,k},\conj(x_{1,k}))+w_{1,k+1},
\end{align*}
and $\Phi_1$ is the solution to a linear system
with dynamics $J_1(x_1^\ast, \conj(x_1^\ast))$, the Jacobian of
$-Df_1(\cdot,\conj(\cdot))$, and with initial data $\Phi_1(s,s,x_{1,0})=I$. This
linear system, as above, has bound $\|\Phi_1(t,s,x_{1,0})\|\leq
M_1e^{\kappa_1(t-1)}$ for some $M_1,\kappa_1>0$. Define $S_{1,n}=\sum_{k=n_0}^{n-1}\int_{\hat{t}_{k}}^{\hat{t}_{k+1}}\Phi_1(\hat{t}_n,s,\bar{x}_1(\hat{t}_k))ds
\cdot w_{1,k+1}$.
\begin{lemma}
    For any $n\geq n_0$, there exists $\bar{K}>0$ such that   conditioned on
    $\tilde{\mc{E}}_n$,
\setlength{\belowdisplayskip}{-8pt}\setlength{\belowdisplayshortskip}{-8pt}
    \begin{align*}
        \|\bar{x}_1(\hat{t}_n)-x_1(\hat{t}_n)\|\leq&\textstyle
        \bar{K}\big(\|S_{1,n}\|+\sup_{n_0\leq k\leq n-1}\|S_{2,k}\|
        \textstyle+\sup_{n_0\leq k\leq n-1}\gamma_{2,k}+\sup_{n_0\leq k\leq n-1}\tau_k\notag\\
        &\textstyle+\sup_{n_0\leq k\leq
        n-1}\gamma_{2,k}\|w_{2,k+1}\|^2+\sup_{n_0\leq k\leq
        n-1}\tau_k\|w_{1,k+1}\|^2+\sup_{n_0\leq
        k\leq n-1}\tau_kH_{n_0}\notag\\
        &\textstyle\quad+e^{\kappa_1(\hat{t}_n-\hat{t}_{n_0})}\|\bar{x}_1(\hat{t}_{n_0})-x_1(\hat{t}_{n_0})\|\big).
    \end{align*}
    \label{lem:defbarK}
\end{lemma}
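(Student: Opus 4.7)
The plan is to follow exactly the template already used for Lemma~\ref{lem:defK}, replacing follower quantities with leader quantities. The representation displayed just above the lemma gives, via Alekseev's formula (Theorem~\ref{thm:alekseev}),
\[
\bar{x}_1(\hat{t}_n)-x_1(\hat{t}_n)=\Phi_1(\hat{t}_n,\hat{t}_{n_0},\bar{x}_1(\hat{t}_{n_0}))(\bar{x}_1(\hat{t}_{n_0})-x_1(\hat{t}_{n_0}))+\int_{\hat{t}_{n_0}}^{\hat{t}_n}\Phi_1(\hat{t}_n,s,\bar{x}_1(s))\zeta_1(s)\,ds.
\]
On the event $\tilde{\mc{E}}_n$ the interpolate $\bar{x}_1(t)$ stays in $V^q$, so the exponential bound on $\|\Phi_1(\hat t_n,s,\cdot)\|$ established just above the lemma applies uniformly on $[\hat{t}_{n_0},\hat t_n]$. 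The boundary term yields the last term in the target bound directly, and it remains to control the integral.

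Next I split $\zeta_1(s)$ into three structurally different pieces: (i) the discretization-in-$s$ error $Df_1(x_{1,k},\conj(x_{1,k}))-Df_1(\bar{x}_1(s),\conj(\bar{x}_1(s)))$, (ii) the reaction-tracking error $Df_1(x_k)-Df_1(x_{1,k},\conj(x_{1,k}))$, and (iii) the leader-side martingale increment $w_{1,k+1}$. For (i), Assumption~\ref{ass:lip} together with the Lipschitz constant $L_r$ of $\conj$ from Assumption~\ref{ass:gas} and the one-step increment bound $\|\bar{x}_1(s)-x_{1,k}\|\leq \gamma_{1,k}(M_1+\|w_{1,k+1}\|)$ makes the integrand a constant multiple of $\gamma_{1,k}+\gamma_{1,k}\|w_{1,k+1}\|$; integrating over each sub-interval of length $\gamma_{1,k}=\tau_k\gamma_{2,k}$ and applying Young's inequality produces contributions of the form $\sup_k \gamma_{2,k}$, $\sup_k \tau_k$, and $\sup_k \tau_k\|w_{1,k+1}\|^2$. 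For (ii), Lipschitzness of $Df_1$ in $x_2$ dominates the integrand by a multiple of $\|x_{2,k}-\conj(x_{1,k})\|$, which is exactly the quantity bounded by Lemma~\ref{lem:defK}; plugging in that bound produces the $\sup_{n_0\leq k\leq n-1}\|S_{2,k}\|$ term along with the $\sup_k \gamma_{2,k}\|w_{2,k+1}\|^2$ and $\sup_k \tau_k H_{n_0}$ pieces (the latter coming from the $e^{-\kappa_2(\tilde t_k-\tilde t_{n_0})}H_{n_0}$ factor, uniformly dominated in $k$). For (iii), rewrite $\int_{\hat{t}_k}^{\hat{t}_{k+1}}\Phi_1(\hat{t}_n,s,\bar{x}_1(s))w_{1,k+1}\,ds$ as $\gamma_{1,k}\Phi_1(\hat{t}_n,\hat{t}_k,\bar{x}_1(\hat{t}_k))w_{1,k+1}$ plus a Lipschitz-in-$s$ remainder in $\Phi_1$; summing the leading pieces over $k$ produces $S_{1,n}$, and the remainders are absorbed into the $\sup_k \tau_k\|w_{1,k+1}\|^2$ family of terms already collected.

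Finally, multiplying each integrand piece by the decaying kernel from the bound on $\|\Phi_1\|$, integrating, and absorbing all resulting constants ($M_1$, $L_1$, $L_r$, $\kappa_1$, the integral $\int e^{-\kappa_1 u}\,du$, and the constant $K$ inherited from Lemma~\ref{lem:defK}) into a single $\bar{K}>0$ produces the inequality in the lemma statement. The main obstacle is bookkeeping rather than a new technical idea: one must ensure that the follower-side bound from Lemma~\ref{lem:defK}, which is expressed in terms of the fast pseudo-time $\tilde{t}_k$ and the ratio $\gamma_{1,k}/\gamma_{2,k}$, is compatible with the slow pseudo-time $\hat{t}_k$ used here, which is where the simultaneous appearance of both $\gamma_{2,k}$ and $\tau_k$ in the final bound comes from. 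A secondary but important point is that the conditioning on $\tilde{\mc{E}}_n$ must be propagated through the integrand uniformly in $s\in[\hat{t}_{n_0},\hat{t}_n]$ in order to legitimately apply the exponential bound on $\|\Phi_1\|$ derived from local asymptotic stability of $\dot{x}_1=-Df_1(x_1,\conj(x_1))$ at $x_1^\ast$.
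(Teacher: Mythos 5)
Your overall route coincides with the paper's (which itself defers the bookkeeping to the cited two-timescale concentration analyses): Alekseev's formula for $\bar{x}_1$, the exponential bound on $\Phi_1$ on the event where the interpolate stays in $V^q$, the split of $\zeta_1$ into the discretization, tracking, and noise pieces $\zeta_{11},\zeta_{12},\zeta_{13}$, and Lemma~\ref{lem:defK} to control the tracking piece, with the martingale sums producing $S_{1,n}$. However, one step as you describe it does not deliver the stated inequality: the provenance of the $\sup_{n_0\leq k\leq n-1}\tau_k H_{n_0}$ term. If you bound the factor $e^{-\kappa_2(\tilde{t}_k-\tilde{t}_{n_0})}H_{n_0}$ coming from Lemma~\ref{lem:defK} ``uniformly in $k$,'' the resulting contribution to the leader's bound is of order $H_{n_0}$ times $\sum_{k}\gamma_{1,k}e^{-\kappa_1(\hat{t}_n-\hat{t}_{k+1})}$, i.e.\ of order $H_{n_0}$, not of order $\sup_k\tau_k\,H_{n_0}$. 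This is strictly weaker than the lemma and breaks the downstream argument: in Theorem~\ref{thm:lockin}, $H_{n_0}$ is an order-$q_0$ constant fixed by the initialization in $B_{q_0}(x^\ast)$ and cannot be made smaller than $\vep/(8\bar{K})$, whereas $\sup_k\tau_k H_{n_0}$ can, since $\tau_k\rar 0$. The correct bookkeeping keeps the decay factor inside the sum: the $\zeta_{12}$ contribution is at most a constant times $\sum_{k=n_0}^{n-1}\gamma_{1,k}e^{-\kappa_1(\hat{t}_n-\hat{t}_{k+1})}\|x_{2,k}-z_k\|$, and for the $H_{n_0}$ piece of Lemma~\ref{lem:defK} one writes $\gamma_{1,k}=\tau_k\gamma_{2,k}$, pulls $\sup_k\tau_k$ out, and uses that $\sum_k\gamma_{2,k}e^{-\kappa_2(\tilde{t}_k-\tilde{t}_{n_0})}$ is a Riemann sum of $\int e^{-\kappa_2 u}\,du$ and hence bounded by a constant depending only on $\kappa_2$. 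With that repair your argument matches the paper's.

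Two smaller points. To invoke Lemma~\ref{lem:defK} for every $k\leq n-1$ you must also be on the follower's event $\mc{E}_n$ (the paper's own conditioning notation, $\tilde{\mc{E}}_n$ versus $\hat{\mc{E}}_n$, is loose here, but the joint conditioning should be made explicit). Also, the exponential factor on the boundary term should be the decaying one, $e^{-\kappa_1(\hat{t}_n-\hat{t}_{n_0})}$, consistent with the requirement stated just before Theorem~\ref{thm:lockin}; the positive exponents appearing in the lemma display and in the bound on $\|\Phi_1\|$ in the text are typos, since $J_1$ at the stable differential Stackelberg equilibrium makes $\Phi_1$ exponentially contracting, and your claim that the boundary term ``yields the last term directly'' is correct only for the corrected, decaying version.
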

Using this lemma, we can get the desired guarantees on $x_{1,k}$. Indeed, as
above,
 fix $\vep\in(0,1]$ and let $N$ be such that $\gamma_{2,n}\leq
\vep/(8K)$, $\tau_n\leq \vep/(8K)$, $\forall\ n\geq N$. Then, for any $n_0\geq N$ and
$K$ as in Lemma~\ref{lem:defK}, let $T$ be such that
$e^{-\kappa_2(\tilde{t}_n-\tilde{t}_{n_0})}H_{n_0}\leq \vep/(8K)$, $\forall \ n\geq
n_0+T$. Moreover, with $\bar{K}$ as in Lemma~\ref{lem:defbarK},  let
$e^{-\kappa_1(\hat{t}_n-\hat{t}_{n_0})}(\|\bar{x}_1(\hat{t}_{n_0})-{x}_1(\hat{t}_{n_0})\|\leq
\vep/(8\bar{K})$, $\forall n\geq n_0+T$.
\begin{theorem}
    Suppose that Assumptions~\ref{ass:all}--\ref{ass:gastwo}
    hold and that $\gamma_{1,k}=o(\gamma_{2,k})$.
 Given a stable
    differential Stackelberg equilibrium $x^\ast$ and $\vep\in[0,1)$, $x_k$ will
        get `locked in' to a $\vep$-neighborhood of $x^\ast$
        with high probability conditioned reaching $B_{q_0}(x^\ast)$ by
        iteration $n_0$.  That is, letting $\bar{n}= n_0+T+1$, 
        for some constants $\tilde{C}_j>0$, $j\in \{1,\ldots,6\}$, 
    \begin{align}
        \mathrm{P}(\|x_{1,n}-x_{1}(\hat{t}_n)\|\leq \vep, &\forall n\geq\bar{n}
        |x_{n_0},x_{n_0}\in B_{q_0}) \notag\\
        &\textstyle \geq 1+\sum_{n=n_0}^\infty
        \tilde{C}_1e^{-\tilde{C}_2\sqrt{\vep}/\sqrt{\gamma_{2,n}}}\textstyle -\sum_{n=n_0}^\infty
        \tilde{C}_1e^{-\tilde{C}_2\sqrt{\vep}/\sqrt{\tau_n}}\notag\\
        &\textstyle\quad - \sum_{n=n_0}^\infty
        \tilde{C}_3e^{-\tilde{C}_4\vep^2/\beta_n}\textstyle - \sum_{n=n_0}^\infty
        \tilde{C}_5e^{-\tilde{C}_6\vep^2/\eta_n}
        \label{eq:py}
    \end{align}
    with $\eta_n=\max_{n_0\leq k\leq
    n-1}\big(e^{-\kappa_1(\sum_{i=k+1}^{n-1}\gamma_{1,i})}\gamma_{1,k}\big)$.
    \label{thm:lockin}
\end{theorem}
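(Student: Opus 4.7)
The plan is to mirror the argument used for Theorem~\ref{thm:conjecturetrack}, but now carried out at the slow timescale for the leader, while absorbing the follower's tracking error proved in that theorem. The starting point is Lemma~\ref{lem:defbarK}, which on the event $\tilde{\mc{E}}_n$ expresses $\|\bar{x}_1(\hat{t}_n)-x_1(\hat{t}_n)\|$ as a sum of seven contributions; to conclude $\|x_{1,n}-x_1(\hat{t}_n)\|\leq \vep$ for all $n\geq \bar{n}$, each contribution has to be driven below $\vep/(8\bar{K})$ uniformly in $n$.

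First I would split the seven terms into three groups. The deterministic-decay group contains $\sup_{n_0\leq k\leq n-1}\gamma_{2,k}$, $\sup_{n_0\leq k\leq n-1}\tau_k$, $\sup_{n_0\leq k\leq n-1}\tau_k H_{n_0}$, and the initial-condition term involving $\|\bar{x}_1(\hat{t}_{n_0})-x_1(\hat{t}_{n_0})\|$ (whose coefficient must decay in $n$ because the linearization of $-Df_1(\cdot,\conj(\cdot))$ at the stable differential Stackelberg equilibrium $x^\ast$ is Hurwitz, so $\|\Phi_1(\hat{t}_n,\hat{t}_{n_0},\cdot)\|\leq M_1 e^{-\kappa_1(\hat{t}_n-\hat{t}_{n_0})}$). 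All four are pushed below $\vep/(8\bar{K})$ by the choice of $N$, $n_0$, and $T$ specified immediately before the theorem. The martingale-noise group is $\|S_{1,n}\|$, $\sup_k\gamma_{2,k}\|w_{2,k+1}\|^2$, and $\sup_k\tau_k\|w_{1,k+1}\|^2$; these require concentration estimates. The follower-tracking group is the single term $\sup_{n_0\leq k\leq n-1}\|S_{2,k}\|$, which is already controlled inside the proof of Theorem~\ref{thm:conjecturetrack}, so I simply import that bound.

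Next I would apply a Freedman/Burkholder-type tail bound for vector-valued martingale difference sequences to $S_{1,n}$, using the exponential estimate on $\|\Phi_1\|$ together with the sub-Gaussian control on $w_{1,k+1}$ inherited from Assumption~\ref{ass:noise}. The conditional quadratic variation at step $k$ scales like $\gamma_{1,k}^2 e^{-2\kappa_1(\hat{t}_n-\hat{t}_{k+1})}$, and the resulting sum is dominated by $\eta_n=\max_{n_0\leq k\leq n-1}e^{-\kappa_1(\sum_{i=k+1}^{n-1}\gamma_{1,i})}\gamma_{1,k}$, giving the tail $\tilde{C}_5 e^{-\tilde{C}_6\vep^2/\eta_n}$; the analogous $\beta_n$ tail for $\sup_k\|S_{2,k}\|$ is inherited verbatim from Theorem~\ref{thm:conjecturetrack}. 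The squared-noise suprema are handled by writing $\Pr(\gamma_{i,k}\|w_{i,k+1}\|^2>\vep/(8\bar{K}))$ as a tail on $\|w_{i,k+1}\|$ itself; inverting a sub-Gaussian tail $\Pr(\|w\|>u)\leq e^{-cu^2}$ through the change of variables $u=\sqrt{\vep/(8\bar{K}\gamma_{i,k})}$ produces the square-root-in-$\vep$ exponents $e^{-\tilde{C}_2\sqrt{\vep/\gamma_{2,n}}}$ and $e^{-\tilde{C}_2\sqrt{\vep/\tau_n}}$.

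The last step is a union bound over all $n\geq n_0$ for each of the four probabilistic failure modes; the complementary event $\tilde{\mc{E}}_n^c$ of the interpolates escaping $V^q$ is absorbed by a standard continuation argument, showing that if each of the four bounds holds strictly below $\vep<q-q_0$ then escape is impossible, so conditioning on $\tilde{\mc{E}}_n$ becomes redundant once the lock-in succeeds. The main obstacle I expect is coupling the two stochastic approximations through the cross term $Df_1(x_k)-Df_1(x_{1,k},\conj(x_{1,k}))$ that is hidden inside $\zeta_1(s)$ in the derivation of Lemma~\ref{lem:defbarK}: bounding it requires substituting the tracking guarantee $\|x_{2,k}-\conj(x_{1,k})\|\leq \vep$ from Theorem~\ref{thm:conjecturetrack}, and then propagating its failure probability through the union bound without double-counting, which is precisely why~\eqref{eq:py} carries both the $\beta_n$ and $\eta_n$ tails instead of $\eta_n$ alone.
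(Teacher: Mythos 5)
Your proposal follows essentially the same route as the paper: the paper proves Theorem~\ref{thm:lockin} by invoking Lemma~\ref{lem:defbarK} (which already absorbs the follower-tracking terms of Lemma~\ref{lem:defK} through the Lipschitz cross term in $\zeta_1$), pushing the deterministic terms below $\vep/(8\bar{K})$ via the choices of $N$, $n_0$, $T$, and then taking a union bound with the concentration estimates of Thoppe--Borkar (sub-Gaussian tails in $\sqrt{\vep/\gamma_{2,n}}$, $\sqrt{\vep/\tau_n}$ for the squared-noise suprema and Freedman-type tails in $\vep^2/\beta_n$, $\vep^2/\eta_n$ for $S_{2,n}$ and $S_{1,n}$), exactly as you describe. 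Your reading of the $\Phi_1$ bound as exponentially decaying is the intended one (the paper's positive exponent is a typo), so the proposal is correct and matches the paper's argument.
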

An analogous corollary to Corollary~\ref{cor:zerolock}
can be stated for $x_{1,k}$ with $n_0=0$.

\section{Numerical Examples}
\label{sec:examples}
In this section, we present and extensive set of numerical examples to validate our theory and demonstrate that the learning dynamics in this paper can effectively train GANs\footnote{Code is available at \texttt{github.com/fiezt/Stackelberg-Code}.}. 

\subsection{Stackelberg Duopoly}
In Cournot's duopoly model a single good is produced by two firms so that the industry is a duopoly. The cost for firm $i=1, 2$ for producing $q_i$ units of the good is given by $c_iq_i$ where $c_i >0$ is the unit cost. The total output of the firms is $Q= q_1+q_2$. The market price is $P=A-Q$ when $A \geq Q$ and $P=0$ when $A < Q$. We can assume that $A > c_i$ for $i=1,2$.  
The profit of each firm is 
$\pi_i = Pq_i - c_iq_i = (A-q_i -q_{-i} -c_i) q_i$.
Moreover, the unique Nash equilibrium in the game is
$q_i^{\ast} = \frac{1}{3}(A + c_{-i} - 2c_i)$
so that the market price is 
$P^{\ast} = \frac{1}{3}(A + c_i + c_{-i})$
and each firm obtains a profit of
$\pi_i^{\ast} = \frac{1}{9}(A-2c_i + c_{-i})^2$.

\begin{figure}[t!]
    \centering
    \hspace*{0.75in}\subfloat[][]{\includegraphics[width=.3\textwidth]{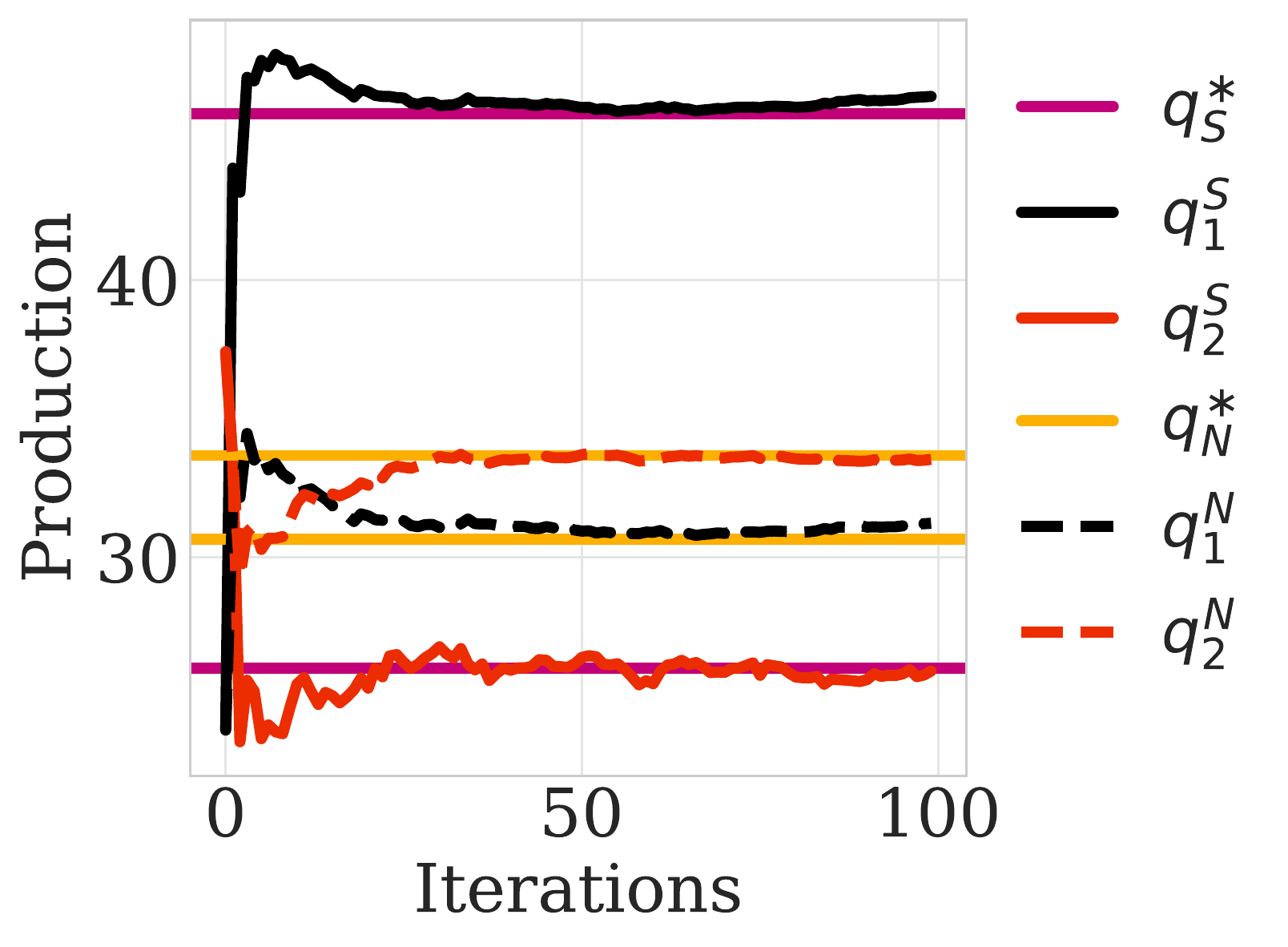}\label{fig:production}}\hfill
    \subfloat[][]{\includegraphics[width=.32\textwidth]{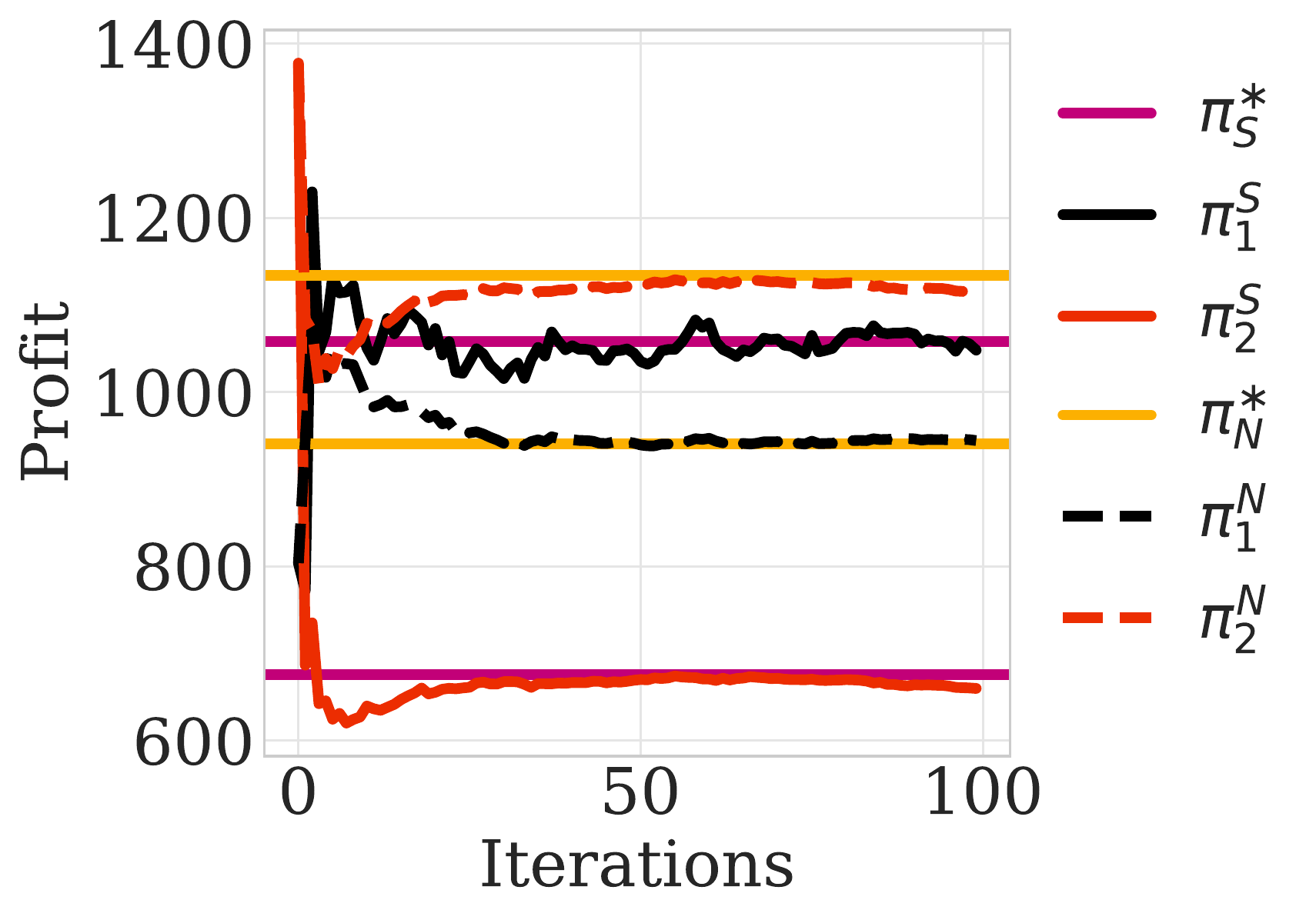}\label{fig:profit}}\hspace*{0.75in}
    \caption{(a) \emph{Firms' Production.} Sample learning paths for each firm showing the production
    evolution and convergence to the Nash equilibrium under the Nash dynamics
(i.e., simultaneous gradient-based learning using players' individual gradients
with respect to their own choice variable) and convergence to the Stackelberg
equilibrium under the Stackelberg dynamics. (b) \emph{Firms' Profit.} Evolution
of each firm's profit under the learning dynamics for both Nash and Stackelberg.
Similar convergence characteristics can be observed in (a) and (b). Of note is the improved profit obtained by the leader in the Stackelberg equilibrium compared to the Nash equilibrium.}
    \label{fig:duopoly}
\end{figure}

In the Stackelberg duopoly model with two firms, there is a leader and a follower. The leader moves and then the follower produces a best response to the action of the leader. Knowing this, the leader seeks to maximize profit taking advantage of the power to move before the follower. 
The unique Stackelberg equilibrium in the game is 
$q_1^{\ast} = \frac{1}{2}(A + c_2 - 2c_1)$, $q_2^{\ast} = \frac{1}{4}(A + 2c_1 -3c_2)$. In equilibrium the market price is 
$P^{\ast} = \frac{1}{4}(A + 2c_1 + c_2)$,
the profit of the leader is 
$\pi_1^{\ast} = \frac{1}{8}(A-2c_1 + c_2)^2$,
and the profit of the follower is 
$\pi_2^{\ast} = \frac{1}{16}(A+2c_1 -3c_2)^2$.

The key point we want to highlight is that in this game, firm 1's (leader) profit is always higher in the hierarchical play game than the simultaneous play game. We also use it as a simple validation example for our theory. For this problem, we simulate the Nash gradient dynamics and our two-timescale algorithm for learning Stackelberg equilibria to illustrate the distinctions between the Cournot and Stackelberg duopoly models. In this simulation, we select a decaying step-size of $\gamma_{i, k} = 1/k$  for each player in the Nash gradient dynamics. The decaying step-size is chosen to be $\gamma_{1,k} = 1/k$ for the leader and $\gamma_{2,k} = 1/k^{2/3}$ for the follower in the Stackelberg two-timescale algorithm so that the leader moves on a slower timescale than the follower as required. The noise at each update step is drawn as $w_{i, k}\sim \mathcal{N}(0, 10)$ for each firm. The parameters of the example are selected to be $A = 100, c_1=5, c_2 = 2$. In Figure~\ref{fig:duopoly} we show the results of the simulation. Figure~\ref{fig:production} shows the production path of each firm and Figure~\ref{fig:profit} shows the profit path of each firm. Under the Nash gradient dynamics, the firms converge to the unique Nash equilibrium of $q_N^{\ast} = (30.67, 33.67)$ that gives profit of $\pi_N^{\ast} = (944.4, 1114.7)$. The Stacklberg procedure converges to the unique Stackelberg equilibrium of $q_S^{\ast} = (46, 26)$ that gives profit of $\pi_S^{\ast} = (1048.2, 659.9)$. Hence as expected the two-timescale procedure converges to the Stackelberg equilibrium and gives the leader higher profit than under the Nash equilibrium.

\subsection{Location Game on Torus}
In this section, we examine a two-player game in which each player is selecting a position on a torus. Precisely, each player has a choice variable $\theta_i$ that can be chosen in the interval $[-\pi, \pi]$. The cost for each player is defined as $f_i(\theta_{i}, \theta_{-i}) = -\alpha_i \cos(\theta_i - \phi_i) + \cos(\theta_i - \theta_{-i})$, where each $\phi_i$ and $\alpha_i$ are constants. The cost function is such that each player must trade-off being close to $\phi_i$ and far from $\theta_{-i}$. 
For the simulation of this game, we select the parameters $\alpha = (1.0, 1.3)$ and $\phi = (\pi/8, \pi/8)$. There are multiple Nash and Stackelberg equilibria under these parameters. Each equilibrium is a stable equilibrium in this example. The Nash equilbria are $\theta_N^{\ast} = (-0.78, 1.18)$ and $\theta_N^{\ast} = (1.57, -0.4)$, and the costs are each $f(\theta_{N}^{\ast}) = (-0.77, -1.3)$ and $f(\theta_{N}^{\ast}) = (-0.77, -1.3)$. The Stackelberg equilbria are $\theta_S^{\ast} = (-0.53, 1.25)$ and $\theta_S^{\ast} = (1.31, -0.46)$, and the costs are each $f(\theta_{S}^{\ast}) = (-0.81, -1.05)$. Hence, the ability to play before the follower gives the leader a smaller cost at any equilibrium. The equilibrium the dynamics will converge to depends on the initialization as we demonstrate. For this simulation, we select a decaying step-size of $\gamma_{i, k} = 1/k^{1/2}$  for each player in the Nash gradient dynamics. The decaying step-size is chosen to be $\gamma_{1,k} = 1/k$ for the leader and $\gamma_{2,k} = 1/k^{1/2}$ for the follower in the Stackelberg two-timescale dynamics. The noise at each update step is drawn as $w_{i, k}\sim \mathcal{N}(0, 0.01)$ for each player. In Figure~\ref{fig:torus} we show the results of our simulation. The Nash and Stackelberg dynamics converge to an equilibrium as expected. In Figures~\ref{fig:path_nash} and~\ref{fig:path_stack}, we visualize multiple sample learning paths for the Nash and Stackelberg dynamics, respectively. The black lines depict $D_1f_1$ for Nash and $Df_1$ for Stackelberg and demonstrate how the order of play warps the first-order conditions for the leader and consequently produces equilibria which move away from the Nash equilibria. In Figure~\ref{fig:choice} we give a detailed look at the convergence to an equilibrium for a sample path. Finally, in Figure~\ref{fig:cost}, we present the evolution of the cost while learning and demonstrate the benefit of being the leader and the disadvantage of being the follower.

\begin{figure*}[t!]
    \centering
    \subfloat[][]{\includegraphics[width=.251\textwidth]{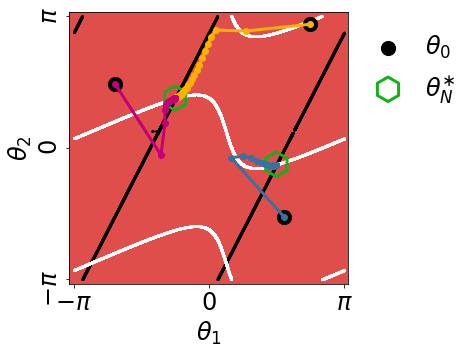}\label{fig:path_nash}}
    \subfloat[][]{\includegraphics[width=.251\textwidth]{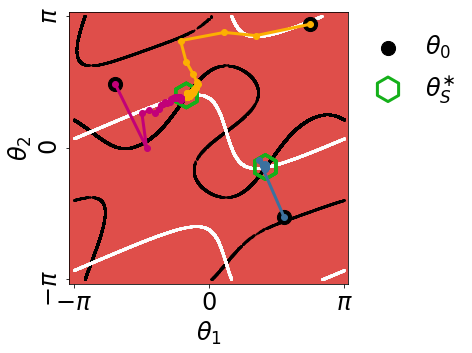}\label{fig:path_stack}}
    \subfloat[][]{\includegraphics[width=.245\textwidth]{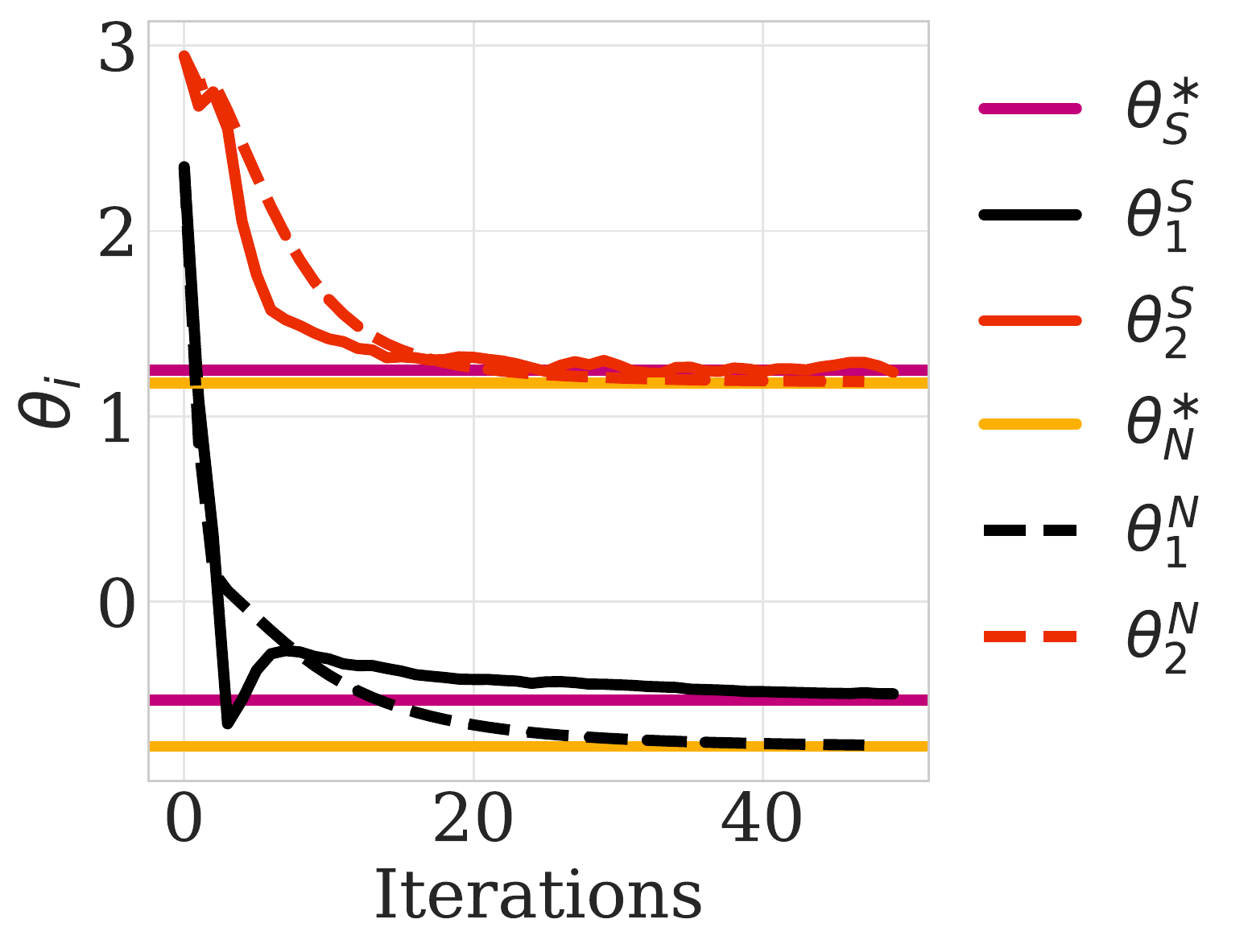}\label{fig:choice}}
    \subfloat[][]{\includegraphics[width=.255\textwidth]{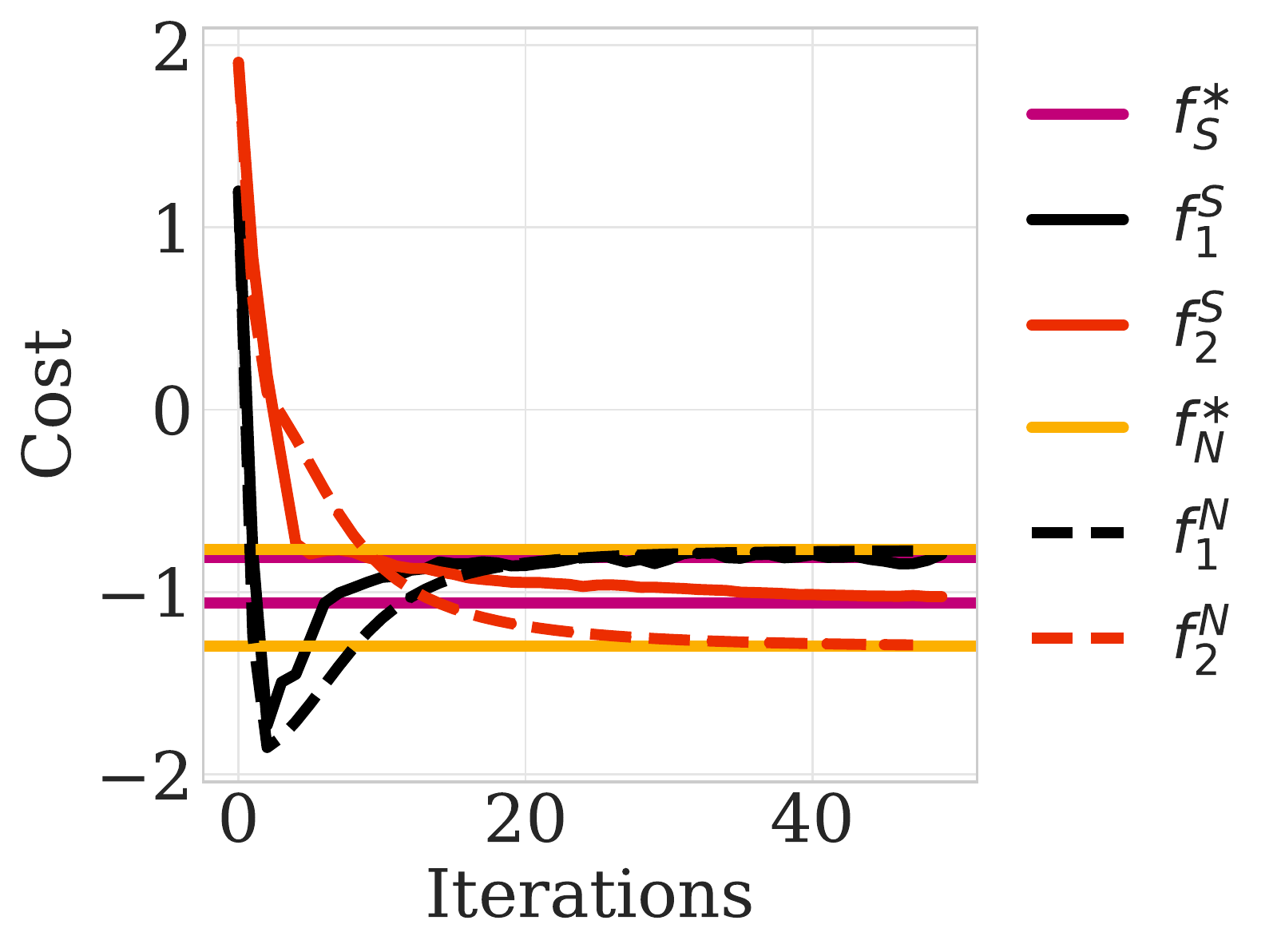}\label{fig:cost}}
    \caption{(a-b) Sample learning paths for each player showing the positions
    and convergence to local Nash equilibria under the Nash dynamics and
convergence to local Stackelberg equilibria under the Stackelberg dynamics.
The value of player 1's choice variable $\theta_1$ is shown on the horizontal
axis and the value of player 2's choice variable $\theta_2$ is shown on the
vertical axis. Note that the square depicts the unfolded torus where horizontal
edges are equivalent, vertical edges are equivalent, and the corners are all
equivalent. The black lines show $D_1f_1$ in (a) and $Df_1$ in (b) where the white lines show $D_2f_2$ in both (a) and (b). (c-d) Position and cost paths for each player for a sampled initial condition under the Nash and Stackelberg dynamics.}
    \label{fig:torus}
\end{figure*}

\subsection{Generative Adversarial Networks}
We now present a set of illustrative experiments showing the role of Stackelberg equilibria in the optimization landscape of GANs and the empirical benefits of training GANs using the Stackelberg learning dynamics compared to the simultaneous gradient descent dynamics. We find that the leader update empirically cancels out rotational dynamics and prevents cycling behavior. Moreover, we discover that the simultaneous gradient dynamics can empirically converge to non-Nash stable attractors that are Stackelberg equilibria in GANs. The generator and the discriminator exhibit desirable performance at such points, indicating that Stackelberg equilibria can be as desirable as Nash equilibria. We also find that the Stackelberg learning dynamics often converge to non-Nash stable attractors and reach a satisfying solution quickly using learning rates that can cause the simultaneous gradient descent dynamics to cycle. We provide details on our implementation of the Stackelberg leader update and the techniques to compute relevant eigenvalues of games in Appendix~\ref{appendix_secs:leaderupdate}. More details for specific hyperparameters can be found in Appendix~\ref{app:experiment_details}.

\textbf{Example 1: Learning a Covariance Matrix.}
We consider a data generating process of $x \sim \mathcal{N}(0, \Sigma)$, where the covariance $\Sigma$ is unknown and the objective is to learn it using a Wasserstein GAN. The discriminator is configured to be the set of quadratic functions defined as $D_W(x) = x^{\top}Wx$ and the generator is a linear function of random input noise $z \sim \mathcal{N}(0, I)$ defined by $G_V(z) = Vz$. The matrices $W \in \mathbb{R}^{m\times m}$ and $V \in \mathbb{R}^{m\times m}$ are the parameters of the discriminator and the generator, respectively. The Wasserstein GAN cost for the problem is  
$f(V, W) = \sum_{i=1}^m\sum_{j=1}^m W_{ij}(\Sigma_{ij}-\sum_{k=1}^m V_{ik}V_{jk})$.
We consider the generator to be the leader minimizing $f(V, W)$. The discriminator is the follower and it minimizes a regularized cost function defined by $-f(V, W) + \tfrac{\eta}{2}\Tr(W^{\top}W)$, where $\eta \geq 0$ is a tunable regularization parameter. 
The game is formally defined by the costs $(f_1, f_2)=(f(V, W), -f(V, W) + \tfrac{\eta}{2}\Tr(W^{\top}W))$, where player 1 is the leader and player 2 is the follower. In equilibrium, the generator picks $V^{\ast}$ such that $V^{\ast}(V^{\ast})^{\top}=\Sigma$ and the discriminator selects $W^{\ast}=0$. 

\begin{figure*}[t!]
  \subfloat[][$m=3$]{\includegraphics[height=.11\textwidth]{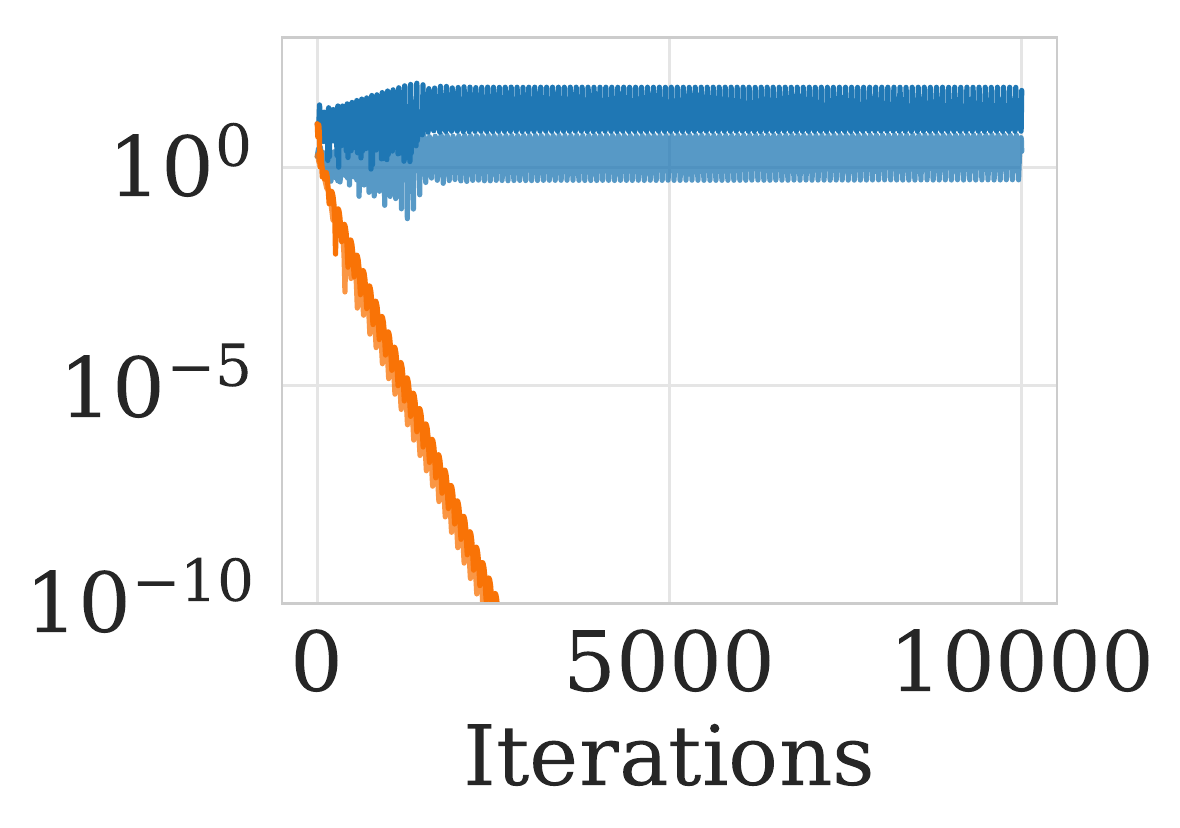}\label{fig:cov1}} \hfill
  \subfloat[][$m=9$]{\includegraphics[height=.11\textwidth]{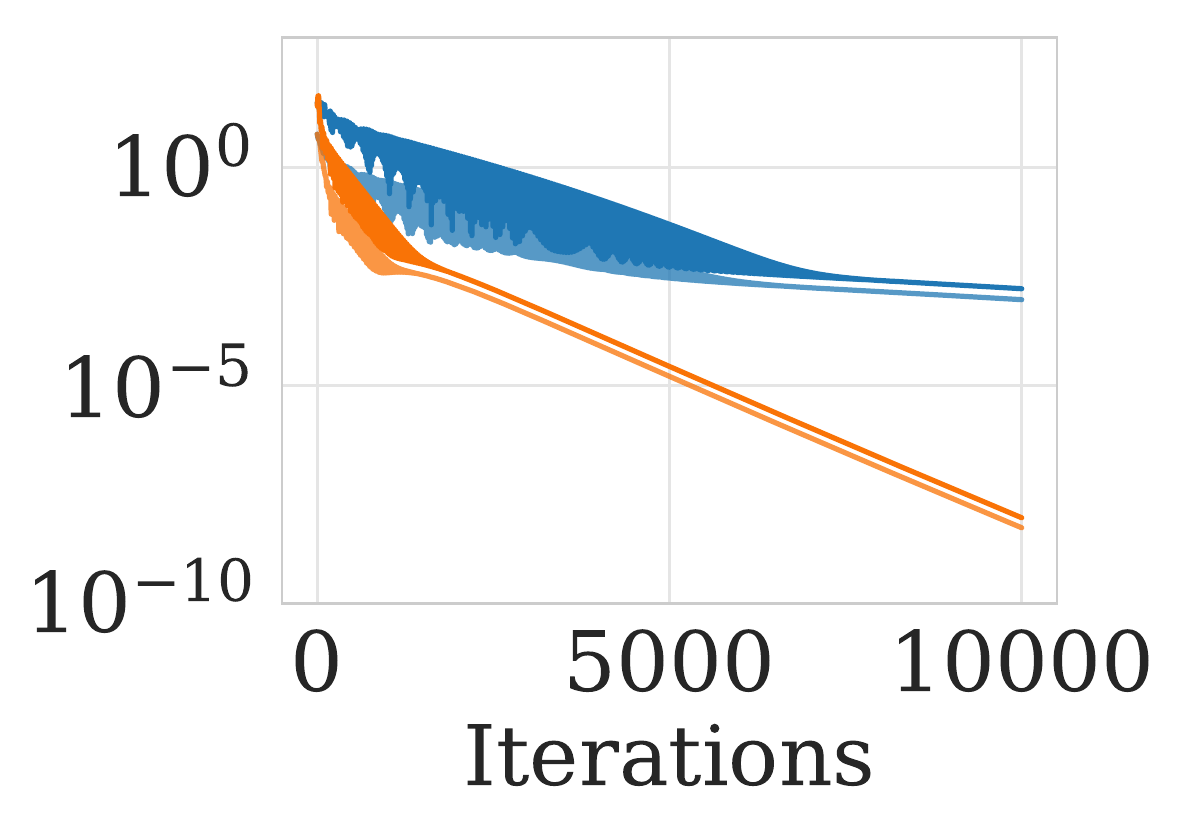}\label{fig:cov2}} \hfill
  \subfloat[][$m=25$]{\includegraphics[height=.11\textwidth]{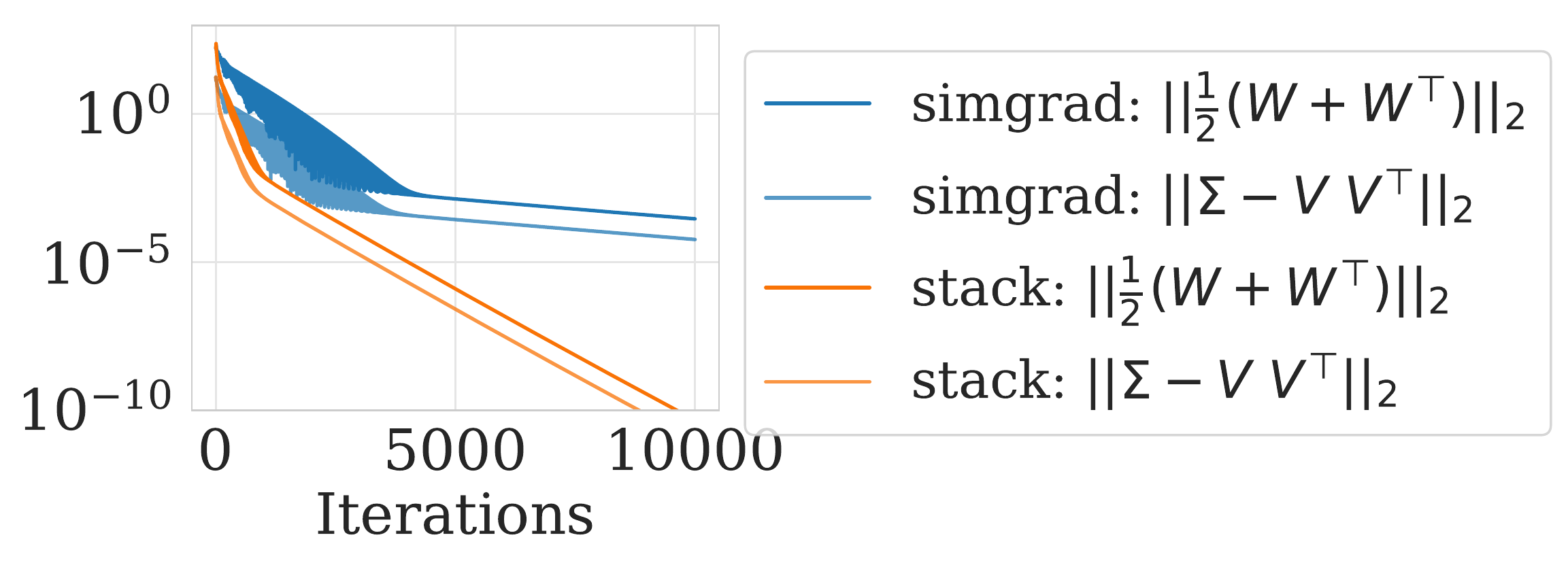}\label{fig:cov3}} \hfill
  \subfloat[][$m=3$]{\includegraphics[height=.11\textwidth]{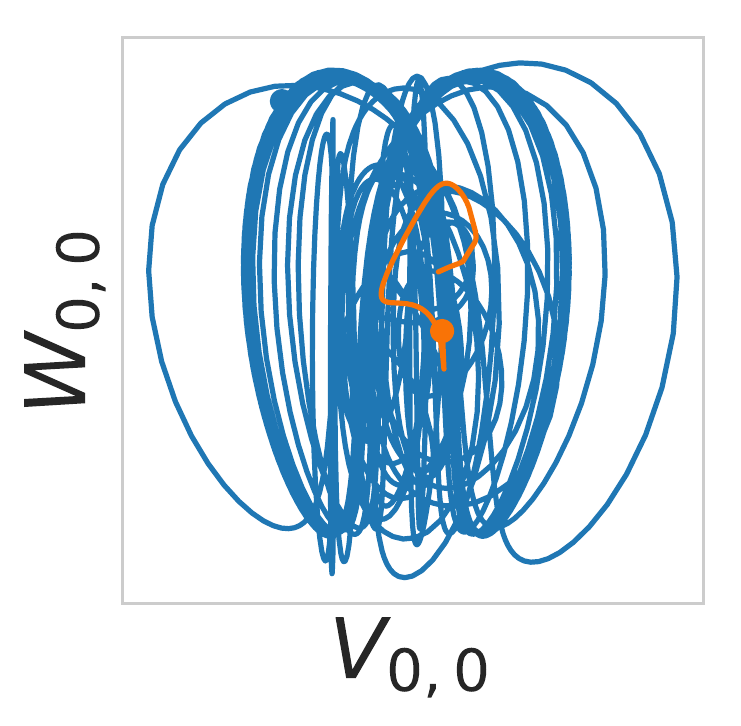}\label{fig:cov1_traj}} \hfill
  \subfloat[][$m=9$]{\includegraphics[height=.11\textwidth]{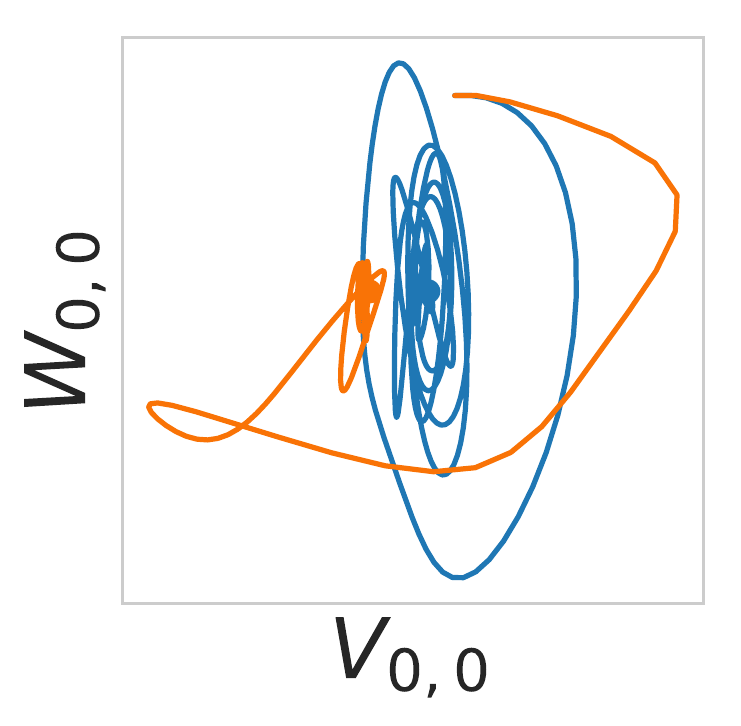}\label{fig:cov2_traj}} \hfill
  \subfloat[][$m=25$]{\includegraphics[height=.11\textwidth]{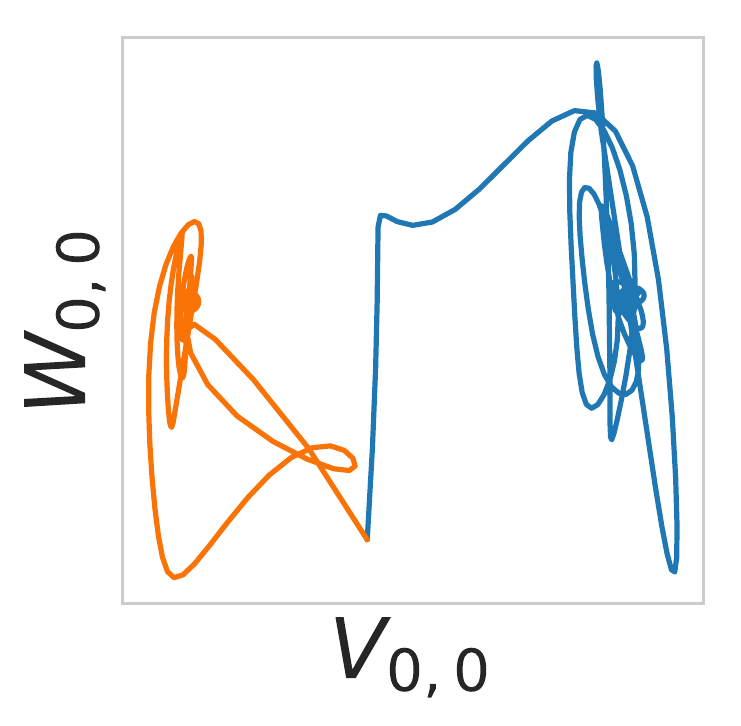}\label{fig:cov3_traj}} \hfill
  \caption{We estimate the covariance matrix $\Sigma$ with the Stackelberg learning dynamics, where the generator is the leader with choice variable $V\in \mathbb{R}^{m\times m}$ and discriminator is the follower with choice variable $W\in \mathbb{R}^{m\times m}$. Stackelberg learning can more effectively estimate the covariance matrix when compared with simultaneous gradient descent. We demonstrate the convergence for dimensions 3, 9, 25 in (a)--(c), with learning rates $\gamma_{1,k}=0.015(1-10^{-5})^k$, $\gamma_{2, k}=0.015(1-10^{-7})^k$ and regularization $\eta =m/5$. The trajectories of the first element of $W$ and $V$ are plotted over time in (d)--(f). Observe the cycling behavior of simultaneous gradient descent.}
  \label{fig:cov}
\end{figure*}

\begin{figure*}[t!]
  \subfloat[][Gen.]{\includegraphics[trim={0 -4ex 0 -5ex},clip,height=.10\textwidth]{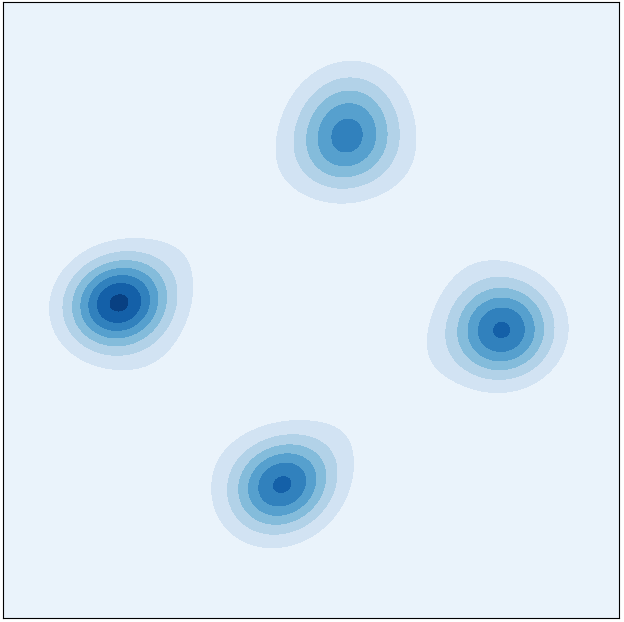}\label{fig:mog_11}} \hfill
  \subfloat[][Dis.]{\includegraphics[height=.10\textwidth]{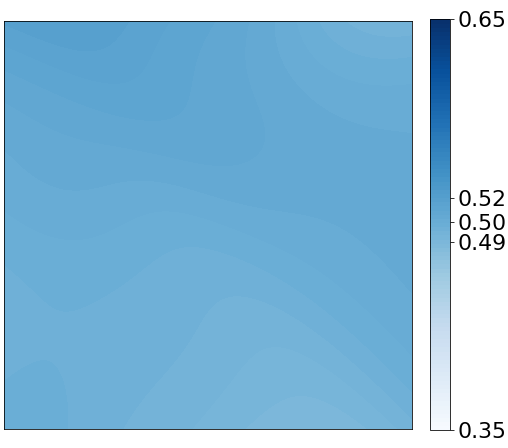}\label{fig:mog_12}}
  \subfloat[][$J$]{\includegraphics[trim={0 0 0 8ex},clip,height=.085\textwidth]{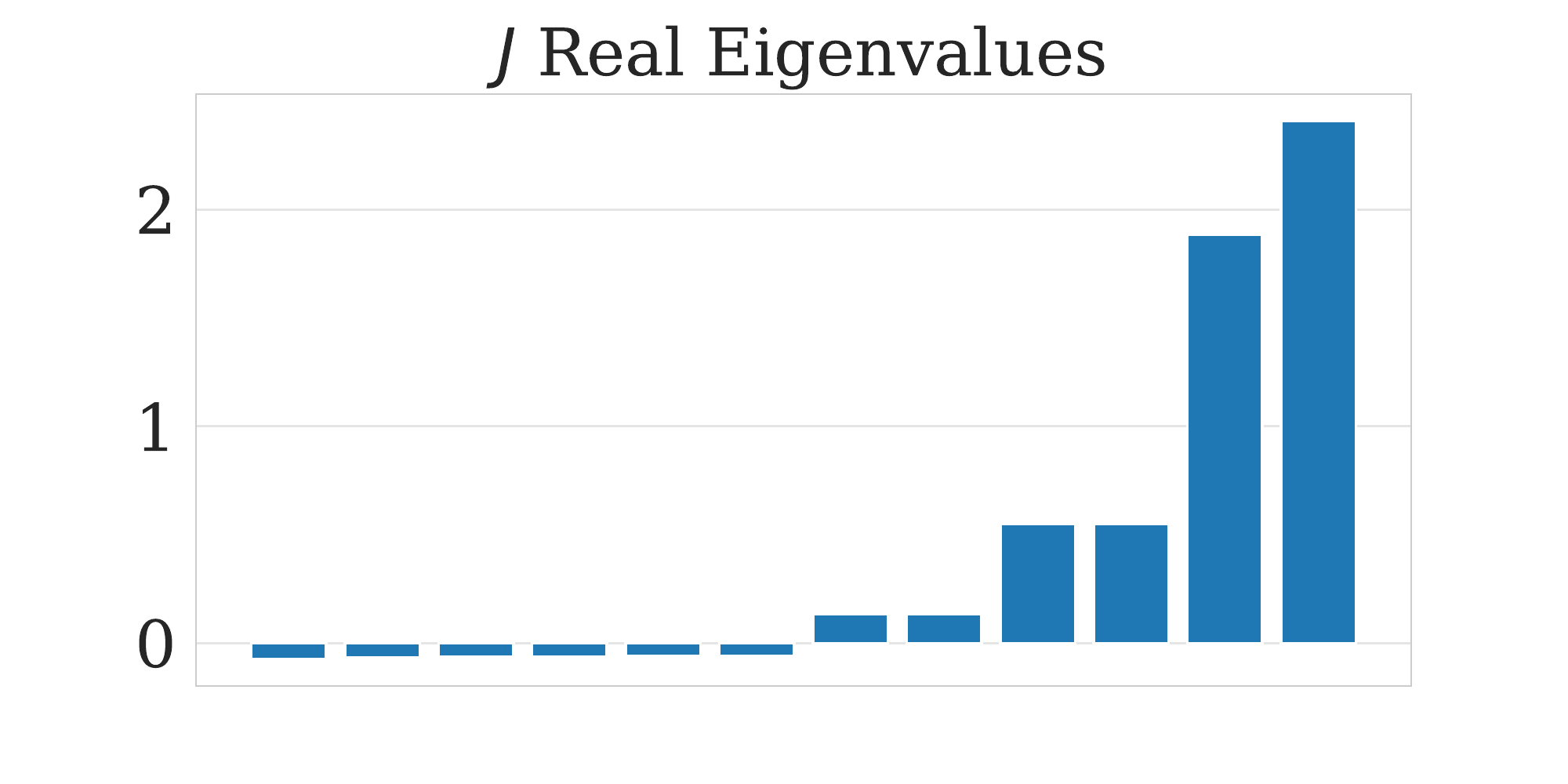}\label{fig:mog1_eig1sg}} 
  \subfloat[][$S_1$]{\includegraphics[trim={0 0 0 8ex},clip,height=.085\textwidth]{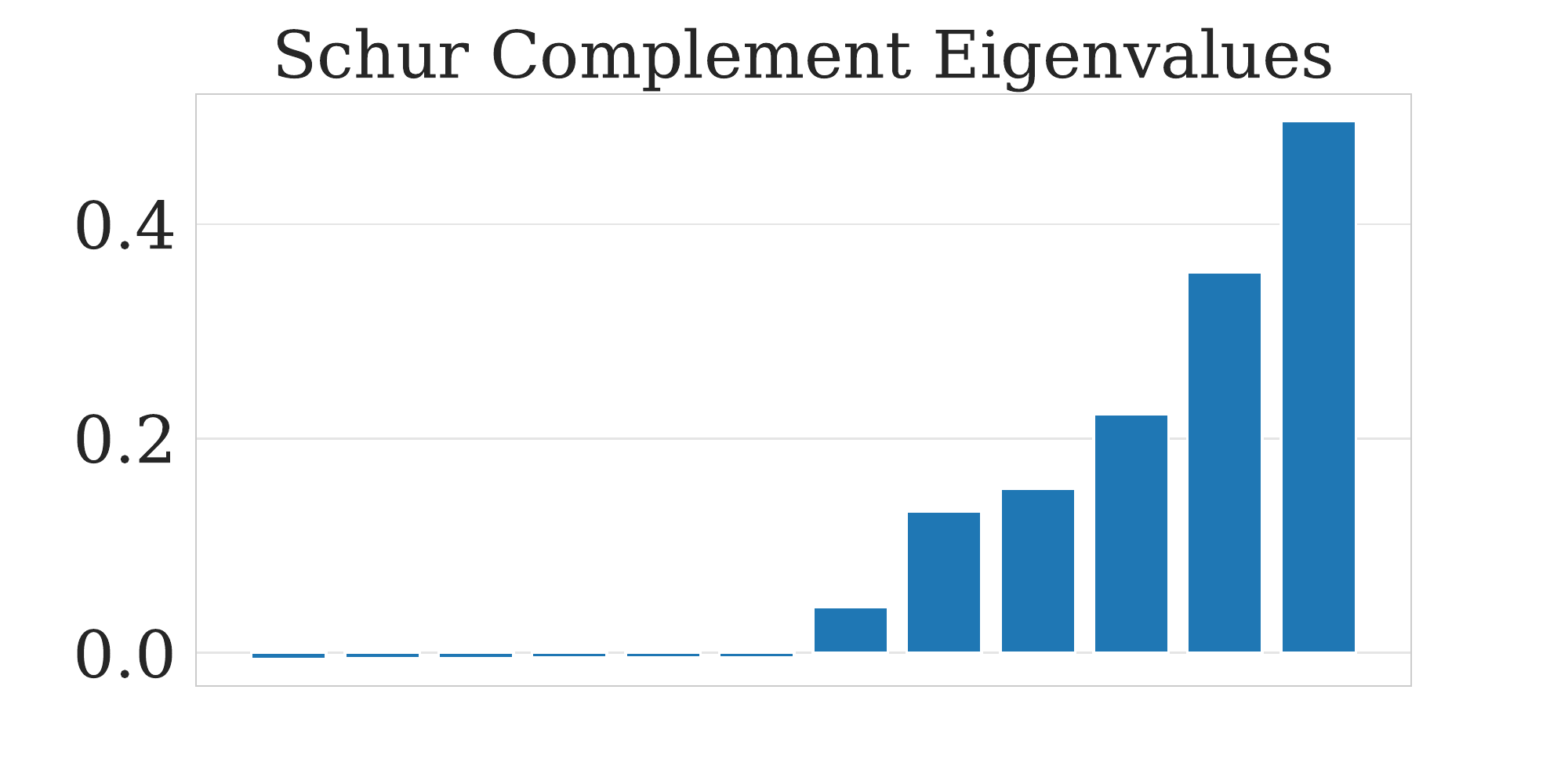}\label{fig:mog1_eig2sg}} 
  \subfloat[][$D_1^2f_1$]{\includegraphics[trim={0 0 0 8ex},clip,height=.085\textwidth]{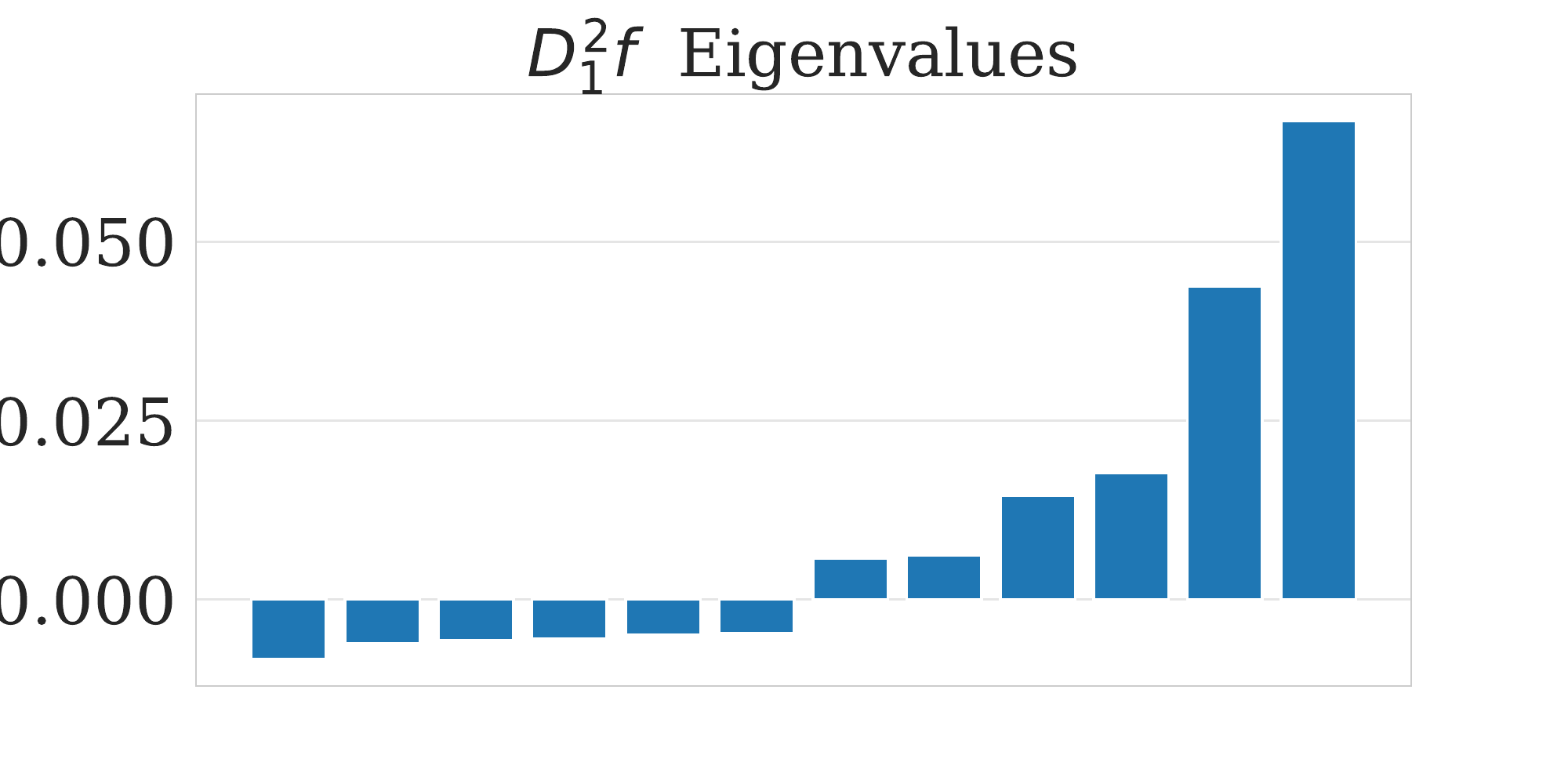}\label{fig:mog1_eig3sg}}
  \subfloat[][$D_2^2f_2$]{\includegraphics[trim={0 0 0 8ex},clip,height=.085\textwidth]{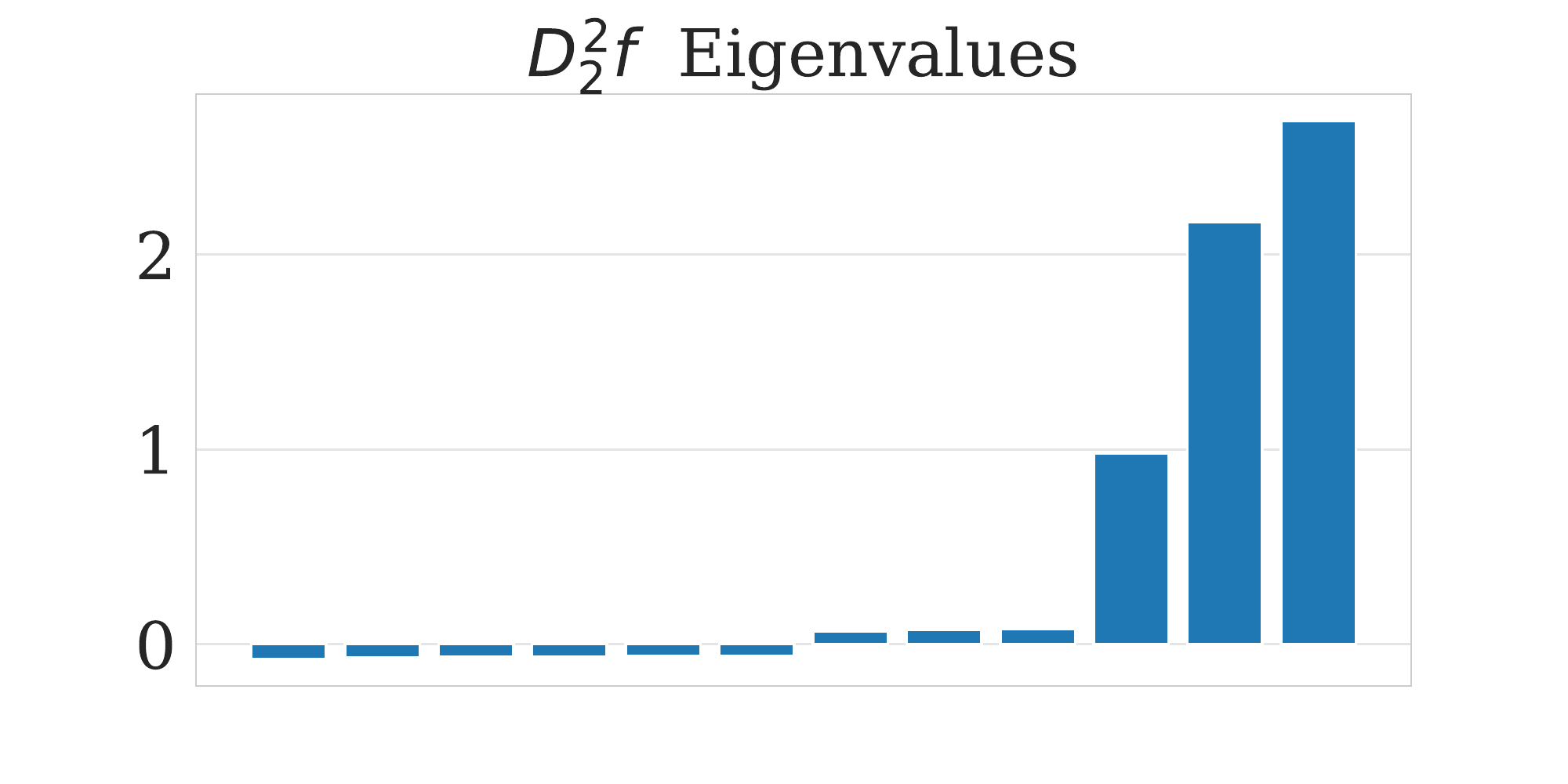}\label{fig:mog1_eig4sg}}
  \vspace{-2ex}
  
  \subfloat[][Gen.]{\includegraphics[trim={0 -2.5ex 0 -3ex},clip,height=.10\textwidth]{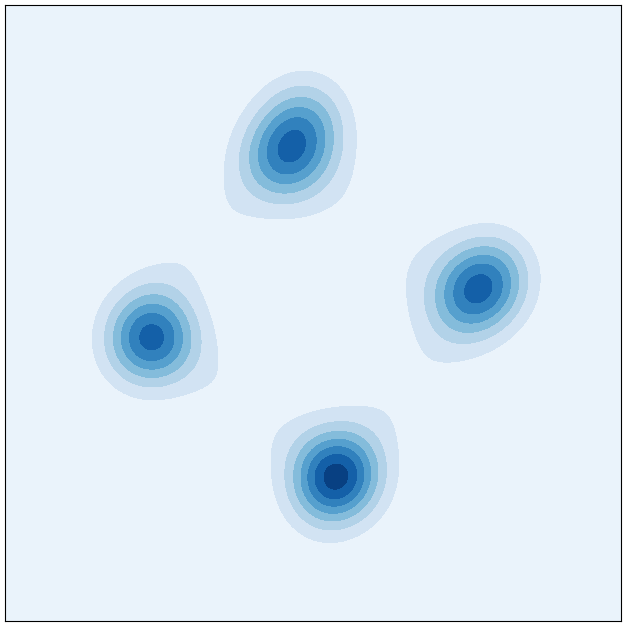}\label{fig:mog_13}} \hfill
  \subfloat[][Dis.]{\includegraphics[height=.10\textwidth]{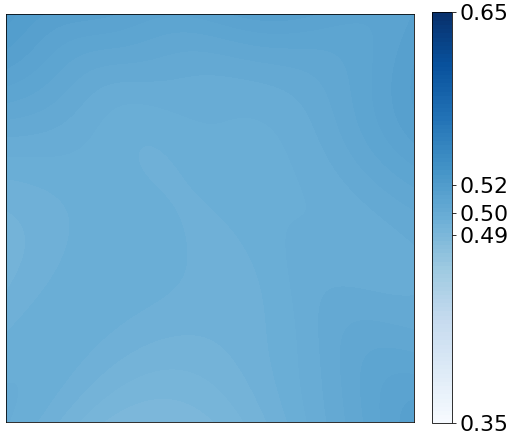}\label{fig:mog_14}} 
  \subfloat[][$J$]{\includegraphics[trim={0 0 0 6.5ex},clip,height=.085\textwidth]{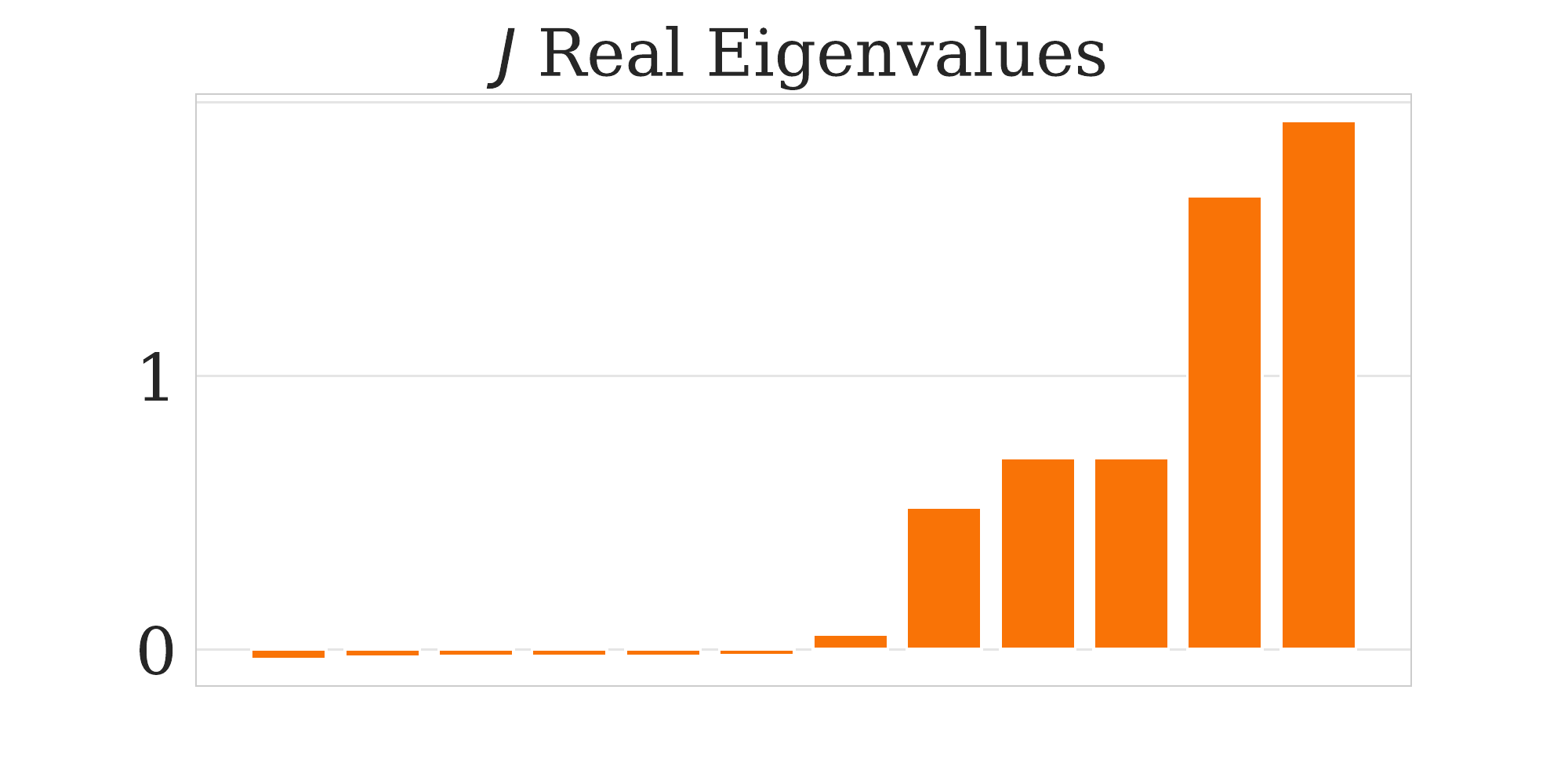}\label{fig:mog1_eig1st}} 
  \subfloat[][$S_1$]{\includegraphics[trim={0 0 0 8ex},clip,height=.085\textwidth]{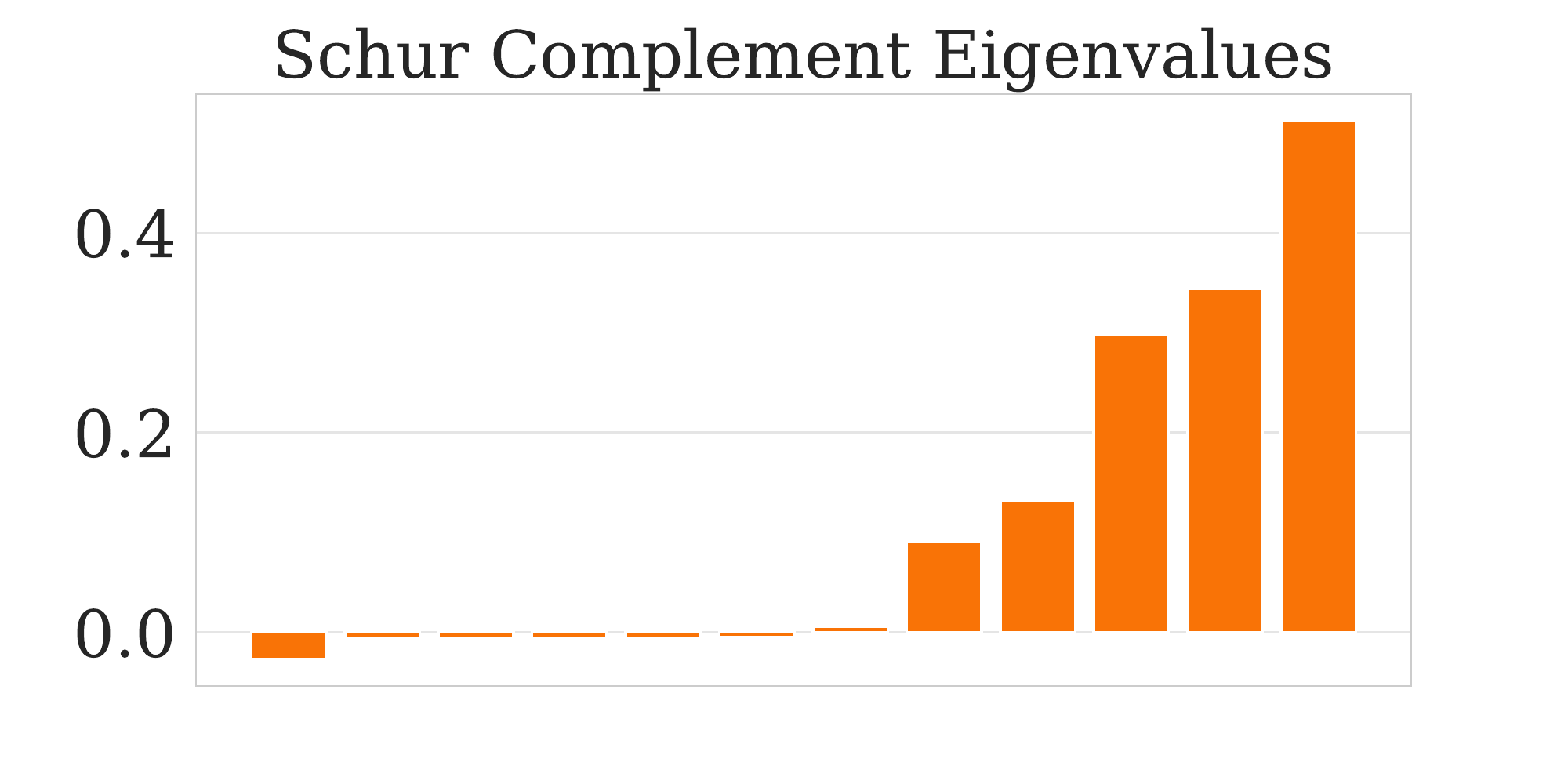}\label{fig:mog1_eig2st}} 
  \subfloat[][$D_1^2f_1$]{\includegraphics[trim={0 0 0 8ex},clip,height=.085\textwidth]{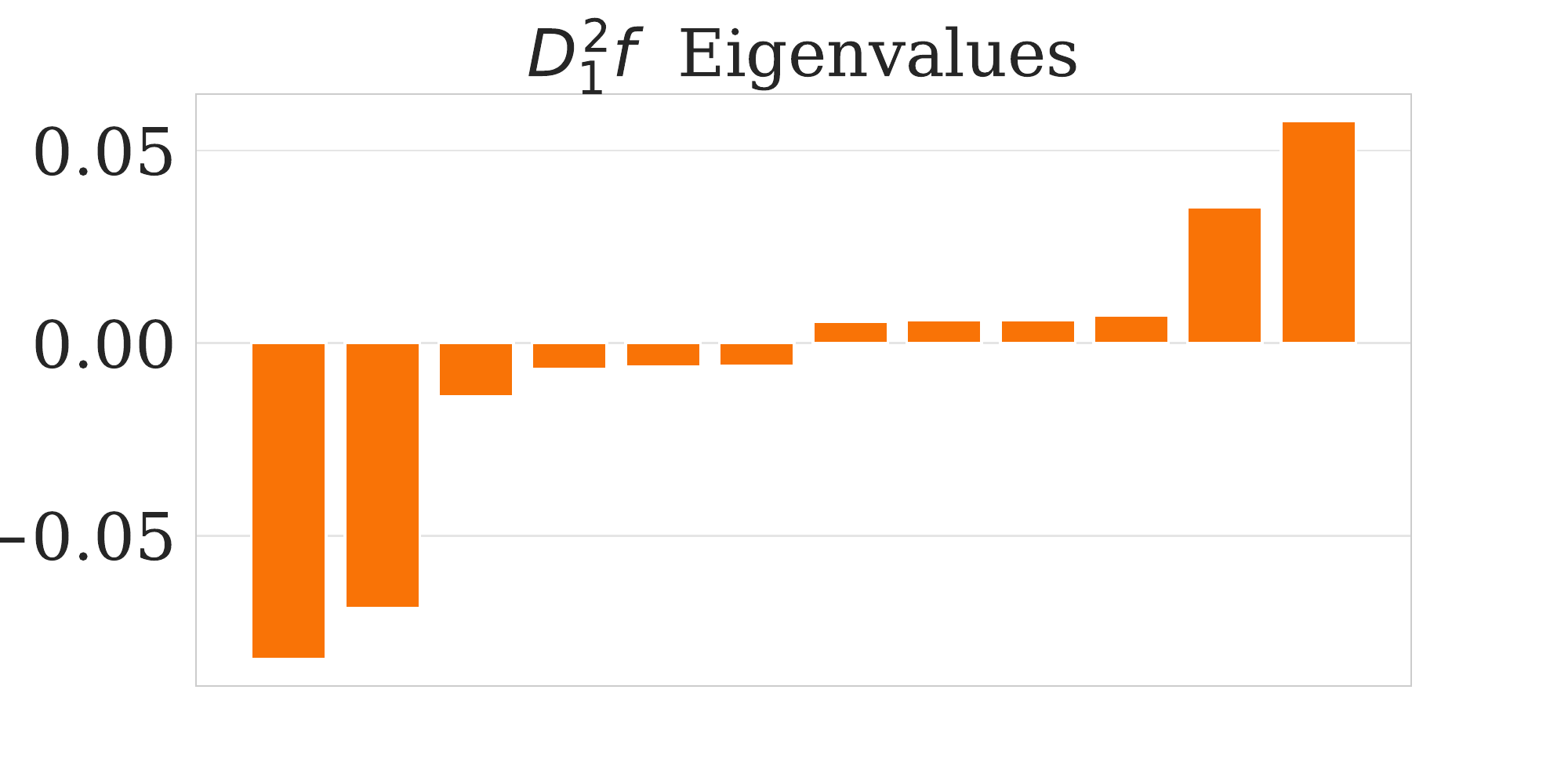}\label{fig:mog1_eig3st}}
  \subfloat[][$D_2^2f_2$]{\includegraphics[trim={0 0 0 8ex},clip,height=.085\textwidth]{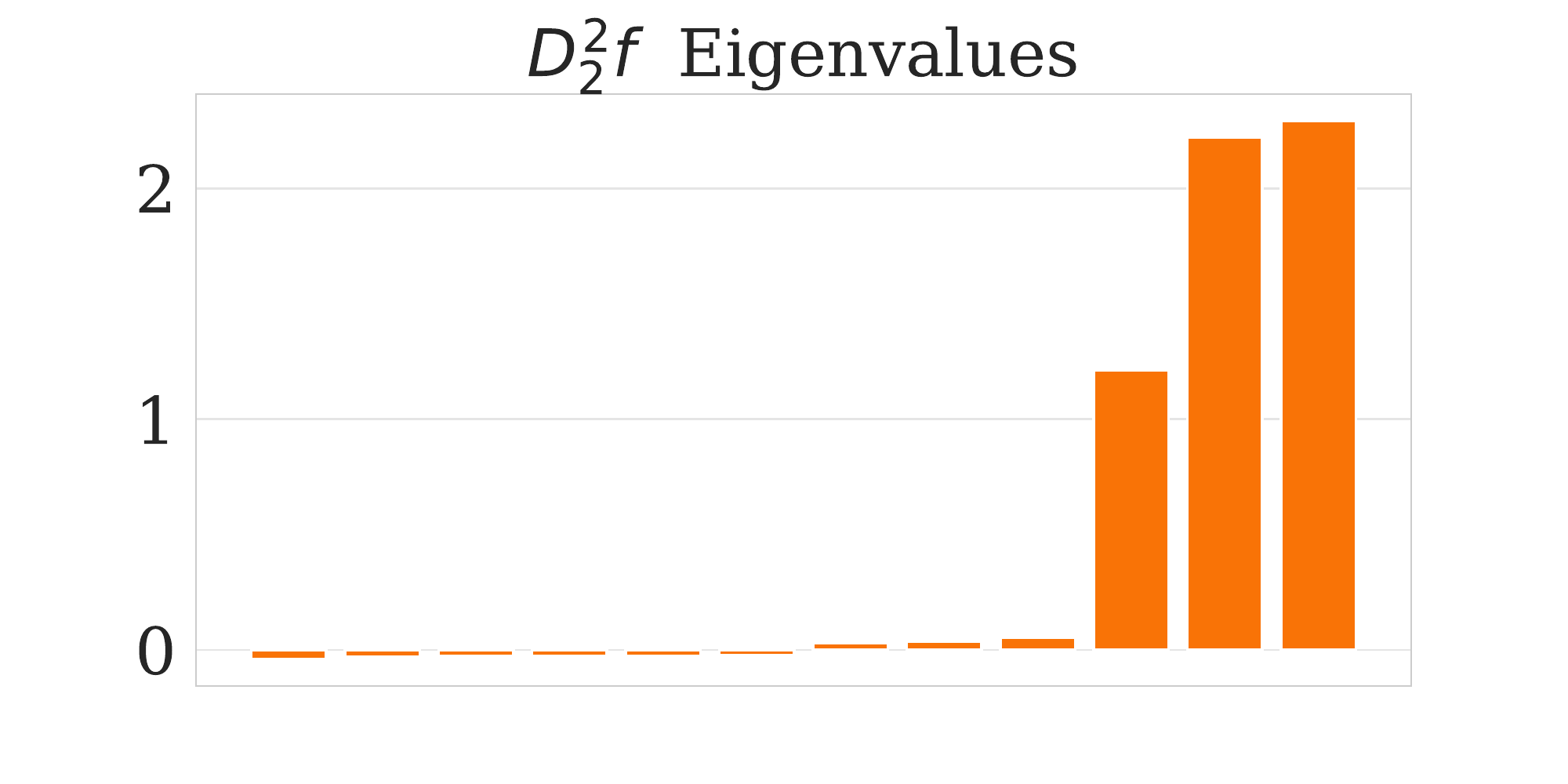}\label{fig:mog1_eig4st}}
  \caption{Convergence to non-Nash Stackelberg equilibria for both simultaneous gradient descent (top row) and Stackelberg learning dynamics (bottom row) in a 2-dimensional mixture of gaussian GAN example. The performance of the generator (player 1) and discriminator (player 2) are plotted in (a)--(b) and (g)--(h).
  To determine the positive definiteness of the game Jacobian, Schur complement and the individual Hessians, we compute the six smallest real eigenvalues and six largest real eigenvalues for each in (c)-(f) and (i)-(l). We observe that for both updates, the leader's Hessian is non-positive while the Schur complement is positive.
  }
  \label{fig:mog_1}
  \end{figure*}

We compare the deterministic gradient update for Stackelberg learning dynamics and simultaneous gradient descent, and analyze the distance from equilibrium as a function of time. We plot $\|\Sigma - VV^\top\|_2$ for the generator's performance and $\|\frac{1}{2}(W+W^\top)\|_2$ for the discriminator's performance in Fig.~\ref{fig:cov} for varying dimensions $m$ with learning rate where $\gamma_{1,k} = o(\gamma_{2,k})$ and fixed regularization terms $\eta = m/5$.
We observe that Stackelberg learning converges to an equilibrium in fewer iterations than simultaneous gradient descent. For zero-sum games, our theory provides reasoning for this behavior since at any critical point the eigenvalues of the game Jacobian are purely real. This is in contrast to the game Jacobian for the simultaneous gradient descent, which can admit imaginary eigenvalue components that are know to cause rotational forces in the dynamics. This example provides empirical evidence that the Stackelberg dynamics cancel out rotations in general-sum games.

\textbf{Example 2: Mixture of Gaussian (Diamond).}
We also 
train 
 a GAN to learn a mixture of Gaussian distributions, where
the generator is the leader and the discriminator is the follower. The generator network has two hidden layers and the discriminator has one hidden layer; each hidden layer has $32$ neurons. We train using a batch size of $256$, a latent dimension of $16$, and the default ADAM optimizer configuration in PyTorch version 1. Since the updates are stochastic, we decay the learning rates to satisfy our timescale separation assumption and regularize the implicit map of the follower using the parameter $\eta= 1$. We derive the regularized leader update in Appendix~\ref{app:regfollower}.

The underlying data distribution for this problem consists of Gaussian distributions with means given by $\mu = [1.5 \sin(\omega), 1.5\cos(\omega)]$ for $\omega \in \{k\pi/2\}_{k=0}^{3}$ and each with covariance $\sigma^2 I$ where $\sigma^2=0.15$. Each sample of real data given to the discriminator is selected uniformly at random from the set of Gaussian distributions. We train each learning rule using learning rates that begin at $0.0001$. Moreover, in this example, the activation following the hidden layers in each network is the tanh function. 

We train this experiment using the saturating GAN objective~\cite{goodfellow2014generative}. In Fig.~\ref{fig:mog_11}--\ref{fig:mog_12} and Fig.~\ref{fig:mog_13}--\ref{fig:mog_14} we show a sample of the generator and the discriminator for simultaneous gradient descent and the Stackelberg dynamics after 40,000 training batches. Each learning rule converges so that the generator can create a distribution that is close to the ground truth and the discriminator is nearly at the optimal probability throughout the input space. In Fig.~\ref{fig:mog1_eig1sg}--\ref{fig:mog1_eig4sg} and Fig.~\ref{fig:mog1_eig1st}--\ref{fig:mog1_eig4st}, we show eigenvalues from the game that allow us to get a deeper view of the convergence behavior. 
We observe that the simultaneous gradient dynamics appear be in a neighborhood of a non-Nash equilibrium since the individual Hessian for the leader is indefinite, the individual Hessian for the follower is positive definite, and the Schur complement is positive definite. Moreover, the eigenvalues of the leader individual Hessian are nearly zero, which would reflect the realizable assumption from Section 2. The Stackelberg learning dynamics converge to a point with similar eigenvalues, which would be a non-Nash Stackelberg equilibrium. This example demonstrates that standard GAN training can converge to non-Nash attractors that are Stackelberg equilibria and the Stackelberg equilibria can produce good generator and discriminator performance. This indicates that it may not be necessary to look only for Nash equilibria and instead it may be easier to find Stackelberg equilibria and the performance could be as desirable.

\begin{figure*}[t!]
  \subfloat[][Real]{\includegraphics[width=.10\textwidth]{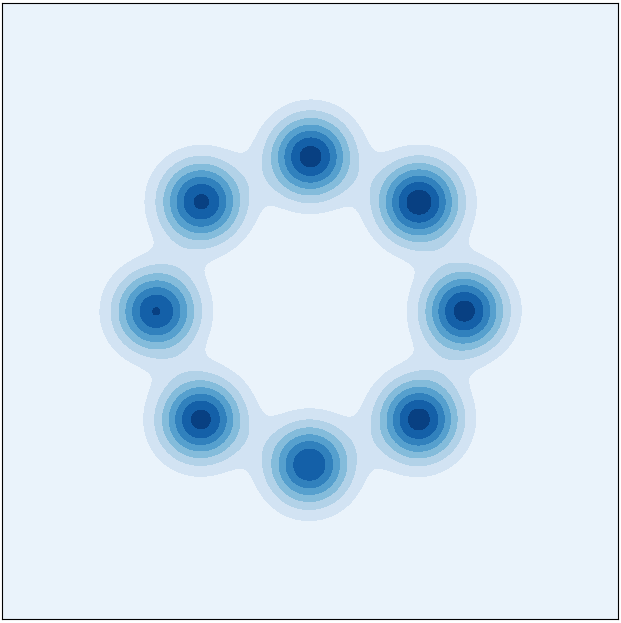}\label{fig:circle}} \hfill
  \subfloat[][8k]{\includegraphics[width=.10\textwidth]{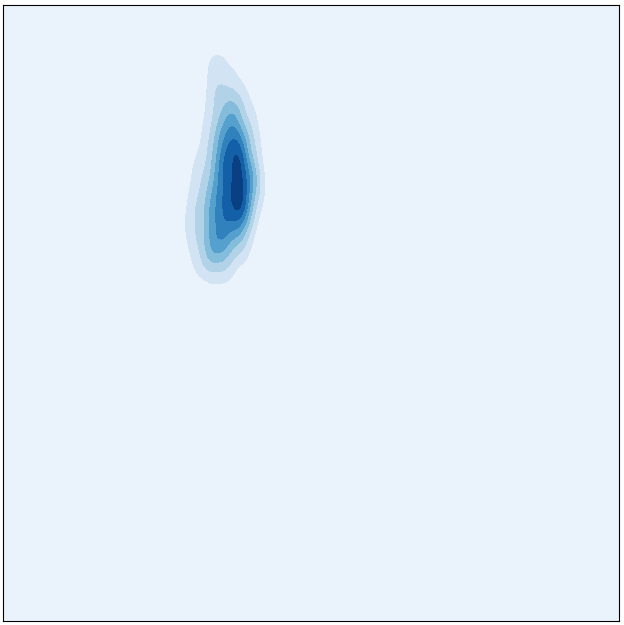}\label{fig:circle}} 
  \subfloat[][20k]{\includegraphics[width=.10\textwidth]{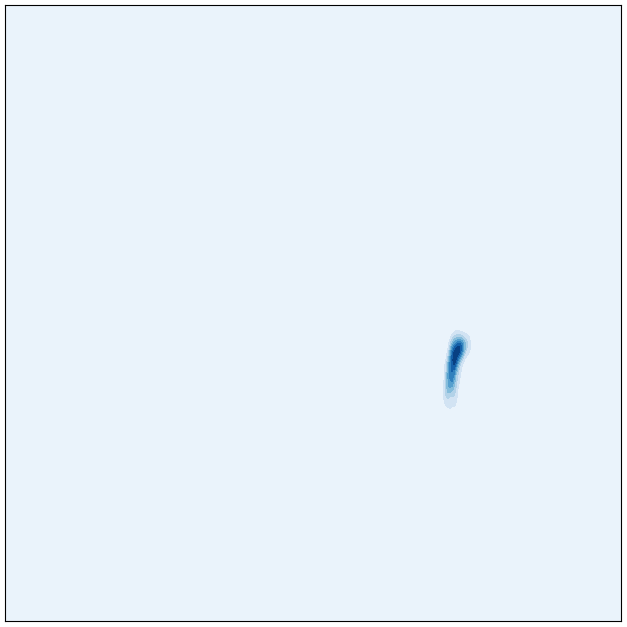}\label{fig:circle}} 
  \subfloat[][40k]{\includegraphics[width=.10\textwidth]{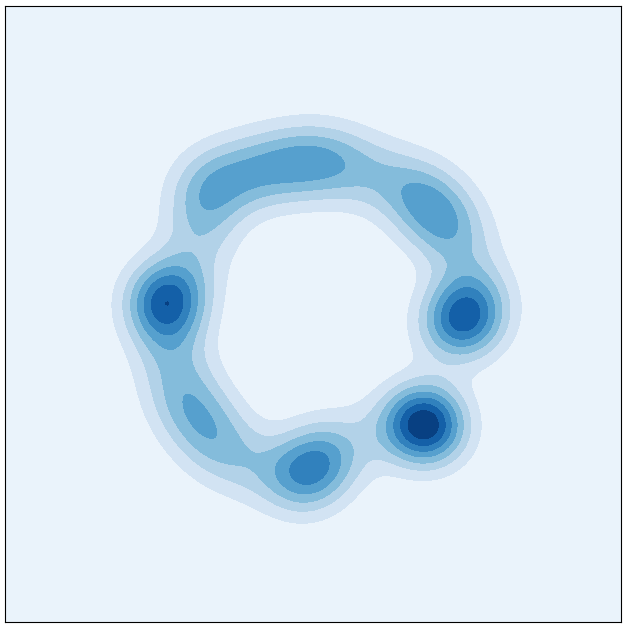}\label{fig:circle}} 
  \subfloat[][60k]{\includegraphics[width=.10\textwidth]{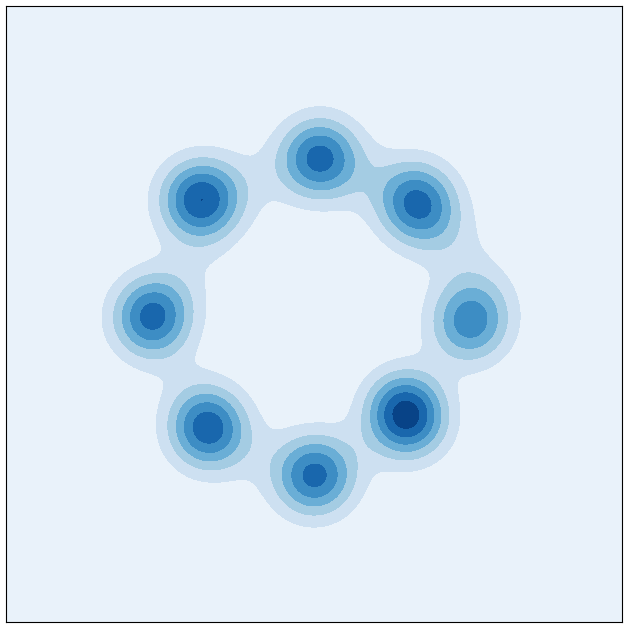}\label{fig:circle}} \hfill
  \subfloat[][8k]{\includegraphics[width=.10\textwidth]{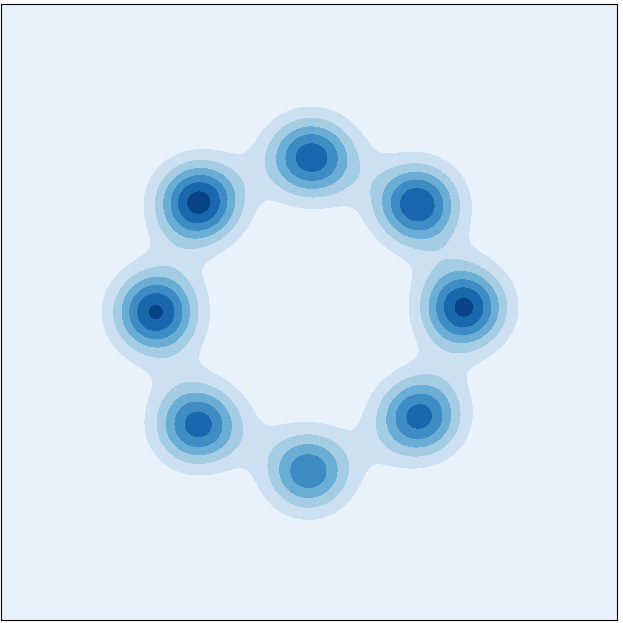}\label{fig:circle}} 
  \subfloat[][20k]{\includegraphics[width=.10\textwidth]{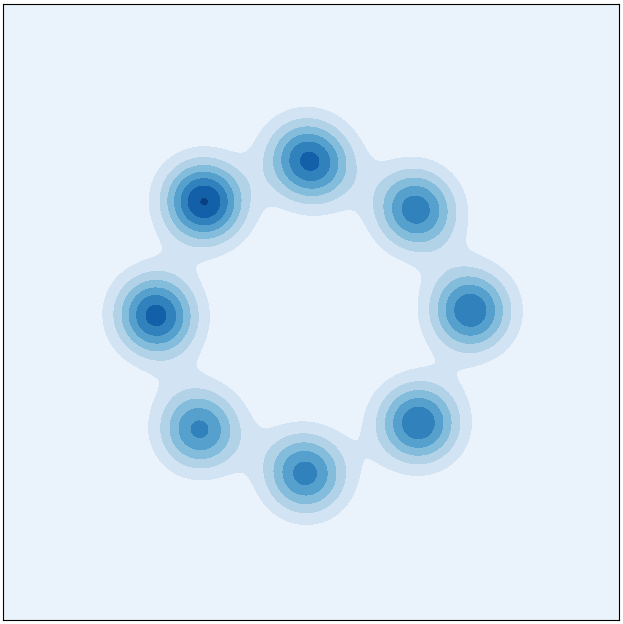}\label{fig:circle}} 
  \subfloat[][40k]{\includegraphics[width=.10\textwidth]{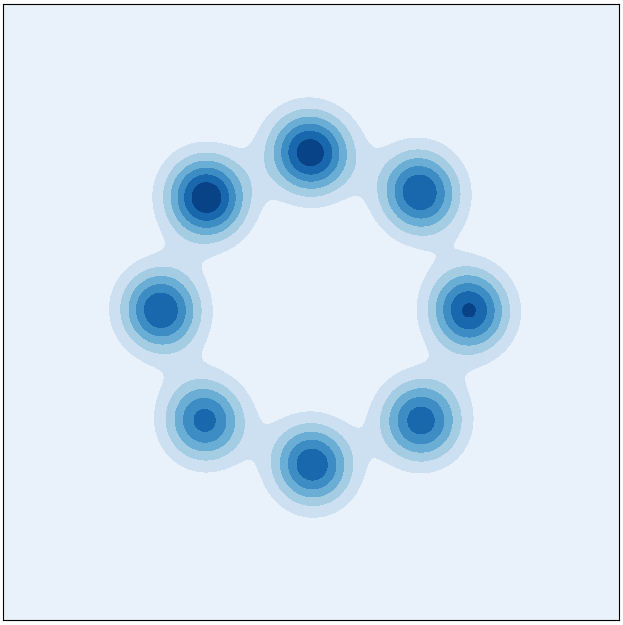}\label{fig:circle}} 
  \subfloat[][60k]{\includegraphics[width=.10\textwidth]{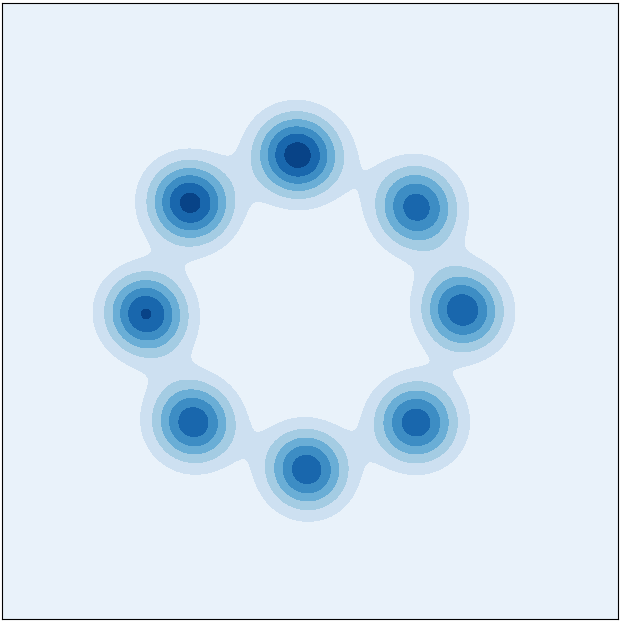}\label{fig:circle}} 
  \vspace{-2ex}

  \subfloat[][$J$]{\includegraphics[trim={0 0 0 8ex},clip,width=.2\textwidth]{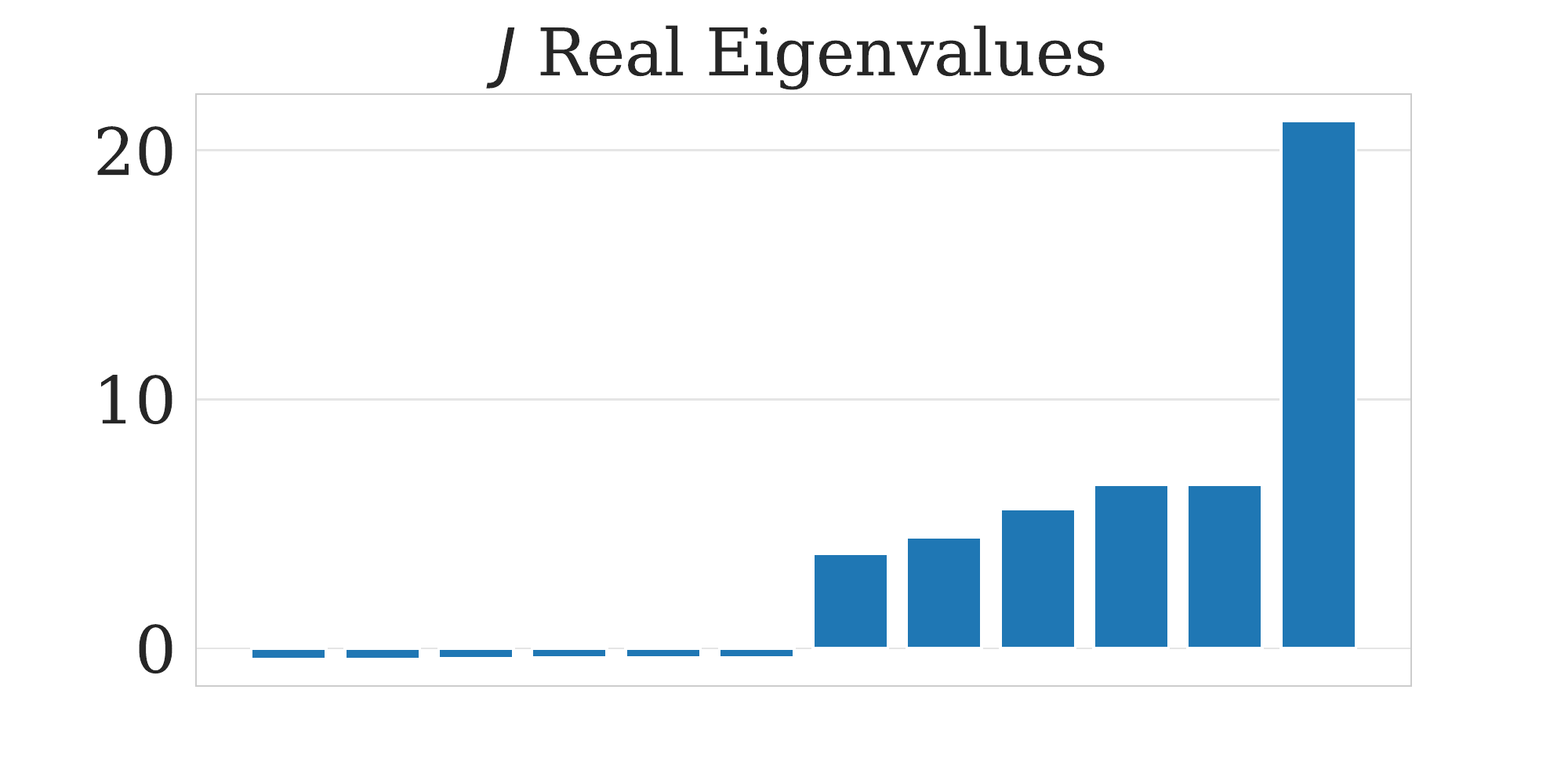}\label{fig:circle}} \hfill
  \subfloat[][$S_1$]{\includegraphics[trim={0 0 0 8ex},clip,width=.2\textwidth]{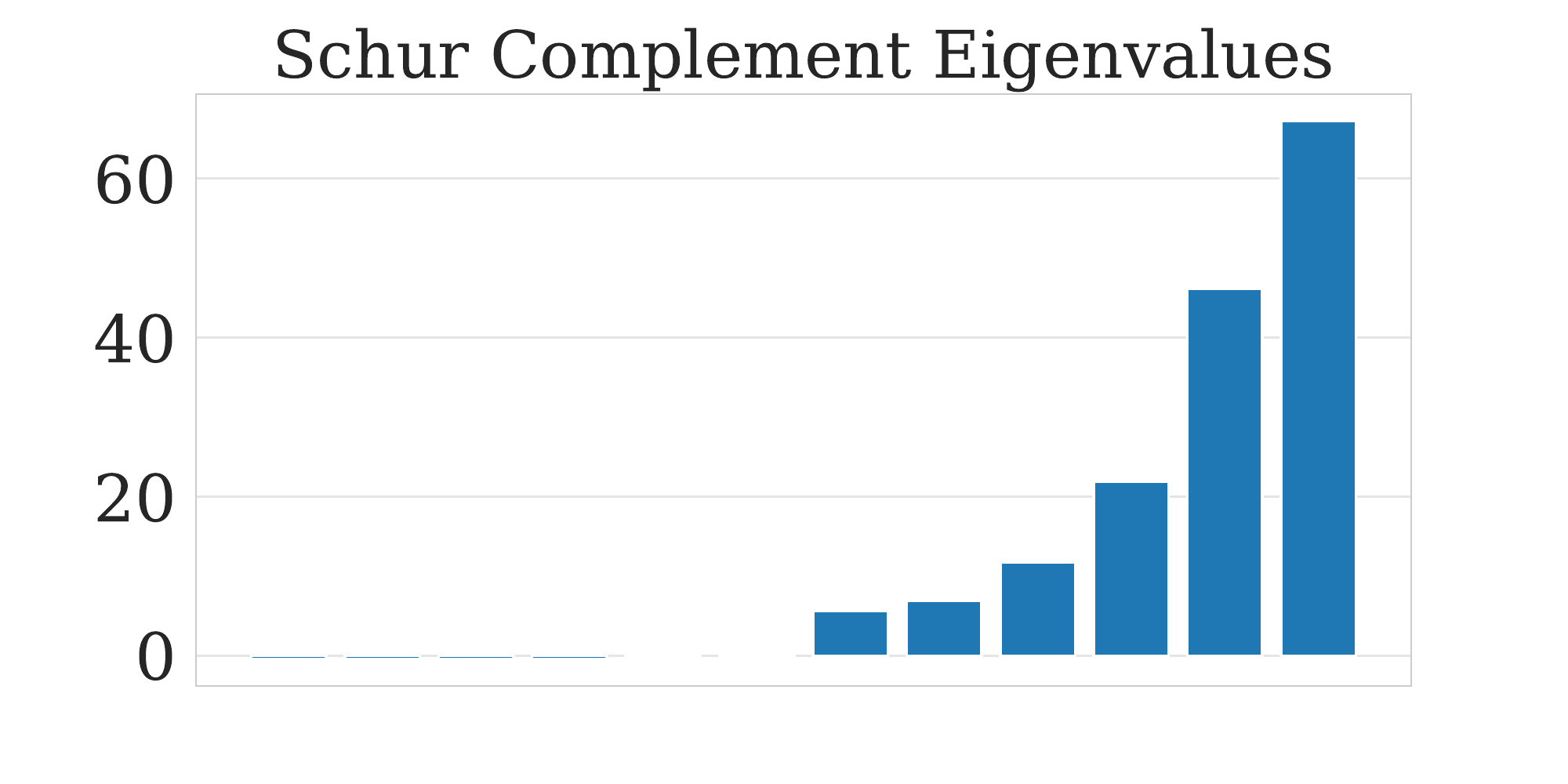}\label{fig:circle}} \hfill
  \subfloat[][$D_1^2f_1$]{\includegraphics[trim={0 0 0 8ex},clip,width=.2\textwidth]{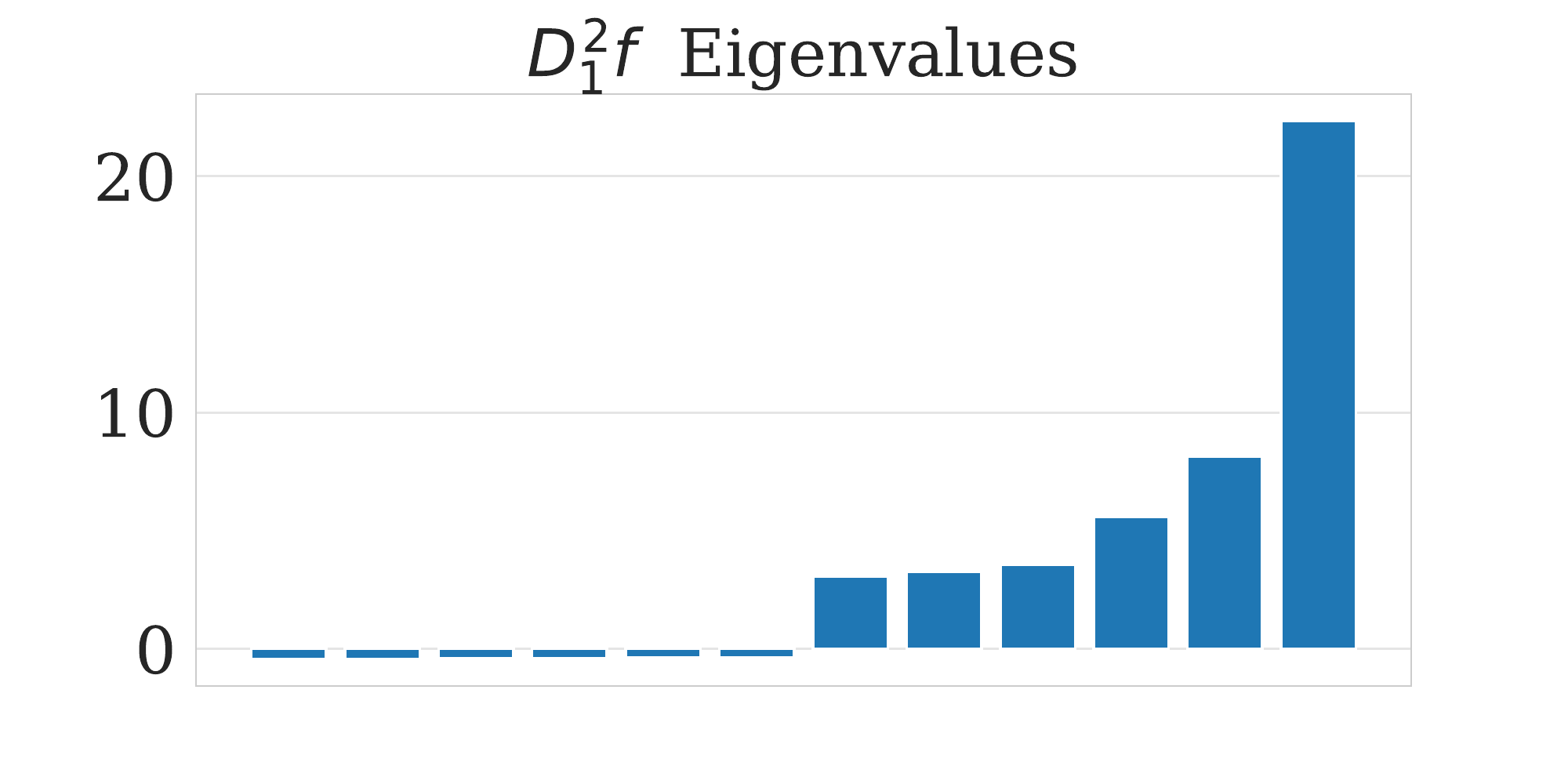}\label{fig:circle}} \hfill
  \subfloat[][$D_2^2f_2$]{\includegraphics[trim={0 0 0 6.5ex},clip,width=.2\textwidth]{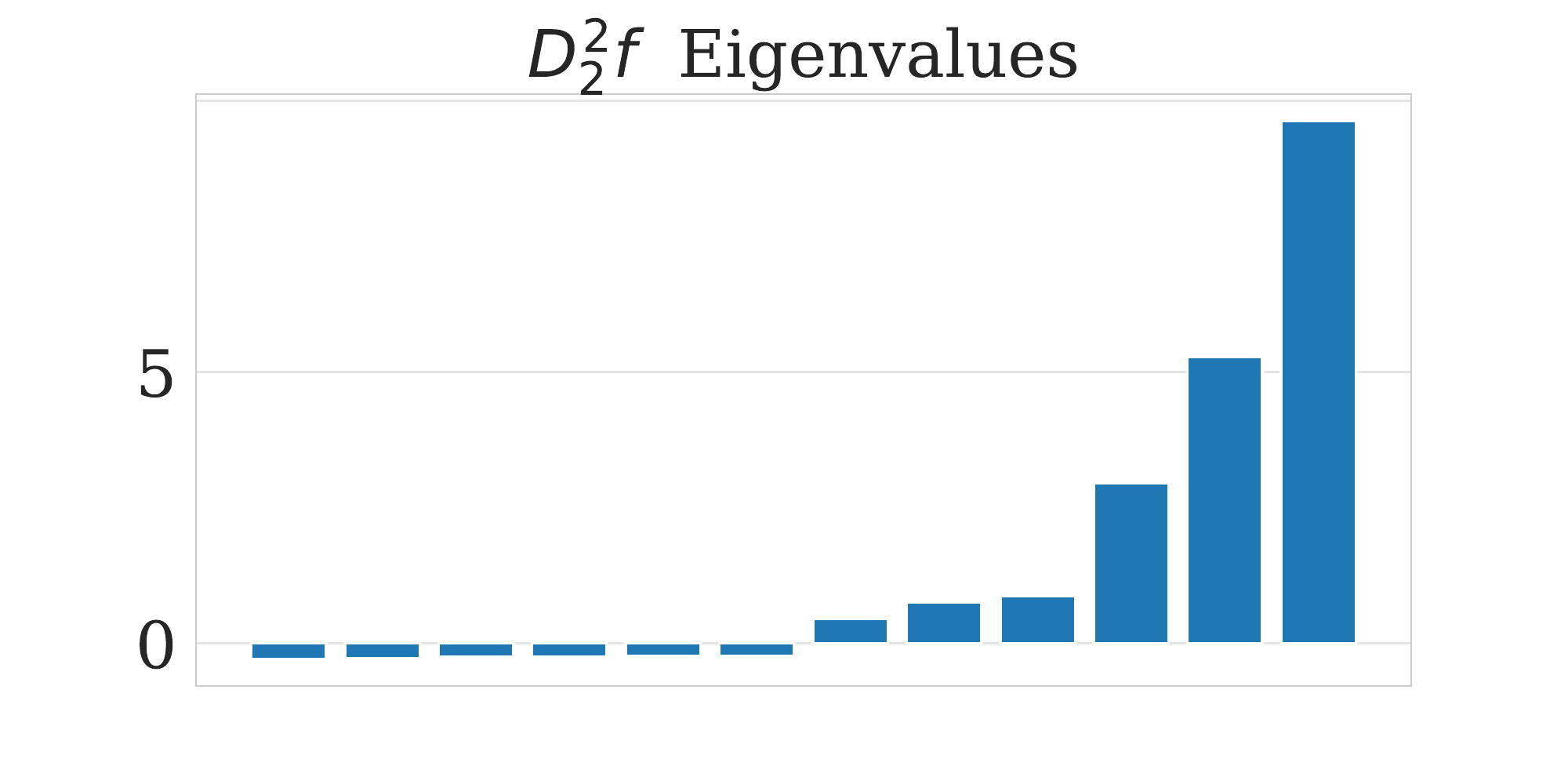}\label{fig:circle}}
  \vspace{-2ex}
  
  \subfloat[][$J$]{\includegraphics[trim={0 0 0 8ex},clip,width=.2\textwidth]{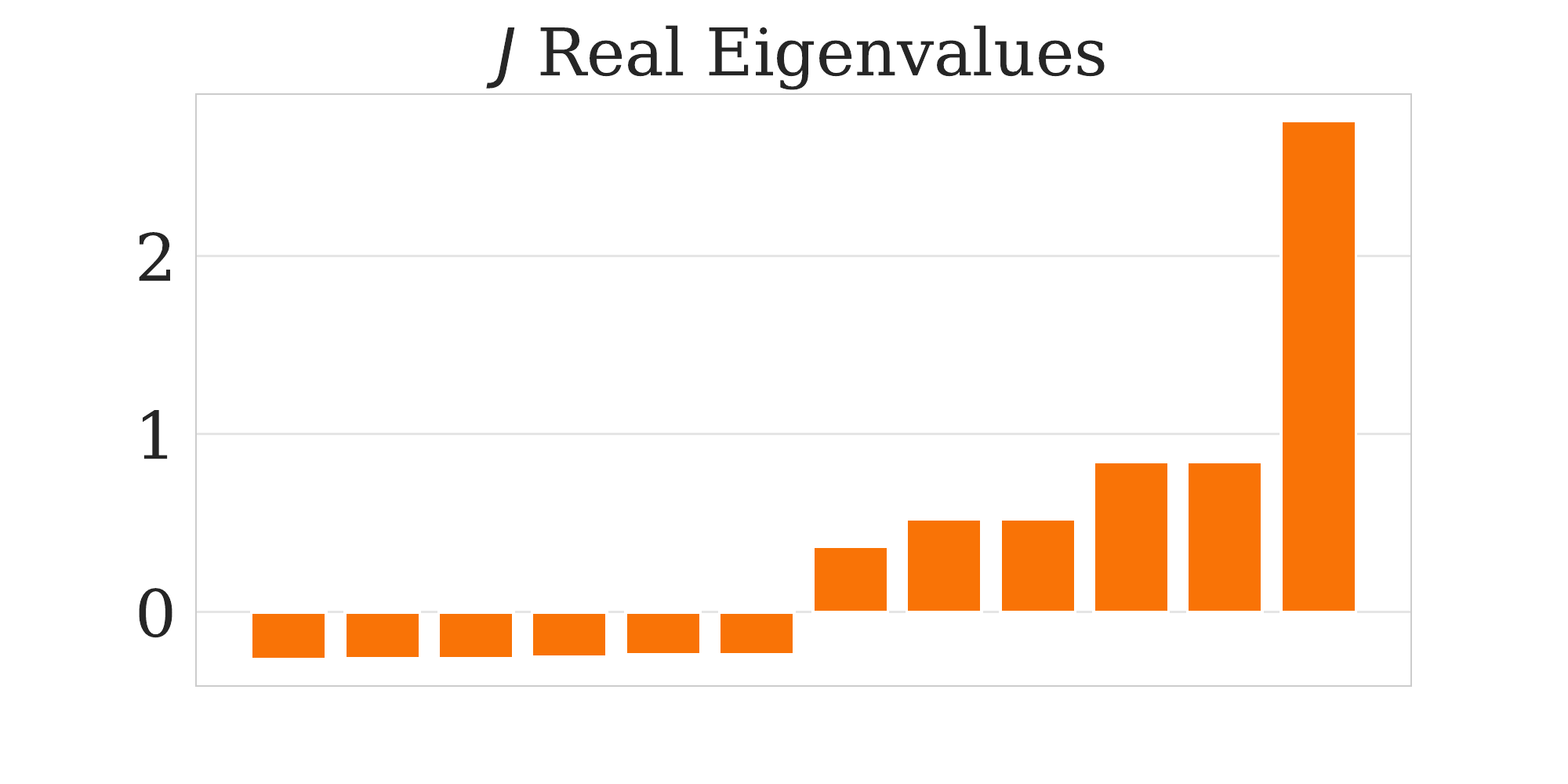}\label{fig:circle}} \hfill
  \subfloat[][$S_1$]{\includegraphics[trim={0 0 0 8ex},clip,width=.2\textwidth]{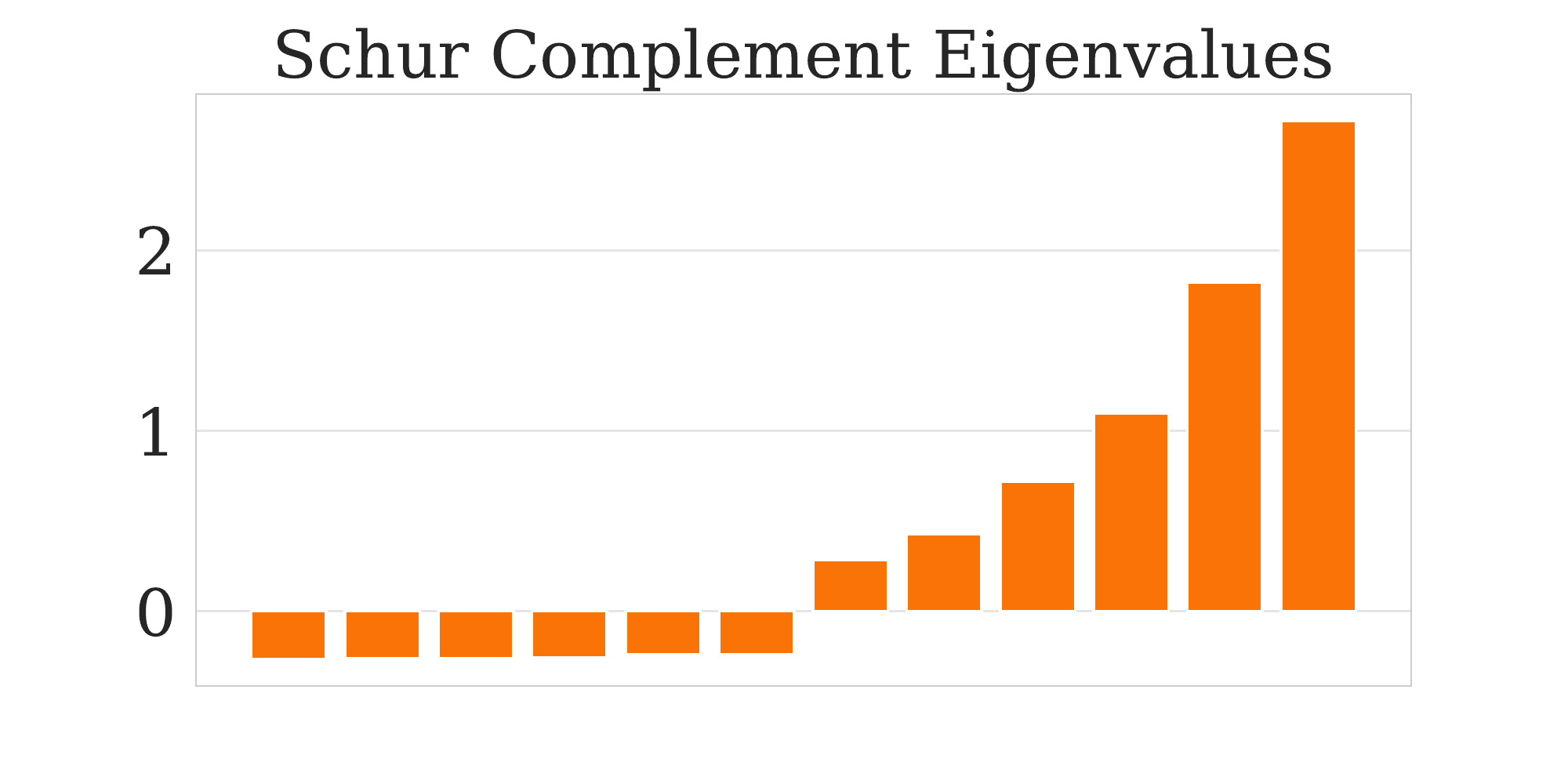}\label{fig:circle}} \hfill
  \subfloat[][$D_1^2f_1$]{\includegraphics[trim={0 0 0 8ex},clip,width=.2\textwidth]{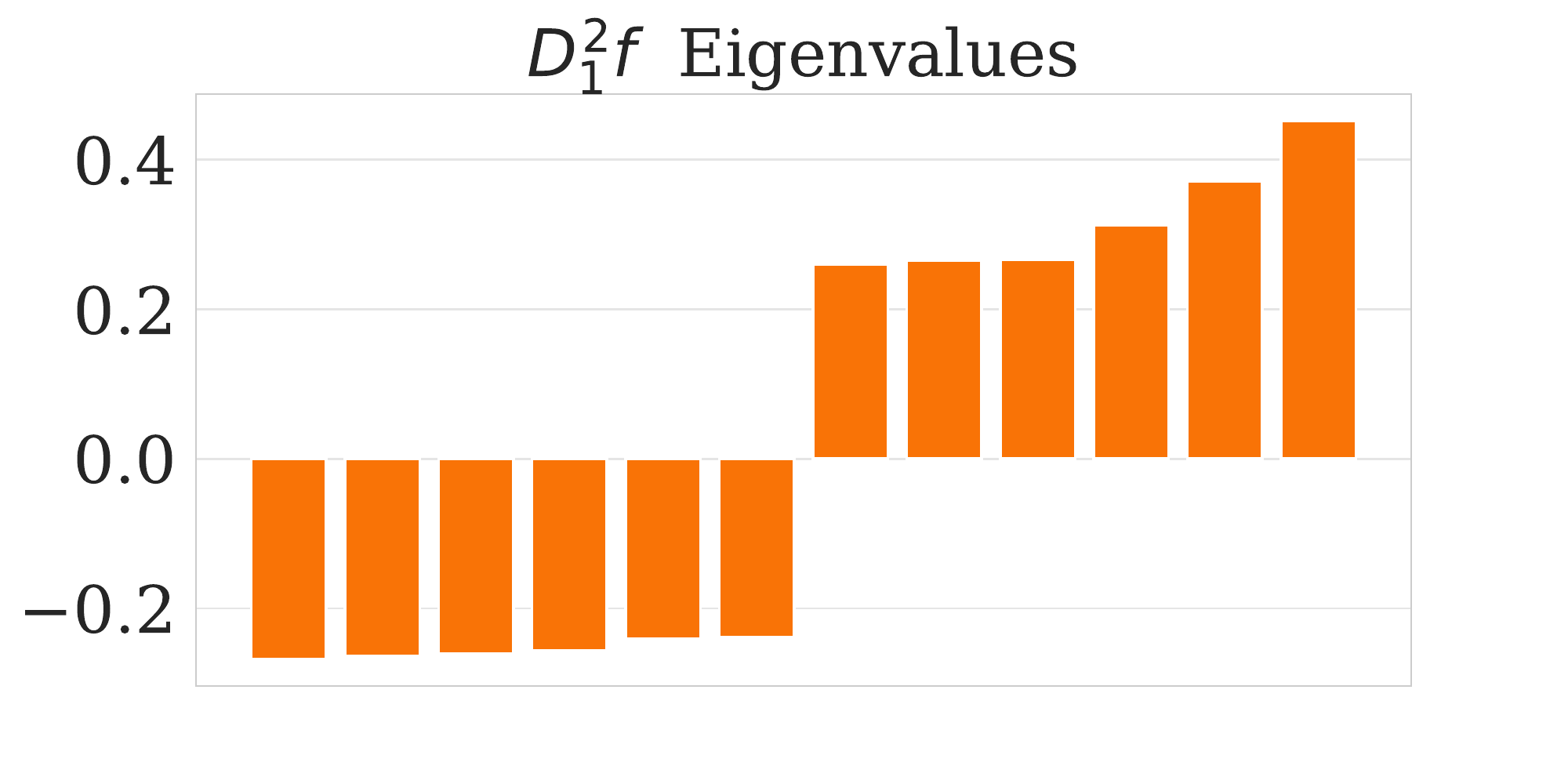}\label{fig:circle}} \hfill
  \subfloat[][$D_2^2f_2$]{\includegraphics[trim={0 0 0 8ex},clip,width=.2\textwidth]{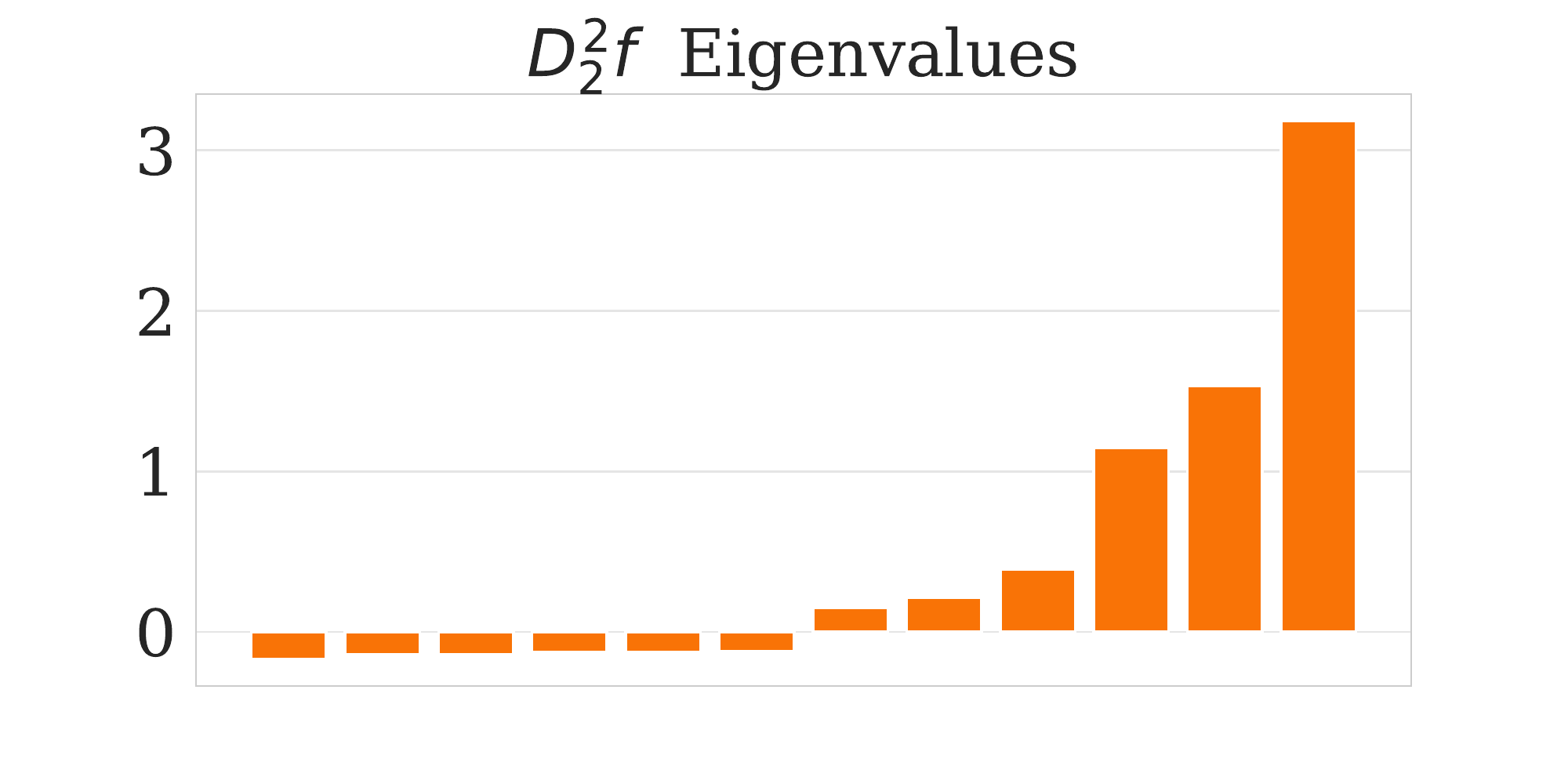}\label{fig:circle}}
  \caption{
      Convergence to Nash for simultaneous gradient descent in Fig. (b)--(e) and convergence to non-Nash Stackelberg for Stackelberg learning in Fig. (f)--(i) for the mixture of gaussian example. We plot the smallest six and largest six eigenvalues of the game Jacobian, Schur complement and individual hessians in (j)--(m) for simultaneous gradient descent and in (n)--(q) for Stackelberg learning at iteration 60k.
      The eigenvalues in this example seem to indicate that simultaneous gradient descent converged to a Nash equilibrium and that the Stackelberg learning dynamics converged to a non-Nash Stackelberg equilibria.}
  \label{fig:mog2}
  \end{figure*}

\textbf{Example 3: Mixure of Gaussian (Circle).}
The underlying data distribution for this problem consists of Gaussian distributions with means given by $\mu = [\sin(\omega), \cos(\omega)]$ for $\omega \in \{k\pi/4\}_{k=0}^7$ and each with covariance $\sigma^2 I$ where $\sigma^2=0.3$, sampled in the similar manner as the previous example. We train each learning rule using learning rates that begin at $0.0004$. Moreover, in this example, the activation following the hidden layers in each network is the ReLU function.  

We train the GAN with the non-saturating
objective~\cite{goodfellow2014generative}. We show the the performance in
Fig.~\ref{fig:mog2} along the learning path for the simultaneous gradient
descent dynamics and the Stackelberg learning dynamics. The simultaneous
gradient descent dynamics cycle and perform poorly until the learning rates have
decayed enough to stabilize the training process. The Stackelberg learning
dynamics converge quickly to a solution that nearly matches the
ground truth distribution. In a similar fashion as in the covariance example,
the leader update is able to cancel out rotations and converge to a desirable
solution with a learning rate that destabilizes the training process for
standard training techniques. We show the eigenvalues after training and see
that for this configuration the simultaneous gradient dynamics converge to a
Nash equilibrium and the Stackelberg learning dynamics converge again to a
non-Nash Stackelberg equilibrium. This provides further evidence that Stackelberg equilibria may be easier to reach and can provide suitable generator performance.

\textbf{Example 4: MNIST dataset.}
To demonstrate that the Stackelberg learning dynamics can scale to high dimensional problems, we train a GAN on the MNIST dataset using the DCGAN architecture adapted to handle $28\times 28$ images. We train on an MNIST dataset consisting of only the digits 0 and 1 from the training images and on an MNIST dataset containing the entire set of training images. We train using a batch size of $256$, a latent dimension of $100$, and the ADAM optimizer with the default parameters for the DCGAN network.  
We regularize the implicit map of the follower as detailed in Appendix~\ref{app:regfollower} using the parameter $\eta= 5000$. If we view the regularization as a linear function of the number of parameters in the discriminator, then this selection of regularization is nearly equal to that from the mixture of Gaussian experiments.

We show the results in Fig.~\ref{fig:mnist} after 2900 batches. For each dataset we show a sample of 16 digits to get a clear view of the generator performance and a sample of 256 digits to get a broader view of the generator output. The Stackelberg dynamics are able to converge to a solution that generates realistic handwritten digits. The primary purpose of this example is to show that the learning dynamics including second order information and an inverse is not an insurmountable problem for training large scale networks with millions of parameters. We believe the tools we develop for our implementation can be helpful to researchers working on GANs since a number of theoretical works on this topic require second order information to strengthen the convergence guarantees.

\begin{figure*}[t!]
  \centering
  \subfloat[][]{\includegraphics[width=.175\textwidth]{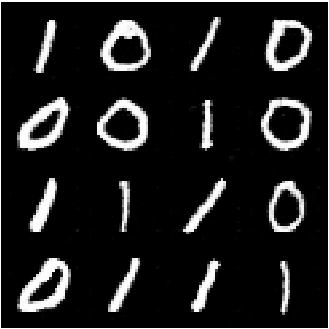}\label{fig:mnist_simple}}\hfill
  \subfloat[][]{\includegraphics[width=.175\textwidth]{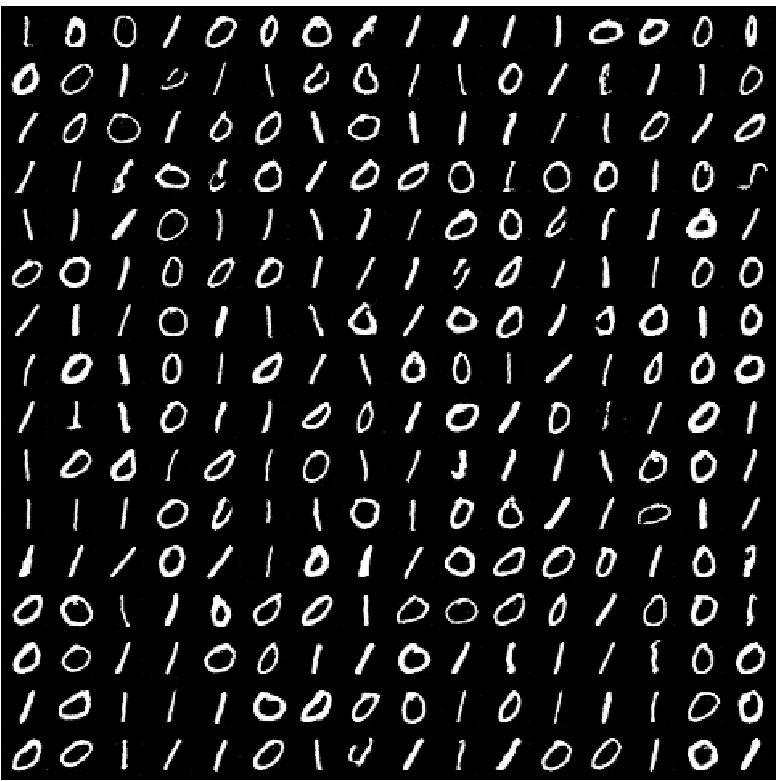}\label{fig:mnist_simple_big}}\hfill
  \subfloat[][]{\includegraphics[width=.175\textwidth]{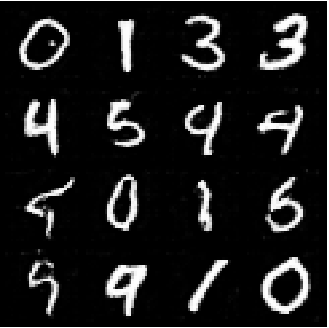}\label{fig:mnist_complicated}}\hfill
  \subfloat[][]{\includegraphics[width=.175\textwidth]{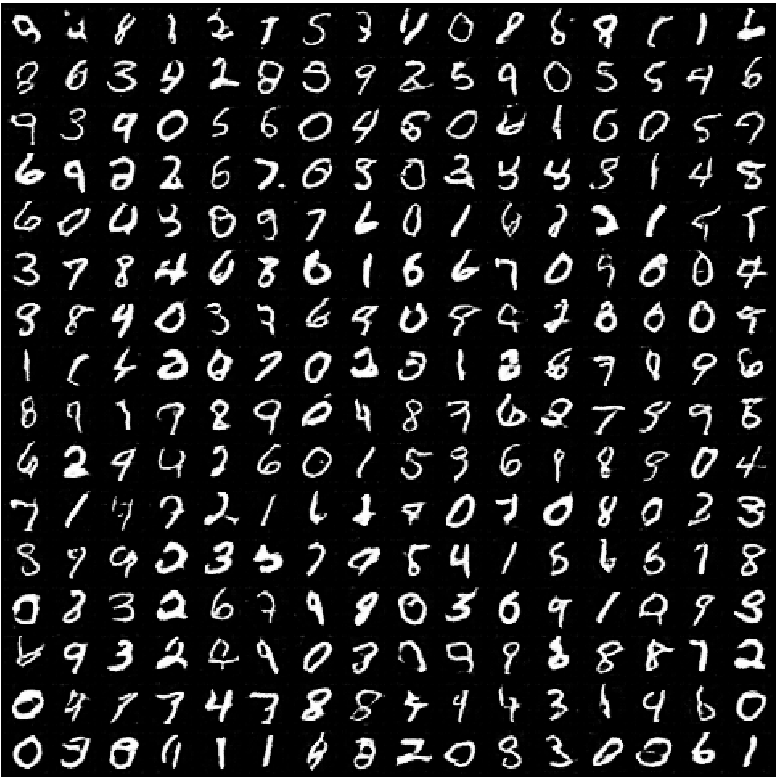}\label{fig:mnist_complicated_big}}
  \caption{We demonstrate Stackelberg learning on the MNIST dataset for digits for 0s and 1s in (a)-(b) and for all digits in (c)-(d).}
  \label{fig:mnist}
\end{figure*}

\section{Conclusion}
\label{sec:discussion}
We study the convergence of learning dynamics in Stackelberg games. This class of games broadly pertains to any application in which there is an order of play between the players in the game. However, the problem has not been extensively analyzed in the way the learning dynamics of simultaneous play games have been. Consequently, we are able to give novel convergence results and draw connections to existing work focused on learning Nash equilibria.

\bibliographystyle{plainnat}
\bibliography{refs}

\appendix
\onecolumn

\section{Mathematical Preliminaries}
\label{app:prelims}
In this appendix, we show some preliminary results on linear algebra and recall 
 some definitions and results from dynamical systems
theory that are needed to state and prove the results in the main paper.
\subsection{Proofs of Propositions~\ref{prop:nec} and
\ref{prop:suf}}
\label{app:linalg}
The results in this subsection follow from the theory of block operator matrices
and indefinite linear algebra~\cite{tretter:2008aa}.

The following lemma is a very well-known result in linear algebra and can be
found in nearly any advanced linear algebra text such as~\cite{horn:2011aa}.
\begin{lemma}
Let $W\in \mb{C}^{n\times n}$ be Hermitian with $k$ positive eigenvalues
(counted with multiplicities) and let $U\in \mb{C}^{m\times n}$. Then
\[\lambda_j(UWU^\ast)\leq \|U\|^2\lambda_j(W)\]
for $j=1, \ldots, \min\{k,m,\rank(UWU^\ast)\}$.
\label{lem:hermorder}
\end{lemma}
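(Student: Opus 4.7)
The plan is to use the Courant--Fischer min-max characterization of eigenvalues of Hermitian matrices applied to both $W$ and $UWU^\ast$ (note that $UWU^\ast$ is Hermitian, so its eigenvalues are real and indexed in decreasing order), and to compare the two via a carefully chosen $j$-dimensional test subspace.

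First, I would dispose of the trivial case: if $\lambda_j(UWU^\ast) \leq 0$, then since $j \leq k$ forces $\lambda_j(W) > 0$ and $\|U\|^2 \geq 0$, the inequality $\lambda_j(UWU^\ast) \leq \|U\|^2 \lambda_j(W)$ is immediate. So assume throughout that $\lambda_j(UWU^\ast) > 0$.

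Next, let $S \subseteq \mb{C}^m$ be the $j$-dimensional subspace spanned by eigenvectors of $UWU^\ast$ corresponding to its top $j$ eigenvalues. For any nonzero $y \in S$, the Rayleigh quotient satisfies $y^\ast UWU^\ast y / \|y\|^2 \geq \lambda_j(UWU^\ast) > 0$, so in particular $U^\ast y \neq 0$. This key observation implies that $U^\ast$ restricted to $S$ is injective, so $T := U^\ast(S)$ is a genuine $j$-dimensional subspace of $\mb{C}^n$. Then, for $y \in S \setminus \{0\}$ with $z := U^\ast y$, I would write
\[\frac{y^\ast UWU^\ast y}{\|y\|^2} = \frac{z^\ast W z}{\|z\|^2} \cdot \frac{\|z\|^2}{\|y\|^2} \leq \|U\|^2 \cdot \frac{z^\ast W z}{\|z\|^2},\]
using $\|z\| \leq \|U\| \|y\|$ together with the positivity of $z^\ast W z$ established via the choice of $S$. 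Taking minima over $y \in S \setminus \{0\}$ on both sides, and then applying Courant--Fischer to $W$ with the particular $j$-dimensional subspace $T \subseteq \mb{C}^n$, gives
\[\lambda_j(UWU^\ast) \leq \|U\|^2 \min_{z \in T \setminus \{0\}} \frac{z^\ast W z}{\|z\|^2} \leq \|U\|^2 \lambda_j(W),\]
which is the claim.

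The main subtle point is the sign argument: the bound $\|z\|^2 \leq \|U\|^2 \|y\|^2$ only turns into an inequality in the desired direction once the quadratic form $z^\ast W z$ is known to be positive. The top-eigenvector subspace $S$ of $UWU^\ast$ is precisely what ensures this positivity, and it simultaneously delivers the injectivity of $U^\ast|_S$ needed to produce a genuine $j$-dimensional comparison subspace $T$ to feed into Courant--Fischer for $W$. The rank constraint $j \leq \rank(UWU^\ast)$ in the hypothesis is then automatic, since whenever $j$ exceeds the rank one has $\lambda_j(UWU^\ast) = 0$ and the claim again reduces to the trivial case.
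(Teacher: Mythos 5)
Your proof is correct. Note that the paper itself offers no argument for this lemma---it is cited as a standard fact from an advanced linear algebra text---so your write-up supplies details the paper omits rather than paralleling an existing proof. The route you take is the classical Ostrowski/Courant--Fischer argument: take the span $S$ of the top $j$ eigenvectors of $UWU^\ast$, observe that $\lambda_j(UWU^\ast)>0$ forces $U^\ast$ restricted to $S$ to be injective, push $S$ forward to the $j$-dimensional test subspace $T=U^\ast(S)$, and feed $T$ into the max-min characterization of $\lambda_j(W)$, using $\|U^\ast y\|\le \|U\|\,\|y\|$ together with the positivity of $z^\ast W z$ to keep the inequality pointing in the right direction. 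You correctly isolate the two places where signs matter (the disposal of the case $\lambda_j(UWU^\ast)\le 0$, and the need for $z^\ast W z>0$ before replacing $\|z\|^2/\|y\|^2$ by $\|U\|^2$), and the hypothesis $j\le k$ is exactly what guarantees $\lambda_j(W)>0$ in the trivial case. Two minor remarks: the argument implicitly takes $\|U\|$ to be the spectral (operator) norm, which is the intended reading in the paper; and in your closing comment, when $j>\rank(UWU^\ast)$ one can only conclude $\lambda_j(UWU^\ast)\le 0$ (it may be strictly negative if $UWU^\ast$ has more than $m-j$ negative eigenvalues), not that it equals zero---this is harmless, since that situation is still absorbed by your trivial case.
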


Let us define $|M|=(MM^\trans)^{1/2}$ for a matrix $M$. 
Recall also that for Propositions~\ref{prop:nec} and \ref{prop:suf}, we have
 defined $\spec(D_1^2f(x^\ast))=\{\mu_j, \ j\in \{1, \ldots, m\}\}$
where \[\mu_1\leq \cdots \leq \mu_r<0\leq \mu_{r+1}\leq \cdots \leq \mu_m,\] and
 $\spec(-D_2^2f(x^\ast))=\{\lambda_i, \ i\in \{1, \ldots, n\}\}$ where
$\lambda_1\geq \cdots \geq \lambda_n>0$, given an attractor $x^\ast$.

We can now use the above Lemma to prove Proposition~\ref{prop:nec}. The proof
follows the main arguments in the proof of Lemma 3.2 in the work
by~\citet{berger:2018aa} with some minor changes due to the nature of our
problem. 
\begin{proof}[Proof of Proposition~\ref{prop:nec}]
    Let $x^\ast$ be a stable attractor of $\dot{x}=-\omegas(x)$ such that
    $-D_2^2f(x^\ast)>0$. For the sake of presentation, define
    $A=D_1^2f(x^\ast)$, $B=D_{12}f(x^\ast)$, and $C=D_2^2f(x^\ast)$. Recall that
    $x_1\in \mb{R}^n$ and $x_2\in \mb{R}^m$.
    Suppose that $A-BC^{-1}B^\trans>0$. 

   \noindent\textbf{Claim: $r\leq n$ is necessary.}  We argue by contradiction. Suppose
    not---i.e., assume that
$r>n$. Note that if $m<n$, then this is not possible. In this case, we
automatically satisfy that $r\leq n$. Otherwise, $r\geq m>n$. Let
$\mc{S}_1=\ker(B(-C^{-1}+|C^{-1}|)B^\trans)$ and consider the subspace $\mc{S}_2$ of $\mb{C}^m$ spanned by the all the eigenvectors of $A$ corresponding to non-positive eigenvalues. Note that
\[\dim \mc{S}_1=m-\rank(B(-C^{-1}+|C^{-1}|)B^\trans)\geq m-\rank(-C^{-1}+|C^{-1}|)=m-n\]
By assumption, we have that $\dim \mc{S}_2=r$ so that, since $r>n$,  
\[\dim \mc{S}_1+\dim\mc{S}_2\geq (m-n)+r=m+(r-n)>m.\]
 Thus, 
$\mc{S}_1\cap \mc{S}_2\neq \{0\}$. Now, 
$\mc{S}_1=\ker(B(-C^{-1}+|C^{-1}|)B^\trans)$. Hence, 
for any non-trivial vector $v\in \mc{S}_1\cap \mc{S}_2$, 
$(BC^{-1}B^\trans-B|C^{-1}|B^\trans) v=0$ so that
we have 
\begin{equation}\langle (A-BC^{-1}B^\trans)v, v\rangle=\langle Av,v\rangle-\langle
    B|C^{-1}|B^\trans v,v\rangle\leq 0.
\label{eq:ineq}
\end{equation}
Note that the inequality in \eqref{eq:ineq} holds because the vector $v$ is in
the non-positive eigenspace of $A$ and the second term is clearly non-positive.
Thus, $A-BC^{-1}B^\trans$ cannot be positive definite, which gives a contradiction so that $r\leq n$.

\noindent\textbf{Claim: $\kappa^2\lambda_i+\mu_i>0$ is necessary}. Let the maps $\lambda_i(\cdot)$
denote the eigenvalues of its argument arranged in non-increasing order. Then,
by the  Weyl theorem for Hermitian matrices~\cite{horn:2011aa}, we have that
\[0<\lambda_m(A-BC^{-1}B^\trans)\leq
\lambda_i(A)+\lambda_{m-i+1}(-BC^{-1}B^\trans), \ i\in\{1, \ldots, m\}.\]
We can now combine this inequality with Lemma~\ref{lem:hermorder}. Indeed, we
have that 
\[0<\lambda_i(A)+\|B\|^2\lambda_{m-i+1}(-C^{-1})<\mu_{m-i+1}+\kappa^2\lambda_{m-i+1},
\ \ \forall \ i\in\{m-r+p+1, \ldots, m\}\]
which gives the desired result.

Since we have shown both the necessary conditions, this concludes the proof.
\end{proof}

Now, let us prove  Proposition~\ref{prop:suf} which gives sufficient conditions
for when a stable non-Nash attractor $x^\ast$ of $\dot{x}=-\omega(x)$ is a differential
Stackelberg equilibrium. Then, combining this with 
Proposition~\ref{prop:allstack}, we have a sufficient condition under which
stable non-Nash attractors are in fact stable attractors of
$\dot{x}=-\omegas(x)$.

\begin{proof}[Proof of Proposition~\ref{prop:suf}]
Let $x^\ast$ be a stable non-Nash attractor of $\dot{x}=-\omega(x)$ such
    that $D_1^2f(x^\ast)$ and $D_2^2f(x^\ast)>0$ are Hermitian.     
      Since $D_i^2f(x^\ast)$, $i=1,2$ are both Hermitian, let
      $D_1^2f(x^\ast)=\poneevecs M \poneevecs^\ast$ with
$\poneevecs \poneevecs^\ast=I_{n\times n}$ and $M=\diag(\mu_1, \ldots, \mu_m)$, and
$-D_2^2f(x^\ast)=\ptwoevecs\Lambda \ptwoevecs^\ast$ with
$\ptwoevecs\ptwoevecs^\ast=I_{m\times m}$ and $\Lambda=\diag(\lambda_1, \ldots, \lambda_n)$.

By assumption,    there exists a diagonal matrix $\Sigma\in \mb{R}^{m \times n}$ such that
$D_{12}f(x^\ast)=\poneevecs\Sigma \ptwoevecs^\ast$ where $\poneevecs$ are the orthonormal eigenvectors
    of $D_1^2f(x^\ast)$ and $\ptwoevecs$ are orthonormal eigenvectors of
    $-D_2^2f(x^\ast)$.
   Then, 
\begin{align*}
    D_1^2f(x^\ast)-D_{21}f(x^\ast)^\trans
(D_2^2f(x^\ast))^{-1}D_{21}f(x^\ast)&=\poneevecs M \poneevecs^\ast+\poneevecs
\Sigma \ptwoevecs^\ast
(\ptwoevecs\Lambda \ptwoevecs^\ast)^{-1}\ptwoevecs \Sigma^\ast \poneevecs^\ast\\
&=\poneevecs(M+\Sigma \Lambda^{-1} \Sigma^\ast)\poneevecs^\ast
\end{align*}
Hence, to understand the eigenstructure of the Schur complement, we simply need
to compare the all negative eigenvalues of $D_1^2f(x^\ast)$ in increasing order
with the most positive eigenvalues of $-D_2^2f(x^\ast)$ in decreasing order.
Indeed, 
 by assumption, $r\leq
n$ and $\kappa^2\lambda_i+\mu_i>0$ for each $i\in \{1, \ldots, r-p\}$. Thus, 
\[D_1^2f(x^\ast)-D_{21}f(x^\ast)^\trans
(D_2^2f(x^\ast))^{-1}D_{21}f(x^\ast)>0\]
since it is a symmetric matrix. Combining this with the fact that
$-D_2^2f(x^\ast)>0$, $x^\ast$ is a differential Stackelberg equilibrium. Hence,
by Proposition~\ref{prop:allstack} it is an attractor of $\dot{x}=-\omegas(x)$.
\end{proof}

\subsection{Dynamical Systems Theory Primer}
\begin{definition}
    Given $T>0$, $\delta>0$, if there exists an increasing sequence of times
    $t_j$ with $t_0=0$ and $t_{j+1}-t_j\geq T$ for each $j$ and solutions
    $\xi^j(t)$, $t\in[t_j,t_{j+1}]$ of $\dot{\xi}=F(\xi)$ with initialization
    $\xi(0)=\xi_0$ such that $\sup_{t\in [t_j,t_{j+1}]}\|\xi^j(t)-z(t)\|<\delta$
    for some bounded, measurable $z(\cdot)$, the we call $z$ a
    $(T,\delta)$--perturbation.
\end{definition}
\begin{lemma}[Hirsch Lemma]
    Given $\vep>0$, $T>0$, there exists $\bar{\delta}>0$ such that for all
    $\delta\in(0,\bar{\delta})$, every $(T,\delta)$--perturbation of
    $\dot{\xi}=F(\xi)$ converges to an $\vep$--neighborhood of the global
    attractor set for $\dot{\xi}=F(\xi)$.
    \label{lem:hirsch}
\end{lemma}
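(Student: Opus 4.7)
The plan is to combine the defining attractivity property of the global attractor with Gronwall-type continuous dependence on initial conditions, in the spirit of the classical proof found in Benaïm (1999, Prop.~4.1). Fix $\vep>0$ and $T>0$. Since $z$ is bounded, its range lies in a compact set $K$, and I would work on a slight compact enlargement $\tilde{K}\supset K$ on which $F$ admits a finite Lipschitz constant $L$. By the definition of the global attractor $A$ of $\dot{\xi}=F(\xi)$, there exist (i) a time $T^\ast>0$ such that $\xi_t(\tilde K)\subset \mc{N}_{\vep/4}(A)$ for all $t\geq T^\ast$, and (ii) a forward-invariant open neighborhood $V$ with $A\subset V\subset \mc{N}_{\vep}(A)$.

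Next I would control how much a $(T,\delta)$-perturbation $z$ can deviate from a true solution over the window $[t_{j_0}, t_{j_0}+T^\ast]$. Let $\hat{\xi}$ denote the true solution of $\dot{\xi}=F(\xi)$ with $\hat{\xi}(t_{j_0})=z(t_{j_0})$, and let $s_0=t_{j_0}<s_1<\cdots<s_m=t_{j_0}+T^\ast$ be the break points of the perturbation inside this window, so that $m\leq \lceil T^\ast/T\rceil$. On each sub-interval $[s_i,s_{i+1}]$, $z$ stays within $\delta$ of some true solution $\xi^{j_0+i}$; Gronwall's inequality then propagates the gap between $\xi^{j_0+i}$ and $\hat{\xi}$ by a factor of at most $e^{L(s_{i+1}-s_i)}$, while at each break point a jump of size at most $2\delta$ is incurred. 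Telescoping gives a bound of the form
\[
\|z(t_{j_0}+T^\ast)-\hat{\xi}(t_{j_0}+T^\ast)\|\leq C(T,T^\ast,L)\,\delta,
\]
where $C$ depends only on $T$, $T^\ast$, and $L$. Choosing $\bar{\delta}$ small enough that $C\bar{\delta}\leq \vep/4$ and that an additional $\delta$-buffer keeps us inside $V$, the point $z(t_{j_0}+T^\ast)$ lies in $\mc{N}_{\vep/2}(A)\subset V$.

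Finally, I would use the forward invariance of $V$ to propagate the conclusion: once $z$ enters $V$, each subsequent solution $\xi^j$ starts within $\bar{\delta}$ of a point of $V$ and remains in a slightly enlarged invariant set for all future time, so $z(t)\in \mc{N}_\vep(A)$ for all $t\geq t_{j_0}+T^\ast$. The main obstacle is the careful bookkeeping of the Gronwall accumulation across the (possibly many, but uniformly boundedly many) sub-intervals and ensuring that the attractivity time $T^\ast$ can be chosen uniformly over the compact range $K$ of $z$; both follow from compactness arguments but require some care. A cleaner variant of this argument exists when $A$ admits a $C^1$ Lyapunov function $V$ (available by the converse Lyapunov theorem for asymptotically stable compact invariant sets), in which case the inequality $\la \nabla V, F\ra<0$ on $V\setminus A$ replaces the telescoping estimate above and the perturbation is controlled by $\|\nabla V\|\cdot \delta$ on each segment; this is the route typically taken in~\cite{benaim:1999aa} and~\cite{borkar:2008aa}.
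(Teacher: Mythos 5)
The paper does not actually prove this lemma: it is quoted verbatim as a classical result (Hirsch's lemma) from the stochastic-approximation literature and is only invoked as a black box in the proofs of Proposition~\ref{prop:converge} and its corollaries, with the surrounding machinery attributed to \citet{benaim:1999aa} and \citet[Chap.~2]{borkar:2008aa}. So there is no in-paper argument for your sketch to diverge from; what you have written is essentially the canonical proof from that literature, and your identification of the key ingredients --- a Gronwall/Lipschitz tracking estimate over a window of length $T^\ast$ determined by uniform attraction of the compact set $\tilde K$, with at most $\lceil T^\ast/T\rceil$ break points each contributing an $O(\delta)$ jump, followed by a lock-in step --- is accurate.

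The one step that is too loose as stated is the final propagation. A ``slightly enlarged'' neighborhood of a forward-invariant set $V$ is not itself forward-invariant, and because the perturbation incurs a fresh $\delta$-error at every break point $t_j$ for all future time, invariance of $V$ alone does not rule out the errors ratcheting the trajectory outward across infinitely many intervals; you need a mechanism that strictly pushes the trajectory back inward on each interval. The standard fix, which you gesture at in your closing remark, is to run the recursion with a Lyapunov function (equivalently, a nested pair of neighborhoods with a uniform transit-time bound): outside a small neighborhood of $A$ the true flow decreases $V$ over any interval of length at least $T$ by an amount bounded away from zero, and $\bar\delta$ is chosen so that the modulus of continuity of $V$ at scale $\delta$ is smaller than this guaranteed decrease, so each interval either keeps the perturbed trajectory inside the outer sublevel set or moves it strictly inward, and convergence to $\mc{N}_\vep(A)$ follows. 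With that replacement (rather than the Gronwall-plus-enlarged-invariant-set phrasing) your argument is complete and coincides with the proof given in \citet{benaim:1999aa} and reproduced in \citet{borkar:2008aa}.
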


A key tool used in the finite-time two-timescale analysis is the nonlinear
variation of constants formula of Alekseev~\cite{alekseev:1961aa},
\cite{borkar:2018aa}. 
\begin{theorem}
    Consider a differential equation 
    \[\dot{u}(t)=f(t,u(t)), \ t\geq 0,\]
    and its perturbation
    \[\dot{p}(t)=f(t,p(t))+g(t,p(t)), \ t\geq 0\]
    where $f,g:\mb{R}\times \mb{R}^d\rar \mb{R}^d$, $f\in C^1$, and $g\in
    C$. Let $u(t,t_0,p_0)$ and $p(t,t_0,p_0)$ denote the solutions of the above
    nonlinear systems for $t\geq t_0$ satisfying
    $u(t_0,t_0,p_0)=p(t_0,t_0,p_0)=p_0$, respectively. Then,
    \begin{align*}
        p(t,t_0,p_0)&=u(t,t_0,p_0)+\int_{t_0}^t\Phi(t,s,p(s,t_0,p_0))g(s,p(s,t_0,p_0))\
        ds, \ t\geq t_0
    \end{align*}
    where $\Phi(t,s,u_0)$, for $u_0\in \mb{R}^d$, is the fundamental matrix of
    the linear system 
    \begin{equation}
        \dot{v}(t)=\frac{\partial f}{\partial u}(t,u(t,s,u_0))v(t), \ t\geq s
        \label{eq:vdot}
    \end{equation}
    with $\Phi(s,s,u_0)=I_d$, the $d$--dimensional identity matrix.
    \label{thm:alekseev}
\end{theorem}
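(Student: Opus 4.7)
\smallskip

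\noindent\textbf{Proof Proposal for Theorem~\ref{thm:alekseev} (Alekseev's formula).} The plan is to reduce the identity to a single application of the fundamental theorem of calculus by constructing an auxiliary function that interpolates between the unperturbed solution $u(t,t_0,p_0)$ and the perturbed solution $p(t,t_0,p_0)$. Specifically, I would define
\begin{equation*}
h(s) \;=\; u\bigl(t,\,s,\,p(s,t_0,p_0)\bigr), \qquad s\in[t_0,t],
\end{equation*}
so that $h(t_0) = u(t,t_0,p_0)$ (the unperturbed trajectory) and $h(t) = u(t,t,p(t,t_0,p_0)) = p(t,t_0,p_0)$ (since $u(t,t,\cdot)$ is the identity). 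Hence establishing Alekseev's formula amounts to showing that
\begin{equation*}
h(t) - h(t_0) \;=\; \int_{t_0}^{t}\Phi\bigl(t,s,p(s,t_0,p_0)\bigr)\,g\bigl(s,p(s,t_0,p_0)\bigr)\,ds,
\end{equation*}
which will follow by differentiating $h$ in $s$ and integrating.

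\smallskip

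Next I would compute $h'(s)$ by the chain rule, obtaining a sum of two terms: one from differentiating $u$ in its second (initial-time) slot, and one from differentiating $u$ in its third (initial-state) slot against $\dot p(s) = f(s,p(s)) + g(s,p(s))$. The key algebraic cancellation is the flow identity
\begin{equation*}
\frac{\partial u}{\partial s}(t,s,q) \;+\; \frac{\partial u}{\partial q}(t,s,q)\,f(s,q) \;=\; 0,
\end{equation*}
valid for all $(t,s,q)$ in the domain of the flow. I would derive this by differentiating the semigroup relation $u(t,s,q) = u(t,s',u(s',s,q))$ in $s'$ and setting $s' = s$, using the fact that $\partial_{s'} u(s',s,q)\big|_{s'=s} = f(s,q)$. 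Applying this identity with $q = p(s,t_0,p_0)$ collapses the chain-rule expansion to
\begin{equation*}
h'(s) \;=\; \frac{\partial u}{\partial q}\bigl(t,s,p(s,t_0,p_0)\bigr)\,g\bigl(s,p(s,t_0,p_0)\bigr).
\end{equation*}

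\smallskip

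The remaining step is to identify $\partial_q u(t,s,q)$ with the fundamental matrix $\Phi(t,s,q)$ defined in \eqref{eq:vdot}. For this I would differentiate the ODE $\tfrac{d}{dt}u(t,s,q) = f(t,u(t,s,q))$ in $q$ (the smoothness $f\in C^1$ permits interchanging the derivatives), giving
\begin{equation*}
\frac{d}{dt}\frac{\partial u}{\partial q}(t,s,q) \;=\; \frac{\partial f}{\partial u}\bigl(t,u(t,s,q)\bigr)\,\frac{\partial u}{\partial q}(t,s,q),
\end{equation*}
with initial condition $\partial_q u(s,s,q) = I_d$. This is precisely the linear system defining $\Phi(\cdot,s,q)$, so uniqueness yields $\partial_q u(t,s,q) = \Phi(t,s,q)$. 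Integrating $h'(s)$ from $t_0$ to $t$ and using $h(t)-h(t_0) = p(t,t_0,p_0)-u(t,t_0,p_0)$ then completes the proof.

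\smallskip

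The main obstacle I anticipate is justifying the flow identity $\partial_s u + \partial_q u\,f = 0$ cleanly; conceptually it is immediate from the semigroup property, but writing it out requires being careful about which argument of $u$ is being differentiated and invoking $C^1$ dependence of solutions on initial data (a standard consequence of $f\in C^1$, e.g.\ via Gronwall). Once that identity is in hand, the remainder of the argument is essentially bookkeeping plus the variational equation identification.
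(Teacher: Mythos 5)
The paper does not prove this statement: Theorem~\ref{thm:alekseev} is quoted in Appendix~A.2 as a classical result of Alekseev, with citations to \citet{alekseev:1961aa} and \citet{borkar:2018aa}, and is used as a black box in the finite-time analysis. Your argument is the standard proof of the nonlinear variation-of-constants formula and it is correct: the interpolation $h(s)=u(t,s,p(s,t_0,p_0))$ with endpoints $h(t_0)=u(t,t_0,p_0)$ and $h(t)=p(t,t_0,p_0)$, the cancellation via the flow identity $\partial_s u+\partial_q u\,f=0$ obtained from the cocycle property, and the identification of $\partial_q u(t,s,q)$ with $\Phi(t,s,q)$ through the variational equation are exactly the right ingredients, and the $C^1$ hypothesis on $f$ is what licenses both the differentiability of the flow in $(s,q)$ and the interchange of derivatives in the variational step. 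The only points worth being explicit about in a full write-up are (i) that $s\mapsto p(s,t_0,p_0)$ is $C^1$ because $f+g$ is continuous, so $h$ is differentiable and the fundamental theorem of calculus applies, and (ii) that $u(t,s,q)$ is assumed to exist on the whole interval for the relevant $(s,q)$, which the theorem's hypotheses implicitly grant. Neither is a gap in the idea.
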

Typical two-timescale analysis has historically leveraged the discrete
Bellman-Grownwall lemma~\cite[Chap.~6]{borkar:2008aa}. Recent application of
Alekseev's formula has lead to tighter bounds, and is thus becoming commonplace
in such analysis.

\section{Extended Analysis}
\label{app:proofs}
The results in Section~\ref{sec:finitetime} leverage classical results from
stochastic
approximation
\cite{bhatnagar:2013aa,borkar:2008aa,kushner2003stochastic,benaim:1999aa} including recent
advances in that same domain~\cite{borkar:2018aa, thoppe:2018aa}.
Here we provide more detail on the derivation of the bounds presented in
Section~\ref{sec:finitetime} in order to provide insight into what the constants
are in the concentration bounds in Theorems~\ref{thm:conjecturetrack} and
\ref{thm:lockin}. Moreover, the presentation here is somewhat distilled and the
aim is to help the reader through the analysis in~\citet{borkar:2018aa} and
\citet{thoppe:2018aa} as it pertains to the setting we consider. We refer the reader to each of these papers and
references therein for
even more detail. 

As in the main body of the paper, consider a locally asymptotically stable differential Stackelberg equilibrium
$x^\ast=(x_1^\ast,
\conj(x_1^\ast))\in X$ and let $B_{q_0}(x^\ast)$ be an ${q}_0>0$ radius ball around $x^\ast$ contained in the
region of attraction.
Stability implies that the Jacobian $\Js(x_1^\ast,
\conj(x_1^\ast))$ is positive definite and by the converse Lyapunov
theorem~\cite[Chap.~5]{sastry:1999aa} there
exists local Lyapunov functions for the dynamics $\dot{x}_1(t)=-\tau
Df_1(x_1(t),\conj(x_1(t)))$ and for the dynamics $\dot{x}_2(t)=-D_2f_2(x_1,
x_2(t))$, for each fixed $x_1$. In particular, there exists a local Lyapunov
function $V\in C^1(\mb{R}^{d_1})$ with $\lim_{\|x_1\|\uparrow
\infty}V(x_1)=\infty$, and $\la \nabla V(x_1), Df_1(x_1,\conj(x_1))\ra<0$ for $x_1\neq
x_1^\ast$. 

For $q>0$, let $V^q=\{x\in \text{dom}(V):\ V(x)\leq q\}$. Then, there is also 
$q>q_0>0$ and $\epsilon_0>0$ such that for $\epsilon<\epsilon_0$,
\[\{x_1\in \mb{R}^{d_1}|\ \|x_1-x_1^\ast\|\leq \epsilon\}\subseteq V^{q_0}\subset
\mc{N}_{\epsilon_0}(V^{q_0})\subseteq V^q\subset \text{dom}(V)\] where
\[\mc{N}_{\epsilon_0}(V^{q_0})=\{x\in \mb{R}^{d_1}|\ \exists x'\in V^{q_0}\
\text{s.t.} \|x'-x\|\leq \epsilon_0\}.\] An analogously defined $\tilde{V}$
exists for the dynamics $\dot{x}_2$ for each fixed $x_1$. 

For now, fix $n_0$ sufficiently large; we specify the values of $n_0$ for which
the theory holds before the statement of Theorem~\ref{thm:conjecturetrack}. Define the
event $\mc{E}_n=\{\bar{x}_2(t)\in V^q\ \forall t\in [\tilde{t}_{n_0},
\tilde{t}_n]\}$ where
\[\textstyle\bar{x}_2(t)=x_{2,k}+\frac{t-\tilde{t}_k}{\gamma_{2,k}}(x_{2,k+1}-x_{2,k})\] are
linear interpolates---i.e., \emph{asymptotic pseudo-trajectories}---defined for $t\in (\tilde{t}_k, \tilde{t}_{k+1})$ with
$\tilde{t}_{k+1}=\tilde{t}_k+\gamma_{2,k}$ and $\tilde{t}_0=0$. 

We can express the asymptotic pseudo-trajectories for any
$n\geq n_0$ as
\[\textstyle\bar{x}_2(\tilde{t}_{n+1})=\bar{x}_2(\tilde{t}_{n_0})-\sum_{k=n_0}^n\gamma_{2,k}(D_2f_2(x_{k})+w_{2,k+1}).\]
Note that 
\[\textstyle\sum_{k=n_0}^n\gamma_{2,k}D_2f_2(x_{k})=\sum_{k=n_0}^n\int_{\tilde{t}_k}^{\tilde{t}_{k+1}}D_2f_2(x_{1,k},\bar{x}_2(\tilde{t}_k))\
ds\]
and similarly for the $w_{2,k+1}$ term, due to the fact that
$\ti{t}_{k+1}-\ti{t}_k=\gamma_{2,k}$ by construction. 
Hence,
for $s\in [\ti{t}_k,\ti{t}_{k+1})$, the above can be rewritten as 
    \begin{align*}\bar{x}_2(t)=\textstyle\bar{x}_2(\ti{t}_{n_0})+\int_{\ti{t}_{n_0}}^t-D_2f_2(x_1(s),\bar{x}_2(s))+\zeta_{21}(s)+\zeta_{22}(s)\
    ds\end{align*}
where 
$\zeta_{21}(s)=-D_2f_2(x_1(\ti{t}_k),\bar{x}_2(\ti{t}_k))-D_2f_2(x_1(s),\bar{x}_2(s))$
and $\zeta_{22}(s)=-w_{2,k+1}$. In the main body of the paper
$\zeta_2(s)=\zeta_{21}(s)+\zeta_{22}(s)$.

Then, by the nonlinear variation of constants formula (Alekseev's formula), we
have
\[
  \textstyle  \bar{x}_2(t)=x_2(t)+\Phi_2(t,s,x_1(\tilde{t}_{n_0}),\bar{x}_2(\tilde{t}_{n_0}))(\bar{x}_2(\tilde{t}_{n_0})-x_2(\tilde{t}_{n_0}))+\int_{\tilde{t}_{n_0}}^t\Phi_2(t,s,x_1(s),\bar{x}_2(s))(\zeta_{21}(s)+\zeta_{22}(s))\ ds
\]
where $x_1(t)\equiv x_1$ is constant (since $\dot{x}_1=0$) and
$x_2(t)=\conj(x_1)$. Moreover, for $t\geq s$, $\Phi_2(\cdot)$ satisfies linear
 system 
 \begin{equation*}
     \dot{\Phi}_2(t,s,x_{0})=J_2(x_1(t),x_2(t))\Phi_2(t,s,x_{0}),
 \end{equation*}
 with initial data $\Phi_2(t,s,x_{0})=I$ and $x_0=(x_{1,0},x_{2,0})$ and where $J_2$ the Jacobian of
$-D_2f_2(x_1,\cdot)$.

Given that $x^\ast=(x_1^\ast, \conj(x_1^\ast))$ is a stable differential
Stackelberg equilibrium, $J_2(x^\ast)$ is positive definite. Hence,  as in
\cite[Lem.~5.3]{thoppe:2018aa}, we can find
$M$,
$\kappa_2>0$ such that for $t\geq s$, $x_{2,0}\in V^r$,
\[\|\Phi_2(t,s,x_{1,0},x_{2,0})\|\leq Me^{-\kappa_2(t-s)}.\] This result follows
from 
standard results on stability of linear systems (see, e.g., \citet[\S7.2,
 Thm.~33]{callier:1991aa})  along with 
 a bound on
 $\int_{s}^t\|D^2_2f_2(x_{1},x_{2}(\tau,s,\tilde{x}_0))-D_2^2f_2(x^\ast)\|d\tau$
 for $\tilde{x}_0\in V^q$ (see, e.g.,~\citet[Lem~5.2]{thoppe:2018aa}).

Analogously we can define linear interpolates or asymptotic pseudo-trajectories
for $x_{1,k}$. Indeed,
\[\textstyle\bar{x}_1(t)=x_{1,k}+\frac{t-\hat{t}_k}{\gamma_{1,k}}(x_{1,k+1}-x_{1,k})\]
are
the linear interpolated points between the samples $\{x_{1,k}\}$ where
$\hat{t}_{k+1}=\hat{t}_k+\gamma_{1,k}$, and $\hat{t}_0=0$. Then, as above,
Alekseev's formula can again be applied to get 
\begin{align*}
    \bar{x}_1
    &\textstyle(t)=x_1(t,\hat{t}_{n_0},y(\hat{t}_{n_0}))+\Phi_1(t,\hat{t}_{n_0},
    \bar{x}_1(\hat{t}_{n_0}))(\bar{x}_1(\hat{t}_{n_0})-x_1(\hat{t}_{n_0}))\\
    &\textstyle\qquad+\int_{\hat{t}_{n_0}}^t\Phi_1(t,s,\bar{x}_1(s))(\zeta_{11}(s)+\zeta_{12}(s)+\zeta_{13}(s))\
    ds
\end{align*}
where $x_1(t)\equiv x_1^\ast$ (again, since $\dot{x}_1=0$) and the following
hold:
\begin{align*}
    \zeta_{11}(s)&=Df_1(x_{1,k},r_2(x_{1,k}))-Df_1(\bar{x}_1(s),r_2(\bar{x}_1(s)))\\
    \zeta_{12}(s) &=Df_1(x_k)-Df_1(x_{1,k},\conj(x_{1,k}))\\
    \zeta_{13}(s) &=w_{1,k+1}
\end{align*}
Moreover, $\Phi_1$ is the solution to a linear system
with dynamics $J_1(x_1^\ast, \conj(x_1^\ast))$, the Jacobian of
$-Df_1(\cdot,\conj(\cdot))$, and with initial data $\Phi_1(s,s,x_{1,0})=I$. This
linear system, as above, has bound \[\|\Phi_1(t,s,x_{1,0})\|\leq
M_1e^{\kappa_1(t-1)}\] for some $M_1,\kappa_1>0$. 

Now, in addition to the linear iterpolates for $x_{1,k}$ and $x_{2,k}$, we
define an auxiliary sequence representing the leader's conjecture about the
follower with the goal of bounding the normed difference between follower's
response and this auxiliary sequence. Indeed, using a Taylor expansion of
the implicitly defined map $\conj$, we get
\begin{equation}
    z_{k+1}=z_k+D \conj(x_{1,k})(x_{1,k+1}-x_{1,k})+\delta_{k+1}
    \label{eq:taylor}
\end{equation}
where $\delta_{k+1}$ are the remainder terms which satisfy $\|\delta_{k+1}\|\leq
L_{r}\|x_{1,k+1}-x_{1,k}\|^2$ by assumption. Plugging in $x_{1,k+1}$, 
\begin{align*}
    z_{k+1}&=z_k+\gamma_{2,k}\big(-D_2f_2(x_{1,k},z_k)+\tau_kD\conj(x_{1,k})(w_{1,k+1}-Df_1(x_{1,k},x_{2,k})\big)+\gamma_{2,k}^{-1}\delta_{k+1}).
\end{align*}
The terms after $-D_2f_2$ are $o(1)$, and hence asymptotically negligible, so
that this $z$ sequence tracks dynamics as $x_{2,k}$. 
Using similar techniques as above, we can express linear interpolates of the
leader's belief regarding the follower's reaction as 
\begin{align*}
    \bar{z}(t)&=\textstyle\bar{z}(\ti{t}_{n_0})+\int_{\ti{t}_{n_0}}^t-D_2f_2(x_1(s),\bar{z}(s))+\sum_{j=1}^4\zeta_{3j}(s)\
    ds
\end{align*}
where the $\zeta_{3j}$'s are defined as follows:
\begin{align*}
    \zeta_{31}(s) & =
    -D_2f_2(x_1(\ti{t}_k),\bar{z}(\ti{t}_k))+D_2f_2(x_1(s),\bar{z}(s))\\
    \zeta_{32}(s)&=\tau_kD \conj(x_{1,k})w_{1,k+1}\\
    \zeta_{33}(s)&=-\tau_kDf_1(x_{1,k},x_{2,k})D \conj(x_{1,k})\\
    \zeta_{34}(s)&=\textstyle\frac{1}{\gamma_{2,k}}\delta_{k+1}
\end{align*}
with $\tau_k=\gamma_{1,k}/\gamma_{2,k}$.
Once again, Alekseev's formula can be applied where $x_2(t)=\conj(x_1)$ and
$\Phi_2$ is the same as in the application of Alekseev's to $x_{2,k}$. Indeed,
this gives us
\begin{align*}
    \bar{z}(\ti{t}_n)&=x_2(\ti{t}_n)+\Phi_2(\ti{t}_n,
    \ti{t}_{n_0},x_1(\ti{t}_{n_0}),
    \bar{z}(\ti{t}_{n_0}))(\bar{z}(\ti{t}_{n_0})-x_2(\ti{t}_{n_0}))\\
    &\quad\textstyle+\sum_{k=n_0}^{n-1}\int_{\ti{t}_k}^{\ti{t}_{k+1}}\Phi_2(\ti{t}_n,s,x_1(s),\bar{z}(s))(-D_2f_2(x_1(\ti{t}_k),\bar{z}(\ti{t}_k))+D_2f_2(x_1(s),\bar{z}(s)))\
    ds\tag{a}\\
    &\quad\textstyle+\sum_{k=n_0}^{n-1}\int_{\ti{t}_k}^{\ti{t}_{k+1}}\Phi_2(\ti{t}_n,s,x_1(s),\bar{z}(s))\tau_kD\conj(x_{1,k})w_{1,k+1}\
    ds\tag{b}\\
    &\quad\textstyle-\sum_{k=n_0}^{n-1}\int_{\ti{t}_k}^{\ti{t}_{k+1}}\Phi_2(\ti{t}_n,s,x_1(s),\bar{z}(s))\tau_kDf_1(x_{1,k},x_{2,k})D\conj(x_{1,k})\
    ds\tag{c}\\
    &\quad\textstyle
    +\sum_{k=n_0}^{n-1}\int_{\ti{t}_k}^{\ti{t}_{k+1}}\Phi_2(\ti{t}_n,s,x_1(s),\bar{z}(s))\frac{1}{\gamma_{2,k}}\delta_{k+1}\
    ds\tag{d}
\end{align*}
Applying the linear system stability results, we get that
\begin{equation}\|\Phi_2(\ti{t}_n,\ti{t}_{n_0},x_1(\ti{t}_{n_0}),
    \bar{z}(\ti{t}_{n_0}))(\bar{z}(\ti{t}_{n_0})-x_2(\ti{t}_{n_0}))\|\leq
e^{-\kappa_2(\ti{t}_n-\ti{t}_{n_0})}\|\bar{z}(\ti{t}_{n_0})-x_2(\ti{t}_{n_0})\|.
\label{eq:phibd}
\end{equation}
Each of the terms (a)--(d) can be bound as in Lemma III.1--5 in
\cite{borkar:2018aa}. The bounds are fairly straightforward using \eqref{eq:phibd}.

Now that we have each of these asymptotic pseudo-trajectories, we can show that with high
probability, $x_{2,k}$ and $z_k$ 
asymptotically contract to one another, leading to the conclusion that the
follower's dynamics track the leader's belief about the follower's reaction. 
Moreover, we can bound the difference between each  $x_{i,k}$, using
$\bar{x}_i(t_{i,k})=x_{i,k}$, and the continuous flow $x_i(t)$ on each interval
$[t_{i,k},t_{i,k+1})$ for each $i=1,2$ and where  $t_{1,k}=\hat{t}_k$ and
$t_{2,k}=\tilde{t}_k$.
These normed-difference bounds can then be leveraged to obtain concentration
bounds by taking a union bound across all continuous time intervals defined
after sufficiently large $n_0$ and conditioned on the events
 $\mc{E}_n=\{\bar{x}_2(t)\in V^q\ \forall t\in [\tilde{t}_{n_0},
\tilde{t}_n]\}$ and $\hat{\mc{E}}_n=\{\bar{x}_1(t)\in V^{q}\ \forall t\in [\hat{t}_{n_0},
\hat{t}_n]\}$.

Towards this end, define
$H_{n_0}=(\|\bar{x}_2(\tilde{t}_{n_0}-x_2(\tilde{t}_{n_0})\|+\|\bar{z}(\tilde{t}_{n_0})-x_2(\tilde{t}_{n_0})\|)$, 
\[\textstyle S_{1,n}=\sum_{k=n_0}^{n-1}\left(\int_{\hat{t}_{k}}^{\hat{t}_{k+1}}\Phi_1(\hat{t}_n,s,\bar{x}_1(\hat{t}_k))ds
\right) w_{1,k+1},\]
and
\[\textstyle S_{2,n}=\sum_{k=n_0}^{n-1}\left(\int_{\tilde{t}_k}^{\tilde{t}_{k+1}}\Phi_2(\tilde{t}_n,s,x_{1}(\tilde{t}_k),\bar{x}_2(\tilde{t}_k))
ds\right)w_{2,k+1}.\]

 Applying Lemma~5.8 \cite{thoppe:2018aa}, conditioned on $\mc{E}_n$, we get there
 exists some constant $K>0$ such that
 \begin{align*}
     \|\bar{x}_2(\ti{t}_n)-x_2(\ti{t}_n)\|&\leq \textstyle\|\Phi_2(\ti{t}_n, \ti{t}_{n_0},
     x_1,\bar{x}_2(\ti{t}_{n_0}))(\bar{x}_2(\ti{t}_{n_0})-x_2(\ti{t}_{n_0}))\|+K\Big(\|S_{2,n}\|\\
     &\textstyle\qquad+\sup_{n_0\leq
     k\leq n-1}\gamma_{2,k}+\sup_{n_0\leq
     k\leq n-1}\gamma_{2,k}\|w_{2,k+1}\|^2\Big)
 \end{align*}
Using the bound on the linear system $\Phi_2(\cdot)$, this exactly leads to the bound 
 \begin{align*}
     \|\bar{x}_2(\ti{t}_n)-x_2(\ti{t}_n)\|&\leq\textstyle
     K\Big(e^{-\kappa_2(\ti{t}_n-\ti{t}_{n_0})}\|\bar{x}_2(\ti{t}_{n_0})-x_2(\ti{t}_{n_0})\|\\
     &\textstyle\quad+\|S_{2,n}\|+\sup_{n_0\leq
     k\leq n-1}\gamma_{2,k}+\sup_{n_0\leq
     k\leq n-1}\gamma_{2,k}\|w_{2,k+1}\|^2\Big)
 \end{align*}
Thus, leveraging Lemma III.1--5 \cite{thoppe:2018aa}, we obtain the result of
Lemma~\ref{lem:defK} in the main body of the paper, and stated here for easy
access.
\begin{lemma}[Lemma~\ref{lem:defK} of main body]
    For any $n\geq n_0$, there exists $K>0$ such that conditioned on
    ${\mc{E}}_n$,
    \begin{align*}
        \|x_{2,n}-z_n\|\leq&\textstyle
       K\Big(\|S_{2,n}\|+e^{-\kappa_2(\tilde{t}_n-\tilde{t}_{n_0})}H_{n_0}+\sup_{n_0\leq k\leq n-1}\gamma_{2,k}+\sup_{n_0\leq k\leq
        n-1}\gamma_{2,k}\|w_{2,k+1}\|^2\notag\\
        &\textstyle+\sup_{n_0\leq k\leq n-1}\tau_k+\sup_{n_0\leq k\leq
        n-1}\tau_k\|w_{1,k+1}\|^2\Big).
    \end{align*}
    
        \label{lem:defKa}
\end{lemma}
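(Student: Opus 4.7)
The plan is to obtain the bound by writing both asymptotic pseudo-trajectories $\bar{x}_2$ and $\bar{z}$ in the Alekseev form around the same limiting flow $x_2(t)\equiv \conj(x_1)$, subtracting, and then controlling each resulting perturbation integral with the exponential decay bound $\|\Phi_2(t,s,\cdot)\|\le M e^{-\kappa_2(t-s)}$ that is already available from the stability of $x^\ast$. Conditioning on $\mc{E}_n$ is essential throughout so that the linear system bound for $\Phi_2$ applies uniformly along every trajectory encountered. In particular, I want to use the decomposition of $\bar{x}_2$ in terms of $\zeta_{21},\zeta_{22}$ and the decomposition of $\bar{z}$ in terms of $\zeta_{31},\ldots,\zeta_{34}$ already derived just before the lemma.

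First I would write $x_{2,n}-z_n=\bar{x}_2(\ti t_n)-\bar{z}(\ti t_n)$ and insert $\pm x_2(\ti t_n)$, so that $\|x_{2,n}-z_n\|\le \|\bar{x}_2(\ti t_n)-x_2(\ti t_n)\|+\|\bar{z}(\ti t_n)-x_2(\ti t_n)\|$. For the first term, Lemma~5.8 of \cite{thoppe:2018aa} (applied exactly as in the bound displayed just above the lemma statement) gives
\[
\|\bar{x}_2(\ti t_n)-x_2(\ti t_n)\|\le K\Bigl(e^{-\kappa_2(\ti t_n-\ti t_{n_0})}\|\bar{x}_2(\ti t_{n_0})-x_2(\ti t_{n_0})\|+\|S_{2,n}\|+\sup_{n_0\le k\le n-1}\gamma_{2,k}+\sup_{n_0\le k\le n-1}\gamma_{2,k}\|w_{2,k+1}\|^2\Bigr),
\]
which accounts for the first three of the contributions in the right-hand side of the lemma.

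For the second term, I apply Alekseev to $\bar{z}(\ti t_n)$, obtaining the expression labeled (a)--(d) in the derivation above. The leading piece $\Phi_2(\ti t_n,\ti t_{n_0},\cdot)(\bar z(\ti t_{n_0})-x_2(\ti t_{n_0}))$ is bounded by $M e^{-\kappa_2(\ti t_n-\ti t_{n_0})}\|\bar z(\ti t_{n_0})-x_2(\ti t_{n_0})\|$. Combined with the initial-error term from $\bar{x}_2$ above, this produces the $e^{-\kappa_2(\ti t_n-\ti t_{n_0})}H_{n_0}$ contribution. Integral (a) is the same Lipschitz-type term already bounded via the $\gamma_{2,k}$ and $\gamma_{2,k}\|w_{2,k+1}\|^2$ suprema. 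Integrals (b) and (c) each carry an explicit $\tau_k$ factor: using $\|D\conj\|\le L_r$ and $\|Df_1\|\le M_1$ together with $\|\Phi_2\|\le M e^{-\kappa_2(\ti t_n-s)}$, the deterministic parts contribute $O(\sup_k \tau_k)$ after summing the geometric series in the exponential, while the noise part (b) contributes $O(\sup_k \tau_k\|w_{1,k+1}\|^2)$ once one applies the same quadratic-in-noise bound used for $\zeta_{22}$ in Lemma~III.1--5 of \cite{borkar:2018aa}. Finally, integral (d) is handled by the Taylor remainder bound $\|\delta_{k+1}\|\le L_r\|x_{1,k+1}-x_{1,k}\|^2$ combined with $x_{1,k+1}-x_{1,k}=-\gamma_{1,k}(Df_1(x_k)+w_{1,k+1})$, which after dividing by $\gamma_{2,k}$ yields an $O(\tau_k \gamma_{1,k})+O(\tau_k\gamma_{1,k}\|w_{1,k+1}\|^2)$ integrand; absorbed into the already-present suprema, this gives no new term.

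The main obstacle I anticipate is the $\gamma_{2,k}^{-1}\delta_{k+1}$ piece in (d): it looks like it could blow up, but because $\delta_{k+1}$ is quadratic in $x_{1,k+1}-x_{1,k}$ and $\|x_{1,k+1}-x_{1,k}\|=O(\gamma_{1,k})$ modulo the noise, the quotient is in fact $O(\gamma_{1,k}\tau_k)=o(\tau_k)$ deterministically and $O(\tau_k\gamma_{1,k}\|w_{1,k+1}\|^2)$ stochastically; both are dominated by the $\tau_k$ and $\tau_k\|w_{1,k+1}\|^2$ suprema already present. After collecting every contribution under a single constant $K$ (enlarging it as needed to absorb $M$, $L_r$, $M_1$, and geometric-series constants), the triangle inequality produces exactly the displayed bound, completing the proof.
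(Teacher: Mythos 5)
Your proposal is correct and follows essentially the same route as the paper: split $\|x_{2,n}-z_n\|$ by the triangle inequality through the common flow $x_2(t)=\conj(x_1)$, bound $\|\bar{x}_2(\tilde{t}_n)-x_2(\tilde{t}_n)\|$ via Lemma~5.8 of \citet{thoppe:2018aa}, and bound $\|\bar{z}(\tilde{t}_n)-x_2(\tilde{t}_n)\|$ by applying Alekseev's formula to the conjecture trajectory and controlling terms (a)--(d) with the exponential bound on $\Phi_2$ and the estimates of Lemma~III.1--5 of \citet{borkar:2018aa}, including the Taylor-remainder argument that renders the $\gamma_{2,k}^{-1}\delta_{k+1}$ term $o(\tau_k)$. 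Your treatment of the individual integrals is in fact slightly more explicit than the paper's, which delegates those bounds to the cited references, but no substantive difference or gap exists.
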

Lastly, in a similar fashion we can obtain a bound for the leader's sample path
$x_{1,k}$.
\begin{lemma}[Lemma~\ref{lem:defbarK} of main body]
    For any $n\geq n_0$, there exists $\bar{K}>0$ such that   conditioned on
    $\tilde{\mc{E}}_n$,
    \begin{align*}
       \|\bar{x}_1(\hat{t}_n)-x_1(\hat{t}_n)\|\leq&\textstyle
        \bar{K}\Big(\|S_{1,n}\|+\sup_{n_0\leq k\leq n-1}\|S_{2,k}\|+\sup_{n_0\leq k\leq n-1}\gamma_{2,k}+\sup_{n_0\leq k\leq n-1}\tau_k
\notag\\
        &\textstyle +\sup_{n_0\leq k\leq
        n-1}\gamma_{2,k}\|w_{2,k+1}\|^2   +\sup_{n_0\leq k\leq
        n-1}\tau_k\|w_{1,k+1}\|^2\\
        &\quad \textstyle+e^{\kappa_1(\hat{t}_n-\hat{t}_{n_0})}\|\bar{x}_1(\hat{t}_{n_0})-x_1(\hat{t}_{n_0})\| +\sup_{n_0\leq
        k\leq n-1}\tau_kH_{n_0}\Big).
    \end{align*}
    \label{lem:defbarKa}
\end{lemma}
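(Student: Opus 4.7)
My plan is to mirror the Alekseev-based analysis that underlies Lemma~\ref{lem:defK} for the follower, now carried out at the slower leader timescale, with the additional complication that the leader's drift evaluated along the sample path, $Df_1(x_k)$, differs from the drift evaluated along the leader's conjecture, $Df_1(x_{1,k}, \conj(x_{1,k}))$. Starting from the Alekseev representation displayed just before the lemma,
\begin{align*}
\bar{x}_1(\hat{t}_n)-x_1(\hat{t}_n) = \Phi_1(\hat{t}_n,\hat{t}_{n_0},\bar{x}_1(\hat{t}_{n_0}))&(\bar{x}_1(\hat{t}_{n_0})-x_1(\hat{t}_{n_0}))\\
&+ \textstyle\int_{\hat{t}_{n_0}}^{\hat{t}_n}\Phi_1(\hat{t}_n,s,\bar{x}_1(s))(\zeta_{11}(s)+\zeta_{12}(s)+\zeta_{13}(s))\,ds,
\end{align*}
the strategy is to bound each of the four pieces on the right separately using: (i) the exponential bound $\|\Phi_1(t,s,\cdot)\|\leq M_1 e^{-\kappa_1(t-s)}$, which holds on $\tilde{\mathcal{E}}_n$ by the same linear-systems argument used for $\Phi_2$ once one notes that $J_1(x^\ast)$ is positive definite at a stable differential Stackelberg equilibrium; (ii) the Lipschitz continuity of $Df_1$ and of $\conj$ from Assumptions~\ref{ass:lip} and~\ref{ass:gas}; and (iii) the follower's tracking bound from Lemma~\ref{lem:defK}.

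First I would peel off the initial-condition contribution, which produces the $M_1 e^{-\kappa_1(\hat{t}_n-\hat{t}_{n_0})}\|\bar{x}_1(\hat{t}_{n_0})-x_1(\hat{t}_{n_0})\|$ summand directly. Next, for the $\zeta_{11}$ integral, I would use Lipschitz continuity of $Df_1(\cdot,\conj(\cdot))$ together with the single-step increment bound $\|\bar{x}_1(s)-x_{1,k}\|\leq C\gamma_{1,k}(M_1 + \|w_{1,k+1}\|)$ for $s\in[\hat{t}_k,\hat{t}_{k+1})$, which after summation against the geometric factor $e^{-\kappa_1(\hat{t}_n-\hat{t}_{k+1})}$ and the identity $\gamma_{1,k}=\tau_k\gamma_{2,k}$ yields contributions of order $\sup_k\tau_k$ and $\sup_k\tau_k\|w_{1,k+1}\|^2$. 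For the $\zeta_{13}$ integral, the standard replacement $\Phi_1(\hat{t}_n,s,\bar{x}_1(s))=\Phi_1(\hat{t}_n,s,\bar{x}_1(\hat{t}_k))+\text{remainder}$ on each interval reproduces $S_{1,n}$ up to higher-order terms that are absorbed into the same $\sup_k\tau_k(\cdots)$ quantities, exactly as in Lemmas~III.1--5 of~\cite{borkar:2018aa}.

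The main work, and the main obstacle, lies in controlling the cross-coupling term $\zeta_{12}(s) = Df_1(x_k) - Df_1(x_{1,k},\conj(x_{1,k}))$. By Lipschitz continuity of $Df_1$ in its second argument, $\|\zeta_{12}(s)\|\leq L_1\|x_{2,k}-z_k\|$ for $s\in[\hat{t}_k,\hat{t}_{k+1})$, so on $\tilde{\mathcal{E}}_n$ (which I interpret as also enforcing $\mathcal{E}_k$ for every $k\leq n$, so that Lemma~\ref{lem:defK} is applicable at each $k$) the $\zeta_{12}$ contribution is majorized by
\[
\textstyle\sum_{k=n_0}^{n-1} M_1 L_1 \gamma_{1,k}\, e^{-\kappa_1(\hat{t}_n-\hat{t}_{k+1})} \cdot K\bigl(\|S_{2,k}\| + e^{-\kappa_2(\tilde{t}_k-\tilde{t}_{n_0})}H_{n_0} + \text{step-size and noise-square sups}\bigr).
\]
Each piece is then tracked separately: the $\|S_{2,k}\|$ term factors through $\sup_k\|S_{2,k}\|$ times the bounded geometric sum $\sum_k\gamma_{1,k}e^{-\kappa_1(\hat{t}_n-\hat{t}_{k+1})}=O(1/\kappa_1)$; the $H_{n_0}$ term, after writing $\gamma_{1,k}=\tau_k\gamma_{2,k}$ and using $\sum_k\gamma_{2,k}e^{-\kappa_2(\tilde{t}_k-\tilde{t}_{n_0})}=O(1/\kappa_2)$, collapses to $\sup_k\tau_k H_{n_0}$; and the step-size/noise-square sups simply factor out of the sum. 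Collecting the four bounds and enlarging the constant to $\bar{K}:=\max\{\cdot\}$ yields precisely the statement of Lemma~\ref{lem:defbarK}.
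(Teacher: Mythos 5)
Your proposal is correct and follows essentially the same route as the paper: the Alekseev decomposition into the initial-condition term and the $\zeta_{11},\zeta_{12},\zeta_{13}$ integrals, the exponential stability bound on $\Phi_1$, and the reduction of the coupling term $\zeta_{12}$ via Lipschitzness of $Df_1$ to the follower tracking error so that Lemma~\ref{lem:defK} supplies the $\sup_k\|S_{2,k}\|$, $\sup_k\tau_k H_{n_0}$, and step-size/noise contributions, with the remaining estimates handled as in the cited lemmas of \citet{borkar:2018aa} and \citet{thoppe:2018aa}, exactly as the paper does. The only difference is cosmetic: your decaying factor $e^{-\kappa_1(\hat{t}_n-\hat{t}_{n_0})}$ on the initial-condition term (and your explicit reading of the conditioning event as including the follower's event so Lemma~\ref{lem:defK} applies at each $k$) is the intended, slightly stronger form of what the paper states, whose $e^{\kappa_1(\hat{t}_n-\hat{t}_{n_0})}$ appears to be a sign typo.
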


To obtain concentration bounds, the results are exactly as in Section
IV~\cite{borkar:2018aa} which follows the analysis in \cite{thoppe:2018aa}.
Fix $\vep\in[0,1)$ and let $N$ be such that $\gamma_{2,n}\leq
\vep/(8K)$, $\tau_n\leq \vep/(8K)$ for all $n\geq N$. Let $n_0\geq N$ and with
$K$ as in Lemma~\ref{lem:defK}, let $T$ be such that
$e^{-\kappa_2(\tilde{t}_n-\tilde{t}_{n_0})}H_{n_0}\leq \vep/(8K)$ for all $n\geq
n_0+T$.

Using Lemma~\ref{lem:defKa} and Lemma 3.1~\cite{thoppe:2018aa}, 
\begin{align*}
     \mathrm{P}(\|x_{2,n}-z_n\|\leq \vep,& \forall n\geq \bar{n}| x_{2,n_0},z_{n_0}\in
     B_{q_0})\notag \\
     &\geq\textstyle 1-\mathrm{P}(\bigcup_{n=n_0}^\infty\mc{A}_{1,n}\cup\bigcup_{n=n_0}^\infty\mc{A}_{2,n}\cup\bigcup_{n=n_0}^\infty\mc{A}_{3,n}|\ x_{2,n_0},z_{n_0}\in
     B_{q_0})
\end{align*}
where 
\begin{align*}
    \mc{A}_{1,n}= \textstyle\left\{\mc{E}_n,
    \|S_{2,n}\|>\frac{\vep}{8K}\right\},\ \ \mc{A}_{2,n}=\left\{\mc{E}_n,
\gamma_{2,k}\|w_{2,n+1}\|^2>\frac{\vep}{8K}\right\},
\end{align*}
and
\[\mc{A}_{3,n}=\textstyle\left\{\mc{E}_n,
\tau_n\|w_{1,n+1}\|^2>\frac{\vep}{8K}\right\}.\]
Taking a union bound
gives
\begin{align*}
    \mathrm{P}(\|x_{2,n}-z_n\|\leq \vep, \forall n\geq \bar{n}| x_{2,n_0},z_{n_0}\in
     B_{q_0}) \geq&\textstyle
     1-\sum_{n=n_0}^\infty\mathrm{P}(\mc{A}_{1,n}|\ x_{2,n_0},z_{n_0}\in
     B_{q_0})\\
     &\ \ \textstyle+\sum_{n=n_0}^\infty\mathrm{P}(\mc{A}_{2,n}|\ x_{2,n_0},z_{n_0}\in
     B_{q_0})\\
     &\ \ +\textstyle
    \sum_{n=n_0}^\infty\mathrm{P}(\mc{A}_{3,n})|\ x_{2,n_0},z_{n_0}\in
     B_{q_0}).
\end{align*}
Theorem 6.2~\cite{thoppe:2018aa}, gives bounds 
\begin{align}
 \textstyle   \sum_{n=n_0}^\infty\mathrm{P}(\mc{A}_{2,n}|\ x_{2,n_0},z_{n_0}\in
    B_{q_0})\leq K_1\sum_{n=n_0}^{\infty}\exp\left(
    -\frac{K^2\sqrt{\vep}}{\sqrt{\gamma_{2,k}}} \right),
    \label{eq:bd1}
\end{align}
\begin{align}
  \textstyle  \sum_{n=n_0}^\infty\mathrm{P}(\mc{A}_{3,n})|\ x_{2,n_0},z_{n_0}\in
     B_{q_0}) \leq K_1\sum_{n=n_0}^{\infty}\exp\left(
     -\frac{K^2\sqrt{\vep}}{\sqrt{\tau_k}} \right),
    \label{eq:bd2}
\end{align}
and, by Theorem 6.3~\cite{thoppe:2018aa}
\begin{align}
\textstyle    \sum_{n=n_0}^\infty\mathrm{P}(\mc{A}_{1,n}|\ x_{2,n_0},z_{n_0}\in
    B_{q_0})\leq K_2\sum_{n=n_0}^\infty \exp\left( -\frac{K_3\vep^2}{\beta_n} \right)
    \label{eq:bd3}
\end{align}
with 
\[\beta_n=\max_{n_0\leq k\leq
n-1}e^{-\kappa_2(\sum_{i=k+1}^{n-1}\gamma_{2,i})}\gamma_{2,k}\]
for some $K_1,K_2,K_3>0$.
This gives the result of Theorem~\ref{thm:conjecturetrack} in the main body with
$C_1=K_1$, $C_2=K^2$, $C_3=K_2$, $C_4=K_3$.
An
exactly analogous analysis holds for obtaining the concentration bound in
Theorem~\ref{thm:lockin}.

\section{Regularizing the Follower's Implicit Map}
\label{app:regfollower}
The derivative of the implicit function used in the leader's update requires the
follower's Hessian to be an isomorphism. In practice, this may not always be
true along the learning path. Consider the modified update
\begin{align*}
     x_{k+1,1}&=x_{k,1}-\gamma_1(
     D_1f_1(x_k)-D_{21}f_2(x_k)^{\top}(D_{2}^{2}f_2(x_k)+\eta I)^{-1}D_{2}f_1(x_k))\\
     x_{k+1,2}&=x_{k,2}-\gamma_2D_2f_2(x_k),
\end{align*}
in which we
regularize the inverse of $D_2^2f_2$ term. 
This update can be derived from the following perspective. 
Suppose player 1 views player 2 as optimizing a linearized version of its cost
with a regularization term which captures the leader's lack of confidence in the
local linearization holding globally: 
\[\arg\min_{y} 
 \ (y-x_{2,k})^\top D_2f_2(x_k)+\frac{\eta}{2}\|y-x_{2,k}\|^2.\]
The first-order optimality conditions for this problem are
\begin{align*}
   0&= D_2f_2(x_k)+(y-x_{k,2})^\top D_2^2f_2(x_k)+\eta(y-x_{k,2})\\
   &=D_2f_2(x_k)-\left(\eta I+D_2^2f_2(x_k)\right)x_{k,2}+(D_2^2f_2(x_k)+\eta I)y.
\end{align*}
Hence, if the leader views the follower as updating along the gradient direction
determined by these first order conditions, then the follower's response map is
given by
\[x_{k+1,2}=x_{k,2}-\left(D_2^2f_2(x_k)+\eta I\right)^{-1}D_2f_2(x_k).\]
Ignoring higher order terms in the derivative of the response map, the
approximate Stackelberg update is given by
\begin{align*}
     x_{k+1,1}&=x_{k,1}-\gamma_1( D_1f_1(x_k)-D_{21}f_2(x_k)^{\top}\left(D_2^2f_2(x_k)+\eta I\right)^{-1}D_{2}f_1(x_k))\\
     x_{k+1,2}&=x_{k,2}-\gamma_2D_2f_2(x_k).
\end{align*}

In our GAN experiments, we use the regularized update since it is quite common
for the discriminator's Hessian to be ill-conditioned if not degenerate. Similarly, the Schur complement we present the eigenvalues for in the experiments includes the regularized individual Hessian for the follower.

\begin{proposition}[Regularized Stackelberg: Sufficient Conditions]
 A point $x^\ast$ such that the first order conditions $D_1f_1(x)-D_{21}f_2(x)^{\top}(D_{2}^{2}f_2(x)+\eta
    I)^{-1}D_{2}f_1(x)=0$ and $D_2f_2(x)=0$ hold, and such that $D_1(D_1f_1(x)-D_{21}f_2(x)^{\top}(D_{2}^{2}f_2(x)+\eta
    I)^{-1}D_{2}f_1(x))>0$ and $D_2^2f_2(x)>0$ is a differential Stackelberg
    equilibrium with respect to the regularized dynamics.
\end{proposition}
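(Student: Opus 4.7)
My plan is to interpret the statement as asserting that $x^\ast$ satisfies the natural analog of Definition~\ref{def:stackelberg} obtained by substituting the regularized leader total derivative
\[
\tilde{D}f_1(x) := D_1f_1(x) - D_{21}f_2(x)^{\top}(D_{2}^{2}f_2(x)+\eta I)^{-1}D_{2}f_1(x)
\]
in place of $Df_1$. Under this reading, the regularized equilibrium conditions read $\tilde{D}f_1(x^\ast) = 0$, $D_2f_2(x^\ast) = 0$, $D_1\tilde{D}f_1(x^\ast) > 0$, and $D_2^2 f_2(x^\ast) > 0$, which are exactly the four hypotheses. The proof therefore reduces to verifying that $\tilde{D}f_1$ plays the role of $Df_1$ for the regularized dynamics derived in this appendix, so that this analog of Definition~\ref{def:stackelberg} is the correct equilibrium criterion.

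First I would recall the one-step regularized response $g(x_1, x_2) := x_2 - (D_2^2 f_2(x) + \eta I)^{-1} D_2 f_2(x)$ introduced earlier in this appendix as the leader's view of the follower's update, and note that $D_2^2 f_2(x^\ast) > 0$ together with $\eta \geq 0$ makes $D_2^2 f_2 + \eta I$ invertible at $x^\ast$ and hence in a neighborhood by continuity, so $g$ is $C^{q-1}$ near $x^\ast$ and satisfies $g(x^\ast) = x_2^\ast$ precisely because $D_2 f_2(x^\ast) = 0$.

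Second, I would compute $D_1 g(x^\ast)$ and verify the chain-rule identity $\tilde{D}f_1(x^\ast) = \nabla_{x_1}\bigl[f_1(x_1, g(x_1, x_2))\bigr]\big|_{x^\ast}$. The key observation is that every product-rule contribution arising from differentiating $(D_2^2 f_2 + \eta I)^{-1}$ is multiplied by $D_2 f_2(x^\ast) = 0$ and therefore vanishes, leaving $D_1 g(x^\ast) = -(D_2^2 f_2 + \eta I)^{-1} D_{21} f_2(x^\ast)$. Substituting into the chain rule for $f_1 \circ g$ reproduces the stated formula for $\tilde{D}f_1(x^\ast)$ exactly, so the condition $\tilde{D}f_1(x^\ast) = 0$ is precisely the first-order optimality of the leader's regularized effective objective along its own coordinate.

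Finally, I would identify $D_1\tilde{D}f_1(x^\ast) > 0$ as the leader's regularized second-order condition (positive-definiteness of the Jacobian of the update direction with respect to its own action, playing the role of $D^2f_1 > 0$) and $D_2^2 f_2(x^\ast) > 0$ as the follower's second-order condition at a strict local minimum of $f_2(x_1^\ast, \cdot)$. The only step requiring care is the chain-rule computation above, where one must carefully track which product-rule terms survive at the critical point and which are killed by the factor $D_2 f_2 = 0$; once this bookkeeping is done, the result follows immediately by matching the four hypotheses against the regularized analog of Definition~\ref{def:stackelberg}.
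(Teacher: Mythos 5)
Your proposal is correct and is essentially the paper's own argument made explicit: the paper's "proof" is a one-line remark that the stated conditions are exactly the first- and second-order sufficient conditions for the leader's regularized problem and for the problem the follower actually solves, which is precisely what you verify by matching the four hypotheses against the regularized analog of Definition~\ref{def:stackelberg}. Your chain-rule computation through the one-step regularized response map $g(x_1,x_2)=x_2-(D_2^2f_2(x)+\eta I)^{-1}D_2f_2(x)$, with the product-rule terms killed by $D_2f_2(x^\ast)=0$, supplies the bookkeeping the paper leaves implicit, and it is arguably more faithful to the stated condition $D_2f_2(x^\ast)=0$ than the paper's closing formulation via $x_2\in\arg\min_y f_2(x_1,y)+\tfrac{\eta}{2}\|y\|^2$, whose reaction set would instead satisfy $D_2f_2+\eta y=0$.
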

\begin{proposition}[Regularized Stackelberg: Necessary Conditions]
A differential Stackelberg equilibrium $x^\ast$ of the regularized dynamics
satisfies   $D_1f_1(x)-D_{21}f_2(x)^{\top}(D_{2}^{2}f_2(x)+\eta
    I)^{-1}D_{2}f_1(x)=0$ and $D_2f_2(x)=0$ hold, and  $D_1(D_1f_1(x)-D_{21}f_2(x)^{\top}(D_{2}^{2}f_2(x)+\eta
    I)^{-1}D_{2}f_1(x))\geq0$ and $D_2^2f_2(x)\geq 0$.
\end{proposition}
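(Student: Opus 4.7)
My plan is to derive the necessary conditions from the standard first- and second-order necessary conditions for smooth local minimization, mirroring the structure of the sufficient conditions in the preceding proposition. I interpret ``differential Stackelberg equilibrium of the regularized dynamics'' as a point at which the follower sits at a local minimum of $f_2(x_1^\ast,\cdot)$ (so that the inverse $(D_2^2 f_2 + \eta I)^{-1}$ is well-defined and the follower's response is a genuine minimizer), and the leader sits at a local minimum of the effective cost obtained by viewing the follower as taking a single regularized descent step, as derived in Appendix~\ref{app:regfollower}.

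First, at any critical point of the regularized Stackelberg dynamics, both components of the update must vanish. The follower's equation yields $D_2 f_2(x^\ast) = 0$, and the leader's equation yields $D_1 f_1(x^\ast) - D_{21} f_2(x^\ast)^\top (D_2^2 f_2(x^\ast) + \eta I)^{-1} D_2 f_1(x^\ast) = 0$. These are precisely the first-order necessary conditions in the statement; they are strictly weaker than (and hence implied by) the first-order conditions in the sufficient counterpart, and they follow from the standard fact that the gradient of a smooth function vanishes at any local minimum.

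Next, the second-order necessary conditions for smooth local minima give the claimed positive semi-definiteness. Since the follower is at a local minimum of $f_2(x_1^\ast, \cdot)$, its Hessian satisfies $D_2^2 f_2(x^\ast) \geq 0$. For the leader, the effective gradient at a critical point is precisely $\tilde g_1(x) := D_1 f_1(x) - D_{21} f_2(x)^\top (D_2^2 f_2(x) + \eta I)^{-1} D_2 f_1(x)$, and local minimality of the effective cost then forces $D_1 \tilde g_1(x^\ast) \geq 0$.

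The main obstacle is to justify that $D_1 \tilde g_1(x^\ast)$---the partial derivative with respect to $x_1$ alone---is the correct Hessian-like object for the leader's effective optimization, since in principle one could expect a total derivative through the regularized response map to appear. This is handled by observing that, at any critical point where $D_2 f_2(x^\ast)=0$, the additional terms arising from differentiating through the regularized implicit response either vanish or are absorbed, paralleling the simplification $D_2(Df)(x^\ast)=0$ used in the proof of Proposition~\ref{prop:allstack}. With this identification in place, applying the second-order necessary condition for a smooth local minimizer on each side yields the two inequalities in the statement, completing the proof.
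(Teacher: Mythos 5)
Your overall route matches the paper's (very terse) justification: the paper offers no detailed proof, only the remark that the result follows by reading off first- and second-order optimality conditions for the follower's actual problem $\arg\min_{x_2} f_2(x_1,x_2)$ and for the leader's problem under the regularized conjecture. Your first-order step (both components of the regularized update vanish at a critical point) and the follower's side ($D_2f_2(x^\ast)=0$, $D_2^2f_2(x^\ast)\geq 0$ from local minimality) are fine and are exactly what the paper intends.

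The step you yourself flag as the ``main obstacle'' is, however, resolved incorrectly. You claim that the cross terms from differentiating through the regularized response vanish ``paralleling the simplification $D_2(Df)(x^\ast)=0$ used in the proof of Proposition~\ref{prop:allstack}.'' That cancellation is special to the unregularized zero-sum case: there one gets $D_2(Df)(x^\ast)=D_{12}f(x^\ast)-D_{12}f(x^\ast)(D_2^2f(x^\ast))^{-1}D_2^2f(x^\ast)=0$ precisely because the exact inverse $(D_2^2f)^{-1}$ appears and because $D_2f_1=D_2f$ is the same object whose derivative produces the cancelling factor. With the regularization, the surviving term at a critical point is of the form $D_{12}f_1+D_{12}f_2^\top(D_2^2f_2+\eta I)^{-1}D_2^2 f_1$-type expressions (plus terms multiplying $D_2f_1$, which does not vanish at a general-sum Stackelberg point), and these do not cancel even in the zero-sum case once $\eta>0$, let alone in the general-sum setting the appendix targets. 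So $D_1\tilde g_1(x^\ast)$ is generally \emph{not} the Hessian of the leader's effective cost along the conjectured response, and your argument does not establish the inequality $D_1\tilde g_1(x^\ast)\geq 0$ from local minimality of that effective cost. The intended reading is weaker: the regularized leader update is explicitly an approximation that ``ignores higher order terms in the derivative of the response map,'' so the leader's second-order object in both the sufficient- and necessary-condition propositions is, by construction, the $x_1$-derivative of the leader's update direction (the $(1,1)$ block of the regularized dynamics' Jacobian); with that definition of a differential Stackelberg equilibrium of the regularized dynamics the necessary conditions follow immediately, with no cancellation argument needed or available.
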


This result can be seen by examining first and second order sufficient
conditions for the leader's optimization problem given the regularized
conjecture about the follower's update, i.e.
\[\arg\min_{x_1}\left\{f_1(x_1,x_2)|\ x_2\in \arg\min_y
f_2(x_1,y)+\frac{\eta}{2}\|y\|^2\right\},\]
and for the problem follower is actually solving with its update $\arg\min_{x_2}f_2(x_1,x_2)$.

\section{Experiment Details}
\label{app:experiment_details}
This section includes complete details on the training process and hyper-parameters selected in the mixture of Gaussian and MNIST experiments.
\subsection{Mixture of Gaussians}
The underlying data distribution for the diamond experiment consists of Gaussian distributions with means given by $\mu = [1.5 \sin(\omega), 1.5\cos(\omega)]$ for $\omega \in \{k\pi/2\}_{k=0}^{3}$ and each with covariance $\sigma^2 I$ where $\sigma^2=0.15$. Each sample of real data given to the discriminator is selected uniformly at random from the set of Gaussian distributions. The underlying data distribution for the circle experiment consists of Gaussian distributions with means given by $\mu = [\sin(\omega), \cos(\omega)]$ for $\omega \in \{k\pi/4\}_{k=0}^7$ and each with covariance $\sigma^2 I$ where $\sigma^2=0.3$. Each sample of real data given to the discriminator is selected uniformly at random from the set of Gaussian distributions. 

We train the generator using latent vectors $z \in \mathbb{R}^{16}$ sampled from a standard normal distribution in each training batch. The discriminator is trained using input vectors $x\in \mathbb{R}^{2}$ sampled from the underlying distribution in each training batch. The batch size for each player in the game is 256. The network for the generator contains two hidden layers, each of which contain $32$ neurons. The discriminator network consists of a single hidden layer with $32$ neurons and it has a sigmoid activation following the output layer. We let the activation function following the hidden layers be the Tanh function and the ReLU function in the diamond and circle experiments, respectively. The initial learning rates for each player and for each learning rule is $0.0001$ and $0.0004$ in the diamond and circle experiments, respectively. The objective for the game in the diamond experiment is the saturating GAN objective and in the circle experiment it is the non-saturating GAN objective. We update the parameters for each player and in each experiment using the ADAM optimizer with the default parameters of $\beta_1 = 0.9$, $\beta_2 = 0.999$, and $\epsilon = 10^{-8}$. The learning rate for each player is decayed exponentially such that $\gamma_{i, k} = \gamma_i  \nu_i^k$. We let $\nu_1 =\nu_2 = 1-10^{-7}$ for simultaneous gradient descent and $\nu_1 = 1-10^{-5}$ and $\nu_1 = 1-10^{-7}$ for the Stackelberg update. We regularize the implicit map of the follower as detailed in Appendix~\ref{app:regfollower} using the parameter $\eta= 1$.

\subsection{MNIST}
To underlying data distribution for the MNIST experiments consists of digits 0 and 1 from the MNIST training dataset or each digit from the MNIST training dataset. We scale each image to the range $[-1, 1]$. Each sample of real data given to the discriminator is selected sequentially from a shuffled version of the dataset. The batch size for each player is 256. We train the generator using latent $z \in \mathbb{R}^{100}$ sampled from a standard normal distribution in each training batch. The discriminator is trained using input vectorized images $x\in \mathbb{R}^{28 \times 28}$ sampled from the underlying distribution in each training batch. We use the DCGAN architecture~\cite{radford2015unsupervised} for our generator and discriminator. Since DCGAN was built for $64\times 64$ images, we adapt it to handle $28\times 28$ images in the final layer. 
 We follow the parameter choices from the DCGAN paper~\cite{radford2015unsupervised}. This means we initialize the weights using a zero-centered centered Normal distribution with standard deviation $0.02$, optimize using ADAM with parameters $\beta_1 = 0.5$, $\beta_2=0.999$, and $\epsilon=10^{-8}$, and set the initial learning rates to be $0.0002$. The learning rate for each player is decayed exponentially such that $\gamma_{i, k} = \gamma_i  \nu_i^k$ and $\nu_1 = 1-10^{-5}$ and $\nu_1 = 1-10^{-7}$. We regularize the implicit map of the follower as detailed in Appendix~\ref{app:regfollower} using the parameter $\eta= 5000$. If we view the regularization as a linear function of the number of parameters in the discriminator, then this selection of regularization is nearly equal to that from the mixture of Gaussian experiments.

\section{Computing the Stackelberg Update and Schur Complement }
\label{appendix_secs:leaderupdate}

The learning rule for the leader involves computing an inverse-Hessian-vector product for the $D_{2}^2f_2(x)$ inverse term and Jacobian-vector product for the $D_{12}f_2(x)$ term. These operations can be done efficiently in Python by utilizing Jacobian-vector products in auto-differentiation libraries combined with the {\tt sparse.LinearOperator} class in {\tt scipy}. These objects can also be used to compute their eigenvalues, inverses, or the Schur complement of the game dynamics using  the {\tt scipy.sparse.linalg} package. We found that the conjugate gradient method {\tt cg} can compute the regularized inverse-Hessian-vector products for the leader update accurately with 5 iterations and a warm start. 

The operators required for the leader update can be obtained by the following. Consider the Jacobian of the simultaneous gradient descent learning dynamics $\dot x = -\omega(x)$ at a critical point for the general sum game $(f_1, f_2)$: 
\[ 
  J(x) = \bmat{D_{1}^2f_1(x) & D_{12}f_1(x)  \\ 
  D_{21}f_2(x) & D_{2}^2f_2(x) }.
\]
Its block components consist of four operators $D_{ij}f_i(x) : X_j \to X_i,\ i,j \in \{1,2\}$ that can be computed using forward-mode or reverse-mode Jacobian-vector products. Instantiating these operators as a linear operator in {\tt scipy} allows us to compute the eigenvalues of the two player's individual Hessians. Properties such as the real eigenvalues of a Hermitian matrix or complex eigenvalues of a square matrix can be computed using {\tt eigsh} or {\tt eigs} respectively. 
Selecting to compute the smallest or largest $k$ eigenvalues---sorted by either magnitude, real or imaginary values---allows one to examine the positive-definiteness of the operators. 

Operators can be combined to compute other operators relatively efficiently for large scale problems without requiring to compute their full matrix representation. For an example, take the Schur complement of the Jacobian above at fixed network parameters $x\in X_1 \times X_2 $, 
$ D_1^2(x) - D_{12}{f_1}(x) (D_2^2f_2)^{-1}(x) D_{21}f_2(x).$
We create an operator $S_1(x): X_1 \to X_1$ that maps a vector $v$ to $p-q$ by performing the following four operations: 
  $u = D_{21}f_2(x)v$,
  $w = (D_{2}^2f_2)^{-1}(x)u$,
  $q = D_{12}f_1(x)w$, and
  $p = D_1^2(x)v$.
Each of the operations can be computed using a single backward pass through the network except for computing $w$, since the inverse-Hessian
requires an iterative method which can be computationally expensive. It solves the linear equation $D_2^2f_2(x) w = u$ and there are various available methods: we tested (bi)conjugate gradient methods, residual-based methods, or least-squares methods, and each of them 
provide varying amounts of error when compared with the exact solution. 
Particularly, when the Hessian is poorly conditioned, some methods may fail to converge.
More investigation is required to determine which method is best suited for specific uses. For example, a fixed iteration method with warm start might be appropriate for computing the leader update online, while a residual-based method might be better for computing the the eigenvalues of the Schur complement. 
Specifically, for our mixture of gaussians and MNIST GANs, we found that computing the leader update using the conjugate gradient method with maximum of 5 iterations and warm-start works well. We compared using the true Hessian for smaller scale problems and found the estimate to be within numerical precision.

\section{$N$--Follower Setting}
\label{app:extensions}
In this section, we show that the results extend to the setting where there is
a single leader, but $N$ non-cooperative followers.

\subsection{$N+1$ Staggered Learners, All with Non-Uniform Learning Rates}
Note that if there is a layered hierarchy in which each, for example, the first
follower is a leader for the second follower,  the second follower a leader for
the third follower and so on, then the results in Section~3 apply under
additional assumptions on the learning rates. 

For instance, consider a three
player setting where $\gamma_{1,k}=o(\gamma_{2,k})$ and
$\gamma_{2,k}=o(\gamma_{3,k})$ so that player 1 is the slowest player (hence,
the `leader'), player 2
the second slowest, and player 3 the fastest, the `leader'. Then
similar asymptotic analysis can be applied with the following assumptions.
Consider 
\begin{equation}
    \left.\begin{array}{ll} \dot{x}_i & =0, \ i<3\\
        \dot{x}_3 &=F^{3}(x)\end{array}\right\}
    \label{eq:d1}
\end{equation}
where we will explicitly define $F^3$ shortly. Let $x^{<j}=(x_1,\ldots,
x_{j-1})$ and $x^{\geq j}=(x_j, \ldots, x_{N+1})$.
\begin{assumption}
    There exists a Lipschitz continuous function $r_3(x^{<3})$ such that for any
    $x$, solutions of \eqref{eq:d1} asymptotically converge to $(x^{<3},
    r_3(x^{<3}))$ given initial data $x$.
    \label{ass:two}
\end{assumption}
Consider 
\begin{equation}
    \left.\begin{array}{ll} \dot{x}_i & =0, \ i<2\\
        \dot{x}_2 &=F^{2}(x^{<3}, r_3(x^{<3}))\end{array}\right\}
    \label{eq:d2}
\end{equation}
\begin{assumption}
    There exists a Lipschitz continuous function $r_2(x^{<2})$ such that for any
    $x_3$, solutions of \eqref{eq:d2} asymptotically converge to $(x^{<2},
    r_3(x^{<3}))$ given initial data $(x^{<2},x^{\geq 2})$.
    \label{ass:three}
\end{assumption}
Now, define $\xi^{\geq 2}(x^{<2})=(r_2(x^{<2}),r_3(x^{<2},r_2(x^{<2})))$ for
notation simplicity. Let $F^3\equiv -D_3f_3$ and $F^2\equiv -D_{1\to 2}f_2$ where the notation $D_{j\to i}$ indicates the total derivative with respect to
arguments $j$ up to $i$.
\begin{proposition}
    Under Assumptions~\ref{ass:two} and \ref{ass:three} and
    Assumption~\ref{ass:all} from the main paper, 
\[\lim_{k\rar \infty}\|(x_{2,k},x_{3,k})-\xi^{\geq 2}(x_{k,1})\|\rar 0\ \
\text{a.s.}\]
\end{proposition}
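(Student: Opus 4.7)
The plan is to extend the two-timescale asymptotic analysis of Section~3 to three timescales by cascading convergence from the fastest to the slowest player. Since $\gamma_{1,k}=o(\gamma_{2,k})$ and $\gamma_{2,k}=o(\gamma_{3,k})$, the standard multi-timescale stochastic approximation template (as in~\cite{borkar:2008aa}, Ch.~6) dictates that on the fastest timescale $\{\gamma_{3,k}\}$, both $x_{1,k}$ and $x_{2,k}$ appear quasi-static to player~3. Concretely, I would rewrite players $i=1,2$'s updates with the prefactor $\gamma_{i,k}/\gamma_{3,k}=o(1)$, so that as stochastic approximations on the fast clock $\check{t}_k=\sum_{\ell<k}\gamma_{3,\ell}$ they track $\dot{x}_i=0$ asymptotically. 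Applying the Hirsch-style argument used in the proof of Proposition~\ref{prop:converge} together with Assumption~\ref{ass:two}, one obtains, conditional on the natural stability/boundedness event, the first-stage tracking
\[\lim_{k\to\infty}\|x_{3,k}-r_3(x_{1,k},x_{2,k})\|=0\quad \text{a.s.}\]

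Next I would move up one timescale. Having established that $x_{3,k}$ tracks $r_3(x_{1,k},x_{2,k})$, player~2's update can be rewritten as
\[x_{2,k+1} = x_{2,k} + \gamma_{2,k}\bigl(F^{2}(x_{1,k},x_{2,k},r_3(x_{1,k},x_{2,k})) + \epsilon_{2,k} + w_{2,k+1}\bigr),\]
where $\epsilon_{2,k} = F^2(x_{1,k},x_{2,k},x_{3,k}) - F^2(x_{1,k},x_{2,k},r_3(x_{1,k},x_{2,k}))$ is a vanishing perturbation by the Lipschitz hypothesis of Assumption~\ref{ass:all} combined with the first-stage tracking. On this intermediate clock, $x_{1,k}$ is quasi-static since $\gamma_{1,k}/\gamma_{2,k}\to 0$, so $x_{2,k}$ tracks the ODE in \eqref{eq:d2} with $x_1$ frozen. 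Assumption~\ref{ass:three} and a second application of the Hirsch lemma then give
\[\lim_{k\to\infty}\|x_{2,k}-r_2(x_{1,k})\|=0\quad\text{a.s.}\]

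Combining the two tracking results, Lipschitz continuity of $r_3$ from Assumption~\ref{ass:two} yields
\[\|x_{3,k}-r_3(x_{1,k},r_2(x_{1,k}))\| \leq \|x_{3,k}-r_3(x_{1,k},x_{2,k})\| + L_{r_3}\|x_{2,k}-r_2(x_{1,k})\| \to 0,\]
and, since $\xi^{\geq 2}(x_{1,k})=(r_2(x_{1,k}),r_3(x_{1,k},r_2(x_{1,k})))$, the claim $\|(x_{2,k},x_{3,k})-\xi^{\geq 2}(x_{1,k})\|\to 0$ a.s.~follows immediately.

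The main obstacle is justifying rigorously that the deterministic perturbation $\epsilon_{2,k}$ introduced by substituting $x_{3,k}$ for $r_3(x_{1,k},x_{2,k})$ is asymptotically negligible in the sense required by the Ben\"aim-style asymptotic-pseudo-trajectory framework, and similarly controlling noise cross-talk between the scales. The standard remedy is the three-timescale argument of~\cite{borkar:2008aa}: verify that the pseudo-trajectory on scale $\{\gamma_{2,k}\}$ is an $o(1)$-perturbation of the ODE solution, so that its limit set is contained in the internally chain transitive invariant sets of \eqref{eq:d2}. Assumptions~\ref{ass:two} and~\ref{ass:three} are exactly what is needed to ensure that, at each stage, this set coincides with the graph of the corresponding reaction map, so no spurious limit points arise.
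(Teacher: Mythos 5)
Your proof is correct and takes essentially the approach the paper intends: the paper states this proposition without a written proof (deferring to the two-timescale machinery of Section~\ref{sec:results}), and your three-stage cascade---the fastest player tracking $r_3(x_{1,k},x_{2,k})$ via the Hirsch lemma with the slower iterates quasi-static, the intermediate player tracking $r_2(x_{1,k})$ after absorbing the vanishing perturbation $\epsilon_{2,k}$, and the final combination via Lipschitz continuity of $r_3$---is exactly the natural extension of the proof of Proposition~\ref{prop:converge}, conditioned as there on the boundedness event. No gaps to flag.
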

Of course the framework naturally extends to $N$-followers; a similar framework
can be found for reinforcement learning algorithms in normal form
games~\cite{collins:2003aa}.
\subsection{$N$ Simultaneously Play Followers}
On the other hand, consider a setting in which 
the followers play a Nash
equilibrium in a simultaneous play game and are assumed to have the same
learning rate. That is, $\gamma_{1,k}=o(\gamma_{2,k})$ where all $N$ followers
use the learning rate $\gamma_{2,k}$ and the leader uses the learning rate
$\gamma_{1,k}$. The results for this section assume that the follower game has a
unique differential Nash equilibrium uniformly in $x_1$. 
\begin{assumption}
    For every $x_1$, 
    \begin{align*}
        \bmat{\dot{x}_2\\ \vdots \\\dot{x}_N} =\bmat{-D_2f_2(x_1,x^{\geq 1}(t))\\
        \vdots \\-D_Nf_N(x_1, x^{\geq 1}(t))}
    \end{align*} has a globally asymptotically stable differential Nash
    equilibrium $\conj(x_1)$ uniformly in $x_1$ with $\conj$ a $L_r$--Lipschitz
    function.
\end{assumption}
All the results in Section~\ref{sec:results} of the main body hold replacing
Assumption~\ref{ass:gas} with the above assumption. This is a somewhat strong
assumption, however, $N$-player convex games that are diagonally strictly
convex admit unique Nash equilibria which are attracting~\cite{rosen:1965aa}.

\end{document}